\newcommand{\Lie}{\mathcal{L}}
\newcommand{\diag}{{\rm diag}}
\newcommand{\Res}{{\rm Res}}
\newcommand{\tr}{{\rm tr}}
\newcommand{\p}{\partial}
\newcounter{note}
\newcounter{notelist}
\newcommand{\trace}{\mathrm{tr}}
\newcommand{\ket}[1]{\left |  #1 \right \rangle}
\newcommand{\bra}[1]{\left \langle #1 \right |}
\newcommand{\projop}[1]{\left | #1 \right \rangle \!
  \left \langle #1 \right |}
\newcommand{\gsk}{\ket{\boldsymbol{\Psi}_{{\rm g}}}}
\newcommand{\gsb}{\bra{\boldsymbol{\Psi}_{{\rm g}}}}
\newcommand{\abs}[1]{\left | #1 \right |}
\newcommand{\rpart}{\mathrm{Re}}
\newcommand{\ipart}{\mathrm{Im}}
\newcommand\bA{\overline{A}}
\newcommand\bB{\overline{B}}
\newcommand{\vphi}{\boldsymbol{\phi}}
\newcommand{\vpsi}{\boldsymbol{\psi}}
\newtheorem{theorem}{Theorem}
\newtheorem{lemma}{Lemma}
\newtheorem{proposition}{Proposition}
\newtheorem{remark}{Remark}
\def\d{\mathrm{d}}
\def\hide#1{}
\begin{document}

\title{Entanglement entropy in quantum spin chains with finite range
  interaction\footnote{A. Its was partially
  supported by the NSF grants DMS-0401009 and DMS-0701768. F. Mezzadri and
  M. Y. Mo acknowledge financial support by the EPSRC grant EP/D505534/1.}}
\author{A. R. Its, F. Mezzadri and M. Y. Mo}
\date{}
\maketitle

\begin{abstract}
  We study the entropy of entanglement of the ground state in a wide
  family of one-dimensional quantum spin chains whose interaction is of
  finite range and translation invariant.  Such systems can be thought
  of as generalizations of the XY model. The chain is divided in two
  parts: one containing the first consecutive $L$ spins; the second
  the remaining ones.  In this setting the entropy of entanglement is
  the von Neumann entropy of either part.  At the core of our
  computation is the explicit evaluation of the leading order term
  as $L \to \infty$ of
  the determinant of a block-Toeplitz matrix with symbol
  \[
      \Phi(z) = \left(\begin{array}{cc} i\lambda &  g(z) \\ g^{-1}(z) & i
      \lambda \end{array}\right),
  \]
  where $g(z)$ is the square root of a rational function and
  $g(1/z)=g^{-1}(z)$.  The asymptotics of such determinant is computed
  in terms of multi-dimensional theta-functions associated to a
  hyperelliptic curve $\Lie$ of genus $g \ge 1$, which enter into the
  solution of a Riemann-Hilbert problem. Phase transitions for these
  systems are characterized by the branch points of $\Lie$ approaching
  the unit circle.  In these circumstances the entropy diverges
  logarithmically. We also recover, as particular cases, the formulae
  for the entropy discovered by Jin and Korepin~\cite{JK} for the XX
  model and Its, Jin and Korepin~\cite{IJK1,IJK2} for the XY model.
\end{abstract}

\renewcommand{\theequation}{\arabic{section}.\arabic{equation}}

\tableofcontents

\listoffigures

\section{Introduction}
\setcounter{equation}{0}
One dimensional quantum spin chains were introduced by Lieb
\textit{et. al.}~\cite{LSM61} in 1961 as a model to study the magnetic
properties of solids. Usually such systems depend on some parameter,
\textit{e.g.} the magnetic field. One of their most important features
is that at zero temperature, when the system is in the ground state,
as the number of spins tend to infinity they undergo a phase transition
for a critical value of the parameter.  As a consequence, the rate of
the decay of correlation lengths changes suddenly from exponential to
algebraic at the critical point.  Furthermore, many examples of such
chains are exactly solvable.  Because of these reasons over the years
the statistical mechanical properties of quantum spin chains have
been investigated in great detail.

More recently, Osterloh \textit{et al.}~\cite{OAFF02}, and Osborne and
Nielsen~\cite{ON02} realized that the existence of non-local physical
correlations at a phase transition is a manifestation of the
entanglement among the constituent parts of the chain. Entangled
quantum states are characterized by non-local correlations that cannot
be described by classical mechanics.  Such correlations play an
important role in the transmission of quantum information.  It is
therefore essential to be able to quantify entanglement.  In its full
generality this is still an open problem.  However, when a physical
system is in a pure state and is \textit{bipartite}, \textit{i.e.} is
made of two separate parts, say A and B, a suitable measure of the
entanglement shared between the two constituents is the von Neumann
entropy of either part~\cite{BBPS96}.  In this situation the Hilbert
space of the whole system is $\mathcal{H}_{\mathrm{AB}} =
\mathcal{H}_{\mathrm{A}} \otimes \mathcal{H}_\mathrm{B}$, where
$\mathcal{H}_{\mathrm{A}}$ and $\mathcal{H}_{\mathrm{B}}$ are the
Hilbert spaces associated to A and B respectively.  Now, if
$\rho_{\mathrm{AB}}$ is the density matrix of the composite system, then
the reduced density matrices of A and B are
\begin{equation}
  \label{eq:red_mat}
  \rho_{\mathrm{A}} = \trace_{\mathrm{B}}\, \rho_{\mathrm{AB}} \quad
  \mathrm{and} \quad \rho_{\mathrm{B}} = \trace_{\mathrm{A}}\,
  \rho_{\mathrm{AB}},
\end{equation}
where $\trace_\mathrm{A}$ and $\trace_\mathrm{B}$ are partial traces
over the degrees of freedom  A and B respectively.  The entropy of the
entanglement of formation is
\begin{equation}
  \label{eq:von_neu_ent}
  S(\rho_{\mathrm{A}}) =   - \trace  \rho_\mathrm{A}
\log \rho_\mathrm{A} = S(\rho_{\mathrm{B}})
=- \trace\rho_\mathrm{B} \log \rho_\mathrm{B}
\end{equation}

In this paper we compute the entropy of entanglement of the ground
state of a vast class of spin chains whose interaction among the
constituent spins is non-local and translation invariant.  These
systems can be mapped into quadratic chains of fermionic operators by
a suitable transformation and are generalizations of the XY model.  We
study the ground state of such systems, divide the chain in two halves
and compute the von Neumann entropy in the thermodynamic limit of one
of the two parts. If the ground state is not degenerate, then
$\rho_{\mathrm{AB}} = \projop{\boldsymbol{\Psi}_\mathrm{g}}$. At the
core of our derivation of the entropy of entanglement is the
computation of determinants of Toeplitz matrices for a wide class of
$2 \times 2$ matrix symbols. The explicit expressions for such
determinants were not available in the literature.  The appearance of
Toeplitz matrices and their invariants in the study of lattice models
is a simple consequence of the translation invariance of the
interaction among the spins.  Thus, Toeplitz determinants appear in
the computations of many other physical quantities like spin-spin
correlations or the probability of the emptiness of formation, not
only the entropy of entanglement.  Therefore, our results have
consequences that go beyond the application to the study of bipartite
entanglement that we discuss.

Vidal \textit{et. al.}~\cite{V} were the first to investigate the
entanglement of formation of the ground state of spin chains by
dividing them in two parts. The models they considered were the XX, XY
and XXZ model. They computed numerically the von Neumann entropy of
one half of the chain and discovered that at a phase transition it
grows logarithmically with its length $L$.  Jin and Korepin~\cite{JK}
computed the von Neumann entropy of the ground state of the XX model
using the Fisher-Hartwig formula for Toeplitz determinants. They
showed that at the phase transition the entropy grows like $\frac13
\log L$, which is in agreement with the numerical observations of
Vidal \textit{et.al.}  For lattice systems that have a conformal field
theory associated to it the logarithmic growth of the entropy was
first discovered by Holzhey \textit{et. al.}~\cite{HLW94} in 1994.
This approach was later developed by Korepin~\cite{Kor04}, and by
Calabrese and Cardy~\cite{CC05}. Its, Jin and Korepin~\cite{IJK1,IJK2}
determined the entropy for the XY model by computing an explicit formula
for the asymptotics of the determinant of a block-Toeplitz matrix.
They expressed the entropy of entanglement in terms of an integral of
Jacobi theta functions.

Consider a $p \times p$ matrix-valued function on the unit circle
$\Xi$:
\[
\varphi(z)= \sum_{k=-\infty}^\infty \varphi_kz^k, \quad |z|=1.
\]
A block-Toeplitz matrix with symbol $\varphi$ is defined by
\[
T_L[\varphi]= (\varphi_{j-k})_{0\le j,k \le L-1}.
\]
Furthermore, we shall denote its determinant by $D_L=\det
T_L[\varphi]$. The main ingredient of the computation of Its, Jin and
Korepin was to use the Riemann-Hilbert approach to derive an
asymptotic formula for the Fredholm determinant
\begin{equation}
  \label{eq:Fred_det}
  D_L(\lambda) =  \det T_L[\varphi] = \det \left(I - \mathbf{K}_L\right),
\end{equation}
where $\mathbf{K}_L$ is an appropriate integral operator on
$L^2(\Xi,\mathbb{C}^2)$.  The symbol of the Toeplitz matrix
$T_L[\varphi]$ was
\begin{equation}
  \label{eq:ijk_symb}
  \varphi\left(e^{i\theta}\right) =\pmatrix{i\lambda &g(\theta)\cr
                             -g^{-1}(\theta)&i\lambda},
\end{equation}
where
\[
g(\theta)=
\frac{ \alpha \cos
\theta - 1 - i  \gamma \alpha \sin \theta}{\left | \alpha \cos
\theta -1 -  i \gamma \alpha \sin \theta \right |}.
\]

Keating and Mezzadri~\cite{KM04,KM05} introduced families of spin
chains that are characterized by the symmetries of the spin-spin
interaction. The entropy of entanglement of the ground state of these
systems, as well as other thermodynamical quantities like the
spin-spin correlation function, can be determined by computing
averages over the classical compact groups, which in turn means
computing determinants of Toeplitz matrices or of sums of Hankel
matrices.  These models are solvable and can be mapped into a
quadratic chain of Fermi operators via the Jordan-Wigner
transformations.  One of the main features of these families is that
symmetries of the interaction can be put in one to
one correspondence with the structure of the invariant measure of the
group to be averaged over.  If the Hamiltonian is translation
invariant and the interaction is isotropic, then the relevant group
over is $\mathrm{U}(N)$ equipped with Haar measure.  In turn such
averages are equivalent to Toeplitz determinants with a scalar symbol.
These systems are generalizations of the XX model.

In this paper we consider spin chains whose interaction is translation
invariant but the Hamiltonian is not isotropic.  These are
generalization of the XY model.  The Fredholm determinant that
we need to compute has the same structure as~(\ref{eq:Fred_det}), but now
the $2\times 2$ matrix symbol is
\begin{equation}
  \label{eq:our_symb}
   \Phi(z) :=\pmatrix{i\lambda & g(z)\cr
                 -g^{-1}(z)& i\lambda},
\end{equation}
where function $g(z)$ is defined by
\begin{equation}
  \label{eq:g_def}
   g(z) := \sqrt{\frac{p(z)}{z^{2n}p(1/z)}}
\end{equation}
and  $p(z)$ is a polynomial of degree $2n$.  We recover the XY model
if we set
\begin{equation}
  \label{eq:Xypol}
   p(z) = \frac{\alpha (1 - \gamma)}{2}z^2 - z +
   \frac{\alpha (1 + \gamma)}{2}.
\end{equation}
In the above equation $\alpha=2/h$,  where $h$
magnetic field, and $\gamma$ measures the anisotropy of the Hamiltonian
in the XY plane.

\section{Statement of results}
\label{stat_res}
\setcounter{equation}{0}

Following~\cite{JK} and~\cite{IJK2}, we will identify the limiting von
Neumann entropy for the systems that we study with the double limit
\begin{equation}
\label{eq:intr_Ki}
  S(\rho_A)= \lim_{\epsilon \to 0^+} \left[ \lim_{L\to \infty}
\frac{1}{4\pi i} \oint_{\Gamma(\epsilon)}
  e(1 +\epsilon, \lambda)\frac{\d}{\d\lambda}
\log\left( D_L(\lambda)(\lambda^2 -1)^{-L}\right)\d \lambda\right].
\end{equation}
In the above formula $\Gamma(\epsilon)$ is the contour in
figure~\ref{fig1}, $D_L(\lambda)$ is the determinant of the
block-Toeplitz matrix $T_L[\Phi]$ with symbol~(\ref{eq:our_symb}) and
\begin{equation}
  \label{binaryent}
  e(x,\nu) := - \frac{x +
    \nu}{2}\log\left(\frac{x + \nu}{2}\right) - \frac{x -
    \nu}{2}\log\left(\frac{x - \nu}{2}\right).
\end{equation}
\begin{figure}
\centering
\begin{overpic}[scale=.75,unit=1mm]{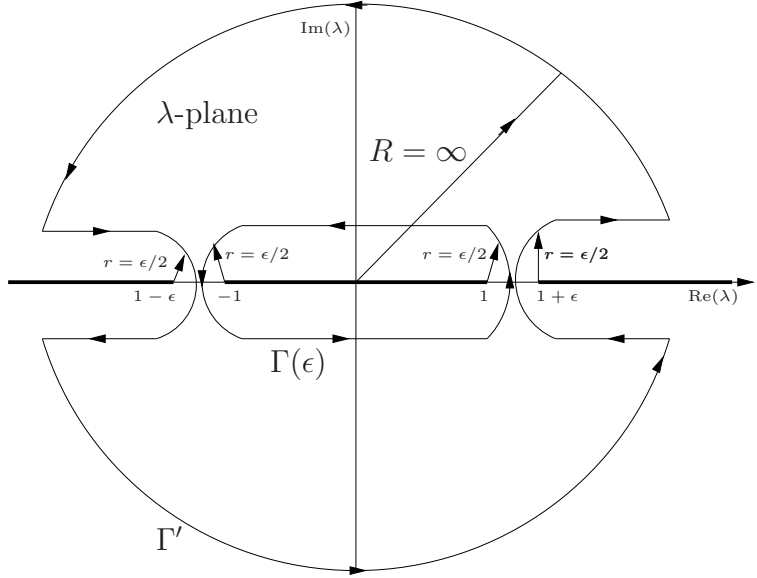}
\put(20,60){$\lambda$-plane}
\put(90,36.5){\tiny{$\rpart(\lambda)$}}
\put(39,72){\tiny{$\ipart(\lambda)$}}
\put(35,27){$\Gamma(\epsilon)$}
\put(48,55){$R = \infty$}
\put(20,3.5){$\Gamma'$}
\put(62.5,36.5){\tiny{$1$}}
\put(70,36.5){\tiny{$1 + \epsilon$}}
\put(17,36.5){\tiny{$1 - \epsilon$}}
\put(28,36.5){\tiny{$-1$}}
\put(29,42){\tiny{$r=\epsilon/2$}}
\put(55,42){\tiny{$r=\epsilon/2$}}
\put(71,42){\tiny{$r=\epsilon/2$}}
\put(71,42){\tiny{$r=\epsilon/2$}}
\put(13,41){\tiny{$r=\epsilon/2$}}
\end{overpic}
\caption{The contour $\Gamma(\epsilon)$ of the integral in
  equation~(\ref{eq:intr_Ki}).  The bold lines $(-\infty,-1-\epsilon)$
  and $(1 + \epsilon, \infty)$ are the cuts of the integrand $e(1 +
  \epsilon,\lambda)$.  The zeros of $D_L(\lambda)$ are located on the
  bold line $(-1,1)$. }
\label{fig1}
\end{figure}
The explicit Hamiltonians for the family of spin systems that we
consider and their connection to formula~(\ref{eq:intr_Ki}) will be
discussed in detail in sections~\ref{spin_chains}
and~\ref{vonneumanentr}.

One of the main objectives of this paper is to compute the double
limit~(\ref{eq:intr_Ki}), which, as we shall see, can be expressed as
an integral of multi-dimensional theta functions defined on Riemann
surfaces. Thus, in order to state our main results, we need to
introduce some definitions and notation.

Let us rewrite the function~(\ref{eq:g_def}) as
\begin{equation}
  \label{eq:gn_def}
  g^2(z) = \prod_{j=1}^{2n}{{z-z_j}\over{1-z_jz}},
\end{equation}
where the $z_j$'s are the $2n$ roots of the polynomial $p(z)$.  This
representation of $g(z)$ will be used throughout the paper.  We fix
the branch of $g(z)$ by requiring that $g(\infty)>0$ on the real axis.
The function $g(z)$ have jump discontinuities on the complex
$z$-plane. In order to define its branch cuts we need to introduce an
ordering of the roots $z_j$. Let
\begin{equation}
  \label{eq:lambdai}
  \{\lambda_1,\lambda_2,\ldots,\lambda_{4n}\}
  =\{z_1,\ldots,z_{2n},z_{1}^{-1},\ldots,z_{2n}^{-1}\}
\end{equation}
where the above is merely an equality between sets, and we do not
necessarily have, for example, $\lambda_i=z_i$. We order the
$\lambda_i$'s such that
\begin{eqnarray}
  \label{eq:order}
  \rpart(\lambda_i)&\leq& \rpart(\lambda_j),\quad i<j\nonumber\\
  \ipart(\lambda_i)&\leq& \ipart(\lambda_j) ,\quad i<j, \quad
  |\lambda_i|,  |\lambda_j|<1, \quad \rpart(\lambda_i)=\rpart(\lambda_j)\\
  \ipart(\lambda_i)&\leq& \ipart(\lambda_j) ,\quad i>j,\quad
  |\lambda_i|, |\lambda_j|>1,\quad \rpart(\lambda_i)=
  \rpart(\lambda_j). \nonumber
\end{eqnarray}
This ordering need not coincide with the ordering $z_j$'s.  If
necessary, we can always assume that one of the $z_j^{-1}$ has the
smallest real part and set $\lambda_1=z_j^{-1}$. This choice is
equivalent to taking the transpose of $T_L[\Phi]$. The branch cuts for
$g(z)$ are defined by the intervals $\Sigma_i$ joining
$\lambda_{2i-1}$ and $\lambda_{2i}$:
\begin{equation}
  \label{eq:branchcut}
  \Sigma_i=[\lambda_{2i-1},\lambda_{2i}],\quad i=1,\ldots, 2n.
\end{equation}
Therefore, $g(z)$ has the following jump discontinuities:
\begin{equation}
  \label{eq:jphi}
  g_+(z)=-g_-(z), \quad z\in\Sigma_i,
\end{equation}
where $g_{\pm}(z)$ are the boundary values of $g(z)$ on the
left/right hand side of the branch cut.

Now, let $\Lie$ be the hyperelliptic curve
\begin{equation}
  \label{eq:L}
  \Lie: w^2=\prod_{i=1}^{4n}(z-\lambda_i).
\end{equation}
The genus of $\Lie$ is $g=2n-1$.
\begin{figure}[htbp]
\begin{center}
\resizebox{8cm}{!}{\input{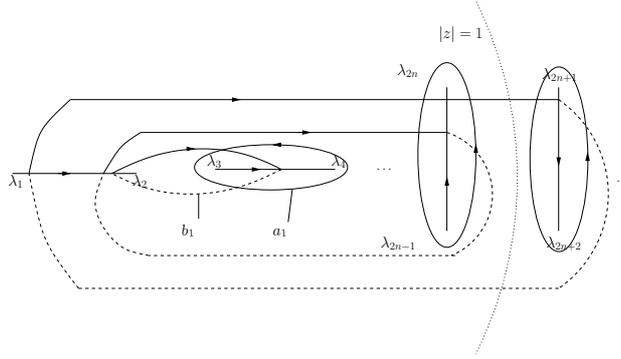}}
\caption{The choice of cycles on the hyperelliptic curve
$\Lie$. The arrows denote the orientations of the cycles and branch cuts. Note that we have $\lambda_1=z_1^{-1}$.}\label{fig:cycle}
\end{center}
\end{figure}
We now choose a canonical basis for the  cycles $\{a_i,b_i\}$ on $\Lie$
as shown in figure \ref{fig:cycle}, and define $\d\omega_i$ to  be
1-forms dual to this basis, \textit{i.e.}
\begin{equation}\label{eq:normalizeforms}
   \int_{a_i}\d\omega_j=\delta_{ij}, \quad
   \int_{b_i}\d\omega_j=\Pi_{ij}.
\end{equation}
Furthermore, let us
define the $g \times g$ matrix $\Pi$ by setting
$(\Pi)_{ij}=\Pi_{ij}$. The theta function
$\theta:\mathbb{C}^g\longrightarrow \mathbb{C}$ associated to $\Lie$ is
defined by
\begin{equation}
  \label{eq:thetadef}
  \theta (\overrightarrow{s}) := \sum_{\overrightarrow{n}\in
    \mathbb{Z}^g} {\rm e}^{i\pi \overrightarrow{n}\cdot \Pi
    \overrightarrow{n} + 2i\pi \overrightarrow{s}\cdot
    \overrightarrow{n}}.
\end{equation}
while the theta function with characteristics
$\overrightarrow{\epsilon}$ and $\overrightarrow{\delta}$ is defined
by
\begin{eqnarray}\label{eq:thetachar}
  \theta\left[{ \overrightarrow{\epsilon} \atop \overrightarrow{\delta}}\right] (\overrightarrow{s}) := \exp\left(2i\pi\left( \frac {\overrightarrow{\epsilon}\cdot \Pi\cdot  \overrightarrow{\epsilon} }8 + \frac 1 2 \overrightarrow{\epsilon} \cdot \overrightarrow{s} + \frac 1 4 \overrightarrow{\epsilon} \cdot \overrightarrow{\delta}\right) \right) \theta\left(\overrightarrow{s} + \frac {\overrightarrow{ \delta}} 2 + \Pi \frac {\overrightarrow{\epsilon} }2 \right)
\end{eqnarray}
where $\overrightarrow{\epsilon}$ and $\overrightarrow{\delta}$ are
$g$-dimensional complex vectors.

Our main results are summarised by the following two theorems.
\begin{theorem}
  \label{main_theo1}
  Let $H_\alpha$ be the Hamiltonian of the one-dimensional quantum
  spin system defined in equation (\ref{genmodel2}). Let A be the
  subsystem made of the first L spins and B the one formed by the
  remaining $M-L$.  We also assume that the system is in a
  non-degenerate ground state $\left | \Psi_{\mathrm{g}} \right
  \rangle $ and that the thermodynamic limit, i.e. $M \to \infty$, has
  been already taken. Then, the limiting (as $L\to \infty$) von
  Neumann entropy (\ref{eq:intr_Ki}) is
\begin{equation}
  \label{eq:m_res1}
  S(\rho_A)=\frac{1}{2}
  \int_{1}^{\infty}\log{{\theta\left(\beta(\lambda)\overrightarrow{e}
        +{\tau\over 2}\right)\theta\left(\beta(\lambda)
        \overrightarrow{e}-{\tau\over 2}\right)}
        \over{\theta^2\left({\tau\over 2}\right)}}\d\lambda,
\end{equation}
where $\overrightarrow{e}$ is a $2n-1$ vector whose last $n$
entries are $1$ and the first $n-1$ entries are $0$.
\end{theorem}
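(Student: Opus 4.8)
The reduction of the von Neumann entropy $S(\rho_A)$ to the spectral representation (\ref{eq:intr_Ki}) is taken over from~\cite{JK,IJK2} via the free-fermion (Gaussian) structure of the ground state, so the task is to evaluate the double limit (\ref{eq:intr_Ki}). The plan rests on two ingredients: a large-$L$ asymptotic expansion of the block-Toeplitz determinant and a contour-integral identity. The analytic core is to prove
\[
\lim_{L\to\infty}\log\!\Big(D_L(\lambda)(\lambda^2-1)^{-L}\Big)=\mathcal G(\lambda):=\log\frac{\theta\!\big(\beta(\lambda)\overrightarrow e+\tfrac{\tau}{2}\big)\,\theta\!\big(\beta(\lambda)\overrightarrow e-\tfrac{\tau}{2}\big)}{\theta^{2}\!\big(\tfrac{\tau}{2}\big)},
\]
uniformly for $\lambda$ in compact subsets of $\mathbb C\setminus[-1,1]$, and then to feed $\mathcal G$ into (\ref{eq:intr_Ki}).

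The starting observation for the asymptotics is that the symbol (\ref{eq:our_symb}) has $\det\Phi=1-\lambda^2$ and $z$-independent eigenvalues $\i(\lambda-1)$ and $\i(\lambda+1)$; thus $\Phi(z)$ is diagonalized by a matrix built from $g(z)=\sqrt{p(z)/(z^{2n}p(1/z))}$, whose branch structure lives precisely on the hyperelliptic curve $\Lie$ of (\ref{eq:L}). I would feed this into the Riemann–Hilbert problem attached to $D_L(\lambda)=\det(I-\mathbf K_L)$ in (\ref{eq:Fred_det}), strip off the block Szegő–Widom leading term $L\,\frac{1}{2\pi\i}\oint_{|z|=1}\log\det\Phi(z)\,\frac{\d z}{z}=L\log(1-\lambda^2)$ (constant in $z$, hence absorbed by the factor $(\lambda^2-1)^{-L}$), and reduce what remains, through the nonlinear steepest-descent transformations, to a scalar Riemann–Hilbert problem on $\Lie$ solvable by Baker–Akhiezer functions. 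The $\lambda$-dependence enters only through the eigenvalue ratio $\tfrac{\lambda-1}{\lambda+1}$, producing a linear motion $\beta(\lambda)\overrightarrow e$ on the Jacobian with $\beta(\lambda)=\frac{1}{2\pi\i}\log\frac{\lambda-1}{\lambda+1}$, while the half-period $\tfrac{\tau}{2}$ and the vector $\overrightarrow e$ encode the fixed branch of $g$ and the splitting of the cuts $\Sigma_i$ into those inside and outside $|z|=1$. This is the main obstacle: controlling the steepest-descent opening uniformly in $\lambda$ down to the band $[-1,1]$, and matching the surviving constant term with the displayed theta quotient, is the entire analytic heart of the argument.

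Granting this, fix $\epsilon>0$. The contour $\Gamma(\epsilon)$ stays at positive distance from $[-1,1]$ — it detours around $\pm1$ and $1\pm\epsilon$ along the little circles of radius $\epsilon/2$ — so the uniform convergence above, with Cauchy estimates for the derivative, lets me take $L\to\infty$ under the integral and replace $\frac{\d}{\d\lambda}\log(D_L(\lambda)(\lambda^2-1)^{-L})$ by $\mathcal G'(\lambda)$. Moreover $\log(D_L(\lambda)(\lambda^2-1)^{-L})$ is single-valued on $\Gamma(\epsilon)$, since the $2L$ zeros of $D_L$ on $(-1,1)$ contribute a winding exactly cancelled by the order-$L$ poles of $(\lambda^2-1)^{-L}$ at $\pm1$; hence so is $\mathcal G$. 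Therefore $S(\rho_A)=\lim_{\epsilon\to0^+}\frac{1}{4\pi\i}\oint_{\Gamma(\epsilon)} e(1+\epsilon,\lambda)\,\mathcal G'(\lambda)\,\d\lambda$.

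Finally I would evaluate this by the Jin–Korepin deformation. Integrating by parts on the closed contour (no boundary term, by single-valuedness) gives $\oint_{\Gamma(\epsilon)} e\,\mathcal G'\,\d\lambda=-\oint_{\Gamma(\epsilon)}\mathcal G\,e'\,\d\lambda$ with $e'(\lambda)=\tfrac12\log\frac{1+\epsilon-\lambda}{1+\epsilon+\lambda}$. Since $\mathcal G$ is analytic off $[-1,1]$ with $\mathcal G(\lambda)=O(\lambda^{-2})$ at infinity — because $\beta(\infty)=0$ and $\theta$ is even, so the quotient is $1+O(\beta^2)$ with no linear term — while $e'$ is bounded at infinity and analytic except on the cuts $(1+\epsilon,\infty)$ and $(-\infty,-1-\epsilon)$, I may blow $\Gamma(\epsilon)$ up to infinity: the arc contributes nothing and the integral collapses onto the two cuts of $e'$. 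Across $(1+\epsilon,\infty)$ the jump $e'_+-e'_-$ is the constant $\pm\i\pi$, and by the evenness $\mathcal G(-\lambda)=\mathcal G(\lambda)$ the cut $(-\infty,-1-\epsilon)$ contributes equally, giving $\oint_{\Gamma(\epsilon)} e\,\mathcal G'\,\d\lambda=2\pi\i\int_{1+\epsilon}^{\infty}\mathcal G(\lambda)\,\d\lambda$. Dividing by $4\pi\i$ and letting $\epsilon\to0^+$ yields $S(\rho_A)=\tfrac12\int_1^\infty\mathcal G(\lambda)\,\d\lambda$, which is exactly (\ref{eq:m_res1}). The only delicate points here are the orientation bookkeeping that fixes the sign of the jump of $e'$ (dictated by the cut conventions of figure~\ref{fig1}) and the convergence of the final integral at $\lambda\to1^+$, which holds as long as the branch points of $\Lie$ stay off the unit circle, i.e. away from the phase transition.
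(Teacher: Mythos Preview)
Your outline is essentially correct and close to the paper's argument, but the order of operations differs in one place that matters. You take $L\to\infty$ first on the full contour $\Gamma(\epsilon)$, which requires the asymptotic $\log\big(D_L(\lambda)(\lambda^2-1)^{-L}\big)\to\mathcal G(\lambda)$ uniformly on complex arcs; the paper never establishes this (its estimates are stated only for real $\lambda\in\Omega_\epsilon$, and the remark after Proposition~\ref{th10_07} warns that the extension to complex $\lambda$ is obstructed by the theta-divisor). Instead the paper deforms \emph{first}: using only the elementary facts that $\delta(\lambda)=\frac{\d}{\d\lambda}\log\big(D_L(\lambda)(\lambda^2-1)^{-L}\big)$ is analytic off $[-1,1]$, odd, and $O(\lambda^{-3})$ at infinity, it pushes $\Gamma(\epsilon)$ to infinity and collapses onto the cuts of $e(1+\epsilon,\lambda)$, obtaining $\oint_{\Gamma(\epsilon)}e\,\delta=2\pi\i\int_{1+\epsilon}^\infty(1+\epsilon-\lambda)\,\delta(\lambda)\,\d\lambda$ for every finite $L$. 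Only then is the real-$\lambda$ asymptotic (\ref{DLasMay}) inserted, followed by an integration by parts on $[1+\epsilon,\infty)$ to reach (\ref{ent123}). Your integration-by-parts-first variant yields the same formula but buys you an extra hypothesis you would have to justify.

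The second point you gloss over---integrability at $\lambda\to1^+$---is not automatic. The paper devotes Section~\ref{se:integ} to it: since $\beta(\lambda)\to\i\infty$ as $\lambda\to1^+$, one must exploit the quasi-periodicity of $\theta$ to show $\mathcal G(\lambda)\sim -2\pi\beta(\lambda)^2\langle\overrightarrow e,(\ipart\,\Pi)^{-1}\overrightarrow e\rangle$, which is $O\big((\log|\lambda-1|)^2\big)$ and hence integrable. Your remark that this ``holds as long as the branch points stay off the unit circle'' is the right hypothesis but not a proof. (Also, a sign: the paper's $\beta$ is $\frac{1}{2\pi\i}\log\frac{\lambda+1}{\lambda-1}$, not $\frac{\lambda-1}{\lambda+1}$; harmless here since only $\pm\beta$ and $\beta^2$ appear.)
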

The parameter
$\tau$ in the argument of $\theta$ is introduced in
section~\ref{WH_fact} and is defined in equation~(\ref{eq:tau}), while
the expression of $\beta(\lambda)$ is
\begin{equation}
  \label{eq:beta}
  \beta(\lambda)  :={1\over{2\pi
  i}}\log{{\lambda+1}\over{\lambda-1}}.
\end{equation}

Theorem~\ref{main_theo1} generalizes the result by Its \textit{et
  al.}~\cite{IJK1,IJK2} for the XY model.  In that case the genus of
the of $\Lie$ is one, and the theta function in the integral reduces
to the Jacobi theta function $\theta_3$.  However, for the XY model
the integral~(\ref{eq:m_res1}) can be expressed in term of the
infinite series
\begin{equation}
  \label{eq:inf_ser}
  S(\rho_{\mathrm{A}}) = \sum_{m=-\infty}^\infty (1 +
  \mu_m)\log\frac{2}{1 + \mu_m}=
  2 \sum_{m=0}^\infty e(1,\mu_m),
\end{equation}
where the numbers $\mu_m$ are the solutions of the equation
\begin{equation}
  \label{eq:thet3_zer}
  \theta_3\left(\beta(\lambda) + \frac{\sigma \tau}{2}\right)=0
\end{equation}
and $\sigma$ is $0$ or $1$ depending on the strength of the magnetic
field. The zeros of the one dimensional theta function are all known,
so that the numbers $\mu_{m}$ can be described by the explicit
formula
\[
   \mu_{m} = -i\tan \left(m +\frac{1-\sigma}{2}\right)\pi \tau.
\]
Moreover, as it was shown by Peschel \cite{Pe} (who also suggested an
alternative heuristic derivation of equation (\ref{eq:inf_ser}) based
on the work of Calabrese and Cardy \cite{CC04}), the series
(\ref{eq:inf_ser}) can be summed up to an elementary function of the
complete elliptic integrals corresponding to the modular parameter
$\tau$.

It is an open problem whether an analogous representation of the
integral~(\ref{eq:m_res1}) exists for $g > 1$.

The next step consists of understanding what happens to
formula~(\ref{eq:m_res1}) when we approach a phase transition.  The
hyperelliptic curve $\Lie$, and hence all the parameters in the
integral~(\ref{eq:m_res1}), are determined by the roots of the
polynomial $p(z)$ which defines the symbol~(\ref{eq:our_symb}).  In
section~\ref{spin_chains} we discuss how the coefficients of $p(z)$
are related to the the Hamiltonians of the spin chains.  In the case
of the XY model $p(z)$ is given by equation~(\ref{eq:Xypol}); since
the degree of $p(z)$ is two the roots $\lambda_j$ can be easily
determined as a function of the parameters $\alpha$ and $\gamma$.  It
was shown by Calabrese and Cardy~\cite{CC05} that when $\alpha = 1$
--- or the magnetic field $h=2$ --- the XY model undergoes a phase
transition and the entropy diverges. Jin and Korepin~\cite{JK} showed
that when $\gamma$ approaches $0$, \textit{i.e.} the XY model
approaches the XX model, and $\alpha \le 1$, then the entanglement
entropy diverges logarithmically.  Its
\textit{et. al.}~\cite{IJK1,IJK2} discovered that the divergence of
the entropy for the XY and XX model corresponds to the
roots~(\ref{eq:lambdai}) of (\ref{eq:L}) approaching the unit circle.

\begin{figure}
\centering
\begin{overpic}[scale=.65,unit=1mm]{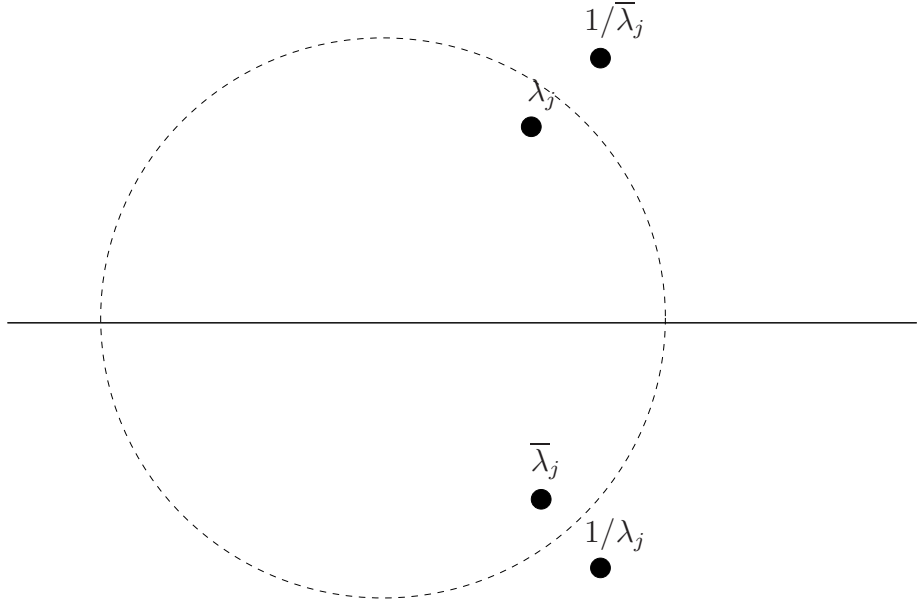}
\put(68.5,66){$\lambda_j$}
\put(76,75.5){$1/\overline{\lambda}_j$}
\put(69,17){$\overline{\lambda}_j$}
\put(76,7.5){$1/\lambda_j$}
\end{overpic}
\caption{The location of one of the roots~(\ref{eq:lambdai}), say
  $\lambda_j$ determines the positions of other three:
  $\overline{\lambda}_j$,
  $1/\lambda_j$ and $1/\overline{\lambda}_j$ }
\label{fig1b}
\end{figure}
This phenomenon extends to the family of systems that we study. In
other words, a phase transition manifests itself when pairs of roots
of~(\ref{eq:L}) approach the unit circle; one root in each pair is
inside the unit circle, the other outside.  As we shall see, in these
circumstances the entropy of entanglement diverges logarithmically.
From (\ref{eq:lambdai}) we see that if $\lambda_j$ is a root of
(\ref{eq:L}) so is $\lambda_j^{-1}$. Moreover, since (\ref{eq:L}) is a
polynomial with real coefficients, if $\lambda_j$ is complex then
$\overline{\lambda}_j$ and $\overline{\lambda}_j^{-1}$ will be roots
of (\ref{eq:L}) too (see figure~\ref{fig1b}). Now, suppose that $\lambda_j$
approaches the unit circle and $\abs{\lambda_j} < 1$, then
$\abs{\overline{\lambda}_j}^{-1} > 1$ and $\overline{\lambda}_j^{-1}$
will also be approaching the unit circle with
\[
\lambda_j-\overline{\lambda}_j^{-1}\to 0.
\]

At a phase transition the behavior of the entropy of entanglement is
captured by
\begin{theorem}
  \label{thm:crit}
  Let the $m$ pairs of roots $\lambda_{j}$, $\overline{\lambda}_{j}^{-1}$,
  $j=1,\ldots, m$, approach together towards the unit circle such
  that the limiting values of $\lambda_{j}$, $\overline{\lambda}_{j}^{-1}$ are
  distinct from those of $\lambda_{k}$, $\overline{\lambda}_{k}^{-1}$ if $j\neq k$,
  then the entanglement entropy is asymptotic to
\begin{equation}
   \label{eq:critic_lim}
   S(\rho_A)=-\frac{1}{6}\sum_{j=1}^m\log\abs{\lambda_{j} -
   \overline{\lambda}_{j}^{-1}} + O(1),
    \quad \lambda_{j} \to \overline{\lambda}_{j}^{-1}, \quad  j=1,\ldots,m.
\end{equation}
\end{theorem}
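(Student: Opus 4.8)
The plan is to start from the exact representation of Theorem~\ref{main_theo1} and to extract its leading singular behaviour as the branch points collide. The only $\lambda$-independent data entering (\ref{eq:m_res1}) are the period matrix $\Pi$ of $\Lie$ and the vector $\tau$ built from it, so the whole analysis reduces to understanding how $\Pi$ and $\tau$ degenerate when the pairs $\lambda_j,\overline{\lambda}_j^{-1}$ pinch. First I would fix a canonical homology basis adapted to the degeneration: since the $m$ pairs approach \emph{distinct} points of the unit circle, each coalescence $\lambda_j\to\overline{\lambda}_j^{-1}$ merges the facing endpoints of two adjacent branch cuts $\Sigma_i$. A direct estimate of the normalised differentials (\ref{eq:normalizeforms}) then shows that a period of $\d z/w$ across the neck diverges logarithmically, $\int_{a_{k_j}}\!\d z/w\sim -c_j\log\abs{\lambda_j-\overline{\lambda}_j^{-1}}$, while the conjugate period stays bounded; hence the corresponding modular parameter $\tau_{k_j}$ reaches the cusp, $\tau_{k_j}\to 0$, at the reciprocal logarithmic rate.

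Next I would feed this degeneration into the theta functions in (\ref{eq:m_res1}). Because the $m$ necks pinch at distinct points, the off-diagonal periods coupling different pinching cycles, and those coupling them to the surviving handles, remain bounded, so to leading order the $(2n-1)$-dimensional theta factorises into a bounded factor carried by the non-degenerating part of $\Lie$ times a product of $m$ one-variable (Jacobi) theta factors, one per neck. After a modular (symplectic) transformation sending each pinching direction from the $\Pi\to 0$ corner to the $\Pi\to\i\infty$ corner --- where the Fourier series (\ref{eq:thetadef}) is dominated by a single term --- the logarithm of the integrand in (\ref{eq:m_res1}) splits as $\sum_{j=1}^m\ell_j(\lambda)+O(1)$, with each $\ell_j$ depending only on the $j$-th effective modular parameter. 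Here the specific structure of $\overrightarrow{e}$ and of $\tau$ is precisely what makes each factor reduce to the scalar form treated for the XY model.

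Consequently the integral decouples into $m$ effective genus-one integrals of exactly the shape of the XY case. For each of them I would invoke the analysis underlying (\ref{eq:inf_ser}): rewriting the genus-one integral as the series $2\sum_{r\ge 0}e(1,\mu_r)$ and using the Jacobi imaginary transformation to evaluate it as the effective modular parameter tends to the cusp, one finds the universal divergence $-\frac16\log\abs{\lambda_j-\overline{\lambda}_j^{-1}}+O(1)$, the coefficient $\frac16$ being fixed by the Riemann-sum asymptotics of $\sum e(1,\tanh(\cdot))$. Summing the $m$ contributions and collecting the bounded remainder would then yield (\ref{eq:critic_lim}).

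The hard part will be making the theta degeneration uniform in $\lambda$ along the non-compact contour $(1,\infty)$: near $\lambda=1$ one has $\beta(\lambda)\to-\i\infty$ by (\ref{eq:beta}), so the argument $\beta(\lambda)\overrightarrow{e}\pm\tau/2$ runs off to infinity exactly where the period matrix is most degenerate, and the two limits must be controlled simultaneously. Equally delicate is proving that the $m$ simultaneous pinchings genuinely decouple, i.e. that the cross-terms between distinct necks contribute only to the $O(1)$ remainder; this is where the hypothesis that the limiting values of the pairs be mutually distinct is essential, since it is what keeps the off-diagonal periods of $\Pi$ bounded away from the boundary.
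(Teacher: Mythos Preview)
Your strategy is broadly aligned with the paper's --- both start from the integral (\ref{eq:m_res1}) and perform a symplectic change of homology basis so that the degenerating periods sit at the $i\infty$ cusp of the Siegel domain, where only finitely many terms of the theta series survive. The paper carries this out case by case (real pair, complex pairs, mixed), writing down explicit symplectic matrices $Z$ and applying the transformation law (\ref{eq:modular}) from \cite{FR}.

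Where you diverge is in how the coefficient $-\tfrac{1}{6}$ is extracted. You propose to factorise the transformed theta into a bounded piece times $m$ genus-one Jacobi thetas, and then to run the XY analysis (\ref{eq:inf_ser}) on each factor, getting $-\tfrac{1}{6}$ from the Riemann-sum asymptotics of $\sum e(1,\tanh(\cdot))$. The paper is more direct: the singular contribution does not sit in any residual theta factor at all, but entirely in the quadratic exponential prefactor $\exp\bigl[\pi i\,\tilde{\xi}^{T}(Z_{12}^{T}\tilde{\Pi}-Z_{22}^{T})^{-1}Z_{12}^{T}\tilde{\xi}\bigr]$ of (\ref{eq:modular}). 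A computation of this bilinear form (using that $\tilde{\xi}_k=\epsilon_k\beta(\lambda)\gamma_k+O(1)$ with $\gamma_k=\tfrac{1}{\pi i}\log|\lambda_{2(k+1)}-\lambda_{2(2n-k)-1}|$) shows it equals $\sum_j\log|\lambda_j-\overline{\lambda}_j^{-1}|\,\beta^2(\lambda)+O(1)$; the residual theta $\theta(\tilde{\xi}^{0},\tilde{\Pi}^{0})$ is simply a lower-dimensional theta on the desingularised curve and stays bounded. The $-\tfrac{1}{6}$ then drops out from the elementary integral $\int_{1}^{\infty}\beta^{2}(\lambda)\,d\lambda=-\tfrac{1}{6}$, with no appeal to (\ref{eq:inf_ser}) or to any genus-one reduction. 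Your route could in principle be made to work, but the ``factorisation into $m$ Jacobi factors'' is not quite what happens structurally, and the detour through the XY series is unnecessary once you see that the entire divergence lives in the scalar exponential prefactor. Your remarks about the non-uniformity near $\lambda=1$ and the need for distinct limiting points to keep cross-periods bounded are well placed; the paper handles the former implicitly through the $\beta^{2}$ integrability already established in Section~\ref{se:integ}, and the latter is precisely what keeps the off-diagonal entries of $\tilde{\Pi}$ finite in Lemmas~\ref{le:periodentries2}--\ref{le:periodentries4}.
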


From the integral~(\ref{eq:intr_Ki}) it is evident that in order to
prove theorems~\ref{main_theo1} and~\ref{thm:crit} we need an explicit
asymptotic formula for the determinant $D_L(\lambda)$. Indeed, the
following proposition gives us an asymptotic representation for the
determinants of block-Toeplitz matrices whose symbols belong to the
family defined in equations~(\ref{eq:our_symb}) and~(\ref{eq:g_def}).
\begin{proposition}
\label{th10_07}
Let  $\Omega_{\epsilon}$ be the set
\begin{equation}
\label{Omegaepsilonsr}
\Omega_{\epsilon} : = \{\lambda \in {\Bbb R}: |\lambda| \geq 1 + \epsilon\}.
\end{equation}
Then the Toeplitz determinant $D_L(\lambda)$ admits the following
asymptotic representation, which is uniform in $\lambda \in
\Omega_{\epsilon}$:
\begin{equation}\label{DLasMay2}
D_L(\lambda)
= (1-\lambda^2)^{L}
\frac{ \theta \left(
  \beta(\lambda)\overrightarrow{e}+\frac{ \tau}{2}\right)
  \theta \left(\beta(\lambda)\overrightarrow{e}-\frac{\tau}{2}\right)}{
  \theta^{2}\left(\frac{ \tau}{2}\right)}
\Bigl(1 + O\left(\rho^{-L}\right)\Bigr), \quad L \to \infty,
\end{equation}
Here $\rho$ is any real number satisfying the inequality
$$
1 < \rho < \mathrm{min}\{|\lambda_{j}|: |\lambda_{j}| > 1\}.
$$
\end{proposition}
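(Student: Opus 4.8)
The plan is to obtain $(\ref{DLasMay2})$ by the Deift--Zhou nonlinear steepest descent analysis of the Riemann--Hilbert problem attached to $T_L[\Phi]$, together with a differential identity in the spectral parameter $\lambda$. The exponential prefactor is immediate: the eigenvalues of $\Phi(z)$ are $i(\lambda-1)$ and $i(\lambda+1)$, \emph{independent of} $z$, so $\det\Phi(z)\equiv 1-\lambda^2$ and the Szeg\H{o}--Widom leading term is $(1-\lambda^2)^{L}$. Diagonalising $\Phi(z)$ requires $\sqrt{g(z)}$, and since $g^2$ is the rational function $(\ref{eq:gn_def})$ whose square root lives on $\Lie$, the spectral decomposition forces the hyperelliptic curve $(\ref{eq:L})$ into the problem. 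Crucially, the branch points $\lambda_i$ depend only on $p$ and not on the spectral parameter $\lambda$, so the geometry of $\Lie$ is frozen while $\lambda$ varies. Rather than compute the full Szeg\H{o}--Widom constant directly, I would let $Y(z)$ denote the suitably normalised solution of the $4\times4$ problem with jump $\left(\begin{array}{cc} I & z^{-L}\Phi(z)\\ 0 & I\end{array}\right)$ on $\Xi$, and derive an exact differential identity expressing $\frac{\d}{\d\lambda}\log D_L(\lambda)$ as a contour integral bilinear in $Y$ and $Y^{-1}$; this is the quantity whose large-$L$ limit I would extract.

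The core of the argument is the asymptotic solution of this problem. First I would apply a $g$-function transformation $Y\mapsto T$ (the $g$-function here is unrelated to the symbol $g(z)$), its phase being an Abelian integral on $\Lie$ chosen so that $z^{-L}$ is absorbed and the jump on each branch cut $\Sigma_i$ becomes a \emph{constant} unimodular matrix, while the complementary arcs of $\Xi$ carry jumps exponentially close to the identity. Opening lenses around the cuts, $T\mapsto S$, confines all oscillation to the lens boundaries and leaves a model problem with piecewise-constant jumps on $\bigcup_i\Sigma_i$. The global parametrix $N$ solving this model problem is built by the Its--Matveev construction, its entries being ratios of theta functions of the period matrix $\Pi$ evaluated along the Abel map, times the scalar Szeg\H{o} function of $\Lie$. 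Near each branch point $\lambda_i$ the jump has square-root type, so I would insert Bessel local parametrices; the ratio $R:=S N^{-1}$ then solves a small-norm problem with $R=I+O(\rho^{-L})$, the rate being governed by the distance from $\Xi$ to the nearest exterior branch point --- precisely the hypothesis $1<\rho<\min\{|\lambda_j|:|\lambda_j|>1\}$.

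Substituting the parametrices into the differential identity, the logarithmic $\lambda$-derivative collapses to a flow on the Jacobian of $\Lie$: the eigenvalue ratio $i(\lambda+1)/i(\lambda-1)$ generates exactly $\beta(\lambda)=\frac{1}{2\pi i}\log\frac{\lambda+1}{\lambda-1}$ as the time parameter, the cut data selects the vector $\overrightarrow{e}$ (with last $n$ entries $1$), and the half-period $\tau/2$ arises from the theta characteristics entering $N$. One is thus led to
\[
\frac{\d}{\d\lambda}\log D_L(\lambda)=\frac{2L\lambda}{\lambda^2-1}+\frac{\d}{\d\lambda}\log\frac{\theta\left(\beta(\lambda)\overrightarrow{e}+\frac{\tau}{2}\right)\theta\left(\beta(\lambda)\overrightarrow{e}-\frac{\tau}{2}\right)}{\theta^2\left(\frac{\tau}{2}\right)}+O(\rho^{-L}),
\]
uniformly on $\Omega_{\epsilon}$. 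Integrating in $\lambda$ and fixing the constant of integration by letting $\lambda\to\infty$ --- where $\Phi\to i\lambda\,I$ forces $D_L\sim(i\lambda)^{2L}$, matching $(1-\lambda^2)^{L}$ to leading order, while $\beta(\infty)=0$ and the evenness of $\theta$ make the theta quotient equal $1$ --- reproduces $(\ref{DLasMay2})$.

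The hardest part will be the construction and analysis of the global parametrix $N$ on the genus $g=2n-1$ curve and the bookkeeping needed to pin down the exact theta arguments. In the genus-one XY case of \cite{IJK1,IJK2} the theta functions reduce to $\theta_3$ and the characteristics are elementary; here one must verify that the higher-genus Its--Matveev formula is single-valued around every $a_i$- and $b_i$-cycle, identify the Abel image of the relevant divisor with $\pm\tau/2$, and confirm that the branch data indeed assembles into $\overrightarrow{e}$. By contrast, uniformity over the unbounded set $\Omega_{\epsilon}$ is comparatively mild, precisely because $\Lie$ is $\lambda$-independent: the steepest-descent contours and parametrices are fixed, and $\lambda$ enters only through the constant jump data and through $\beta(\lambda)$, so one need only check that these stay bounded away from the degenerate configuration at $\lambda=\pm1$ (the phase transition), which is guaranteed on $\Omega_{\epsilon}$, while the $\rho^{-L}$ rate is controlled by the fixed distance from $\Xi$ to the branch points.
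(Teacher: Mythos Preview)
Your overall architecture---derive a differential identity for $\frac{\d}{\d\lambda}\log D_L(\lambda)$, extract its large-$L$ asymptotics, integrate in $\lambda$, and fix the constant at $\lambda\to\infty$---is exactly what the paper does, and your final displayed formula for the logarithmic derivative coincides with the paper's Theorem~\ref{th7}. The difference lies entirely in how the asymptotics are obtained.

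The paper does \emph{not} run a full Deift--Zhou steepest descent on a $4\times4$ problem with $g$-function, lens opening, and local parametrices. Instead it exploits the special structure of $\Phi$: since the eigenvalues $i(\lambda\pm1)$ are $z$-independent, the matrix $Q(z)$ in~(\ref{eq:dia}) diagonalises $\Phi$ globally, and the Wiener--Hopf factorisation $\Phi=U_+U_-$ is obtained \emph{exactly} (not asymptotically) by solving the $2\times2$ Riemann--Hilbert problem~(\ref{eq:RHtheta}) for $S(z)$ with piecewise-constant jumps on the branch cuts $\Sigma_i$. That model problem is precisely the one you would call the global parametrix, and Proposition~\ref{pro:sol} solves it via theta functions. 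The differential identity then comes from Widom's classical formula (Theorem~\ref{widom2}/\ref{widom3}), with the $O(\rho^{-L})$ error supplied by the integrable-operator Riemann--Hilbert analysis of \cite{IJK1,IJK2} rather than by a small-norm argument for a ratio $R=SN^{-1}$. The explicit evaluation of the resulting contour integral as a $\beta$-derivative of $\log\bigl(\theta(\beta\overrightarrow{e}+\tau/2)\theta(\beta\overrightarrow{e}-\tau/2)\bigr)$ is done in Theorem~\ref{th6} by showing that $\alpha(z)\,\d z$ is a holomorphic $1$-form on $\Lie$ and computing its $a$-periods.

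Two specific points where your outline overshoots: first, there is no $L$-dependent phase in the Wiener--Hopf problem itself, so no $g$-function or lens opening is needed---the $L$ enters only through Widom's formula, and the exponential error comes from the analyticity of $\Phi$ in the annulus containing $\Xi$. Second, Bessel local parametrices are not required: the branch points $\lambda_i$ lie strictly off the unit circle, and the square-root behaviour at them is handled exactly by the theta-function solution~(\ref{eq:entrytheta}), not approximately by a local model. Your route would presumably work, but it builds machinery that the constant-eigenvalue structure of $\Phi$ makes unnecessary.
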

\begin{remark}
  The first factor in the right hand side of equation (\ref{DLasMay2})
  corresponds to the ``trivial'' factor, $G[\Phi]$ of the general
  Widom's formula (\ref{eq:Wid_theo1}), which we discuss in detail in
  section~\ref{Wid_theo},  while the ratio of the theta
  functions provides an explicit expression of the most interesting
  part of the formula --- Widom's pre-factor $E[\Phi] \equiv \det
  \left(T_{\infty}[\Phi]T_{\infty}[\Phi^{-1}]\right)$, which is given
  in formula~(\ref{eq:Wid_theo2}).
\end{remark}

\begin{remark}
  The Asymptotic representation~(\ref{DLasMay2}) is actually valid in
  a much wider domain of the complex plane $\lambda$. Indeed, it is
  true everywhere away from the zeros of the right hand side, which,
  unfortunately, in the case of the genus $g > 1$ is very difficult to
  express in a simple closed form --- one faces a very transcendental
  object, i.e. the {\it theta-divisor}.  This constitutes an important
  difference between the general case and that one with $g=1$ studied
  in~\cite{IJK1} and \cite{IJK2}, where the zeros of
  equation~(\ref{eq:thet3_zer}) can be easily evaluated.
\end{remark}

\section{Quantum spin chains with anisotropic
Hamiltonians}
\label{spin_chains}
\setcounter{equation}{0}

The XY model is a spin-1/2 ferromagnetic chain with an exchange
coupling $\alpha$ in a constant transversal magnetic field $h$.  The
Hamiltonian is $H=hH_{\alpha}$ with $H_\alpha$ given by
\begin{equation}
  \label{eq:XYmodel}
  H_{\alpha} = -\frac{\alpha}{2}\sum_{j = 0}^{M-1}
  \left[(1 + \gamma)\sigma_j^x \sigma_{j+1}^x + (1-\gamma)\sigma_j^y
    \sigma_{j+1}^y\right] - \sum_{j=0}^{M-1} \sigma_j^z,
\end{equation}
where $\{\sigma^x,\sigma^y,\sigma^z\}$ are the Pauli matrices.  The
parameter $\gamma$ lies in the interval $[0,1]$ and measures the
anisotropy of $H_\alpha$.  When $\gamma =0$ (\ref{eq:XYmodel}) becomes
the Hamiltonian of the XX model.  In the limit $M \to \infty$ the XY
model undergoes a phase transition at $\alpha_{\mathrm{c}}= 1$.

It is well known that the Hamiltonian~(\ref{eq:XYmodel}) can be mapped
into a quadratic form of Fermi operators and then diagonalized.  To
this purpose, we introduce the Jordan-Wigner transformations.  Let us
define
\begin{equation}
  \label{eq:mop}
  m_{2l + 1} = \left(\prod_{j =0}^{l-1} \sigma_j^z
  \right)\sigma_l^x \quad {\rm and} \quad m_{2l} =
  \left(\prod_{j=0}^{l-1} \sigma_j^z\right)\sigma_l^y.
\end{equation}
The inverse relations are
\begin{eqnarray}
  \label{eq:invrel}
  \sigma^z_l & = & i m_{2l} m_{2l + 1}, \nonumber \\
  \sigma_l^x & = &\left(\prod_{j=0}^{l-1}i
    m_{2j}m_{2j+1}\right)m_{2l+1}, \nonumber \\
  \sigma_l^y &=&\left(\prod_{j=0}^{l-1}i m_{2j}m_{2j+1}\right)m_{2l}
\end{eqnarray}
These operators obey the commutation relations
$\{m_j,m_k\}=2\delta_{jk}$ but are not quite Fermi operator since they
are Hermitian.  Thus, we define
\[
b_l = (m_{2l+1} -im_{2l})/2 \quad \textrm{and} \quad b_l^\dagger
= (m_{2l+1} + im_{2l})/2,
\]
which are proper Fermi operator as
\[
\{b_j,b_k\} = 0 \quad \mathrm{and} \quad \{b_j,b_k^\dagger\}=\delta_{jk}.
\]
In terms of the operators $b_j$'s the Hamiltonian~(\ref{eq:XYmodel})
becomes\footnote{This is strictly true only for open-end Hamiltonians.
  If we impose periodic boundary conditions, then the term
  $b^{\dagger}_{M-1}b_0$ in~(\ref{eq:XYmodch}) should be replaced by
  $\left[\prod_{j=0}^{M-1}\left(2b^\dagger_jb_j
      -1\right)\right]b^\dagger_{M-1}b_0$.  However, because we are interested
  in the limit $M \rightarrow \infty$, the extra factor in front of
  $b_{M-1}^\dagger b_0$ can be neglected.}
\begin{equation}
  \label{eq:XYmodch}
  H_\alpha = \frac{\alpha}{2} \sum_{j=0}^{M-1}\left[b^\dagger_jb_{j+1}  +
    b_{j+1}^\dagger b_j  + \gamma \left(b_j^\dagger b^\dagger_{j+1}
      -b_jb_{j+1}\right) \right] -2\sum_{j=0}^{M-1} b_j^\dagger b_j.
\end{equation}

It turns out that the expectation values of
the operators~(\ref{eq:mop}) with respect to the ground state $\left |
  \Psi_{\mathrm{g}} \right \rangle $ are
\begin{eqnarray}
\label{exval1}
\left \langle \Psi_{\mathrm{g}} \right |m_k \left | \Psi_{\mathrm{g}} \right
      \rangle  & = & 0, \\
\label{exval2}
\left \langle \Psi_\mathrm{g} \right | m_jm_k \left | \Psi_{\mathrm{g}} \right
      \rangle   & = &\delta_{jk} + i (C_M)_{jk},
\end{eqnarray}
where the correlation matrix $C_M$ has the block structure
\begin{equation}
  \label{eq:corr_mat}
        C_M = \pmatrix{ C_{11} & C_{12} & \cdots & C_{1M} \cr
                      C_{21} & C_{22} & \cdots & C_{2M} \cr
                      \cdots & \cdots & \cdots & \cdots \cr
                      C_{M1} & C_{M2} & \cdots & C_{MM}}
\end{equation}
with
\[
C_{jk} = \pmatrix{0 & g_{j-k} \cr
                         -g_{k-j} & 0}.
\]
For large $M$, the real numbers $g_l$ are the Fourier coefficients of
\[
g(\theta) = \frac{ \alpha \cos
\theta - 1 - i  \gamma \alpha \sin \theta}{\left | \alpha \cos
\theta -1 -  i \gamma \alpha \sin \theta \right |}.
\]
In other words, $C_M$ is a block-Toeplitz matrix with symbol
\begin{equation}
  \label{eq:symb_C}
  \varphi(\theta) = \pmatrix{0 & g(\theta) \cr
                            - g^{-1}(\theta) & 0 }.
\end{equation}
(We outline the derivations of formulae~(\ref{exval1})
and~(\ref{exval2}) for the family of systems~(\ref{impH}) that we
study in the appendices B and C.)

Equation~(\ref{exval1}) is a straightforward consequence of the
invariance of $H_\alpha$ under the map $b_j \mapsto -b_j$; for the
same reason the expectation value of the product of an odd number of
$m_j$'s must be zero.  Formula~(\ref{exval2}) was derived for the first
time by Lieb \textit{et al.}~\cite{LSM61}.  The expectation values of
the product of an even number of the $m_j$'s can be computed using
Wick's theorem:
\begin{equation}
\label{Wick-Th}
\gsb m_{j_1}m_{j_2}\cdots \, m_{j_{2n}} \gsk =
\sum_{\mathrm{ all \; pairings}} (-1)^p \prod_{\mathrm{all \; pairs}}
\left(\mathrm{contraction \; of \; the \; pair}\right),
\end{equation}
where a contraction of a pair is defined by $\gsb m_{j_l}m_{j_m} \gsk$
and $p$ is the signature of the permutation, for a given pairing,
necessary to bring operators of the same pair next to one other from
the original order. Many important physical quantities, including the
von Neuamnn entropy and the spin-spin correlation functions, are
expressed in terms of the expectation values~(\ref{Wick-Th}).

In this paper we study generalizations of the
Hamiltonian~(\ref{eq:XYmodch}) that are quadratic in the Fermi
operators and translation invariant.  More explicitly, we consider the
family of systems
\begin{equation}
\label{impH}
H_{\alpha} = \alpha \left[\sum_{j,k=0}^{M-1}
b^\dagger_jA_{jk} b_k + \frac{\gamma}{2}\left(b^\dagger_j
B_{jk}b^\dagger_k - b_j B_{jk}b_k \right)\right] -
2\sum_{j=0}^{M-1} b^{\dagger}_jb_j
\end{equation}
with cyclic boundary conditions. In terms of Pauli operators this
Hamiltonian becomes
\begin{eqnarray}
H_{\alpha} & = & -\frac{\alpha}{2} \sum_{0 \le j \le k \le M-1}
 \left[(A_{jk} + \gamma B_{jk})\sigma_j^x
\sigma_k^x \left(\prod_{l=j+1}^{k-1}\sigma_l^z\right)
\right. \nonumber \\
\label{genmodel2}
& &  + (A_{jk}-\gamma B_{jk})\sigma_j^y \left.
\sigma_k^y\left(\prod_{l=j+1}^{k-1}\sigma_l^z\right) \right]
-\sum_{j=0}^{M-1}\sigma_j^z.
\end{eqnarray}

The translation invariance of the interaction implies that
$A_{jk}=A_{j-k}$ and $B_{jk}= B_{j-k}$, and the cyclic boundary
conditions force $A$ and $B$ to be circulant matrices.  Furthermore,
since $H_\alpha$ is a Hermitian operator, the matrices $A$ and $B$
must be symmetric and anti-symmetric respectively.  Now, let us
introduce two real functions,
\[
a:\mathbb{Z}/M\mathbb{Z} \longrightarrow \mathbb{R} \quad \textrm{and}
\quad b : \mathbb{Z}/M\mathbb{Z} \longrightarrow \mathbb{R},
\]
such that
\begin{equation}
  \label{eq:identific}
  a(j-k) = \alpha A_{j-k} -2 \delta_{jk} \quad \textrm{and} \quad
  b(j-k) =\alpha B_{j-k}, \quad j,k \in \mathbb{Z}/M\mathbb{Z}.
\end{equation}
Since $A$ is symmetric and $B$ anti-symmetric, we must have
\[
a(-j)=a(j) \quad  \mathrm{and} \quad  b(-j)=-b(j).
\]
We shall consider systems with finite range interaction, which
implies that there exists a fixed $n < M$ such that
\begin{equation}
  \label{eq:pol_con}
a(j) = b(j) = 0 \quad \mathrm{for} \quad j > n.
\end{equation}

In the appendices B and C we derive the expectation values in the
ground state of the Jordan-Wigner operators $m_j$' s. They have the same
structure as the expectation values~(\ref{exval1}) and~(\ref{exval2}),
but now in the limit as $M \to \infty$ the symbol~(\ref{eq:symb_C}) of
the correlation matrix $C_M$ is replaced by
\begin{equation}
  \label{eq:new_symb}
  \Phi(z)= \pmatrix{0 & g(z) \cr
                     -g^{-1}(z) & 0}, \quad \abs{z}=1,
\end{equation}
where
\begin{eqnarray}
  \label{eq:new_g}
  g(z) & = & \sqrt{\frac{q(z)}{q(1/z)}} =
  \sqrt{\frac{p(z)}{z^{2n}p(1/z)}} \\
  \label{eq:new_g2}
  q(z) & = &  \sum_{j=-n}^n\left(a(j) - \gamma
    b(j)\right)z^j \\
   \label{eq:new_g3}
  p(z) & = & z^nq(z).
\end{eqnarray}

\section{The von Neumann entropy and
block-Toeplitz determinants}
\label{vonneumanentr}
\setcounter{equation}{0}

We now concentrate our attention to study the entanglement of
formation of the ground state $\gsk$ of the family of
Hamiltonians~(\ref{impH}).  Since the ground state is not degenerate,
the density matrix is simply the projection operator
$\projop{\boldsymbol{\Psi}_{\mathrm{g}}}$. We then divide the system
into two subchains: the first one A containing $L$ spins; the second one
B, made of the remaining $M-L$.  We shall further assume that $1 \ll L
\ll M$.  This division creates a bipartite system.  The Hilbert space
of the whole system is the direct product $\mathcal{H}_{\mathrm{AB}}
=\mathcal{H}_{\mathrm{A}} \otimes
\mathcal{H}_{\mathrm{B}}$, where $\mathcal{H}_{\mathrm{A}}$ and
$\mathcal{H}_{\mathrm{B}}$ are spanned by the vectors
\[
\prod_{j=0}^{L-1}(b^\dagger_j )^{r_j} \ket{\boldsymbol{\Psi}_{{\rm
vac}}} \quad \mathrm{and} \quad \prod_{j=L}^{M-L}(b^\dagger_j
)^{r_j} \ket{\boldsymbol{\Psi}_{{\rm vac}}}, \quad r_j=0,1,
\]
respectively.  The vector $\ket{\boldsymbol{\Psi}_{\mathrm{vac}}}$ is
the vacuum state, which is defined by
\[
 b_j\ket{\boldsymbol{\Psi}_{{\rm vac}}} = 0, \quad
 j=0,\ldots,M-1.
\]
Our goal is to determine the asymptotic behavior for large $L$, with
$L = o(M)$, of the von Neumann entropy
\begin{equation}
\label{entang2}
S(\rho_{{\rm A}}) = - \trace \rho_{{\rm A}} \log
\rho_{{\rm A}},
\end{equation}
where $\rho_{{\rm A}}= \trace_{{\rm B}} \rho_{{\rm AB}}$ and
$\rho_{{\rm AB}}= \projop{\boldsymbol{\Psi}_{\mathrm{g}}}$.

It turns out that after computing the partial trace of
$\rho_{\mathrm{AB}}$ over the degrees of freedom of $B$, the reduced
density matrix $\rho_{\mathrm{A}}$ can be expressed in terms of first $L$
Fermi operators that generate a basis spanning
$\mathcal{H}_{\mathrm{A}}$.  As a consequence, only the submatrix $C_L$
formed by the first $2L$ rows and columns of the correlation
matrix~(\ref{eq:corr_mat}) will be relevant in the computation of the
entropy~(\ref{entang2}).  Now, $C_L$ is even dimensional and
skew-symmetric.  Furthermore, since
\[
g\left(e^{-i\theta}\right)=\overline{g\left(e^{i\theta}\right)}
\]
its Fourier coefficients are real, therefore there exists an
orthogonal matrix $V$ that block-diagonalizes $C_L$:
\begin{equation}
  \label{eq:block_diag}
  V C_L V^t = \bigoplus_{j=0}^{L-1} \nu_j \pmatrix{0 & 1 \cr -1 & 0},
\end{equation}
where the $\pm i \nu_j$s are imaginary numbers and are the eigenvalues
of the block-Toeplitz matrix $C_L = T_L[\varphi]$, where $\varphi$ is
the symbol~(\ref{eq:new_symb}).

Let us introduce the operators
\begin{equation}
  \label{eq:new_fermi}
  c_j = (d_{2j + 1} -id_{2j})/2, \quad j=0,\ldots, L-1,
\end{equation}
where
\begin{equation}
  \label{eq:new_jor_wig}
  d_j = \sum_{k=0}^{2L-1} V_{jk}m_k.
\end{equation}
Since $V$ is orthogonal $\{d_j,d_k\}=2\delta_j$ and the $c_j$s are
Fermi operators.  Combining equations~(\ref{eq:block_diag}),
(\ref{eq:new_fermi}) and~(\ref{eq:new_jor_wig}), we obtain the
expectation values
\begin{eqnarray}
\label{od_exv}
\gsb c_j\gsk & = & \gsb c_j\, c_k \gsk = 0, \\
\label{ev_ex_val}
\gsb c_j^\dagger \,c_k\gsk & = & \delta_{jk}\, \frac{1 - \nu_j}{2}.
\end{eqnarray}
The reduced density matrix $\rho_{\mathrm{A}}$ can be computed
directly from these expectation values.  We report this computation in
appendix A. We have
\begin{equation}
\label{rop}
\rho_{{\rm A}} = \prod_{j=0}^{L-1}\left(\frac{1
-\nu_j}{2}\,c^\dagger_j \, c_j + \frac{1 + \nu_j}{2}\, c_j \,
c^\dagger_j \right).
\end{equation}
In other words, as equations~(\ref{od_exv}) and~(\ref{ev_ex_val})
already suggest, these fermionic modes are in a product of
uncorrelated states, therefore the density matrix is the direct
product
\begin{equation}
  \rho_{{\rm A}} = \bigotimes_{j=0}^{L-1} \rho_j \quad {\rm with}
  \quad \rho_j = \frac{1 - \nu_j}{2}\,c^\dagger_j \, c_j + \frac{1 +
    \nu_j}{2}\, c_j \, c^\dagger_j.
\end{equation}
Since $(1 + \nu_j)/2$ and $(1-\nu_j)/2$ are eigenvalues of density
matrices they must lie in the interval $(0,1)$, therefore,
\[
-1 < \nu_j < 1, \quad j=0,\ldots,L-1.
\]
At this point the entropy of the entanglement between the two
subsystems can be easily derived from equation~(\ref{entang2}):
\begin{equation}
\label{nent}
S(\rho_{{\rm A}}) = \sum_{j=0}^{L -1} e(1,\nu_j),
\end{equation}
where $e(x,\nu)$ is defined in equation~(\ref{binaryent}).  Using the
residue theorem, formula~(\ref{nent}) can be rewritten as
\begin{eqnarray}
\label{korep_int}
  S(\rho_{\mathrm{A}}) & = &
  \lim_{\epsilon \to 0^+} \frac{1}{4\pi i} \oint_{\Gamma(\epsilon)}
  \left((-1)^L\sum_{j=0}^{L-1}\frac{ 2\lambda }%
 {\lambda^2 - \nu_j^2}\right)e(1 +\epsilon, \lambda) \d \lambda
\nonumber \\
  &=& \lim_{\epsilon \to 0^+} \frac{1}{4\pi i} \oint_{\Gamma(\epsilon)}
  e(1 +\epsilon, \lambda)\frac{\d \log D_L(\lambda)}{\d\lambda} \d \lambda
\end{eqnarray}
where $\Gamma(\epsilon)$ is the contour in figure~\ref{fig1} and
\begin{equation}
  \label{eq:DL}
  D_L(\lambda) = (-1)^L\prod_{j=0}^{L-1}(\lambda^2 - \nu_j^2)
\end{equation}
is the determinant of the block-Toeplitz\ matrix $T_L[\Phi](\lambda)$
with symbol~(\ref{eq:our_symb}).

The integral~(\ref{korep_int}) was introduced for the first time by
Jin and Korepin~\cite{JK} to compute the entropy of entanglement in
the XX model. In this case $g^{-1}(\theta)=g(\theta)$ and
$D_L(\lambda)$ becomes the determinant of a Toeplitz matrix with a
scalar symbol.  Keating and Mezzadri~\cite{KM04,KM05} generalized it
to lattice models where $D_L(\lambda)$ becomes an average over one of
the classical compact groups.  Its \textit{et al.}~\cite{IJK1,IJK2}
computed the same integral for the XY model, for which $D_L(\lambda)$
is the determinant of a block-Toeplitz matrix with
symbol~(\ref{eq:ijk_symb}).  Following the same approach of
Its~\textit{et al.}, in this paper we express $D_L(\lambda)$ as a
Fredholm determinant of an integrable operator on
$L^2(\Xi,\mathbb{C}^2)$ and solve the Riemann-Hilbert problem
associated to it.  This will give an explicit formula for
$D_L(\lambda)$, which can then be used to compute the
integral~(\ref{korep_int}).

\section{The Asymptotics of Block Toeplitz Determinants.
Widom's Theorem}
\label{Wid_theo}
\setcounter{equation}{0}

A generalization of the strong Szeg\H{o}'s theorem to determinants of
block-Toeplitz matrices was first discovered by
Widom~\cite{Wid74,Wid75}.  Consider a $p\times p$ matrix symbol
$\varphi$ and assume that
\[
|| \varphi || =\sum_{k=-\infty}^\infty ||\varphi_k|| +
\left(\sum_{k=-\infty}^\infty |k|\,  ||\varphi_k||^2\right)^{1/2} < \infty.
\]
The norm that appear in the right-hand side of this equation is the
Hilbert-Schmidt norm of the $p \times p$ matrices that occur.  In
addition, we shall require that
\[
\det \varphi(z) \neq 0 \quad \mathrm{and} \quad \Delta|_{|z|=1} \arg
\det \varphi(z) =0.
\]
Widom showed that if one defines
\begin{equation}
  \label{eq:Wid_theo1}
  G[\varphi] := \exp\left(\frac{1}{2\pi i} \int_{\Xi} \log \det
  \varphi(z) \frac{\d z}{z} \right)
\end{equation}
then
\begin{equation}
  \label{eq:Wid_theo2}
E[\varphi] :=    \lim_{L \to \infty} \frac{D_L[\varphi]}{G[\varphi]^{L + 1}} =
\det \left(T_\infty[\varphi]T_{\infty}[\varphi^{-1}]\right),
\end{equation}
where $T_\infty[\varphi]$ is a semi-infinite Toeplitz matrix acting on
the Hilbert space of semi-infinite sequence of $p$-vectors:
\[
l^2 = \left\{ \{\mathbf{v}_k\}_{k=0}^\infty \left |\,
\mathbf{v}_k \in \mathbb{C}^p, \quad \sum_{k=0}^\infty ||
\mathbf{v}_k||^2 < \infty \right.\right\}.
\]
Formulae~(\ref{eq:Wid_theo1}) and~(\ref{eq:Wid_theo2}) reduce to
Szeg\H{o}'s strong limit theorem when $p=1$.  Although this beautiful
formula is very general, it is difficult to extract information from
the right-hand side of equation~(\ref{eq:Wid_theo2}) and determine
formulae that can be used in the applications.  The advantage of our
approach is precisely to derive explicit formula for the leading order
term of the asymptotics of block-Toeplitz determinants whose symbols
$\Phi(z)$ belong to the one-parameter family defined
in~(\ref{eq:our_symb}).

A starting point of our analysis is the asymptotic representation of
the logarithmic derivative (with respect to the parameter $\lambda$)
of the determinant $D_{L}(\lambda) = \det T_L [\Phi](\lambda)$ in
terms of $2\times 2$ matrix-valued functions, denoted by $U_{\pm}(z)$
and $V_{\pm}(z)$, which solve the following Wiener-Hopf factorization
problem:
\begin{eqnarray}
\label{eq:WH}
\Phi(z)& = & U_+(z)U_-(z)=V_-(z)V_+(z), \\
U_-(z) &  \mbox{and}&  V_-(z)\quad  (U_+(z) \quad \mbox{and} \quad
V_+(z)) \quad  \mbox{are analytic outside (inside)} \nonumber\\
 \mbox{the unit}& \mbox{circle}& \Xi,  \\
U_-(\infty)& = & V_-(\infty)=I.
\end{eqnarray}
Now, let us fix $\epsilon > 0$ and define the set
\begin{equation}\label{Omegaepsilon}
\Omega_{\epsilon} : = \{\lambda \in {\Bbb R}: |\lambda| \geq 1 + \epsilon\}.
\end{equation}

In the next section   we will show that
for every $\lambda \in \Omega_{\epsilon}$ the  solution of the above
Wiener-Hopf factorization problem exists, and the corresponding matrix
functions, $U_{\pm}(z)$ and $V_{\pm}(z)$
satisfy the following uniform estimate:
\begin{equation}
   \label{WHest}
   \left|\frac{1}{\lambda}U_{+}(z)\right|,\,\,
    \left|\frac{1}{\lambda}V_{+}(z)\right|,\,\,
    |U_{-}(z)|,\,\, |V_{-}(z)| < C_{\epsilon},\quad \forall z \in
    \mathcal{D}_{\pm}, \quad \forall \lambda
    \in \Omega_{\epsilon},
\end{equation}
where the notation $ \mathcal{D}_{+}$ ($\mathcal{D}_{-}$) is used for
the interior (exterior) of the unit circle $\Xi$.  Moreover,
generalizing the approach of \cite{IJK1,IJK2} we will obtain
the multidimensional theta function explicit formulae for the
functions $U_{\pm}(z)$ and $V_{\pm}(z)$.

The asymptotic representation of the logarithmic derivative $\d \log
D_{L}(\lambda)/\d\lambda$ is given by the following theorem:
\begin{theorem}\label{widom1}
  Let $\lambda \in \Omega_{\epsilon}$, and fix a positive number $R >
  0$.  Then, we have the following asymptotic representation for the
  logarithmic derivative of the determinant $D_{L}(\lambda) = \det
  T_L[\Phi] $:
\begin{eqnarray}
  \label{our}
  \frac{\d}{\d\lambda}\log
  D_L(\lambda) & = &  -\frac{2\lambda}{1-\lambda^{2}}L \nonumber \\
& & + \frac{1}{2\pi} \int_{\Xi}\trace \, \Bigl[\left(U_{+}'(z)U_{+}^{-1}(z)
+V_{+}^{-1}(z)V_{+}'(z)\right)\Phi^{-1}(z)\Bigr]\d z \nonumber \\
& & + r_{L}(\lambda),
\end{eqnarray}
where $(')$ means the derivative with respect to $z$, the error term
$r_{L}(\lambda)$ satisfies the estimate
\begin{equation}
\label{errorour}
|r_{L}(\lambda)|\leq C \rho^{-L}, \quad
 \quad \lambda \in \Omega_{\epsilon}\cap\{|\lambda| \leq R\},
\quad L \geq 1,
\end{equation}
and $\rho$ is any real number such that $1 < \rho <
\mathrm{min}\{|\lambda_{j}|:
|\lambda_{j}| > 1\}$.
\end{theorem}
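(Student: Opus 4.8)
The plan is to reduce the logarithmic derivative to the trace of an inverse and then read off its large-$L$ asymptotics from the Wiener--Hopf factors. First I would apply Jacobi's identity. Since $D_L(\lambda)=\det T_L[\Phi]$ and, by the special form of the symbol (\ref{eq:our_symb}), the $\lambda$-derivative $\partial_\lambda\Phi(z)=iI$ is constant in $z$, the Fourier coefficients give $T_L[\partial_\lambda\Phi]=iI_{2L}$, whence
\[
\frac{\d}{\d\lambda}\log D_L(\lambda)=\trace\!\left(T_L[\Phi]^{-1}T_L[\partial_\lambda\Phi]\right)=i\,\trace\,T_L[\Phi]^{-1}.
\]
This is legitimate on $\Omega_{\epsilon}$ because there $\lambda^2-\nu_j^2\neq0$, so $D_L(\lambda)\neq0$ by (\ref{eq:DL}) and the inverse exists. (Equivalently, through the representation $D_L(\lambda)=\det(I-\mathbf{K}_L)$ one has $\partial_\lambda\log D_L=-\trace((I-\mathbf{K}_L)^{-1}\partial_\lambda\mathbf{K}_L)$; this is the form that will interface with the Riemann--Hilbert problem.) In either case the whole task becomes the asymptotic evaluation of $\trace\,T_L[\Phi]^{-1}$, uniformly in $\lambda\in\Omega_{\epsilon}$.

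Next I would split $T_L[\Phi]^{-1}$ into a bulk part and two edge corrections governed by the two Wiener--Hopf factorizations $\Phi=U_+U_-=V_-V_+$ of (\ref{eq:WH}). The bulk part agrees, up to edge and exponentially small terms, with the truncation $T_L[\Phi^{-1}]$ of the inverse symbol. Since $\det\Phi(z)=(i\lambda)^2-g(z)\cdot(-g^{-1}(z))=1-\lambda^2$ is independent of $z$, the matrix $\Phi^{-1}(z)$ has diagonal entries equal to the constant $i\lambda/(1-\lambda^2)$, so the zeroth Fourier coefficient of $\Phi^{-1}$ has $2\times2$ trace $2i\lambda/(1-\lambda^2)$. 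Summing over the $L$ diagonal blocks gives $\trace\,T_L[\Phi^{-1}]=L\cdot2i\lambda/(1-\lambda^2)$, and multiplication by $i$ reproduces \emph{exactly} the leading term $-\tfrac{2\lambda}{1-\lambda^2}L$ of (\ref{our}).

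The two edge corrections produce the $L$-independent integral in (\ref{our}). Following the finite-section theory---and, rigorously, the Its--Izergin--Korepin--Slavnov representation of the resolvent of $\mathbf{K}_L$ in terms of the solution $Y$ of the associated Riemann--Hilbert problem, which is itself built from $U_\pm,V_\pm$---the correction localized at the mode-$0$ end is governed by the interior-analytic factor $U_+$ and the one at the mode-$(L-1)$ end by $V_+$. Converting the resulting Fourier-coefficient sums into contour integrals over $\Xi$ by Parseval/residue arguments I expect to arrive at
\[
i\,\trace\,T_L[\Phi]^{-1}=-\frac{2\lambda}{1-\lambda^2}\,L+\frac{1}{2\pi}\int_{\Xi}\trace\Bigl[\bigl(U_+'(z)U_+^{-1}(z)+V_+^{-1}(z)V_+'(z)\bigr)\Phi^{-1}(z)\Bigr]\,\d z+r_L(\lambda),
\]
the logarithmic-derivative combinations $U_+'U_+^{-1}$ and $V_+^{-1}V_+'$ being precisely what encodes the two interior factors at the two ends.

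Finally the error. The two edge corrections decouple, and their residual coupling (together with the bulk/edge truncation error) is controlled by how far $\Phi^{-1}$, equivalently $g$, continues analytically off $\Xi$. By (\ref{eq:g_def})--(\ref{eq:gn_def}) the only singularities of $g$ are the branch points $\lambda_j$ of $\Lie$, so $g$ is analytic in the annulus $1/\rho<|z|<\rho$ for any $1<\rho<\min\{|\lambda_j|:|\lambda_j|>1\}$, and this is what yields the geometric factor $\rho^{-L}$ in (\ref{errorour}); uniformity in $\lambda\in\Omega_{\epsilon}\cap\{|\lambda|\le R\}$ follows from the uniform bounds (\ref{WHest}) on $U_\pm,V_\pm$ proved in the next section. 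The main obstacle is exactly this step: promoting the heuristic bulk/edge splitting into a rigorous, uniform estimate. I expect to handle it via the Riemann--Hilbert problem, writing the resolvent kernel through $Y$ and deforming $\Xi$ onto the branch cuts $\Sigma_i$, so that the $z^L$ factor carried by the jump becomes exponentially small away from $\Xi$ and simultaneously isolates the explicit integral term while producing the $O(\rho^{-L})$ bound. Verifying that the surviving $L$-independent contribution assembles into exactly the stated trace integral, with the correct pairing of $U_+$ and $V_+$ to the two ends, is the most delicate bookkeeping of the argument.
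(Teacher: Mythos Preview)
Your proposal is essentially aligned with how the paper treats this theorem, though it is worth noting that the paper does \emph{not} supply a self-contained proof of Theorem~\ref{widom1}. Immediately after the statement the paper says that the formula without the error estimate is a specialization of Widom's classical result~\cite{Wid74} to symbols of the form $\Phi(z;\lambda)=i\lambda I+\Phi(z;0)$, and that the error bound~(\ref{errorour}) (together with an alternative proof of the formula itself) follows from the Riemann--Hilbert/integrable-operator method of~\cite{IJK1,IJK2}, whose extension from genus one to arbitrary genus is ``straightforward.'' So there is no detailed argument in the paper for you to compare against line by line.

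That said, your outline is precisely the skeleton of the referenced proofs. The reduction via Jacobi to $i\,\trace T_L[\Phi]^{-1}$ is exactly right (and is equivalent to Widom's starting point once one uses $\partial_\lambda\Phi=iI$), your computation of the bulk term from the zeroth Fourier coefficient of $\Phi^{-1}$ is correct, and your identification of the rigorous route for the edge corrections and the error---the IIKS resolvent formula expressed through the Riemann--Hilbert data and contour deformation off $\Xi$ into the annulus of analyticity of $g$---is the method of~\cite{IJK1,IJK2} that the paper invokes. Your acknowledged ``main obstacle'' (turning the bulk/edge heuristic into a uniform estimate) is exactly the work that the cited papers carry out; the paper here simply takes that as input.
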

This theorem, without the error term estimate is a specification of
one of the classical results of H. Widom \cite{Wid74} for the case of
the matrix generators $\Phi(z)$ whose dependence on the extra
parameter $\lambda$ is given by the equation
$$
\Phi(z) \equiv \Phi(z;\lambda) = i\lambda I + \Phi(z;0).
$$
The estimate (\ref{errorour}) of the error term as well as an
alternative proof of the theorem itself in the case of curves of
genus one is given in \cite{IJK1} and~\cite{IJK2}. The method of
\cite{IJK1} and~\cite{IJK2} is based on the Riemann-Hilbert approach
to the Toeplitz determinants \cite{D} and on the theory of the
integrable Fredholm operators~\cite{IIKS,HI}; its extension to
symbols (\ref{eq:our_symb}), where the polynomial $p(z)$ entering
in~(\ref{eq:g_def}) is of arbitrary degree, is straightforward.
Indeed, the following generalization of theorem \ref{widom1} follows
directly from the analytic considerations of \cite{IJK2}.
\begin{theorem}
\label{widom2}
Suppose that the matrix generator $\Phi(z)$ is analytic in the annulus,
$$
\mathcal{D}_{\delta} = \{1-\delta < |z| < 1 + \delta\}.
$$
Suppose also that $\Phi(z)$ depends analytically on an extra
parameter $\mu$ and that it admits a Wiener-Hopf factorisation
for all $\mu$ from a certain set $\mathcal{M}$. Finally, we shall
assume that the matrix functions
$$
 \Phi(z),\,\, \Phi^{-1}(z),\,\, \frac{\partial \Phi(z)}{\partial \mu},\,\,
 U_{\pm}(z), \,\,\mbox{and}\,\, V_{\pm}(z)
 $$
 are uniformly bounded for all $\mu \in  \mathcal{M}$ and all
 $z$ from the respective domains, i.e. $\mathcal{D}_{\delta}$ in the case
 of $\Phi(z)$, $\Phi^{-1}(z)$, and ${\partial \Phi(z)}/{\partial \mu}$,
 and $\mathcal{D}_{\pm}$ in the case of $U_{\pm}(z)$
 and  $V_{\pm}(z)$. Then, the logarithmic derivative of
the determinant $D_{L}(\mu) = \det T_L[\Phi] $ has the following
asymptotic representation:
\begin{eqnarray}
  \label{our1}
   \frac{\d}{\d\mu}\log
  D_L(\mu)& = & \frac{L}{2\pi i}\int_{\Xi}\trace\,
  \left(\Phi^{-1}(z)\frac{\partial \Phi(z)}{\partial \mu}\right)
  \frac{\d z}{z}
+  \frac{1}{2\pi i}\int_{\Xi}\trace\,
  \left((\Phi^{-1})'(z)\frac{\partial \Phi(z)}{\partial \mu}\right)
  \frac{\d z}{z} \nonumber \\
& & + \frac{1}{2\pi i} \int_{\Xi}\trace\, \left(U_{+}'(z)U_{+}^{-1}(z)
\frac{\partial \Phi(z)}{\partial \mu}\Phi^{-1}(z)
+V_{+}^{-1}(z)V_{+}'(z)\Phi^{-1}(z)\frac{\partial \Phi(z)}{\partial
\mu}\right)\d z \nonumber \\
& & + r_{L}(\mu),
\end{eqnarray}
where the error term $r_{L}(\mu)$ satisfies the uniform estimate
\begin{equation}
\label{errorour1}
|r_{L}(\mu)|\leq C \rho^{-L}, \quad
 \quad \mu \in \mathcal{M},
\quad L \geq 1,
\end{equation}
and $\rho$ is any positive number such that $1 < \rho < 1+\delta$.
\end{theorem}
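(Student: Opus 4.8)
The plan is to observe that the Riemann--Hilbert analysis of \cite{IJK2} underlying Theorem~\ref{widom1} never uses the special affine dependence $\Phi(z;\lambda)=i\lambda I+\Phi(z;0)$ beyond three structural facts, which are precisely the hypotheses promoted into the present statement: analyticity of $\Phi$ in a fixed annulus $\mathcal{D}_{\delta}$, existence of a Wiener--Hopf factorisation, and uniform boundedness of $\Phi,\Phi^{-1},\partial_{\mu}\Phi,U_{\pm},V_{\pm}$. I would therefore re-run that argument, substituting the variable matrix $\partial_{\mu}\Phi$ for the constant matrix $\partial_{\lambda}\Phi=iI$ wherever it enters, and track the two additional terms this substitution generates.

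Concretely, I would first recall from \cite{IJK2} (in the spirit of \cite{D,IIKS,HI}) the representation of $D_{L}(\mu)=\det T_{L}[\Phi]$ as a Fredholm determinant $\det(I-\mathbf{K}_{L})$ of an integrable operator on $L^{2}(\Xi,\mathbb{C}^{2})$, as in (\ref{eq:Fred_det}), whose kernel is built from the factorisation data and carries the $L$-dependence through a factor $z^{\pm L}$. Differentiating the logarithm of the determinant gives
\[
\frac{\d}{\d\mu}\log D_{L}(\mu)=-\trace\Bigl((I-\mathbf{K}_{L})^{-1}\frac{\partial \mathbf{K}_{L}}{\partial\mu}\Bigr),
\]
and the theory of integrable operators expresses the resolvent $(I-\mathbf{K}_{L})^{-1}$ through the solution of the Riemann--Hilbert problem associated with $\mathbf{K}_{L}$. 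This is exactly the machinery of \cite{IJK2}; the only change is that $\partial \mathbf{K}_{L}/\partial\mu$ now carries $\partial_{\mu}\Phi$ in place of $iI$.

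Second, I would carry out the large-$L$ asymptotic analysis of that Riemann--Hilbert problem. Since $\Phi$ is analytic throughout the annulus $\mathcal{D}_{\delta}$, its jump contour can be deformed off the unit circle so that the off-diagonal part of the jump matrix, which carries the oscillatory factor $z^{\pm L}$, becomes uniformly of size $\rho^{-L}$ for any $\rho$ with $1<\rho<1+\delta$; the uniform bounds on $\Phi,\Phi^{-1},\partial_{\mu}\Phi,U_{\pm},V_{\pm}$ ensure that the implied constants do not depend on $\mu\in\mathcal{M}$. In the formal limit the problem decouples and is solved by the Wiener--Hopf factors $U_{\pm},V_{\pm}$ of (\ref{eq:WH}). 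Substituting the resulting leading asymptotics of the resolvent into the trace formula splits $\partial_{\mu}\log D_{L}$ into a bulk term proportional to $L$, namely $\frac{L}{2\pi i}\int_{\Xi}\trace(\Phi^{-1}\partial_{\mu}\Phi)\frac{\d z}{z}$, which is the logarithmic $\mu$-derivative of the ``trivial'' factor $G[\Phi]^{L}$, together with two $O(1)$ contributions: the term $\frac{1}{2\pi i}\int_{\Xi}\trace((\Phi^{-1})'\partial_{\mu}\Phi)\frac{\d z}{z}$ and the prefactor term built from $U_{+}'U_{+}^{-1}$ and $V_{+}^{-1}V_{+}'$. Up to the exponentially small error, this reproduces (\ref{our1}).

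Finally, I would verify the uniform estimate (\ref{errorour1}). Every correction discarded in the contour deformation is controlled by the size of the deformed off-diagonal jump, hence by $C\rho^{-L}$ with $C$ depending only on the uniform bounds, and therefore independent of $\mu\in\mathcal{M}$; this is exactly where the two structural hypotheses---analyticity in $\mathcal{D}_{\delta}$ and uniform boundedness---are consumed. The main obstacle is not conceptual but one of bookkeeping and uniformity: one must check that the identification of $U_{\pm},V_{\pm}$ with the solution of the limiting Riemann--Hilbert problem, and the resulting split of the trace into the three displayed integrals, survives the replacement of the constant matrix $iI$ by the $z$-dependent matrix $\partial_{\mu}\Phi$, and remains uniform in $\mu$. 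A useful consistency check is that setting $\partial_{\mu}\Phi=iI$ forces $\partial_{\mu}\Phi\,\Phi^{-1}=i\Phi^{-1}$, makes the second integrand vanish (since $\trace\Phi^{-1}$ is then independent of $z$), and collapses the third term into the prefactor integral of (\ref{our}), so that (\ref{our1}) reduces to Theorem~\ref{widom1}. Because $\partial_{\mu}\Phi$ is itself analytic in $\mathcal{D}_{\delta}$ and uniformly bounded, no genuinely new analytic input is required, and the Riemann--Hilbert argument of \cite{IJK2} carries through to give (\ref{our1}) with the stated remainder.
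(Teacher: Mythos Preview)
Your proposal is correct and matches the paper's own treatment: the paper does not give a self-contained proof of this theorem but simply states that it ``follows directly from the analytic considerations of \cite{IJK2}'' (the Riemann--Hilbert/integrable-operator approach), noting that without the error estimate it is Widom's classical result \cite{Wid74}. Your sketch is precisely an unpacking of that deferral---representing $D_L$ as a Fredholm determinant, differentiating, expressing the resolvent via the associated Riemann--Hilbert problem, and exploiting analyticity in $\mathcal{D}_\delta$ for the contour deformation that produces the $\rho^{-L}$ decay---so there is nothing to add.
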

This theorem, without the estimate of the error term and with much
weaker assumptions on the generator $\Phi(z)$, is exactly the
classical result of Widom from \cite{Wid74}.

\begin{remark}
Denote
$$
u_{\pm}(z) = V^{-1}_{\pm}(z), \quad \mbox{and}\quad v_{\pm}(z) =
U^{-1}_{\pm}(z),
$$
so that
$$
\Phi^{-1}(z) = u_{+}(z)u_{-}(z) = v_{-}(z)v_{+}(z).
$$
Then, equation~(\ref{our1}) can be re-written in a more compact way:
\begin{eqnarray}
\label{our11}
\frac{\d}{\d\mu}\log
  D_L(\mu) & = &  \frac{L}{2\pi i}\int_{\Xi}\trace\,
  \left(\Phi^{-1}(z)\frac{\partial \Phi(z)}{\partial \mu}\right)
  \frac{\d z}{z} \nonumber \\
& & + \frac{i}{2\pi} \int_{\Xi}\trace\, \left((u_{+}'(z)u_{-}(z)
-v_{-}'(z)v_{+}(z))\frac{\partial \Phi(z)}{\partial \mu}\right)\d z
\nonumber \\
&& + r_{L}(\mu).
\end{eqnarray}
This form in which this result is formulated in \cite{Wid74}.
\end{remark}

Theorem \ref{widom2} can be used to strengthen the statement of
theorem \ref{widom1} by removing the dependence of the constant $C$ on
$R$ in the estimate (\ref{errorour}).  This leads to the following
extension of theorem \ref{widom1}:
\begin{theorem}
  \label{widom3}
  Let $\Omega_{\epsilon}$ be the set defined in (\ref{Omegaepsilon})
  and let $\Phi(z)$ be the symbol defined in~(\ref{eq:our_symb}).
  Then we have the following asymptotic representation of the
  logarithmic derivative of the determinant $D_{L}(\lambda) = \det
  T_L[\Phi] $ for all $\lambda \in \Omega_{\epsilon}$:
\begin{eqnarray}
  \label{our2}
  \frac{\d}{\d\lambda}\log
  D_L(\lambda) & = &  -\frac{2\lambda}{1-\lambda^{2}}L
   + \frac{1}{2\pi} \int_{\Xi}\trace\, \Bigl[\left(U_{+}'(z)U_{+}^{-1}(z)
    +V_{+}^{-1}(z)V_{+}'(z)\right)\Phi^{-1}(z)\Bigr]\d z \nonumber \\
  && + r_{L}(\lambda),
\end{eqnarray}
where $(')$ means the derivative with respect to $z$,
 the error term $r_{L}(\lambda)$ satisfies the uniform estimate
\begin{equation}
\label{errorour2}
|r_{L}(\lambda)|\leq \frac{C}{|\lambda|^3} \rho^{-L}, \quad
 \quad \lambda \in \Omega_{\epsilon},
\quad L \geq 1
\end{equation}
and $\rho$ is any real number such that $1 < \rho <
\mathrm{min}\{|\lambda_{j}|:
|\lambda_{j}| > 1\}$.
\end{theorem}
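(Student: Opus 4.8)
The strategy is to split $\Omega_{\epsilon}$ into a bounded part, on which the claim reduces to Theorem~\ref{widom1}, and the region of large $|\lambda|$, which I would treat by reducing it to the \emph{uniform} statement of Theorem~\ref{widom2} after rescaling the symbol. On $\Omega_{\epsilon}\cap\{|\lambda|\le R\}$ the estimate~(\ref{errorour2}) already follows from~(\ref{errorour}), since $|\lambda|^{-3}$ is bounded below there; the entire point is therefore to gain the decay $|\lambda|^{-3}$ as $\lambda\to\infty$ with a constant independent of $R$. The plan is to show that in the variable $\mu=1/\lambda$ the error of the Widom representation is analytic near $\mu=0$, uniformly $O(\rho^{-L})$, and vanishes at $\mu=0$; two powers of $\mu$ from the chain rule together with this vanishing then produce the factor $|\lambda|^{-3}$.

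Concretely, I would set $\mu=1/\lambda$ and introduce the rescaled symbol
\begin{equation}
   \widehat{\Phi}(z;\mu):=\mu\,\Phi(z;\lambda)=iI+\mu\,\Phi_0(z),\qquad
   \Phi_0(z):=\pmatrix{0 & g(z)\cr -g^{-1}(z)& 0}=\Phi(z;0),\nonumber
\end{equation}
so that $\partial\widehat{\Phi}/\partial\mu=\Phi_0$ is $\lambda$-independent and $\widehat\Phi(\cdot;0)=iI$. Since $\Phi=\mu^{-1}\widehat\Phi$ rescales a $2L\times2L$ matrix, $D_L(\lambda)=\mu^{-2L}\widehat D_L(\mu)$ with $\widehat D_L(\mu):=\det T_L[\widehat\Phi]$, whence
\begin{equation}
   \frac{\d}{\d\lambda}\log D_L(\lambda)=2L\mu-\mu^{2}\,\frac{\d}{\d\mu}\log\widehat D_L(\mu).\nonumber
\end{equation}
I would then apply Theorem~\ref{widom2} to $\widehat\Phi$ with parameter $\mu$ on a disc $\mathcal{M}=\{|\mu|<\mu_0\}$ (equivalently $|\lambda|>1/\mu_0$). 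Its hypotheses hold: $\widehat\Phi$, $\widehat\Phi^{-1}=\lambda\Phi^{-1}$ and $\Phi_0$ are analytic in the annulus $\mathcal{D}_{\delta}$ with $1+\delta$ up to $\min\{|\lambda_j|:|\lambda_j|>1\}$ (the branch points of $g$ being the only singularities), and the factors $\widehat U_+=\mu U_+$, $\widehat U_-=U_-$, $\widehat V_+=\mu V_+$, $\widehat V_-=V_-$ (which exist for $\lambda\in\Omega_\epsilon$ by the construction of section~\ref{WH_fact}) are uniformly bounded on $\mathcal{M}$ by the estimate~(\ref{WHest}). Theorem~\ref{widom2} then yields a representation of $\frac{\d}{\d\mu}\log\widehat D_L(\mu)$ whose error $\widehat r_L(\mu)$ obeys $|\widehat r_L(\mu)|\le C\rho^{-L}$ uniformly on $\mathcal{M}$, with $1<\rho<\min\{|\lambda_j|:|\lambda_j|>1\}$.

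The next step is purely algebraic: substitute this representation into the chain-rule identity and check that the main term reproduces~(\ref{our2}) exactly. Using $\widehat\Phi^{-1}=\lambda\Phi^{-1}$, $\widehat U_+'\widehat U_+^{-1}=U_+'U_+^{-1}$, $\widehat V_+^{-1}\widehat V_+'=V_+^{-1}V_+'$ and $\Phi_0=\Phi-i\lambda I$, one finds that $\trace(\widehat\Phi^{-1}\Phi_0)=2\lambda/(1-\lambda^2)$ is constant in $z$, so that $2L\mu$ together with $-\mu^2$ times the $L$-proportional integral of~(\ref{our1}) collapses to $-\frac{2\lambda}{1-\lambda^2}L$. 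The second, $z$-weighted integral of~(\ref{our1}) vanishes identically because $\det\Phi=1-\lambda^2$ and $\trace\Phi^{-1}$ are $z$-independent; and in the factor integral the spurious terms proportional to $\trace(U_+'U_+^{-1})=\frac{\d}{\d z}\log\det U_+$ integrate to zero over $\Xi$, as $\det U_+$ is analytic and non-vanishing inside. What survives is precisely $\frac{1}{2\pi}\int_\Xi\trace[(U_+'U_+^{-1}+V_+^{-1}V_+')\Phi^{-1}]\,\d z$, and the error is identified as $r_L(\lambda)=-\mu^2\widehat r_L(\mu)$.

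Finally, I would extract the extra power of $|\lambda|^{-1}$. At $\mu=0$ the symbol $\widehat\Phi=iI$ is constant, so $\widehat D_L(0)=(-1)^L$ and $\frac{\d}{\d\mu}\log\widehat D_L\big|_{\mu=0}=-i\,\trace T_L[\Phi_0]=0$, since $\Phi_0$ is off-diagonal hence traceless; the main term of Theorem~\ref{widom2} vanishes at $\mu=0$ for the same reason, so $\widehat r_L(0)=0$. Because $\widehat r_L$ is analytic in $\mu$ on $\mathcal{M}$ (the entries of $T_L[\widehat\Phi]$ are affine in $\mu$, $\widehat D_L(0)\ne0$, and the factorization depends analytically on $\mu$) and is bounded there by $C\rho^{-L}$, the Schwarz lemma (a Cauchy estimate) gives $|\widehat r_L(\mu)|\le C'|\mu|\rho^{-L}$ on a slightly smaller disc, whence $|r_L(\lambda)|=|\mu|^2|\widehat r_L(\mu)|\le C'|\lambda|^{-3}\rho^{-L}$ for $|\lambda|\ge R_0$. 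Combining this with Theorem~\ref{widom1} on $\Omega_\epsilon\cap\{|\lambda|\le R_0\}$, where $|r_L|\le C(R_0)\rho^{-L}\le C(R_0)R_0^3\,|\lambda|^{-3}\rho^{-L}$, yields the uniform bound~(\ref{errorour2}) on all of $\Omega_\epsilon$. I expect the main obstacle to be the verification of the hypotheses of Theorem~\ref{widom2} for $\widehat\Phi$ — in particular the uniform boundedness of the rescaled factors and their analytic dependence on $\mu$ up to $\mu=0$ — but this is exactly the content of~(\ref{WHest}) and the Riemann--Hilbert construction of section~\ref{WH_fact}, which we may assume.
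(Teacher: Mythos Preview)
Your approach is essentially identical to the paper's own proof: the same rescaling $\mu=1/\lambda$ (the paper uses $\tilde{\Phi}=\frac{1}{i\lambda}\Phi$, you use $\widehat{\Phi}=\mu\Phi$, a cosmetic difference of a factor $i$), the same application of Theorem~\ref{widom2} to the rescaled symbol, the same algebraic reduction of the main terms using $\partial_{\mu}\widehat{\Phi}\cdot\widehat{\Phi}^{-1}=\lambda I-i\lambda^{2}\Phi^{-1}$, and the same Cauchy/Schwarz argument exploiting analyticity in $\mu$ together with $\widehat{r}_{L}(0)=0$ to upgrade the bound by one power of $|\mu|$.

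One small point of care, which the paper makes explicit and you flag only in passing: the estimate~(\ref{WHest}) is stated for real $\lambda\in\Omega_{\epsilon}$, so it does not by itself give uniform bounds on the Wiener--Hopf factors over a \emph{complex} disc in $\mu$. The paper therefore first applies Theorem~\ref{widom2} on the real segment $\{\mu\in\mathbb{R}:|\mu|\le 1/R\}$ and only afterwards extends $\widehat{r}_{L}$ analytically to a small complex disc $\mathcal{M}_{0}$ around $\mu=0$, using that $\widehat{\Phi}(\cdot;0)=iI$ has a trivial factorisation which persists under small perturbations. Your sketch collapses these two steps, but since you correctly identify this as the anticipated obstacle and point to~(\ref{WHest}) and section~\ref{WH_fact}, the argument goes through. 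Your justification of $\widehat{r}_{L}(0)=0$ via the tracelessness of $\Phi_{0}$ is a clean alternative to the paper's appeal to the evenness of $\tilde{D}_{L}(\mu)$.
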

\begin{proof}
  Let $R>1+{\epsilon}$ and denote $C_{1}$ the constant $C$ from
  estimate (\ref{errorour}). Take now $\lambda \in \Omega_{\epsilon}$,
  $|\lambda| \geq R$ and set
$$
\mu = \frac{1}{\lambda} \in \mathcal{M} \equiv \left\{\mu \in {\Bbb
    R}: |\mu| \leq \frac{1}{R} < \frac{1}{1+\epsilon}\right\}.
$$
By trivial algebra, we arrive at
$$
\det D_{L}(\lambda) = (-\lambda^{2})^{L}\det\tilde{D}_{L}(\mu),
$$
where $\tilde{D}_{L}(\mu) \equiv \det T_{L}[\tilde{\Phi}]$ and
\begin{equation}
\label{Phitildedef}
\tilde{\Phi}(z) \equiv \frac{1}{i\lambda}\Phi(z) = I -i\mu\Phi(z;0) \equiv
 \pmatrix{1 & -i\mu g(z) \cr
                        i\mu g^{-1}(z) & 1}.
\end{equation}
From this relation it also follows that
\begin{equation}
\label{DtildeD}
\frac{\d}{\d\lambda}\log \det D_{L}(\lambda) = \frac{2L}{\lambda}
-\frac{1}{\lambda^2}\frac{\d}{\d \mu}\log \det\tilde{D}_{L}(\mu),
\end{equation}
and hence the asymptotic analysis of the logarithmic derivative $\d\log
\det D_{L}(\lambda)/\d\lambda$ for $|\lambda| \geq R$ is reduced to
that one of the logarithmic derivative $\d\log \det
\tilde{D}_{L}(\mu)/\d \mu$ for $\mu \in \mathcal{M} \equiv \left\{\mu
  \in {\Bbb R}: |\mu| \leq \frac{1}{R} < \frac{1}{1+\epsilon}\right\}$.

Firstly, we notice that for all $\mu \in \mathcal{M}$ and $z\in
\mathcal{D}_{\delta}$ the functions $\tilde{\Phi}(z)$,
$\tilde{\Phi}^{-1}(z)$ and $\partial \tilde{\Phi}(z)/
\partial \mu$ are uniformly bounded. Secondly, we have that
$$
\tilde{\Phi}(z) = \frac{1}{i\lambda}\Phi(z) =
\frac{1}{i\lambda}U_{+}(z)U_{-}(z)
=\frac{1}{i\lambda}V_{-}(z)V_{+}(z),
$$
and hence the matrix valued functions $\tilde{U}_{\pm}(z)$
and $\tilde{V}_{\pm}(z)$ defined by the relations
$$
\tilde{U}_{+}(z) = \frac{1}{i\lambda}U_{+}(z), \quad
\tilde{V}_{+}(z) = \frac{1}{i\lambda}V_{+}(z), \quad
\tilde{U}_{-}(z) = U_{-}(z), \quad \tilde{V}_{-}(z) = V_{-}(z)
$$
provide the Wiener-Hopf factorization of the generator
$\tilde{\Phi}(z)$.  Moreover, because of the estimates (\ref{WHest}),
the functions $\tilde{U}_{\pm}(z)$ and $\tilde{V}_{\pm}(z)$ are
uniformly bounded for all $\mu \in \mathcal{M}$ and $z\in
\mathcal{D}_{\pm}$. Hence, all the conditions of theorem \ref{widom2}
are met, and we can claim the uniform asymptotic representation
(\ref{our1}) of the logarithmic derivative of the determinant
$\tilde{D}_{L}(\mu)$ with the symbols $\Phi$, $U$, and $V$ replaced by
$\tilde{\Phi}$, $\tilde{U}$ and $\tilde{V}$ respectively.  We shall
also use the notation $\tilde{r}_{L}(\mu)$ and $C_{2}$ for the error
term and constant $C$ from the corresponding estimate
(\ref{errorour1}) respectively.

The specific form (\ref{Phitildedef}) of  dependence of the generator
$\tilde{\Phi}(z)$ on
the parameter $\mu$ implies that
\begin{equation}\label{trace1}
\tilde{\Phi}^{-1}(z)\frac{\partial \tilde{\Phi}(z)}{\partial \mu}
= \frac{1}{1-\mu^2}
 \pmatrix{-\mu & -i g(z) \cr
                        ig^{-1}(z) & -\mu},
\end{equation}
and
\begin{equation}\label{trace2}
(\tilde{\Phi}^{-1})'(z)\frac{\partial \tilde{\Phi}(z)}{\partial \mu}
= \frac{1}{1-\mu^2}
 \pmatrix{-\mu g^{-1}(z)g'(z)& 0 \cr
                        0 & \mu g^{-1}(z)g'(z)}.
\end{equation}
Hence
\begin{eqnarray*}
\trace\,\left(\tilde{\Phi}^{-1}(z)\frac{\partial
\tilde{\Phi}(z)}{\partial \mu}\right)
&=& -\frac{2\mu}{1-\mu^2} = \frac{2\lambda}{1-\lambda^2} \\
\trace\,\left((\tilde{\Phi}^{-1})'(z)\frac{\partial
\tilde{\Phi}(z)}{\partial \mu}\right) &=& 0
\end{eqnarray*}
and equation (\ref{our1}) for the determinant $\tilde{D}_{L}(\mu)$ becomes
\begin{eqnarray}
\label{our3}
\frac{d}{\d \mu}\log
  \tilde{D}_L(\mu) &=& \frac{2\lambda}{1-\lambda^2}L \nonumber \\
&& + \frac{1}{2\pi i} \int_{\Xi}\trace\,
\left(\tilde{U}_{+}'(z)\tilde{U}_{+}^{-1}(z)
\frac{\partial \tilde{\Phi}(z)}{\partial \mu}\tilde{\Phi}^{-1}(z)
+\tilde{V}_{+}^{-1}(z)\tilde{V}_{+}'(z)\tilde{\Phi}^{-1}(z)\frac{\partial
\tilde{\Phi}(z)}{\partial \mu}\right)\d z \nonumber \\
&& + \tilde{r}_{L}(\mu),
\end{eqnarray}
with
\begin{equation}\label{errorour3}
|\tilde{r}_{L}(\mu)|\leq C_{2} \rho^{-L}, \quad
 \quad \mu \in \mathcal{M},
\quad L \geq 1.
\end{equation}
Observe now that equation (\ref{trace1}) can be rewritten as
$$
\tilde{\Phi}^{-1}(z)\frac{\partial \tilde{\Phi}(z)}{\partial \mu}
=\frac{\partial \tilde{\Phi}(z)}{\partial \mu} \tilde{\Phi}^{-1}(z)
=\left(\lambda I -i\lambda^2 \Phi^{-1}(z)\right).
$$
This relation, together with the obvious fact that
$$
\tilde{U}_{+}'(z)\tilde{U}_{+}^{-1}(z) = U_{+}'(z)U_{+}^{-1}(z)
\quad\mbox{and}\quad \tilde{V}^{-1}_{+}(z)\tilde{V}_{+}'(z) =
V_{+}^{-1}(z)V_{+}'(z),
$$
allows to transform (\ref{our3}) into the asymptotic formula
\begin{eqnarray}
\label{our4}
\frac{\d}{\d \mu}\log
\tilde{D}_L(\mu) & = & \frac{2\lambda}{1-\lambda^2}L \nonumber \\
&& - \frac{\lambda^2}{2\pi } \int_{\Xi}\trace\,\left[
\left(U_{+}'(z)U_{+}^{-1}(z)
+V_{+}^{-1}(z)V_{+}'(z)\right)\Phi^{-1}(z)\right]\d z \nonumber \\
&& + \tilde{r}_{L}(\mu).
\end{eqnarray}
The substitution of this relation into the right hand side of
equation (\ref{DtildeD}) yields the following asymptotic formula ---
which is complementary to the equation (\ref{our}) ---
\begin{eqnarray}
\label{our5}
\frac{\d}{\d\lambda}\log
  D_L(\lambda) & = &-\frac{2\lambda}{1-\lambda^{2}}L \nonumber \\
&& + \frac{1}{2\pi} \int_{\Xi}\trace\, \Bigl[\left(U_{+}'(z)U_{+}^{-1}(z)
+V_{+}^{-1}(z)V_{+}'(z)\right)\Phi^{-1}(z)\Bigr]\d z \nonumber \\
&& + r_{L}(\lambda),
\end{eqnarray}
with the error term $r_{L}(\lambda)$ satisfying the estimate
\begin{equation}\label{errorour5}
|r_{L}(\lambda)|\leq \frac{C_{2}}{|\lambda|^2} \rho^{-L}, \quad
 \quad \lambda \in \Omega_{\epsilon}\cap\{|\lambda| \geq R\},
\quad L \geq 1.
\end{equation}
Choosing
$$
C = \mbox{max}\,\{C_{1}R, C_{2}\},
$$
we arrive at the statement of the theorem, but with a better estimate
for the error term $r_{L}(\lambda)$ than that one in~(\ref{errorour2}).

In order to improve the estimate (\ref{errorour5}), we notice that since
$\tilde{\Phi}(z)$ becomes the identity matrix as $\mu \to 0$, the
Wiener-Hopf factorization of $\tilde{\Phi}(z)$ exists for all $\mu$
from the small complex neighbourhood
$$
\mathcal{M}_{0} \equiv \{\mu \in {\Bbb C}: |\mu| < \epsilon_{0} \leq
\frac{1}{R}\}
$$
of the point $\mu =0$. In particular, this implies that the Wiener-Hopf
factors, $\tilde{U}_{\pm}(z)$
and $\tilde{V}_{\pm}(z)$, admit an  analytic continuation to the disc
$\mathcal{M}_{0}$
and that the validity of the formulae (\ref{our3}) and (\ref{errorour3})
can be
extended to the set
$$
\mathcal{M}_{0}\cup \mathcal{M}.
$$
Moreover, from equation (\ref{our3}) it follows that
$\tilde{r}_{L}(\mu)$ is analytic in the disc $\mathcal{M}_{0}$ and
that $\tilde{r}_{L}(0) = 0$. In order to see that the latter equality
is true, one has to take into account that
$\tilde{U}_{\pm}(z)=\tilde{V}_{\pm}(z) = I$ for all $z$ and $\mu =0$
and the evenness of $\tilde{D}_{L}(\mu)$ as a function of $\mu$. Now,
define
$$
\hat{r}_{L}(\mu) =  \frac{\tilde{r}_{L}(\mu)}{\mu}.
$$
The function $\hat{r}_{L}(\mu)$ is analytic in the disc
$\mathcal{M}_{0}$ and satisfies the estimate (\ref{errorour3}) uniformly
for $\mu \in C_{\epsilon'} \equiv \{|\mu| = \epsilon'\}$ and  for any $0 <
\epsilon' < \epsilon_{0}$. With the help of the Cauchy formula,
$$
\hat{r}_{L}(\mu) = \frac{1}{2\pi i}\oint_{|\mu'| = \epsilon_{0}/2}
\frac{\hat{r}_{L}(\mu')}{\mu'-\mu}\d \mu',
$$
we conclude that
$$
|\hat{r}_{L}(\mu) | < C\rho^{L}, \quad |\mu| \leq \epsilon_{0}/3,
\quad L > 1
$$
or
$$
|\tilde{r}_{L}(\mu) | < C|\mu|\rho^{L}, \quad |\mu| \leq \epsilon_{0}/3,
\quad L > 1.
$$
The last inequality combined with (\ref{errorour3}) allows to replace
it by the estimate
$$
|\tilde{r}_{L}(\mu) | < C|\mu|\rho^{L}, \quad \mu \in \mathcal{M} ,
\quad L > 1,
$$
which, in turn, transforms estimate (\ref{errorour5}) into the estimate
\begin{equation}\label{errorour6}
|r_{L}(\lambda)|\leq \frac{C_{2}}{|\lambda|^3} \rho^{-L}, \quad
 \quad \lambda \in \Omega_{\epsilon}\cap\{|\lambda| \geq R\},
\quad L \geq 1,
\end{equation}
and hence yields the correction term as announced in (\ref{errorour2}).
This completes the proof of the theorem.
\end{proof}

\section{The Wiener-Hopf factorization of $\Phi(z)$}
\label{WH_fact}
\setcounter{equation}{0}

In this section we will compute the Wiener-Hopf factorization of
$\Phi(z)$. We will express the solution in terms of theta functions on
a hyperelliptic curve $\Lie$.

From the equality
\begin{eqnarray*}
(1-\lambda^2)\sigma_3\Phi^{-1}(z)\sigma_3=\Phi(z),\quad
\sigma_3=\pmatrix{1&0\cr
                  0&-1\cr},
\end{eqnarray*}
we can express $V$ in terms of $U$ as follows:
\begin{eqnarray}
\label{eq:UV}
V_-(z)&=&\sigma_3U_-^{-1}\sigma_3\nonumber\\
V_+(z)&=&\sigma_3U_+^{-1}(z)\sigma_3(1-\lambda^2), \quad
\lambda\neq\pm 1.
\end{eqnarray}
Therefore, we only need to compute $U(z)$. To do so, first note
that $\Phi(z)$ can be diagonalized by the matrix
\begin{equation}
\label{eq:dia}
Q(z)=\pmatrix{g(z)&-g(z)\cr
                i&i\cr}.
\end{equation}
Indeed, it is straightforward to see that
\begin{eqnarray*}
\Phi(z)&=&Q(z)\Lambda Q^{-1}(z),\\
\Lambda&=&i\pmatrix{\lambda+1&0\cr
                    0&\lambda-1\cr}.
\end{eqnarray*}
The function $Q(z)$ has the following jump discontinuities on the
$z$-plane:
\begin{eqnarray*}
Q_+(z)&=&Q_-(z)\sigma_1,\quad z\in\Sigma_i,\\
\sigma_1&=&\pmatrix{0&1\cr
                    1&0\cr},
\end{eqnarray*}
where the branch cuts $\Sigma_i$ are defined in (\ref{eq:lambdai}),
(\ref{eq:order}) and (\ref{eq:branchcut}) and $Q_\pm(z)$ are the
boundary values of $Q(z)$ to the left/right of $\Sigma_i$. It also
has square-root singularities at each branch point with the
following behavior:
\begin{eqnarray*}
Q(z)=Q_{\pm i}(z)\pmatrix{(z-z_i^{\pm 1})^{{\pm}{1\over 2}}&0\cr
                            0&1\cr}\pmatrix{1&-1\cr
                                            1&1\cr},\quad
                                            z\rightarrow z_i^{\pm},
\end{eqnarray*}
where $Q_{\pm i}(z)$ are functions that are holomorphic and
invertible at $z_i^{\pm}$.

Let us define
\begin{eqnarray}\label{Sdef}
S(z)&=&U_-(z)Q(z)\Lambda^{-1},\quad |z|\geq 1,\nonumber \\
S(z)&=&U_+(z)^{-1}Q(z),\quad |z|\leq 1.
\end{eqnarray}
By direct computation we see $S(z)$ is the unique solution of the
following Riemann-Hilbert problem:
\begin{eqnarray}
\label{eq:RHtheta}
S_+(z)&=&S_-(z)\sigma_1,\quad z\in\Sigma_i, \quad i=1,\ldots, n\nonumber\\
S_+(z)&=&S_-(z)\Lambda\sigma_1\Lambda^{-1},\quad z\in\Sigma_i, \quad
i=n+1,\ldots, 2n\\
\lim_{z \to \infty} S(z)&=& Q(\infty)\Lambda^{-1}, \nonumber
\end{eqnarray}
where, as before, $S_\pm(z)$ denotes the boundary values of $S(z)$ to
the left/right of the branch cuts. The matrix function $S(z)$ is
holomorphic and invertible everywhere, except on the cuts $\Sigma_j$,
where it has the jump discontinuities given in~(\ref{eq:RHtheta}), and
in proximity of the branch points, where it behaves like
\begin{eqnarray}
\label{eq:singtheta} S(z)&=S_{\pm i}(z)\pmatrix{(z-z_i^{\pm
1})^{{\pm}{1\over 2}}&0\cr
                            0&1\cr}\pmatrix{1&-1\cr
                                            1&1\cr},\quad
                                            z\rightarrow
                                            z_i^{\pm},\quad
                                            |z_i|<1,\\
S(z)&=S_{\pm i}(z)\pmatrix{(z-z_i^{\pm 1})^{{\pm}{1\over 2}}&0\cr
                            0&1\cr}\pmatrix{1&-1\cr
                                            1&1\cr}\Lambda^{-1},\quad
                                            z\rightarrow
                                            z_i^{\pm},\quad
                                            |z_i|>1.\nonumber
\end{eqnarray}
where $S_{\pm i}(z)$ are holomorphic and invertible at $z_i^{\pm}$.

The Riemann-Hilbert problem~(\ref{eq:RHtheta}) can be solved in terms
of the multi-dimensional theta functions~(\ref{eq:thetadef}).
However, before we compute explicitly $S(z)$, we need to introduce
further notions and properties of $\theta$.

Throughout the rest of this section we shall use the
definitions~(\ref{eq:L}) of the hyperelliptic curve $\Lie$
and~(\ref{eq:thetadef}) of the theta function associated to $\Lie$.
Furthermore, recall that the choice of the canonical basis for the
cycles is described in figure~\ref{fig:cycle} and that the normalized 1-forms
dual to this basis are defined in equation~(\ref{eq:normalizeforms}).
Let us introduce some basic properties of the theta functions. The
proofs of such properties can be found in many standard textbooks in
Riemann surfaces like, for example, \cite{FK}.

\begin{proposition}
\label{pro:per}
The theta function is quasi-periodic with the following properties:
\begin{eqnarray}
  \label{eq:period}
  \theta (\overrightarrow{s}+\overrightarrow{M})&=&
\theta(\overrightarrow{s}), \\
\theta (\overrightarrow{s}+\Pi\overrightarrow{M})&=& \exp \left[ 2\pi
i\left(-\left<\overrightarrow{M},\overrightarrow{s}\right>-\left<\overrightarrow{M},{\Pi\over{2}}\overrightarrow{M}\right>\right)\right]
\theta(\overrightarrow{s}),
\end{eqnarray}
\end{proposition}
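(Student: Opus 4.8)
The plan is to derive both quasi-periodicity relations directly from the series definition~(\ref{eq:thetadef}), using nothing beyond three facts: that $\overrightarrow{M}$ is an integer vector (i.e. $\overrightarrow{M}\in\mathbb{Z}^g$), that $\Pi$ is symmetric, and that the defining lattice sum converges absolutely, so that term-by-term reindexing is legitimate. The first identity will be essentially immediate, while the second is the substantive one and turns on a completion of the square in the quadratic exponent.

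For the first identity I would substitute $\overrightarrow{s}+\overrightarrow{M}$ into~(\ref{eq:thetadef}). The only change in each summand is the extra factor ${\rm e}^{2i\pi\,\overrightarrow{M}\cdot\overrightarrow{n}}$. Since $\overrightarrow{M},\overrightarrow{n}\in\mathbb{Z}^g$, the inner product $\overrightarrow{M}\cdot\overrightarrow{n}$ is an integer, so this factor equals $1$ identically and the series is unchanged; this gives $\theta(\overrightarrow{s}+\overrightarrow{M})=\theta(\overrightarrow{s})$. For the second identity, substituting $\overrightarrow{s}+\Pi\overrightarrow{M}$ inserts into the exponent the additional term $2i\pi\,(\Pi\overrightarrow{M})\cdot\overrightarrow{n}=2i\pi\,\overrightarrow{n}\cdot\Pi\overrightarrow{M}$, where I use the symmetry $\Pi^{t}=\Pi$. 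The key step is to recognise the algebraic identity
\[
i\pi\,\overrightarrow{n}\cdot\Pi\overrightarrow{n}+2i\pi\,\overrightarrow{s}\cdot\overrightarrow{n}+2i\pi\,\overrightarrow{n}\cdot\Pi\overrightarrow{M}
= i\pi(\overrightarrow{n}+\overrightarrow{M})\cdot\Pi(\overrightarrow{n}+\overrightarrow{M})+2i\pi\,\overrightarrow{s}\cdot(\overrightarrow{n}+\overrightarrow{M})-i\pi\,\overrightarrow{M}\cdot\Pi\overrightarrow{M}-2i\pi\,\overrightarrow{s}\cdot\overrightarrow{M},
\]
which is checked by expanding the right-hand side and again invoking the symmetry of $\Pi$. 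The last two terms are independent of $\overrightarrow{n}$ and factor out of the sum, while the remaining exponent is exactly the summand of $\theta$ evaluated at the shifted index $\overrightarrow{n}+\overrightarrow{M}$. Because $\overrightarrow{M}\in\mathbb{Z}^g$, the map $\overrightarrow{n}\mapsto\overrightarrow{n}+\overrightarrow{M}$ is a bijection of $\mathbb{Z}^g$, so reindexing collapses the remaining sum to $\theta(\overrightarrow{s})$. Factoring out $\exp(-i\pi\,\overrightarrow{M}\cdot\Pi\overrightarrow{M}-2i\pi\,\overrightarrow{s}\cdot\overrightarrow{M})$ and rewriting it as $\exp[\,2\pi i(-\langle\overrightarrow{M},\overrightarrow{s}\rangle-\langle\overrightarrow{M},\tfrac{\Pi}{2}\overrightarrow{M}\rangle)\,]$ produces precisely the claimed prefactor.

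I do not expect any genuine obstacle here; the argument is routine, and the only point requiring care is the justification of the term-by-term reindexing of the lattice sum. This is guaranteed by absolute convergence of~(\ref{eq:thetadef}), which in turn rests on $\Pi$ being a Riemann matrix, symmetric with positive-definite imaginary part, a consequence of the Riemann bilinear relations for the normalized differentials in~(\ref{eq:normalizeforms}). Indeed $\rpart\!\left(i\pi\,\overrightarrow{n}\cdot\Pi\overrightarrow{n}\right)=-\pi\,\overrightarrow{n}\cdot\ipart(\Pi)\,\overrightarrow{n}\to-\infty$ quadratically in $\|\overrightarrow{n}\|$, so the series converges absolutely and uniformly on compact sets, legitimising both the shift of summation index in the second identity and the manipulation of individual summands in the first. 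Once convergence is in hand, the two relations follow from the index manipulations above for every $\overrightarrow{M}\in\mathbb{Z}^g$.
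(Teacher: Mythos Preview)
Your proposal is correct and is precisely the standard argument. The paper itself does not give a proof of this proposition at all; it simply states the identities and defers to a textbook reference (Farkas--Kra), so there is no ``paper's own proof'' to compare against beyond noting that your derivation is exactly the classical one such references contain.
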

where $ \left \langle \cdot, \cdot \right \rangle$ denotes the usual
inner product in $\mathbb{C}^g$.

A divisor $D$ of degree $m$ on a hyperelliptic curve $\Lie$ is a formal
sum of $m$ points on $\Lie$, \textit{i.e.}
\[
   D := \sum_{i=1}^m d_i, \quad d_i \in \Lie.
\]
Let us introduce the Abel map $\omega:\Lie\longrightarrow\mathbb{C}^g$ by
setting
\begin{eqnarray*}
  \omega(p) :=\left(\int_{p_0}^p\d\omega_1,\ldots,\int_{p_0}^p\d\omega_g\right),
\end{eqnarray*}
where $p_0$ is a chosen base point on $\Lie$ and $\omega_i$ are the normalized 1-forms given in (\ref{eq:normalizeforms}). In what follows we shall
set $p_0=z_1=\lambda_1$.   The composition of the theta function with
the Abel map has $g$ zeros on $\Sigma$. The following lemma tells us
where the zeros are.
\begin{lemma}
\label{le:zero}
Let $D=\sum_{i=1}^gd_i$ be a divisor of degree $g$ on $\Lie$,
then the  multivalued function
\[
  \theta (\omega(p)-\omega(D)-K)
\]
has precisely $g$ zeros located at the points $d_i$, $i=1,\ldots,g$. The
vector $K=(K_1,\ldots,K_g)$ is the Riemann constant
\[
  K_j={{2\pi i+\Pi_{jj}}\over 2}-{1\over{2\pi i}}\sum_{l\neq
  j}\int_{a_l}(\d\omega_l(p)\int_{z_1}^p\d\omega_j).
\]
\end{lemma}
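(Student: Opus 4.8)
The plan is to prove the two assertions separately: first that the multivalued function $F(p) := \theta(\omega(p) - \omega(D) - K)$ has exactly $g$ zeros on $\Lie$, and then that those zeros are precisely the points $d_1, \ldots, d_g$ of the divisor $D$. For the counting, I would argue that although $F$ is multivalued on $\Lie$, its zeros are well-defined: by the quasi-periodicity relations of Proposition \ref{pro:per}, moving the argument by a lattice vector $\overrightarrow{M}$ or $\Pi \overrightarrow{M}$ multiplies $\theta$ by a nonvanishing exponential factor, so the zero set is unaffected by the choice of integration path defining $\omega(p)$. To count the zeros I would cut $\Lie$ along the canonical cycles $\{a_i, b_i\}$ to obtain the simply connected fundamental polygon $\widehat{\Lie}$, on which $F$ becomes single-valued and holomorphic, and apply the argument principle $\frac{1}{2\pi i}\oint_{\partial \widehat{\Lie}} \d\log F$. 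The integral over $\partial \widehat{\Lie}$ collapses to a sum of contributions from opposite edges, and using the transformation law of $\theta$ under $\overrightarrow{s} \mapsto \overrightarrow{s} + \Pi \overrightarrow{e}_j$ the net count evaluates to $g$; this is the standard Riemann vanishing computation, and I would present it compactly rather than grinding through every edge pairing.

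For the location of the zeros, the key input is Riemann's theorem, which asserts that for a \emph{generic} (in particular nonspecial) divisor $D$ of degree $g$, the function $F(p)$ vanishes exactly at the points of $D$; the Riemann constant vector $K$ is precisely the shift that makes the zero divisor of $\theta(\omega(p) - e)$ equal to $D$ when $e = \omega(D) + K$. Concretely, I would invoke that $\theta(\omega(p) - \omega(D) - K) \equiv 0$ identically if and only if $D$ is special, and that in the nonspecial case the divisor of zeros equals $D$. Since here the base point is fixed as $p_0 = z_1 = \lambda_1$ and $K$ is given by the explicit formula
\[
K_j = \frac{2\pi i + \Pi_{jj}}{2} - \frac{1}{2\pi i}\sum_{l \neq j}\int_{a_l}\left(\d\omega_l(p)\int_{z_1}^p \d\omega_j\right),
\]
I would verify that this is the standard expression for the Riemann constant relative to the chosen homology basis and base point, so that Riemann's theorem applies verbatim.

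The main obstacle is the degenerate (special divisor) case: Riemann's theorem only identifies the zero set with $D$ when $\theta(\omega(p) - \omega(D) - K)$ does not vanish identically in $p$, i.e.\ when $D$ is nonspecial. If $D$ were special, the naive statement would fail. I expect that in the intended application the relevant divisor arising from the Riemann--Hilbert problem \eqref{eq:RHtheta} is generic, so this case does not occur; nonetheless I would either add the nonspeciality hypothesis explicitly or remark that the divisors appearing in the construction of $S(z)$ are in general position. The counting argument via $\oint \d\log F$ is routine once the fundamental polygon is set up, so the substantive work lies in correctly matching the normalization of $K$ to the homology basis of figure \ref{fig:cycle} and in confirming nonspeciality.
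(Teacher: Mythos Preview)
Your sketch is correct and follows the standard textbook argument (argument principle on the fundamental polygon for the count, then Riemann's vanishing theorem for the location, with the nonspeciality caveat you rightly flag). However, the paper does not actually prove this lemma: immediately before stating Proposition~\ref{pro:per} and Lemma~\ref{le:zero}, the authors write that ``the proofs of such properties can be found in many standard textbooks in Riemann surfaces like, for example, \cite{FK},'' and both results are then stated without proof. So there is nothing in the paper's own argument to compare your proposal against; what you have written is essentially the Farkas--Kra proof the paper is deferring to, and your observation about the special-divisor degeneracy is a genuine refinement over the lemma as stated (the paper tacitly assumes nonspeciality, which does hold for the divisors arising in Proposition~\ref{pro:sol}).
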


The hyperelliptic curve $\Lie$ can be thought of as a branched cover
of the Riemann sphere $\mathbb{C} \cup \{\infty\}$.  Indeed, a point
$p \in \Lie$ can be identified by two complex variables, $p=(z,w)$,
where $w$ and $z$ are related by equation~(\ref{eq:L}). We shall denote by
$\mathbb{C}_1$ the the Riemann sheet where $g(\infty)>0$ on the real
axis, and by $\mathbb{C}_2$ the other Riemann sheet in
$\Lie$. Thus, a function $f$ on $\Lie$ can be thought of as a function in
two complex variables:
\begin{eqnarray*}
f(p)=f(z,w).
\end{eqnarray*}
Consider the map
\begin{eqnarray*}
T& : &\mathbb{C}/\cup_{i=1}^{2n}\Sigma_i \longrightarrow \Lie\\
T(z)&=&(z,w),
\end{eqnarray*}
where the branch of $w$ is chosen such that $(z,w)$ is on $\mathbb{C}_1$.
A function $f$ on $\Lie$ then defines the function $f\circ T$ on
$\mathbb{C}/\cup_{i=1}^{2n}\Sigma_i$ by
\begin{eqnarray*}
f\circ T(z)=f(z,w).
\end{eqnarray*}
For the sake of simplicity, and when there is no ambiguity, we shall write
$f(z)$ instead of $f\circ T(z)$ and $f(p)$ instead of $f(z,w)$.

Abelian integrals on $\Lie$ can be represented as integrals on the
Riemann sheet with jump discontinuities. To do so, let us first define
a Jordan arc $\Sigma$ as in figure \ref{fig:sigma}. Let $f(z,w)$ be a
function on $\Lie$ and $f(z)=f\circ T(z)$. Then an Abelian integral on
$\Lie$,
\[
I(p)=\int_{\lambda_1}^pf(p')dp',
\]
defines the following integral on $\mathbb{C}$:
\begin{eqnarray*}
I(z)=\int_{\lambda_1}^zf\circ T(z')dz',
\end{eqnarray*}
where the path of the integration does not intersect
$\Sigma/\{\lambda_1\}$. Such integral will in general have jump
discontinuities along $\Sigma$, and its value on the left hand side of
$\Sigma$ will be denoted by $I(z)_+$, while its value on the right
hand side of $\Sigma$ will be denoted by $I(z)_-$.

Let $\rho$ be the hyperelliptic involution that interchanges the two
sheets of $\Lie$, \textit{i.e.}
\begin{eqnarray*}
\rho(z,w)=(z,-w).
\end{eqnarray*}
The action of
$\rho$ on $f(z)$ is given by
\begin{eqnarray}\label{eq:rhof}
\rho(f)(z)=f(z,-w)
\end{eqnarray}
\textit{i.e} it is the function evaluated on $\mathbb{C}_2$.
Similarly, the action of $\rho$ on an integral $I(z)$ is defined by
\begin{eqnarray}\label{eq:rhoI}
\rho(I)(z)=\int_{\lambda_1}^z\rho(f)(z')\d z'
\end{eqnarray}

\begin{figure}[htbp]
\begin{center}
\resizebox{8cm}{!}{\input{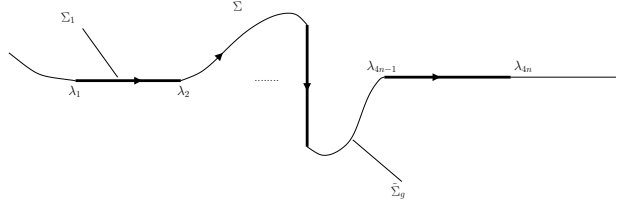}}\caption{The Jordan arc
$\Sigma$ connects all the branch points and extends to infinity on
the left hand side of $\lambda_1$ and on the right hand side of
$\lambda_{4n}$. All branch cuts belong to $\Sigma$ and are denoted
by $\Sigma_i$, while the intervals between the branch cuts are
denoted by $\tilde{\Sigma}_i$.}\label{fig:sigma}
\end{center}
\end{figure}

From proposition \ref{pro:per} we see that the composition
  of the Abel map $\omega$ with $\theta$ has the following jump
  discontinuities when considered as a function on $\mathbb{C}$:
\begin{lemma}
\label{le:jumptheta}
Let $z$ be a point on $\mathbb{C}$, and let $\Sigma$ be a Jordan
arc joining all the branch cuts as in figure \ref{fig:sigma}, then
the quotient of theta functions has the following jump
discontinuities on $\Sigma$
\begin{eqnarray*}
\left({{\theta(\omega(z)+A)}\over{\theta(\omega(z)+B)}}\right)_+&=&
\left({{\theta(\omega(z)+A)}\over{\theta(\omega(z)+B)}}\right)_-,\quad z\in\tilde{\Sigma}_j\\
\left({{\theta(\omega(z)+A)}\over{\theta(\omega(z)+B)}}\right)_+&=&
\left({{\theta(-\omega(z)+A)}\over{\theta(-\omega(z)+B)}}\right)_-e^{-2\pi
i(A_{j-1}-B_{j-1})},\quad z\in\Sigma_j
\end{eqnarray*}
where $A$ and $B$ are arbitrary $2n-1$ vectors and $A_0=B_0=0$.
\end{lemma}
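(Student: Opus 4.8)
The plan is to reduce the entire statement to the quasi-periodicity of $\theta$ recorded in Proposition~\ref{pro:per}, by first determining the jump discontinuities of the Abel map $\omega(z)$ itself --- viewed as a single-valued function on $\mathbb{C}$ after cutting along the Jordan arc $\Sigma$ of figure~\ref{fig:sigma} --- across each connecting arc $\tilde{\Sigma}_j$ and each branch cut $\Sigma_j$. Once those jumps are in hand, substituting them into the numerator and denominator of $\theta(\omega(z)+A)/\theta(\omega(z)+B)$ and reading off the cancellations will give the two claimed formulae.

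First I would analyse the normalized differentials $\d\omega_k$. On a connecting arc $\tilde{\Sigma}_j$ there is no branch point, so each $\d\omega_k$ continues holomorphically across $\tilde{\Sigma}_j$ and $\omega(z)_+-\omega(z)_-$ is locally constant there; its value is the period of $\d\omega$ around the loop formed by the two banks of $\tilde{\Sigma}_j$, which by the cycle choice of figure~\ref{fig:cycle} is a pure $a$-period, hence an integer vector $\overrightarrow{N}_j\in\mathbb{Z}^g$. On a branch cut $\Sigma_j$ the hyperelliptic involution exchanges the two sheets, so $\d\omega_{k,+}=-\d\omega_{k,-}$ and instead $\omega(z)_++\omega(z)_-$ is locally constant; integrating from the base point $p_0=\lambda_1$ and using the normalisation $\int_{a_i}\d\omega_k=\delta_{ik}$, $\int_{b_i}\d\omega_k=\Pi_{ik}$ identifies this constant, giving
\begin{equation*}
\omega(z)_+ = -\,\omega(z)_- + \Pi\,\mathbf{e}_{j-1} + \overrightarrow{M}_j,\qquad z\in\Sigma_j,
\end{equation*}
where $\mathbf{e}_0:=0$, $\mathbf{e}_{j-1}$ is the $(j-1)$st coordinate unit vector, and $\overrightarrow{M}_j\in\mathbb{Z}^g$. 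The index shift $j\mapsto j-1$, together with the absence of a $b$-period for $\Sigma_1$, reflects that $\lambda_1$ is an endpoint of $\Sigma_1$ and that the cycle $a_{j-1}$ encircles $\Sigma_j$ for $j\ge 2$.

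With these jumps established, the last step is purely formal. On $\tilde{\Sigma}_j$ the shift of the argument by the integer vector $\overrightarrow{N}_j$ leaves $\theta$ invariant by the first relation of Proposition~\ref{pro:per}, so $\theta(\omega(z)+A)$ and $\theta(\omega(z)+B)$ are each continuous and the quotient has no jump --- and this holds for \emph{arbitrary} $A,B$ precisely because the jump is a pure $a$-period carrying no $\Pi$. On $\Sigma_j$ I would write $\theta(\omega_++A)=\theta(-\omega_-+A+\Pi\mathbf{e}_{j-1})$, dropping the integer vector $\overrightarrow{M}_j$, and apply the second relation of Proposition~\ref{pro:per} with $\overrightarrow{M}=\mathbf{e}_{j-1}$; likewise for the denominator with $B$. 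In the ratio the factors depending on $\omega_-$ and on $\langle\mathbf{e}_{j-1},\Pi\mathbf{e}_{j-1}\rangle$ cancel between numerator and denominator, leaving exactly $\exp[-2\pi i(A_{j-1}-B_{j-1})]$ times $\theta(-\omega_-+A)/\theta(-\omega_-+B)$, which is the asserted jump (recall $A_0=B_0=0$, so $\Sigma_1$ carries no phase).

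The main obstacle is the middle step: pinning down the constant on $\Sigma_j$ as $\Pi\mathbf{e}_{j-1}$ modulo $\mathbb{Z}^g$ --- with the correct coefficient $+1$ and the correct index --- requires careful bookkeeping of which homology cycles are traversed as the contour crosses each segment of $\Sigma$, under the orientation conventions fixed in figure~\ref{fig:cycle}; a sign error or a mis-indexing there would spoil the phase. By contrast the analytic inputs (oddness $\d\omega_{k,+}=-\d\omega_{k,-}$ on branch cuts, holomorphic continuation across $\tilde{\Sigma}_j$, and local constancy of the sum/difference of boundary values) are routine, as is the concluding application of the theta quasi-periodicity.
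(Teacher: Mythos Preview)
Your proposal is correct and follows essentially the same route as the paper: identify the jump of the Abel map $\omega(z)$ across $\tilde{\Sigma}_j$ as a pure $a$-period (an integer vector) and across $\Sigma_j$ as $\omega_+=-\omega_-+\Pi\mathbf{e}_{j-1}$ modulo $\mathbb{Z}^g$, then feed these into the quasi-periodicity of Proposition~\ref{pro:per} and let the common factors cancel in the ratio. The paper phrases the homological bookkeeping by explicitly lifting the two boundary paths to a closed loop on $\Lie$ and reading off its class in the canonical basis, whereas you argue analytically via $\d\omega_{k,+}=-\d\omega_{k,-}$ on $\Sigma_j$ and local constancy of $\omega_++\omega_-$; these are two descriptions of the same computation, and your candid flag that the index/sign on $\Sigma_j$ is the delicate point is exactly where the paper also does its work.
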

\begin{proof}
 The holomorphic differentials $\d\omega_j$ are
given by
\begin{eqnarray*}
\d\omega_i={{P_i(z)}\over {w(z)}}\d z,
\end{eqnarray*}
for some polynomial $P_i(z)$ of degree less than $2n-1$ in $z$.
This means that, under the action of $\rho$, $\d\omega_i$ becomes
$-\d\omega_i$. In particular, we have
\begin{eqnarray}\label{eq:rhoo}
\rho(\omega)(z)=-\omega(z)
\end{eqnarray}
where the action of $\rho$ on $\omega$ is given by (\ref{eq:rhof})
and (\ref{eq:rhoI}).

We first consider the jumps across the gaps $\tilde{\Sigma}_j$. Take
two distinct paths from $\lambda_1$ to a point $z\in
\tilde{\Sigma}_j$.  Assume also that both curves do not intersect
$\Sigma$ and that one extends to the left of $\Sigma$, while the other to
its right. The union of these paths lifts to a loop $\tilde{\gamma}$
on $\Lie$. Moreover, $\tilde{\gamma}$ is a linear combinations of
$a$-cycles, \textit{i.e.}
\[
\tilde{\gamma}=\sum_{i=1}^gN_ia_i,
\]
where the $N_i$'s  are non-negative integers.

Therefore, we have
\begin{eqnarray*}
\left({{\theta(\omega(z)+\aleph+A)}\over{\theta(\omega(z)+\aleph+B)}}\right)_+&=&
\left({{\theta(\omega(z)+A)}\over{\theta(\omega(z)+B)}}\right)_+\\
&=&\left({{\theta(\omega(z)+A)}\over{\theta(\omega(z)+B)}}\right)_-
,\quad z\in\tilde{\Sigma}_j\\
\aleph&=&\sum_{i=1}^gN_iE_i
\end{eqnarray*}
where $E_i$ is the column vector with 1 in the $i^{th}$ entry and
zero elsewhere.

Now consider the jumps on the branch cuts $\Sigma_j$. Let $z\in
\Sigma_j$, then take a loop $\gamma$ on $\Lie$ consisting of two
distinct curves joining $\lambda_1$ to $z$, , both non-intersecting
$\Sigma$; one on the left of the cut in $\mathbb{C}_1$, the other on
the right of the cut in $\mathbb{C}_2$. This closed loop $\gamma$ is
homologic to the $b$-cycle $b_j$.  Therefore,
\begin{eqnarray*}
\left({{\theta(\omega(z)+\Im+A)}\over{\theta(\omega(z)+\Im+B)}}\right)_-&=&
\left({{\theta(\omega(z)+A)}\over{\theta(\omega(z)+B)}}\right)_-e^{-2\pi
i(A_{j-1}-B_{j-1})}\\
&=&\left({{\theta(-\omega(z)+A)}\over{\theta(-\omega(z)+B)}}\right)_+
,\quad z\in\Sigma_j\\
\Im_k&=&\Pi_{ki}.
\end{eqnarray*}
This proves the lemma.
\end{proof}

We can now solve the Riemann-Hilbert problem (\ref{eq:RHtheta}),
(\ref{eq:singtheta}).  Let us define
\begin{eqnarray}
   \label{eq:tau}
   \frac{\tau}{2} & := & -\sum_{i=2}^{2n}\omega(z_i^{-1})-K, \\
   \Delta(z)& := &\int_{+\infty}^z\d\Delta, \nonumber
\end{eqnarray}
where $\d\Delta$ is the normalized differential of third type with
simple poles at $\infty^{\pm}$ and residues $\pm \frac12$
respectively. In addition, we write
\[
 \kappa := \left(\frac{1}{2\pi i}\int_{b_1} \d \Delta,\ldots,
   \frac{1}{2\pi i} \int_{b_g}\d \Delta \right).
\]

\begin{proposition}
\label{pro:sol}
Let $\infty^{\pm}$ be the points on $\Lie$ that projects to $\infty$
on $\mathbb{C}_1$. The unique solution of the Riemann-Hilbert problem
(\ref{eq:RHtheta}), (\ref{eq:singtheta}) is given by
\begin{equation}\label{Ssol}
S(z)=Q(\infty)\Lambda^{-1}\Theta^{-1}(\infty)\Theta(z),
\end{equation}
where entries of $\Theta(z)$ are given by
\begin{eqnarray}
\label{eq:entrytheta}
\Theta_{11}(z)&=&
\sqrt{z-\lambda_1}e^{-\Delta(z)}{{\theta\left(\omega(z)
+\beta(\lambda)\overrightarrow{e}-\kappa+{{\tau}\over
2}\right)}\over{\theta\left(\omega(z)+{{\tau}\over 2}\right)}},\nonumber\\
\Theta_{12}(z)&=&-\sqrt{z-\lambda_1}e^{\Delta(z)}{{\theta\left(\omega(z)-\beta(\lambda)\overrightarrow{e}+\kappa-{{\tau}\over
2}\right)}\over{\theta\left(\omega(z)-{{\tau}\over 2}\right)}}, \nonumber\\
\Theta_{21}(z)&=&-\sqrt{z-\lambda_1}e^{\Delta(z)}{{\theta\left(\omega(z)+\beta(\lambda)\overrightarrow{e}+\kappa-{{\tau}\over
2}\right)}\over{\theta\left(\omega(z)-{{\tau}\over 2}\right)}},\\
\Theta_{22}(z)&=&\sqrt{z-\lambda_1}e^{-\Delta(z)}{{\theta\left(\omega(z)-\beta(\lambda)\overrightarrow{e}-\kappa+{{\tau}\over
2}\right)}\over{\theta\left(\omega(z)+{{\tau}\over 2}\right)}}, \nonumber
\end{eqnarray}
where and $\overrightarrow{e}$ is a $2n-1$ dimensional vector whose
last $n$ entries are 1 and the first $n-1$ entries are 0. The branch
cut of $\sqrt{z-\lambda_1}$ is defined to be
$\Sigma/\tilde{\Sigma}_0$.
\end{proposition}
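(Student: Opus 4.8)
The plan is to verify that the matrix $\Theta(z)$ with entries (\ref{eq:entrytheta}) solves a Riemann--Hilbert problem with exactly the jumps (\ref{eq:RHtheta}) and the local behaviour (\ref{eq:singtheta}); once this is established, the normalized combination $S(z)=Q(\infty)\Lambda^{-1}\Theta^{-1}(\infty)\Theta(z)$ has the same jumps (the constant left factor does not affect them) and the correct value at infinity by construction, and uniqueness follows from a Liouville argument. Concretely I would split the verification into four checks: (i) single-valuedness of every entry on $\mathbb{C}\setminus\bigcup_i\Sigma_i$, i.e. continuity across the gaps $\tilde\Sigma_j$; (ii) the jump relations on the cuts $\Sigma_i$; (iii) the square-root local structure (\ref{eq:singtheta}) at the branch points, together with holomorphy and invertibility of $\Theta$ elsewhere; and (iv) the behaviour as $z\to\infty$.

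For (i) and (ii) the main tool is Lemma~\ref{le:jumptheta}. Each entry is a product of the algebraic factor $\sqrt{z-\lambda_1}$, an exponential $e^{\pm\Delta(z)}$, and a theta quotient $\theta(\omega(z)+A)/\theta(\omega(z)+B)$. Across a gap $\tilde\Sigma_j$ the theta quotient is continuous by Lemma~\ref{le:jumptheta}, while $\sqrt{z-\lambda_1}$ changes sign (its cut is $\Sigma\setminus\tilde\Sigma_0$) and $e^{\pm\Delta(z)}$ also changes sign, since encircling a branch point multiplies $e^{\Delta}$ by $e^{\pm\pi i}=-1$ on account of the residues $\pm\tfrac12$ of $\d\Delta$ at $\infty^{\pm}$; these two sign changes cancel, so each entry is single-valued off the cuts. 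Across a cut $\Sigma_j$, Lemma~\ref{le:jumptheta} sends $\omega(z)\mapsto-\omega(z)$ and introduces the factor $e^{-2\pi i(A_{j-1}-B_{j-1})}$, which interchanges the two columns and so reproduces the off-diagonal pattern of $\sigma_1$. The decisive point is the shift $\beta(\lambda)\overrightarrow{e}$: because $\overrightarrow{e}$ has its last $n$ entries equal to $1$ and its first $n-1$ entries equal to $0$, the jump factor is trivial on the inner cuts $\Sigma_1,\dots,\Sigma_n$ (giving exactly $\sigma_1$) and equals $e^{\pm 2\pi i\beta(\lambda)}=\bigl(\tfrac{\lambda+1}{\lambda-1}\bigr)^{\pm1}$ on the outer cuts $\Sigma_{n+1},\dots,\Sigma_{2n}$ by the definition (\ref{eq:beta}), which is precisely the scaling in $\Lambda\sigma_1\Lambda^{-1}=\pmatrix{0 & \frac{\lambda+1}{\lambda-1}\cr \frac{\lambda-1}{\lambda+1} & 0}$. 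The shift $\kappa$, defined as the $b$-periods of $\d\Delta$ over $2\pi i$, is exactly what cancels the residual multivaluedness of $e^{\pm\Delta}$, so this step is a careful bookkeeping of signs and exponential factors.

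For (iii) I would work in the local coordinate $t=\sqrt{z-\lambda_{\mathrm{bp}}}$ at each branch point. A theta quotient with generic argument is holomorphic and nonvanishing on $\Lie$, hence a smooth function of $t$, and the column combination $\pmatrix{1 & -1\cr 1 & 1}$ together with $\mathrm{diag}(t^{\pm1},1)$ encodes how its even and odd parts in $t$ organize into the singular form (\ref{eq:singtheta}). The $(z-z_i^{-1})^{-1/2}$ behaviour at the outer branch points $z_2^{-1},\dots,z_{2n}^{-1}$ comes from the denominators: by Lemma~\ref{le:zero} and the definition (\ref{eq:tau}) of $\tfrac{\tau}{2}=-\sum_{i=2}^{2n}\omega(z_i^{-1})-K$, the function $\theta\bigl(\omega(z)+\tfrac{\tau}{2}\bigr)$ vanishes (to half-order in $t$) precisely at these $g=2n-1$ points, while $\sqrt{z-\lambda_1}$ supplies the corresponding behaviour at the remaining point $\lambda_1=z_1^{-1}$. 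One must then confirm that away from the branch points $\Theta$ is holomorphic with $\det\Theta\neq0$; the placement of the divisor in (\ref{eq:tau}) is exactly what rules out spurious poles or zeros.

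Finally, for (iv) I would examine $\Theta(z)$ as $z\to\infty$ on $\mathbb{C}_1$, using that near $\infty^{\pm}$ the differential $\d\Delta$ has residue $\pm\tfrac12$, so that the growth of $e^{\mp\Delta(z)}$ balances against $\sqrt{z-\lambda_1}$; the prefactor $Q(\infty)\Lambda^{-1}\Theta^{-1}(\infty)$ is then precisely what enforces $\lim_{z\to\infty}S(z)=Q(\infty)\Lambda^{-1}$, in agreement with $U_-(\infty)=I$ in (\ref{Sdef}). Uniqueness is standard: given a second solution $\tilde S$, the ratio $R=S\tilde S^{-1}$ has no jump across any $\Sigma_i$ (the jump matrices cancel on the right), only removable square-root singularities at the branch points (the common local factors $\mathrm{diag}(t^{\pm1},1)$ and $\pmatrix{1 & -1\cr 1 & 1}$ cancel in $S\tilde S^{-1}$), and tends to $I$ at infinity, so $R\equiv I$ by Liouville's theorem. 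I expect steps (ii)--(iii) to be the main obstacle: it is there that every sign, every half-period, and the precise shifts $\kappa$ and $\tfrac{\tau}{2}$ must conspire so that the jump factors collapse to exactly $\sigma_1$ and $\Lambda\sigma_1\Lambda^{-1}$ and the local exponents come out as in (\ref{eq:singtheta}), which forces simultaneous use of (\ref{eq:tau}), (\ref{eq:beta}) and the definition of $\overrightarrow{e}$.
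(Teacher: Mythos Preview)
Your strategy is essentially the paper's: verify the jumps via Lemma~\ref{le:jumptheta}, check the local structure at the branch points, confirm the behaviour at infinity, and conclude with Liouville. Two points are worth noting.

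First, for step (iii) the paper uses a cleaner indirect argument than your local-coordinate computation. It sets $\tilde U_+(z)=Q(z)\Theta^{-1}(z)$ for $|z|<1$ and $\tilde U_-(z)=\Theta(z)\Lambda Q^{-1}(z)$ for $|z|>1$, observes that these have no jumps across the $\Sigma_j$ (the jump matrices of $Q$ and $\Theta$ cancel), and that any residual singularity at a branch point would have to be of order at most $\tfrac12$; but a genuine half-integer singularity would force a branch cut and hence a jump, so $\tilde U_\pm$ are holomorphic at the branch points. This forces $\Theta$ to have exactly the structure (\ref{eq:singtheta}) without ever unpacking the theta divisor locally.

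Second, your step (iv) is missing a key identity. To see that $\Theta(\infty)$ is finite and invertible you need the off-diagonal entries to vanish and the diagonal ones to reduce to $\theta\bigl(\pm\beta(\lambda)\overrightarrow{e}+\tfrac{\tau}{2}\bigr)e^{-\Delta_0}$; this requires $\omega(\infty)=\kappa$, which the paper proves via the Riemann bilinear relation applied to the pair $(\d\Delta,\d\omega_j)$. Without this step you cannot evaluate $\Theta(\infty)$, so the normalization and the invertibility condition (\ref{eq:zero}) remain unjustified. Your argument in (i) that $e^{\pm\Delta}$ picks up a sign across $\tilde\Sigma_j$ ``on account of the residues at $\infty^\pm$'' is also not quite right: crossing a gap corresponds to an $a$-cycle, and $\d\Delta$ has vanishing $a$-periods, so the sign bookkeeping there needs to be redone.
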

\begin{proof}
 By using lemma \ref{le:jumptheta}, we see that
$\Theta(z)$ has the following jump discontinuities
\begin{eqnarray*}
\left(\Theta_{11}(z)\right)_+&=&\left(\Theta_{12}(z)\right)_-,\quad
z\in\Sigma_i,\quad i=1,\ldots, n\\
\left(\Theta_{12}(z)\right)_+&=&\left(\Theta_{11}(z)\right)_-,\quad
z\in\Sigma_i,\quad i=1,\ldots, n\\
\left(\Theta_{21}(z)\right)_+&=&\left(\Theta_{22}(z)\right)_-,\quad
z\in\Sigma_i,\quad i=1,\ldots, n\\
\left(\Theta_{22}(z)\right)_+&=&\left(\Theta_{21}(z)\right)_-,\quad
z\in\Sigma_i,\quad i=1,\ldots, n\\
\left(\Theta_{11}(z)\right)_+&=&{{\lambda-1}\over{\lambda+1}}
\left(\Theta_{12}(z)\right)_-,\quad
z\in\Sigma_i,\quad i=n+1,\ldots, 2n\\
\left(\Theta_{12}(z)\right)_+&=&{{\lambda+1}\over{\lambda-1}}
\left(\Theta_{11}(z)\right)_-,\quad
z\in\Sigma_i,\quad i=n+1,\ldots, 2n\\
\left(\Theta_{21}(z)\right)_+&=&{{\lambda-1}\over{\lambda+1}}
\left(\Theta_{22}(z)\right)_-,\quad
z\in\Sigma_i,\quad i=n+1,\ldots, 2n\\
\left(\Theta_{22}(z)\right)_+&=&{{\lambda+1}\over{\lambda-1}}\left(\Theta_{21}(z)\right)_-,\quad
z\in\Sigma_i,\quad i=n+1,\ldots, 2n
\end{eqnarray*}
This means that $\Theta(z)$ has the same jump discontinuities as in
(\ref{eq:RHtheta}).

To see that $\Theta(z)$ has the singularity structure given by
(\ref{eq:singtheta}), note that the function
\begin{eqnarray*}
\tilde{U}_+&=&Q(z)\Theta^{-1}(z),\quad |z|<1\\
\tilde{U}_-&=&\Theta(z)\Lambda Q^{-1}(z),\quad |z|>1
\end{eqnarray*}
has no jump discontinuities across the branch cuts $\Sigma_j$. It
can at only have singularities of order less than or equal to
$1\over 2$ at the points $z_j^{\pm 1}$. This means that, if it was
singular at $z_j^{\pm 1}$, then it would have jump discontinuities
across $\Sigma_j$ due to the branch point type singularities.
Therefore it is holomorphic at the points $z_j^{\pm 1}$. Hence, the
function $\Theta(z)$ must have the singularity structure of the form
(\ref{eq:singtheta}).

To show that $S(z)$ has the correct asymptotic behavior at
$z=\infty$, we only need to prove that $\Theta(z)$ is invertible at
$z=\infty$. The asymptotic behavior of $\Theta(z)$ is given by
\begin{eqnarray*}
\Theta_{11}(\infty)&=&\theta\left(\omega(\infty)-\kappa+\beta(\lambda)\overrightarrow{e}+{{\tau}\over
2}\right)e^{-\Delta_0}\nonumber\\
\Theta_{22}(\infty)&=&\theta\left(\omega(\infty)-\kappa-\beta(\lambda)\overrightarrow{e}+{{\tau}\over
2}\right)e^{-\Delta_0}\\
\Theta_{12}(\infty)&=&\Theta_{21}(\infty)=0\nonumber
\end{eqnarray*}
where $\Delta_0=\lim_{z\rightarrow\infty}\Delta(z)-{1\over
2}\log(z-\lambda_1)$.

We will now show that $\omega(\infty)=\kappa$. Let $\eta$ be a third
type differential with simple poles at the points $x_i\in\Lie$ and
$\tilde{\eta}$ be a holomorphic differential. Let $\Pi^i$ and
$\tilde{\Pi}^i$ be their periods
\begin{eqnarray*}
\int_{a_i}\eta&=&\Pi^i, \quad \int_{b_i}\eta=\Pi^{i+g}\\
\int_{a_i}\tilde{\eta}&=&\tilde{\Pi}^i, \quad
\int_{b_i}\tilde{\eta}=\tilde{\Pi}^{i+g}
\end{eqnarray*}
Now, by the Riemann bilinear relation \cite{GH} we have
\begin{eqnarray*}
\sum_{i=1}^g\Pi^i\tilde{\Pi}^{i+g}-\Pi^{g+i}\tilde{\Pi}^{i}=2\pi
i\sum_{x_i}\Res_{x_i}(\eta)\int_{p_0}^{x_i}\tilde{\eta},
\end{eqnarray*}
where $p_0$ is an arbitrary point on $\Lie$. By substituting
$\eta=\d\Delta$ and $\tilde{\eta}=\d\omega_j$ for $j=1,\ldots, g$,
we see that
\begin{eqnarray*}
\kappa_j={1\over
2}\left(\omega_j(\infty^+)-\omega_j(\infty^-)\right)=\omega_j(\infty),
\end{eqnarray*}
where the last equality follows from (\ref{eq:rhoo}). Therefore, we
obtain
\begin{eqnarray}\label{eq:asyinf}
\Theta_{11}(\infty)&=&\theta\left(\beta(\lambda)\overrightarrow{e}+{{\tau}\over
2}\right)e^{-\Delta_0}\nonumber\\
\Theta_{22}(\infty)&=&\theta\left(-\beta(\lambda)\overrightarrow{e}+{{\tau}\over
2}\right)e^{-\Delta_0}\\
\Theta_{12}(\infty)&=&\Theta_{21}(\infty)=0\nonumber
\end{eqnarray}
Therefore $\Theta(z)$ is invertible at $\infty$ as long as
\begin{eqnarray}\label{eq:zero}
\theta\left(\beta(\lambda)\overrightarrow{e}+{{\tau}\over
2}\right)\theta\left(-\beta(\lambda)\overrightarrow{e}+{{\tau}\over
2}\right)\neq 0.
\end{eqnarray}
Thus, $S(z)$ is the unique solution of the Riemann-Hilbert problem
(\ref{eq:RHtheta}).
\end{proof}

\begin{remark}
  In appendix F, we will show that the Wiener-Hopf factorization is
  solvable for $\beta(\lambda)\in i\mathbb{R}$, i.e.  the
  Riemann-Hilbert problem (\ref{eq:RHtheta}) is solvable for these
  $\beta(\lambda)$. This in turn implies that (\ref{eq:zero}) is true
  for all $\beta(\lambda)\in i\mathbb{R}$.  Define (cf.
  (\ref{Omegaepsilon})) $$
  \Omega_{\epsilon} = \{\lambda \in {\Bbb R}:
  |\lambda| \geq 1+\epsilon\}.  $$
  The function $\lambda \to
  \beta(\lambda)$ maps $\Omega_{\epsilon}$ onto the bounded subset
  $\mathcal{N} \equiv \{\alpha \in i{\Bbb R}: 0 < |\alpha| \leq
  \frac{1}{2\pi}\log (2\epsilon^{-1}+1)$. By continuity, the
  inequality (\ref{eq:zero}) is valid for all $\alpha$ from the closure
  of $\mathcal{N}$.  This fact, together with the explicit formulae
  (\ref{Ssol}), (\ref{Sdef}) and (\ref{eq:UV}) implies the uniform
  estimates which have been stated in (\ref{WHest}) and used in the
  proof of theorem \ref{widom3}.
\end{remark}

\section{The asymptotics of $\d\log D_{L}(\lambda)/\d\lambda$
and  $D_{L}(\lambda)$}
\setcounter{equation}{0}

We are now ready to compute the derivative of the determinant
$D_L(\lambda)$. First we notice that in virtue of (\ref{eq:UV}),
equation (\ref{our}) can be re-written as
\begin{eqnarray}
\label{our2b}
\frac{\d}{\d\lambda}\log D_{L}(\lambda) & = &
-\frac{2\lambda}{1-\lambda^2}L \nonumber \\
&& + \frac{1}{2\pi} \int_{|z| = 1}\trace\,
\left[U_{+}'(z)U_{+}^{-1}(z)\left(\Phi^{-1}(z)-
\sigma_3\Phi^{-1}(z)\sigma_3\right)\right]\d z
\nonumber \\
&& + r_{L}(\lambda).
\end{eqnarray}
Define
$$\Psi(z):=\Phi^{-1}(z)-\sigma_3\Phi^{-1}(z)\sigma_3=
\frac{2}{1-\lambda^2}\left( \begin{array}{cc}
0 & -g(z)\\
g^{-1}(z)&0
\end{array}     \right).$$
From equations (\ref{Sdef}) and (\ref{Ssol}) we have
$$ U_{+}^{-1}(z)=  A\Theta(z)Q^{-1}(z),\quad U'_{+}(z)=
Q'(z) \Theta^{-1}(z) A^{-1}+Q(z) (\Theta^{-1})'(z) A^{-1},$$ where we
denote $A = Q(\infty)\Lambda^{-1}\Theta^{-1}(\infty)$.  Furthermore,
from equation (\ref{eq:dia}) we obtain
$$ Q^{-1}(z)=\frac{1}{2}\left(
\begin{array}{cc}
g^{-1}(z) & -i\\
-g^{-1}(z)&-i
\end{array}     \right).
$$
Therefore,  formula (\ref{our2b}) transforms into
the relation
\begin{eqnarray}
\frac{d}{\d \lambda}\log D_{L}(\lambda) &=&
 -\frac{2\lambda}{1-\lambda^2}L \nonumber \\
& &
+ {{i}\over{\pi(1-\lambda^2)}}\int_{\Xi}\tr\left[\Theta^{-1}{{\d}\over{\d
z}}\Theta(z)\sigma_3\right]\d z \nonumber \\
&& + r_{L}(\lambda).
\end{eqnarray}
We will now prove the following:
\begin{theorem}
\label{th6}
Let $s(\lambda)$ be given by
\begin{eqnarray}\label{eq:s}
s(\lambda)&=&{i\over{\pi(1-\lambda^2)}}\int_{|z|=1}\alpha(z)\d
z, \\
\alpha(z)&=&\tr\left[\Theta^{-1}{\d\over {\d
z}}\Theta(z)\sigma_3\right]\nonumber
\end{eqnarray}
where the entries of the $2\times 2$ matrix $\Theta(z)$ are given
by (\ref{eq:entrytheta}).

Then $s(\lambda)$ can be written as
\begin{eqnarray*}
s(\lambda)=-{i\over{\pi(1-\lambda^2)}}{\d\over{\d{\beta}}}\log\left(\theta\left(\beta(\lambda)\overrightarrow{e}+{{\tau}\over
2}\right)\theta\left(\beta(\lambda)\overrightarrow{e}-{{\tau}\over
2}\right)\right)
\end{eqnarray*}
\end{theorem}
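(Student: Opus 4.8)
Since the prefactor $i/[\pi(1-\lambda^{2})]$ in (\ref{eq:s}) is inert, the theorem is equivalent to the identity
$\int_{|z|=1}\alpha(z)\,\d z=-\frac{\d}{\d\beta}\log\bigl(\theta(\beta\overrightarrow{e}+\tfrac{\tau}{2})\,\theta(\beta\overrightarrow{e}-\tfrac{\tau}{2})\bigr)$, and the plan is to prove this by residues after recognising $\alpha(z)\,\d z$ as a meromorphic differential on $\Lie$. \emph{Step 1 (algebraic reduction).} I would insert the entries (\ref{eq:entrytheta}) into $\alpha=\tr[\Theta^{-1}\Theta'\sigma_3]$ and exploit two cancellations. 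The scalar $\sqrt{z-\lambda_1}$ multiplies all four entries, so writing $\Theta=\sqrt{z-\lambda_1}\,\Theta^{\flat}$ gives $\Theta^{-1}\Theta'=\tfrac{1}{2(z-\lambda_1)}I+(\Theta^{\flat})^{-1}(\Theta^{\flat})'$, and $\tr[I\sigma_3]=0$ removes this factor. Moreover the diagonal entries carry $e^{-\Delta(z)}$ and the off–diagonal ones $e^{+\Delta(z)}$, and in the logarithmic-derivative combination the terms containing $\Delta'(z)$ cancel in pairs. A direct computation then yields
\[
\alpha=\frac{\Theta_{11}\Theta_{22}\,\frac{\d}{\d z}\log\frac{\Theta_{11}}{\Theta_{22}}+\Theta_{12}\Theta_{21}\,\frac{\d}{\d z}\log\frac{\Theta_{12}}{\Theta_{21}}}{\det\Theta},
\]
in which only the theta-quotients of (\ref{eq:entrytheta}) survive.

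\emph{Step 2 ($\alpha\,\d z$ as a differential on $\Lie$).} Across each cut $\Sigma_i$ the matrix $\Theta$ jumps by a \emph{constant} factor $J$ ($J=\sigma_1$ for $i\le n$, $J=\Lambda\sigma_1\Lambda^{-1}$ for $i>n$), so $\Theta^{-1}\Theta'$ is conjugated by $J$; since $J\sigma_3J^{-1}=-\sigma_3$ in both cases, $\alpha\mapsto-\alpha$ across every $\Sigma_i$. Hence $\alpha\,\d z$ extends to a single-valued meromorphic differential on $\Lie$, odd under the involution $\rho$. I would then show its only residues lie above $z=\infty$. At a branch point (\ref{Sdef}) gives $\Theta^{-1}\Theta'=S^{-1}S'$, and the local form (\ref{eq:singtheta}) shows this has a simple pole whose $\alpha$-residue is proportional to $\tr\bigl[\diag(1,0)\,N\sigma_3N^{-1}\bigr]$ with $N=\pmatrix{1&-1\cr1&1}$ (the diagonal factor $\Lambda^{-1}$ commutes with $\sigma_3$ and drops out); since $N\sigma_3N^{-1}=\sigma_1$ and $\tr[\diag(1,0)\sigma_1]=0$, the residue vanishes. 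At the poles of $\Theta$ — the zeros of the denominators $\theta(\omega(z)\pm\tfrac{\tau}{2})$, located by Lemma \ref{le:zero} at the divisor $\{z_i^{-1}\}$ — one finds $\Theta^{-1}\Theta'\sim-(z-z_0)^{-1}I$, whose $\sigma_3$-trace again vanishes. Thus the integral is controlled by $\infty^{\pm}$ together with the finite zeros of $\det\Theta$.

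\emph{Step 3 (evaluation at $\infty^{\pm}$).} Using the oddness to fold the two sheets and collecting residues over the region $|z|>1$, the decisive contribution is that at $\infty^{\pm}$, which I would compute from the expansion of the theta-quotients there. By the relation $\omega(\infty)=\kappa$ established in Proposition \ref{pro:sol} and the definition (\ref{eq:tau}) of $\tfrac{\tau}{2}$, the quotient $\Theta_{11}/\Theta_{22}$ tends to $\theta(\beta\overrightarrow{e}+\tfrac{\tau}{2})/\theta(\beta\overrightarrow{e}-\tfrac{\tau}{2})$, exactly as in (\ref{eq:asyinf}). Tracking the leading behaviour of the numerator and of $\det\Theta$ from Step 1 then reproduces $-\frac{\d}{\d\beta}\log\bigl(\theta(\beta\overrightarrow{e}+\tfrac{\tau}{2})\theta(\beta\overrightarrow{e}-\tfrac{\tau}{2})\bigr)$, after which restoring the prefactor $i/[\pi(1-\lambda^2)]$ gives the stated formula for $s(\lambda)$.

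\emph{Main obstacle.} The delicate point is the contribution of the $2g-2$ finite zeros of $\det\Theta$. Their positions are $\beta$-independent — one has $\det\Theta(z)=\det\Theta(\infty)\,m(z)$ with $m$ fixed by the branch points and independent of $\beta$ — yet the residues of $\alpha\,\d z$ there depend on $\beta$ through $\frac{\d}{\d z}\log(\Theta_{11}\Theta_{12}/\Theta_{22}\Theta_{21})$. The crux is to show that, after folding by the $\rho$-oddness, these contributions cancel among themselves or recombine with the $\infty^{\pm}$ terms, so that the residue sum collapses to the clean $\beta$-derivative above. I expect to establish this from the quasi-periodicity of Proposition \ref{pro:per} together with the Riemann bilinear relations, exactly in the spirit of the identity $\omega(\infty)=\kappa$; everything outside this cancellation is the routine residue calculus set up in Steps 1--3.
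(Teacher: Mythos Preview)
Your plan has a conceptual gap that would prevent Step~3 from ever closing. You treat $\int_{|z|=1}\alpha\,\d z$ as a residue sum, with the ``decisive contribution'' coming from $\infty^{\pm}$ and the zeros of $\det\Theta$ as a troublesome remainder. But the paper's proof shows that $\alpha(z)\,\d z$ is in fact a \emph{holomorphic} one-form on $\Lie$: it equals $\sum_{k=1}^{2n-1}c_k\,\d\omega_k$ with constants $c_k=\partial_k\log\bigl(\theta(\beta\overrightarrow{e}+\tfrac{\tau}{2})\,\theta(\beta\overrightarrow{e}-\tfrac{\tau}{2})\bigr)$. A holomorphic differential on a compact surface has no residues anywhere --- in particular none at $\infty^{\pm}$ --- so the residue calculus you set up would return zero, not the answer. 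The integral $\int_{|z|=1}\alpha\,\d z$ is a \emph{period}, and that is how the paper evaluates it: deform the unit circle outward (nothing to cross, since the form is holomorphic) until it is homologous to $-\sum_{k=n}^{2n-1}a_k$, then use $\int_{a_j}\d\omega_k=\delta_{jk}$ to read off $-\sum_{k=n}^{2n-1}c_k=-\tfrac{\d}{\d\beta}\log(\cdots)$, the last equality holding because $\overrightarrow{e}$ has $1$'s precisely in slots $n,\ldots,2n-1$.

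Your ``main obstacle'' --- the zeros of $\det\Theta$ --- is therefore misidentified. Those points, together with the branch points and $\infty^{\pm}$, are exactly where one must check that $\alpha\,\d z$ has \emph{no} pole, and your Step~2 arguments at the branch points and at the $z_i^{-1}$ are heading in the right direction. The paper's mechanism is to write $\alpha\,\d z=\det\Theta^{-1}\sum_k\bigl((h_1)^2G_k^1-(h_2)^2G_k^2\bigr)\d\omega_k$ and show each coefficient is a holomorphic function on $\Lie$ with no poles and no monodromy (using the $\rho$-oddness you already noted to kill the would-be simple poles at the $z_j^{\pm1}$, and quasi-periodicity to kill the jumps), hence constant; the constant is then pinned down by the value at $\infty$ via (\ref{eq:asyinf}). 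So the evaluation at infinity does enter --- not as a residue, but to identify the constants $c_k$ once holomorphicity is established. Replace your Step~3 with this holomorphicity-plus-period argument and the proof goes through.
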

\begin{proof}
To begin with, we would like to treat $\alpha(z)\d z$
as a 1-form on the hyperelliptic curve $\Lie$.
We will show that it is, in fact, the holomorphic 1-form
\begin{eqnarray*}
\alpha(z)\d
z=\sum_{i=1}^{2n-1}\p_i\log\left(\theta\left(\beta(\lambda)\overrightarrow{e}+{{\tau}\over
2}\right)\theta\left(\beta(\lambda)\overrightarrow{e}-{{\tau}\over
2}\right)\right)\d\omega_i
\end{eqnarray*}
where $\d\omega_i$ are the normalized holomorphic differentials on
$\Lie$ and $\p_i$ is the partial  derivative  with
respect to the $i^{th}$ argument.

Suppose this is true, then by deforming the contour of the integral
(\ref{eq:s}), we see that it can be written as
\begin{eqnarray*}
s(\lambda){{\pi(1-\lambda^2)}\over
{i}}&=&-\sum_{k=n}^{2n-1}\int_{a_k}\alpha(z)\d
z\\
&=&-\sum_{k=n}^{2n-1}\int_{a_k}\sum_{j=1}^{2n-1}\p_j\log\left(\theta\left(\beta(\lambda)\overrightarrow{e}+{{\tau}\over
2}\right)\theta\left(\beta(\lambda)\overrightarrow{e}-{{\tau}\over 2}\right)\right)\d\omega_j\\
&=&-\sum_{k=n}^{2n-1}\p_j\log\left(\theta\left(\beta(\lambda)\overrightarrow{e}+{{\tau}\over
2}\right)\theta\left(\beta(\lambda)\overrightarrow{e}-{{\tau}\over 2}\right)\right)\\
&=&-{\d\over{\d{\beta}}}\log\left(\theta\left(\beta(\lambda)\overrightarrow{e}+{{\tau}\over
2}\right)\theta\left(\beta(\lambda)\overrightarrow{e}-{{\tau}\over
2}\right)\right)
\end{eqnarray*}
To see that $\alpha(z)\d z$ is given by the corresponding 1-form,
 let us first  compute $\alpha(z)\d z$. We have
\begin{eqnarray}\label{eq:alpha}
\alpha(z)\d z&=&\left(\det\Theta
(z)\right)^{-1}\Bigg(\Theta_{22}(z)\Theta_{11}^{\prime}(z)-\Theta_{11}(z)\Theta_{22}^{\prime}(z)
\nonumber\\
&&-\Theta_{12}(z)\Theta_{21}^{\prime}(z)+\Theta_{21}(z)\Theta_{12}^{\prime}(z)\Bigg)\d
z,
\end{eqnarray}
where the prime denotes the derivative with respect to $z$.

We can simplify equation~(\ref{eq:alpha}) by observing that
\begin{eqnarray*}
\Theta_{11}(z)&=&h_1(z)\theta\left(\omega(z)+\beta(\lambda)\overrightarrow{e}-\kappa+{{\tau}\over
2}\right)\\
\Theta_{22}(z)&=&h_1(z)\theta\left(\omega(z)-\beta(\lambda)\overrightarrow{e}-\kappa+{{\tau}\over
2}\right)\\
\Theta_{12}(z)&=&h_2(z)\theta\left(\omega(z)-\beta(\lambda)\overrightarrow{e}+\kappa-{{\tau}\over
2}\right) \\
\Theta_{21}(z)&=&h_2(z)\theta\left(\omega(z)+\beta(\lambda)\overrightarrow{e}+\kappa-{{\tau}\over
2}\right)\\
h_1(z)&=&\sqrt{z-\lambda_1}{{e^{-\Delta(z)}}\over{\theta\left(\omega(z)+{{\tau}\over
2}\right)}}\\
h_2(z)&=&-\sqrt{z-\lambda_1}{{e^{\Delta(z)}}\over{\theta\left(\omega(z)-{{\tau}\over
2}\right)}}.
\end{eqnarray*}
Therefore, we have
\begin{eqnarray*}
\Theta_{22}(z)\Theta_{11}^{\prime}(z)-\Theta_{11}(z)\Theta_{22}^{\prime}(z)&=&(h_1(z))^2\left(
\theta_2\theta_1^{\prime}-\theta_1\theta_2^{\prime}\right)\\
\Theta_{12}(z)\Theta_{21}^{\prime}(z)-\Theta_{21}(z)\Theta_{12}^{\prime}(z)&=&(h_2(z))^2\left(
\theta_3\theta_4^{\prime}-\theta_4\theta_3^{\prime}\right),
\end{eqnarray*}
where the $\theta_i$'s  are given by
\begin{eqnarray*}
\theta_1&=&\theta\left(\omega(z)+\beta(\lambda)\overrightarrow{e}-\kappa+{{\tau}\over
2}\right)\\
\theta_2&=&\theta\left(\omega(z)-\beta(\lambda)\overrightarrow{e}-\kappa+{{\tau}\over
2}\right)\\
\theta_3&=&\theta\left(\omega(z)-\beta(\lambda)\overrightarrow{e}+\kappa-{{\tau}\over
2}\right) \\
\theta_4&=&\theta\left(\omega(z)+\beta(\lambda)\overrightarrow{e}+\kappa-{{\tau}\over
2}\right).
\end{eqnarray*}
Now, the  $\theta_i^{\prime}$'s are just
\begin{eqnarray*}
\theta_i^{\prime}\d
z=\sum_{k=1}^{2n-1}\left(\p_k\theta_i\right)\d\omega_k.
\end{eqnarray*}
By substituting the right hand side of this equation into
(\ref{eq:alpha}) we obtain
\begin{eqnarray*}
\alpha(z)\d
z&=&{\det\Theta(z)}^{-1}\sum_{k=1}^{2n-1}\d\omega_k\left((h_1(z))^2G_k^1(z)-(h_2(z))^2G_k^2(z)\right)\\
G_k^1(z)&=&\theta_2\p_k\theta_1-\theta_1\p_k\theta_2\\
G_k^2(z)&=&\theta_3\p_k\theta_4-\theta_4\p_k\theta_3.
\end{eqnarray*}
We would like to show that the expression
\begin{eqnarray*}
{\det\Theta(z)}^{-1}\left((h_1(z))^2G_k^1(z)-(h_2(z))^2G_k^2(z)\right)
\end{eqnarray*}
is a constant. First note that, by considering the jump and
singularity structure of $\det\Theta(z)$, we have
\begin{eqnarray*}
\det\Theta(z)=g(z)\det\Theta(\infty)g(\infty)^{-1},
\end{eqnarray*}
where $g(z)$ is given by (\ref{eq:gn_def}).

Since the $\Theta_{ij}(z)$'s  have square root singularities at the $n$
points $z=z_j^{-1}$, the functions
\begin{eqnarray*}
(h_1(z))^2G_k^1(z)-(h_2(z))^2G_k^2(z)
\end{eqnarray*}
can have at most simple poles at the points $(z_j)^{\pm 1}$,
$j=1,\ldots, 2n$. Near each of these points, they behave like
\begin{eqnarray*}
(h_1(z))^2G_k^1(z)-(h_2(z))^2G_k^2(z)&=&A_0^j+A_1^j(z-z_j)^{1\over
2}+O(z-z_j), \quad z\rightarrow z_j\\
(h_1(z))^2G_k^1(z)-(h_2(z))^2G_k^2(z)&=&B_0^j(z-z_j^{-1})^{-1}+B_1^j(z-z_j^{-1})^{-{1\over
2}}+O(1), \quad z\rightarrow z_j^{-1}
\end{eqnarray*}
Since $\rho(\Delta)(z)=-\Delta(z)$, $\rho(\omega)(z)=-\omega(z)$ and
$\rho(z-\lambda_1)=z-\lambda_1$, we have
\begin{eqnarray*}
\rho(h_1^2)(z)=h_2^2(z),\quad \rho(\theta_1)(z)=\theta_3(z), \quad
\rho(\theta_2)(z)=\theta_4(z)
\end{eqnarray*}
and
\begin{eqnarray}\label{eq:invol}
(h_1(z))^2G_k^1(z)-(h_2(z))^2G_k^2(z)=(h_1(z))^2G_k^1(z)-\rho((h_1)^2G_k^1)(z).
\end{eqnarray}
Since the action of $\rho$ on a Laurent series near a branch point
$\lambda_j$ is given by
\begin{eqnarray*}
\rho\left(\sum_{k=-\infty}^{\infty}X_k(z-\lambda_j)^{k\over
2}\right)=\sum_{k=-\infty}^{\infty}X_k(-(z-\lambda_j))^{k\over 2},
\end{eqnarray*}
by (\ref{eq:invol}) we obtain $A_0^j=B_0^j=0$ for all $j$. Hence,
the function
\begin{eqnarray}\label{eq:ddl}
\det\Theta(z)^{-1}\left((h_1(z))^2G_k^1(z)-(h_2(z))^2G_k^2(z)\right)
\end{eqnarray}
does not have any pole on $\Lie$. To see that it does not have jumps
too, let us consider
\[
(h_1(z))^2G_k^1(z)=(h_1(z))^2
\left(\theta_2\p_k\theta_1-\theta_1\p_k\theta_2\right).
\]
The periodicity of the term inside the brackets is given by
proposition \ref{pro:per}:
\begin{eqnarray*}
\theta_1\p_k\theta_2(z+a_j)&=&\theta_1\p_k\theta_2\\
\theta_1\p_k\theta_2(z+b_j)&=&\theta_1\p_k\theta_2e^{-2\pi
i(2\omega_j(z)-2\kappa_j+{\tau}_j+\Pi_{jj})}\\&-&\theta_1\theta_2(2\pi
i\delta_{jk})e^{-2\pi i(2\omega_j(z)-2\kappa_j+{\tau}_j+\Pi_{jj})}\\
\theta_2\p_k\theta_1(z+a_j)&=&\theta_1\p_k\theta_2\\
\theta_2\p_k\theta_1(z+b_j)&=&\theta_2\p_k\theta_1e^{-2\pi
i(2\omega_j(z)-2\kappa_j+{\tau}_j+\Pi_{jj})}\\&-&\theta_2\theta_1(2\pi
i\delta_{jk})e^{-2\pi i(2\omega_j(z)-2\kappa_j+{\tau}_j+\Pi_{jj})},
\end{eqnarray*}
where $\omega_j(z)=\int_{\lambda_1}^z\d\omega_j$ is the $j^{th}$
component of the vector $\omega(z)$. Hence the multiplicative factor
picked up by $G_k^1(z)$ after going around a $b$-cycle cancels
exactly with the factor picked up by $\left(h_1(z)\right)^2$. It
follows that the function (\ref{eq:ddl}) does not have jumps on
$\Lie$ too. Hence, they are holomorphic functions on $\Lie$ without
any pole and must be constants. These constants can be computed by
taking $z=\infty$.  In other words, they are given by
(\ref{eq:asyinf}).  We therefore have
\begin{eqnarray*}
\det\Theta(z)^{-1}\left((h_1(z))^2G_k^1(z)-(h_2(z))^2G_k^2(z)\right)=\p_k\log\left(\theta\left(\beta(\lambda)\overrightarrow{e}+{{\tau}\over
2}\right)\theta\left(\beta(\lambda)\overrightarrow{e}-{{\tau}\over
2}\right)\right)
\end{eqnarray*}
This proves the theorem.
\end{proof}

Theorem \ref{th6}, in its turn,  yields our main asymptotic result.

\begin{theorem}\label{th7}
Let  $\Omega_{\epsilon}$ be the domain of solvability (\ref{Omegaepsilon}).
Then the logarithmic derivative of Toeplitz determinant $D_L(\lambda)$
admits the following asymptotic representation, which  is uniform in
$\lambda \in \Omega_{\epsilon}$.
\begin{eqnarray}
\label{DLasMay}
 \frac{d}{\d \lambda}\log D_L(\lambda)
& = &  -\frac{2\lambda}{1-\lambda^{2}}L
+\frac{d}{\d \lambda}
  \log \left[ \theta \left(
  \beta(\lambda)\overrightarrow{e}+\frac{ \tau}{2}\right)
  \theta \left(\beta(\lambda)\overrightarrow{e}-\frac{\tau}{2}\right)
  \right] \nonumber \\
&  &  + O\left(\frac{\rho^{-L}}{\lambda^2}\right), \quad L \to \infty.
\end{eqnarray}
Here $\rho$ is any real number satisfying the inequality
$$
1 < \rho < \mathrm{min}\{|\lambda_{j}|: |\lambda_{j}| > 1\}.
$$
\end{theorem}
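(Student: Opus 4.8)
The plan is to treat Theorem~\ref{th7} as the final assembly step: essentially all of the analytic work has already been done in the derivation preceding Theorem~\ref{th6} and in Theorem~\ref{th6} itself, so what remains is a chain-rule computation together with bookkeeping of the error term. First I would recall the representation established just before the statement of Theorem~\ref{th6}, obtained by rewriting the Widom formula (\ref{our}) via (\ref{eq:UV}) and then inserting the explicit solution (\ref{Sdef}), (\ref{Ssol}) of the Wiener--Hopf problem:
\begin{equation*}
\frac{\d}{\d\lambda}\log D_L(\lambda) = -\frac{2\lambda}{1-\lambda^2}L + s(\lambda) + r_L(\lambda),
\end{equation*}
where $s(\lambda) = \frac{i}{\pi(1-\lambda^2)}\int_{|z|=1}\alpha(z)\,\d z$ with $\alpha(z)=\tr[\Theta^{-1}\tfrac{\d}{\d z}\Theta(z)\sigma_3]$ is exactly the quantity studied in Theorem~\ref{th6}, and $r_L(\lambda)$ is the error term from the Widom-type asymptotics. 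This identity already encodes the reduction of the integral over $|z|=1$ to the theta-function matrix $\Theta(z)$.

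The heart of the matter is Theorem~\ref{th6}, which I would invoke directly to replace $s(\lambda)$ by its closed form
\begin{equation*}
s(\lambda) = -\frac{i}{\pi(1-\lambda^2)}\,\frac{\d}{\d\beta}\log\left(\theta\left(\beta(\lambda)\overrightarrow{e}+\frac{\tau}{2}\right)\theta\left(\beta(\lambda)\overrightarrow{e}-\frac{\tau}{2}\right)\right).
\end{equation*}
The only remaining computation is to convert the $\beta$-derivative into a $\lambda$-derivative. From the definition (\ref{eq:beta}) I would compute
\begin{equation*}
\beta'(\lambda) = \frac{1}{2\pi i}\left(\frac{1}{\lambda+1}-\frac{1}{\lambda-1}\right) = \frac{-1}{\pi i(\lambda^2-1)} = \frac{-i}{\pi(1-\lambda^2)},
\end{equation*}
so that the scalar prefactor appearing in Theorem~\ref{th6} is precisely $\beta'(\lambda)$. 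By the chain rule this gives $s(\lambda) = \frac{\d}{\d\lambda}\log[\theta(\beta(\lambda)\overrightarrow{e}+\tau/2)\,\theta(\beta(\lambda)\overrightarrow{e}-\tau/2)]$, which is exactly the middle term in (\ref{DLasMay}).

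Finally I would dispose of the error term. Theorem~\ref{widom3} supplies the uniform estimate $|r_L(\lambda)|\leq C|\lambda|^{-3}\rho^{-L}$ for $\lambda\in\Omega_\epsilon$, and since $|\lambda|\geq 1+\epsilon>1$ throughout $\Omega_\epsilon$ one has $|\lambda|^{-3}\leq|\lambda|^{-2}$, whence $r_L(\lambda)=O(\rho^{-L}/\lambda^2)$ uniformly on $\Omega_\epsilon$, as claimed. No step here is genuinely hard: the solvability of the Wiener--Hopf problem (Proposition~\ref{pro:sol} and the remark following it), the identification of $\alpha(z)\,\d z$ as a holomorphic one-form, and the evaluation of its $a$-periods have all been absorbed into Theorems~\ref{widom3} and~\ref{th6}. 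The single point requiring care is the verification that the scalar factor $-i/(\pi(1-\lambda^2))$ produced in Theorem~\ref{th6} genuinely coincides with $\beta'(\lambda)$ --- i.e. correctly tracking the branch of $\beta$ and the factor $1/i=-i$ so that the sign works out and $s(\lambda)$ becomes a total $\lambda$-derivative rather than merely a $\beta$-derivative times a similar-looking coefficient.
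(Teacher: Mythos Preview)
Your proposal is correct and follows exactly the paper's approach: the paper itself treats Theorem~\ref{th7} as an immediate consequence of Theorem~\ref{th6} combined with the representation derived just before it (coming from Theorem~\ref{widom3} via (\ref{our2b})), and the only remaining step is the chain-rule identification $\beta'(\lambda)=-i/(\pi(1-\lambda^2))$ that you carry out. Your handling of the error term, weakening $|\lambda|^{-3}$ to $|\lambda|^{-2}$ on $\Omega_\epsilon$, is also the correct reading of how (\ref{errorour2}) yields the $O(\rho^{-L}/\lambda^2)$ in (\ref{DLasMay}).
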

The uniformity of the estimate (\ref{DLasMay}) with respect to $\lambda \in
\Omega_{\epsilon}$
allows its integration over $\Omega_{\epsilon}$, which yields the equation
$$
\log\left(D_{L}(\lambda)(1-\lambda2)^{-L}\right)
-\lim_{s \to \infty}\log\left(D_{L}(s)(1-s^2)^{-L}\right)
=\log\frac{ \theta \left(
  \beta(\lambda)\overrightarrow{e}+\frac{ \tau}{2}\right)
  \theta \left(\beta(\lambda)\overrightarrow{e}-\frac{\tau}{2}\right)}{
  \theta^{2}\left(\frac{ \tau}{2}\right)}
$$
$$
+ r(L),
$$
where $r(L)=O\left(\rho^{-L}\right)$ as $L \to \infty$. Taking into
account (\ref{eq:DL}), the second term in the left hand side is
zero. This proves Proposition~\ref{th10_07}.

\section{The limiting entropy}
\setcounter{equation}{0}
Observe that equation (\ref{korep_int}) can also  be rewritten as
\begin{eqnarray}
  \label{eaaMay0}
  S_{L}(\rho_A)=\lim_{\epsilon \to 0^+} \frac{1}{4\pi \mathrm{i}}
  \oint_{\Gamma(\epsilon)}  e(1+\epsilon, \lambda)
  \frac{\mathrm{d}}{\mathrm{d} \lambda} \log
 \left( D_{L}(\lambda)(\lambda^{2} - 1)^{-L}\right)\d \lambda.
\end{eqnarray}
The right hand side of this equation follows from
$$
 \lim_{\epsilon \to 0^+}
  \oint_{\Gamma(\epsilon)}  e(1+\epsilon, \lambda)
  \frac{\mathrm{d}}{\mathrm{d} \lambda} \log
(\lambda^{2} - 1)^{-L} \d \lambda = L
\lim_{\epsilon \to 0^+}
  \oint_{\Gamma(\epsilon)} e(1+\epsilon, \lambda)
  \frac{2\lambda}{1-\lambda^{2} } \d \lambda
$$
$$
=2\pi iL \lim_{\epsilon \to 0^+} \Biggl[\mbox{res}_{\lambda=1}
\left(e(1+\epsilon, \lambda)
  \frac{2\lambda}{1-\lambda^{2} }\right)
+ \mbox{res}_{\lambda=-1}\left(e(1+\epsilon, \lambda)
  \frac{2\lambda}{1-\lambda^{2} }\right)\Biggr]
$$
$$
=2\pi iL \lim_{\epsilon \to 0^+}
\left((2+\epsilon)\log{\frac{2+\epsilon}{2}} + \epsilon
\log{\frac{\epsilon}{2}}\right)
= 0.
$$
We identify the limiting entropy $S(\rho_A)$ as the following double limit
(cf.\cite{IJK2}),
\begin{eqnarray}
   \label{eaaMay}
  S(\rho_A)=\lim_{\epsilon \to 0^+}\left[\lim_{L\to \infty} \frac{1}{4\pi
\mathrm{i}}
  \oint_{\Gamma(\epsilon)}  e(1+\epsilon, \lambda)
  \frac{\mathrm{d}}{\mathrm{d} \lambda} \log
  \left(D_{L}(\lambda)(\lambda^{2} - 1)^{-L}\right)\right]\d \lambda.
\end{eqnarray}
We now want to apply theorem \ref{th7} and evaluate the large $L$
limit in the right hand side of this equation. To this end we need
first to replace the integration along the contour $\Gamma(\epsilon)$
by the integration along a subset of the set $\Omega_{\epsilon}$ where
we can use the uniform asymptotic formula (\ref{DLasMay}).

Let us define
$$
\delta(\lambda):= \frac{\mathrm{d}}{\mathrm{d} \lambda} \log
  \left(D_{L}(\lambda)(\lambda^{2} - 1)^{-L}\right).
$$
The function $\delta(\lambda)$ satisfies the following properties.
\begin{enumerate}
\item $\delta(\lambda)$ is analytic outside of the interval $[-1,1]$.
\item $\delta(-\lambda) = -\delta(\lambda)$.
\item $\delta(\lambda) = O\left(\lambda^{-3}\right), \quad \lambda \to
\infty$.
\item $\delta(\lambda) = O\left(\log|1-\lambda^{2}|\right), \quad \lambda
\to \pm 1$.
\end{enumerate}
Consider the identity $$
\oint_{\Gamma(\epsilon)} e(1+\epsilon,
\lambda) \frac{\mathrm{d}}{\mathrm{d} \lambda} \log
\left(D_{L}(\lambda)(\lambda^{2} - 1)^{-L}\right) \d \lambda \equiv
\oint_{\Gamma(\epsilon)} e(1+\epsilon, \lambda) \delta(\lambda) \d
\lambda.  $$
Property 1 allows us to replace the contour of
integration $\Gamma(\epsilon)$ by the large contour $\Gamma'$ as
depicted in figure~1, so that $$
\oint_{\Gamma(\epsilon)} e(1+\epsilon, \lambda) \delta(\lambda) \d
\lambda =
\oint_{\Gamma'} e(1+\epsilon, \lambda)
\delta(\lambda) \d \lambda .  $$
Simultaneously, property 3 allows to push $R
\to \infty$ in the right hand side of the last formula and hence
re-write it as the relation, $$
\oint_{\Gamma(\epsilon)} e(1+\epsilon, \lambda) \delta(\lambda)
\d \lambda
=\int_{-\infty}^{-1-\epsilon}\delta(\lambda)\left[ -\frac{1+\epsilon
    +\lambda}{2}\left( \log_{+}\left(\frac{1+\epsilon
        +\lambda}{2}\right) -\log_{-}\left(\frac{1+\epsilon
        +\lambda}{2}\right)\right)\right]\d \lambda $$

 \begin{equation}\label{enteval1}
 +\int_{1+\epsilon}^{\infty}\delta(\lambda)\left[
 -\frac{1+\epsilon -\lambda}{2}\left(
 \log_{+}\left(\frac{1+\epsilon -\lambda}{2}\right)
 -\log_{-}\left(\frac{1+\epsilon -\lambda}{2}\right)\right)\right]\d \lambda.
\end{equation}
Here $ \log_{+}\left(\frac{1+\epsilon \pm \lambda}{2}\right)$ and
$ \log_{-}\left(\frac{1+\epsilon \pm \lambda}{2}\right)$ denote,
respectively, the
upper and lower boundary values of the functions
$ \log\left(\frac{1+\epsilon \pm \lambda}{2}\right)$ on the real axis.
We note that
$$
 \log_{+}\left(\frac{1+\epsilon + \lambda}{2}\right)
 -  \log_{-}\left(\frac{1+\epsilon + \lambda}{2}\right) =2\pi i,
 \quad\mbox{for all }\quad  \lambda < -1 -\epsilon,
 $$
and
$$
 \log_{+}\left(\frac{1+\epsilon - \lambda}{2}\right)
 -  \log_{-}\left(\frac{1+\epsilon - \lambda}{2}\right) =-2\pi i,
 \quad\mbox{for all }\quad \lambda > 1 +\epsilon.
 $$
 Therefore, equation  (\ref{enteval1}) becomes
 \begin{eqnarray}
\label{enteval2}
\oint_{\Gamma(\epsilon)} e(1+\epsilon, \lambda)
\delta(\lambda)\d \lambda & = &   -\pi \mathrm{i}\int_{-\infty}^{-1-\epsilon}(1+\epsilon
+\lambda)\delta(\lambda)\d \lambda
+ \pi \mathrm{i}\int_{1+\epsilon}^{\infty}(1+\epsilon
-\lambda)\delta(\lambda)\d \lambda \nonumber  \\
&= &   2\pi \mathrm{i}\int_{1+\epsilon}^{\infty}(1+\epsilon
-\lambda)\delta(\lambda)\d \lambda,
\end{eqnarray}
where we have also taken into account
the oddness of the function $\delta(\lambda)$,
\textit{i.e.} property 2.  Recalling the definition of the function
$\delta(\lambda)$,
we arrive at
\begin{equation}\label{enteval3}
\oint_{\Gamma(\epsilon)}  e(1+\epsilon, \lambda)
\frac{\mathrm{d}}{\mathrm{d} \lambda} \log
\left(D_{L}(\lambda)(\lambda^{2} - 1)^{-L}\right)\d \lambda
=
2\pi \mathrm{i} \int_{1+\epsilon}^{\infty} (1+\epsilon- \lambda)
\frac{\mathrm{d}}{\mathrm{d} \lambda} \log
\left(D_{L}(\lambda)(\lambda^{2} - 1)^{-L}\right)\,
\mathrm{d} \lambda.
\end{equation}

The estimate (\ref{DLasMay}) can be used in the right hand side of formula
(\ref{enteval3}).
This enables us to perform an  explicit evaluation of the large $L$
limit in (\ref{eaaMay}) so that the formula for the entropy $S(\rho_A)$
becomes
\begin{eqnarray}
\label{ent123}
  S(\rho_A)& =& \frac{1}{2}\lim_{\epsilon \to 0^+}\left[
  \int_{1+\epsilon}^{\infty}(1+\epsilon - \lambda)
  \frac{d}{\d \lambda}
  \log \Bigl(\theta \left(
  \beta(\lambda)\overrightarrow{e}+\frac{ \tau}{2}\right)
  \theta \left(\beta(\lambda)\overrightarrow{e}-\frac{\tau}{2}\right)
  \Bigr) \mathrm{d} \lambda \right] \nonumber \\
&  =& \frac{1}{2}\lim_{\epsilon \to 0^+}
 \int_{1+\epsilon}^{\infty}
 \log{{\theta\left(\beta(\lambda)\overrightarrow{e}+{\tau\over
2}\right)\theta\left(\beta(\lambda)\overrightarrow{e}-{\tau\over
2}\right)}\over{\theta^2\left({\tau\over 2}\right)}}\d\lambda.
\end{eqnarray}
To complete the evaluation of the entropy, we need to prove the existence
of this limit.

\section{Integrability at $\pm 1$.
The final formula for the entropy}
\label{se:integ}
\setcounter{equation}{0}
We will now proof the integrability of the function
\begin{eqnarray*}
\log{{\theta\left(\beta(\lambda)\overrightarrow{e}+{\tau\over
2}\right)\theta\left(\beta(\lambda)\overrightarrow{e}-{\tau\over
2}\right)}\over{\theta^2\left({\tau\over 2}\right)}}
\end{eqnarray*}
at $\pm 1$.

First let us denote the real and imaginary parts of the period
matrix $\Pi$ by $\rpart \,\Pi$ and $\ipart \,\Pi$. Since the $\ipart \,\Pi$ is non-singular, there exist a real vector
$\overrightarrow{v}$ such that
\begin{eqnarray*}
\overrightarrow{e}=\ipart \,\Pi\overrightarrow{v}
\end{eqnarray*}
We now can write
\begin{eqnarray*}
i\overrightarrow{e}=\left(\Pi-\rpart \,\Pi\right)\overrightarrow{v}.
\end{eqnarray*}
Let $Q$ be a large real number, and let $\overrightarrow{m}$ be an
integer vector such that
\begin{eqnarray}\label{eq:qm}
Q\overrightarrow{v}=\overrightarrow{m}+\overrightarrow{q},
\end{eqnarray}
where the entries of $\overrightarrow{q}$ are between 0 and 1.

In particular, we have
\begin{eqnarray*}
\overrightarrow{m}=Q\left(\ipart \, \Pi\right)^{-1}
\overrightarrow{e}-\overrightarrow{q}.
\end{eqnarray*}
Then, from the periodicity of the theta function (\ref{eq:period}),
we see that
\begin{eqnarray}
\label{eq:asym1}
\theta\left(iQ\overrightarrow{e}+\overrightarrow{c}_0\right)&=&
\theta\left((\overrightarrow{m}+
 \overrightarrow{q})^T\left(\Pi-\rpart\Pi\right)+\overrightarrow{c}_0\right)
 \nonumber\\
&=&\exp\Bigg(Q^2\pi\Bigg[i\left<\overrightarrow{e},\left(\ipart \,
    \Pi\right)^{-1}\rpart\Pi\left(\ipart \,
    \Pi\right)^{-1}\overrightarrow{e}\right>\nonumber\\
   & & + \left<\overrightarrow{e},\left(\ipart
    \, \Pi\right)^{-1}\overrightarrow{e}\right>\Bigg]\nonumber\\
  & &- 2i\pi Q\left[\left<\overrightarrow{e},\left(\ipart \,
      \Pi\right)^{-1}\rpart
    \,\Pi\overrightarrow{q}\right>+\left<\overrightarrow{e},\left(\ipart
      \,
      \Pi\right)^{-1}\overrightarrow{c}_0\right>+\left<2i\overrightarrow{e},\overrightarrow{q}\right>
       \right]\nonumber\\
& & + i\pi\left[\left<\overrightarrow{q},\Pi\overrightarrow{q}\right>
  +2\left<\overrightarrow{q},\overrightarrow{c}_0\right>\right]
\Bigg)\nonumber\\
& &\times \theta\left(-(\overrightarrow{m}+\overrightarrow{q})^T\rpart
  \, \Pi+\overrightarrow{c}_0+\overrightarrow{q}^T\Pi\right)
\end{eqnarray}
for some bounded constant $\overrightarrow{c}_0$.

Note that there exists an integer vector $\overrightarrow{l}$ and
real vector $\overrightarrow{r}$ with entries between 0 and 1 such
that
\begin{eqnarray*}
(\overrightarrow{m}+\overrightarrow{q})^T \rpart \, \Pi=\overrightarrow{l}+\overrightarrow{r}.
\end{eqnarray*}
Therefore, we have
\begin{eqnarray*}
\theta\left(-(\overrightarrow{m}+\overrightarrow{q})^T\rpart \, \Pi+\overrightarrow{c}_0+\overrightarrow{q}^T\Pi\right)
=\theta\left(-\overrightarrow{r}+\overrightarrow{c}_0+\overrightarrow{q}^T\Pi\right).
\end{eqnarray*}
If $\log\theta\left(iQ\overrightarrow{e}+\overrightarrow{c}_0\right)$
is non-zero for all $Q$, then from (\ref{eq:asym1}) we see that
\begin{eqnarray*}
\log\theta\left(iQ\overrightarrow{e}+\overrightarrow{c}_0\right)&=&Q^2\pi\Bigg[i\left<\overrightarrow{e},\left(\ipart
    \, \Pi\right)^{-1}\rpart \, \Pi\left(\ipart \,
    \Pi\right)^{-1}\overrightarrow{e}\right>\nonumber\\& &+ \left<\overrightarrow{e},\left(\ipart
    \,\Pi\right)^{-1}\overrightarrow{e}\right>+2 i
{N(Q,c_0)\over Q^2}+O(Q^{-1})\Bigg], \quad Q \to \infty,
\end{eqnarray*}
where $N(Q,c_0)$ is an integer that depends on the branch of the
logarithm. It may depend on $Q$ and $\overrightarrow{c}_0$. This
term arises because in the integral expression of the entropy,
\begin{eqnarray}\label{eq:int}
\frac{1}{2}\int_{1+\epsilon}^{\infty}\log{{\theta\left(\beta(\lambda)\overrightarrow{e}+{\tau\over
2}\right)\theta\left(\beta(\lambda)\overrightarrow{e}-{\tau\over
2}\right)}\over{\theta^2\left({\tau\over 2}\right)}}\d\lambda,
\end{eqnarray}
the branch of the logarithm must be chosen so that the integrand is
continuous in $\lambda$. We shall determine the asymptotic behavior
of $N(Q,c_0)$ as $Q \to \infty$.

Due to theorem \ref{thm:solv}, the inequality (\ref{eq:zero}) is
true when $\beta(\lambda)\in i\mathbb{R}$. Therefore, we can apply
the above result to compute the asymptotic behavior of the integrand
in (\ref{eq:int}):
\begin{eqnarray}
\label{eq:intasym}
\log{{\theta\left(\beta(\lambda)\overrightarrow{e}+{\tau\over
2}\right)\theta\left(\beta(\lambda)\overrightarrow{e}-{\tau\over
2}\right)}\over{\theta^2\left({\tau\over
2}\right)}}&=&-2\beta(\lambda)^2\pi\Bigg[\left<\overrightarrow{e},\left(\ipart
\,
\Pi\right)^{-1}\overrightarrow{e}\right>\nonumber\\
&&+ i\left<\overrightarrow{e},\left(\ipart
\, \Pi \right)^{-1}\rpart \, \Pi\left(\ipart \, \Pi
\right)^{-1}\overrightarrow{e}\right>\nonumber\\
&&-2 i {{N(\beta(\lambda),{\tau\over
2})+N(\beta(\lambda),-{\tau\over 2})}\over
\beta(\lambda)^2}\nonumber \\
&&+ O(\beta(\lambda)^{-1})\Bigg]
\end{eqnarray}
Since $D_L(\lambda)$ in (\ref{eq:DL}) is real and positive
for $\lambda\in(1,\infty)$, and that $\log D_L(\lambda)(\lambda^{2} - 1)^{-L}$ has to be
zero at $\lambda=\infty$ (which is needed to deform the contour to
obtain (\ref{ent123})), we see that $\log D_L(\lambda)$ has to be real for
$\lambda\in(1,\infty)$. Therefore, the imaginary part of the leading
order term in (\ref{eq:intasym}) must be zero. In particular, this
means that
\begin{eqnarray*}
\left<\overrightarrow{e},\left(\ipart \,\Pi \right)^{-1}\rpart \,
  \Pi\left(\ipart \, \Pi \right)^{-1}\overrightarrow{e}\right>-2{{N(\beta(\lambda),{\tau\over
2})+N(\beta(\lambda),-{\tau\over 2})}\over
\beta(\lambda)^2}=O(\beta(\lambda)^{-1}).
\end{eqnarray*}
Thus, the asymptotic behavior of the integrand in (\ref{eq:int}) is
\begin{eqnarray}
\label{eq:intasym1}
\log{{\theta\left(\beta(\lambda)+{\tau\over
2}\right)\theta\left(\beta(\lambda)-{\tau\over
2}\right)}\over{\theta^2\left({\tau\over
2}\right)}}=-2\pi\beta(\lambda)^2\left(\left<\overrightarrow{e},\left(\ipart
\, \Pi\right)^{-1}\overrightarrow{e}\right>+O(\beta(\lambda)^{-1})\right),
\quad \lambda \to 1^+.
\end{eqnarray}
The left hand side of this equation is therefore integrable at
$\lambda=1^+$ and we can take the limit $\epsilon\rightarrow 0$ in
(\ref{ent123}) to obtain our final result for the entropy:
\begin{eqnarray}\label{eq:entro}
  S(\rho_A)=\frac{1}{2}\int_{1}^{\infty}\log{{\theta\left(\beta(\lambda)\overrightarrow{e}+{\tau\over
          2}\right)\theta\left(\beta(\lambda)\overrightarrow{e}-{\tau\over
          2}\right)}\over{\theta^2\left({\tau\over 2}\right)}}\d\lambda.
\end{eqnarray}
\section{Critical behavior as roots of $g(z)$ approaches the unit circle}\label{se:real}
\setcounter{equation}{0}

The purpose of this section is to prove theorem~\ref{thm:crit}. We
shall study the critical behavior of the entropy of entanglement as
some pairs of the roots (\ref{eq:lambdai}) approach the unit circle.
As we discussed in section~\ref{stat_res}, in each pair one root
lies inside the unit circle, while the other outside. In this limit
the entropy becomes singular.
We shall study all the possible cases of such degeneracy, namely the
following three:
\begin{enumerate}
\item the limit of two real roots approaching 1;
\item  the limit of $2r$ pairs of complex roots approaching the unit
circle;
\item the limit of $2r$ pairs of complex roots approaching the unit
circle together with one pair of real roots approaching 1.
\end{enumerate}

When pairs of roots in (\ref{eq:lambdai}) approach the unit circle,
the period matrix $\Pi$ in the definition of the theta function
(\ref{eq:thetadef}) becomes degenerate and some of its entries tend to
zero. This will lead to a divergence in the sum (\ref{eq:thetadef})
and hence a divergence in the entropy. It is very difficult to study
such divergence directly from the sum (\ref{eq:thetadef}).  In order
to compute such limits, we need to perform modular transformations to
the theta functions. In particular, the following theorem from
\cite{FR} will be used throughout the whole section.
\begin{theorem}
\label{thm:modular}
If the canonical bases of cycles $(\tilde{A}\quad\tilde{B})$ and $(A\quad B)$ are related by
\begin{eqnarray*}
\pmatrix{\tilde{A}\cr \tilde{B}}\ &=&Z\pmatrix{A\cr
B}=\pmatrix{Z_{11}&Z_{12}\cr Z_{21}&Z_{22}}\ \pmatrix{A\cr B},
\end{eqnarray*}
where the matrix $Z$ is symplectic \textit{i.e.}
\begin{eqnarray*}
Z^T\pmatrix{0&-I_{2n-1}\cr
            I_{2n-1}&0}Z&=&\pmatrix{0&-I_{2n-1}\cr
            I_{2n-1}&0},\\
   Z^{-1}&=&\pmatrix{Z_{22}^T&-Z_{12}^T\cr
                -Z_{21}^T&Z_{11}^{T}},
\end{eqnarray*}
then we have the following relations between the theta functions
with different period matrices:
\begin{eqnarray}\label{eq:modular}
\theta\left[{\varepsilon\atop
\varepsilon^{\prime}}\right](\xi,\Pi)=\varsigma\exp\left[-\pi
i\tilde{\xi}^T(-Z_{12}^{T}\tilde{\Pi}+Z_{22}^T)^{-1}Z_{12}^T\tilde{\xi}\right]\theta\left[{\tilde{\varepsilon}\atop
\tilde{\varepsilon}^{\prime}}\right](\tilde{\xi},\tilde{\Pi}),
\end{eqnarray}
where
\begin{eqnarray}\label{eq:transxi}
\tilde{\xi}=\left((-Z_{12}^T\tilde{\Pi}+Z_{22}^T)^T\right)\xi
\end{eqnarray}
and $\varsigma$ is a constant. The characteristics of the theta
functions are related by
\begin{eqnarray*}
\varepsilon&=&Z_{22}^T\tilde{\varepsilon}+Z_{12}^T\tilde{\varepsilon}^{\prime}-\diag\left(Z_{12}^TZ_{22}\right)\\
\varepsilon^{\prime}&=&Z_{21}^T\tilde{\varepsilon}+Z_{11}^T\tilde{\varepsilon}^{\prime}-\diag\left(Z_{11}^TZ_{21}\right),
\end{eqnarray*}
where $\diag(CD^T)$ is a column vector whose entries are the
diagonal elements of $CD^T$. The new period matrix is given by
\begin{eqnarray}\label{eq:modperiod}
\tilde{\Pi}=\left(Z_{22}\Pi+Z_{21}\right)\left(Z_{12}\Pi+Z_{11}\right)^{-1}
\end{eqnarray}
and the normalized one forms are related by
\begin{eqnarray}\label{eq:modform}
\d\tilde{\Omega}&=&\left((-Z_{12}^T\tilde{\Pi}+Z_{22}^T)^T\right)\d\Omega\\
\d\tilde{\Omega}^T&=&(\d\tilde{\omega}_1,\ldots,\d\tilde{\omega}_{2n-1})^T,\quad
\d\Omega^T=(\d\omega_1,\ldots,\d\omega_{2n-1})^T,\nonumber
\end{eqnarray}
which is the same transformation as in~(\ref{eq:transxi}).
\end{theorem}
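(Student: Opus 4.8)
The plan is to recognise Theorem~\ref{thm:modular} as the classical transformation law of Riemann theta functions under the action of the symplectic modular group, and to prove it by reducing the arbitrary symplectic matrix $Z$ (here the genus is $g=2n-1$, so $Z \in \mathrm{Sp}(2g,\mathbb{Z})$) to a product of elementary generators, verifying the formula on each generator and then composing. Recall that $\mathrm{Sp}(2g,\mathbb{Z})$ is generated by three families: the block-diagonal matrices built from $U \in \mathrm{GL}(g,\mathbb{Z})$, which merely relabel the $a$-cycles; the integer shifts with symmetric block, which shift $\tilde{\Pi}$ by an integer symmetric matrix; and a single symplectic inversion, playing the role of $S \mapsto -1/S$ on the Siegel upper half-space. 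Since the period transformation (\ref{eq:modperiod}) is precisely the fractional-linear action of $Z$, it suffices to check that the theta prefactor and the characteristic correction compose correctly under these generators.

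For the first two families the verification is essentially bookkeeping. For a $\mathrm{GL}(g,\mathbb{Z})$ generator one reindexes the lattice sum in (\ref{eq:thetadef}) by $\overrightarrow{n}\mapsto U\overrightarrow{n}$; because $U$ is unimodular this is a bijection of $\mathbb{Z}^g$, so $\Pi$ undergoes the corresponding congruence, in agreement with (\ref{eq:modperiod}), while the argument and characteristics transform linearly with no exponential prefactor. For the symmetric-shift generator $\tilde{\Pi}=\Pi+N$ with $N$ integer symmetric one uses that $\mathrm{e}^{i\pi\overrightarrow{n}\cdot N\overrightarrow{n}}=\mathrm{e}^{i\pi\overrightarrow{n}\cdot\diag(N)}$ for integer $\overrightarrow{n}$, so the quadratic Gaussian factor collapses to a linear one and is absorbed into a half-integer shift of the characteristics; this is exactly the origin of the $\diag(Z_{12}^TZ_{22})$ and $\diag(Z_{11}^TZ_{21})$ corrections in the stated transformation of $\varepsilon,\varepsilon'$.

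The crux --- and the step I expect to be the main obstacle --- is the symplectic inversion, which is the $g$-dimensional generalisation of Jacobi's imaginary transformation. Here one applies the Poisson summation formula to the defining series (\ref{eq:thetadef}): writing the theta series as a sum of Gaussians over $\mathbb{Z}^g$ and passing to the dual lattice sends $\Pi$ to its fractional-linear image under the inversion (the analogue of $-\Pi^{-1}$) and produces both the quadratic exponential prefactor $\exp[-\pi i\,\tilde{\xi}^T(-Z_{12}^T\tilde{\Pi}+Z_{22}^T)^{-1}Z_{12}^T\tilde{\xi}]$ appearing in (\ref{eq:modular}) and the normalising factor, a determinantal Gauss-sum constant. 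The applicability of Poisson summation is guaranteed by the positive-definiteness of $\ipart\,\Pi$, which keeps the Gaussians integrable. Since the statement only asserts that $\varsigma$ is \emph{a} constant independent of $\xi$, I need not evaluate this Gauss sum explicitly; it is enough to track that the $\xi$-dependence matches.

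Finally, I would assemble the general case by composing the generators and checking that the cocycle of prefactors and characteristic shifts is consistent --- equivalently, that both sides of (\ref{eq:modular}), viewed as functions of $\xi$, are holomorphic and obey the same quasi-periodicity under $\mathbb{Z}^g+\Pi\mathbb{Z}^g$ dictated by Proposition~\ref{pro:per}. The space of such quasi-periodic functions is one-dimensional, so agreement up to the constant $\varsigma$ is automatic once the quadratic prefactor is shown to be correct. The transformation (\ref{eq:modform}) of the normalised differentials is not independent: it follows from the argument change (\ref{eq:transxi}) by duality of the normalisation (\ref{eq:normalizeforms}), and I would verify it last as a consistency check.
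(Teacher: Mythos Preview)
Your outline is the standard textbook proof and is correct in spirit; however, the paper does not prove Theorem~\ref{thm:modular} at all. It is quoted verbatim as a known result from Rauch--Farkas~\cite{FR} (see the sentence introducing the theorem: ``the following theorem from \cite{FR} will be used throughout the whole section''), so there is no proof in the paper to compare against. What you have sketched---reduction to the generators of $\mathrm{Sp}(2g,\mathbb{Z})$, direct verification on the $\mathrm{GL}(g,\mathbb{Z})$ and integer-shift generators, Poisson summation for the inversion, and composition via the quasi-periodicity/one-dimensionality argument---is essentially the proof one finds in~\cite{FR} or in Mumford's Tata Lectures, so your approach is consistent with the cited source.
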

Our aim is to find a good choice of basis
$\pmatrix{\tilde{A}&\tilde{B}}$ such that
$\theta(\tilde{\xi},\tilde{\Pi})$ remains finite while some entries of
$\tilde{\Pi}$ tend to infinity as certain pairs of roots $\lambda_j$
approach the unit circle. This would confine the divergence of the
entropy within the exponential factor in (\ref{eq:modular}), which can
be computed.

\subsection{The limit of  two real roots approaching 1}
In this section the choice of the basis $(\tilde{A} \quad\tilde{B})$
described in theorem \ref{thm:modular} is the one shown in
figure~\ref{fig:cycle2}.
\begin{figure}[htbp]
\begin{center}
\resizebox{8cm}{!}{\input{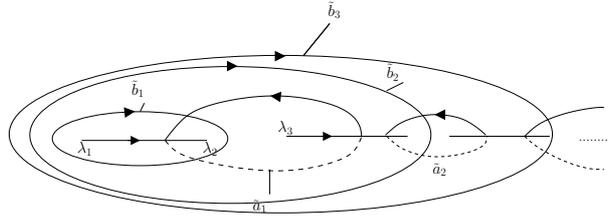}}\caption{The choice of
cycles on the hyperelliptic curve $\Lie$. The arrows denote the
orientations of the cycles and branch cuts. }\label{fig:cycle2}
\end{center}
\end{figure}
In the notation of theorem \ref{thm:modular}, the new basis
$(\tilde{A}\quad \tilde{B})$ and the old one $(A \quad B)$ are related by
\begin{eqnarray}\label{eq:ci}
\pmatrix{\tilde{A}\cr
\tilde{B}}\ &=&Z \pmatrix{A \cr B}\nonumber\\
Z&=&\pmatrix{Z_{11}&Z_{12}\cr Z_{21}&Z_{22}}\ =\pmatrix{0&-C_2\cr C_1&0}\ \nonumber\\
\tilde{A}^T&=&(\tilde{a}_1,\ldots,\tilde{a}_{2n-1})^T,\quad \tilde{B}^T=(\tilde{b}_1,\ldots,\tilde{b}_{2n-1})^T\nonumber\\
A^T&=&(a_1,\ldots,a_{2n-1})^T,\quad B^T=(b_1,\ldots,b_{2n-1})^T\nonumber\\
(C_1)_{ij}&=&1, \quad j\geq i,\quad (C_1)_{ij}=0, \quad j<i\\
(C_2)_{ii}&=&1, ,\quad (C_2)_{i,i-1}=-1, \quad (C_2)_{ij}=0, \quad
j\neq i, i-1\nonumber\\
C_1&=&\left(C_2^{-1}\right)^T.\nonumber
\end{eqnarray}
The relation between the two period matrices can be found using
(\ref{eq:modperiod})
\begin{eqnarray}\label{eq:tildePi}
\tilde{\Pi}=-C_1\Pi^{-1} C_2^{-1}.
\end{eqnarray}
To study the behavior of the entropy as the real roots
$\lambda_{2n}\rightarrow\lambda_{2n}^{-1}$, we need to know the
behavior of the period matrix $\tilde{\Pi}$ in this limit. Now, we
have
\begin{equation}
  \label{eq:limw}
  w_0 =  \lim_{\lambda_{2n} \to \lambda_{2n}^{-1}}
    \sqrt{\prod_{i=1}^{4n}(z-\lambda_i)}=(z-1)
    \sqrt{\prod_{i\neq 2n,2n+1}^{4n}(z-\lambda_i)}.
\end{equation}
Furthermore, as $\lambda_{2n} \to \lambda_{2n}^{-1}$ the integration
around $\tilde{a}_n$ tends the residue at $z=1$; the hyperelliptic
curve $\Lie$ becomes a singular hyperelliptic curve $\Lie_0$ of genus
$2n-2$; the tilded basis of canonical cycles on this curve reduces to
\begin{eqnarray}
  \label{eq:basis}
  \tilde{A}_0^T&=&(\tilde{a}_1,\ldots,\tilde{a}_{n-1},\tilde{a}_{n+1},
  \ldots,\tilde{a}_{2n-1})^T,\nonumber\\
  \tilde{B}_0^T&=&(\tilde{b}_1,\ldots,
  \tilde{b}_{n-1},\tilde{b}_{n+1},\ldots,\tilde{b}_{2n-1})^T.
\end{eqnarray}
\begin{figure}[htbp]
\begin{center}
\resizebox{6cm}{!}{\input{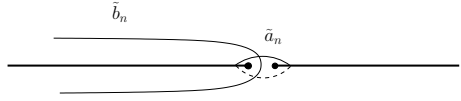}}\caption{As $\lambda_{2n}\rightarrow\lambda_{2n}^{-1}$, integration around $\tilde{a}_n$ becomes a residue integral around $z=1$. }\label{fig:crit1}
\end{center}
\end{figure}
The holomorphic 1-forms $\d\tilde{\omega}_j$ tend to the following
limit \cite{BBEIM}:
\begin{eqnarray*}
\tilde{\d\omega}_j^0={{\varphi_j(z)}\over{w_0}}\d z,
\end{eqnarray*}
where $\varphi_j(\lambda)$ are degree $2n-2$ polynomials determined
by the normalization conditions
\begin{eqnarray*}
\int_{\tilde{a}_j}\d\tilde{\omega}_k^0&=&\delta_{kj},\quad j\neq n\\
2\pi i\Res_{z=1,w=w_0(1)}\d\tilde{\omega}_k^0&=&\delta_{kn}.
\end{eqnarray*}
Therefore, the 1-forms $\d\tilde{\omega}_k^0$, $k\neq n$, become the
holomorphic 1-forms that are dual to the basis $\tilde{A}_0$ on
$\Lie_0$. Furthermore, $\tilde{\d\omega}_n^0$ becomes a normalized
meromorphic 1-form with simple poles at the points above $z=1$ on
$\Lie_0$.

As in \cite{BBEIM}, we see that the entries of the period matrix
$\tilde{\Pi}$ tend to the following limits:
\begin{eqnarray*}
 \lim_{\lambda_{2n} \to \lambda_{2n}^{-1}} \tilde{\Pi}_{jk}
 & = & \tilde{\Pi}_{jk}^0, \quad i,j\neq n,n\\
  \tilde{\Pi}_{nn}&=&2\sum_{j=1}^{n}
  \int_{\lambda_{2j-1}}^{\lambda_{2j}}\d\tilde{\omega}_n\\
  &=&{1\over {\pi i}}\log|\lambda_{2n}^{-1}-\lambda_{2n}|+O(1), \quad
\lambda_{2n} \to \lambda_{2n}^{-1},
\end{eqnarray*}
where $\tilde{\Pi}_{ij}^0$ is finite for $i,j\neq n,n$.

Let us adopt the notation of theorem \ref{thm:modular} and denote the
argument of the theta function in the entropy (\ref{eq:entro}) by
$\xi$, that is
\begin{eqnarray}\label{eq:xi1}
\xi=\beta(\lambda)\overrightarrow{e}\pm {{\tau}\over 2}.
\end{eqnarray}

We will now compute the behavior of the argument $\tilde{\xi}$ in
(\ref{eq:modular}) with $\xi$ given by (\ref{eq:xi1}). We have
\begin{lemma}\label{le:tildexi1} Let $\xi$ be given by (\ref{eq:xi1}) and $\tilde{\xi}$ be
\begin{eqnarray*}
\tilde{\xi}=\left((-Z_{12}^T\tilde{\Pi}+Z_{22}^T)^T\right)\xi,
\end{eqnarray*}
where $Z_{ij}$ are given by (\ref{eq:ci}). Then as
$\lambda_{2n}\rightarrow\lambda_{2n}^{-1}$ we have
\begin{eqnarray}\label{eq:tildearg}
\tilde{\xi}_i&=&\beta(\lambda)\tilde{\Pi}_{in}\pm\eta_i,\quad
i=1,\ldots, 2n-1,
\end{eqnarray}
where $\eta_i$ remains finite as
$\lambda_{2n}\rightarrow\lambda_{2n}^{-1}$.
\end{lemma}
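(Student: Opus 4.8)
The plan is to evaluate $\tilde\xi$ directly from the transformation law (\ref{eq:transxi}), peel off the single divergent contribution, and show the remainder stays bounded as $\lambda_{2n}\to\lambda_{2n}^{-1}$. First I would substitute the block form of $Z$ from (\ref{eq:ci}), where $Z_{11}=Z_{22}=0$, $Z_{12}=-C_2$ and $Z_{21}=C_1$. Then $Z_{22}^T=0$ and $-Z_{12}^T=C_2^T$, so the transformation matrix is $C_2^T\tilde\Pi$; since the period matrix $\tilde\Pi$ is symmetric, (\ref{eq:transxi}) collapses to
\[
\tilde\xi=\bigl(C_2^T\tilde\Pi\bigr)^T\xi=\tilde\Pi\,C_2\,\xi .
\]

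Next, writing $\xi=\beta(\lambda)\overrightarrow e\pm\tfrac{\tau}{2}$, I split $\tilde\xi=\beta(\lambda)\,\tilde\Pi C_2\overrightarrow e\pm\tilde\Pi C_2\tfrac{\tau}{2}$. The first piece is elementary. From (\ref{eq:ci}) the matrix $C_2$ acts by $(C_2 v)_i=v_i-v_{i-1}$ (with $v_0=0$); since $\overrightarrow e$ has its last $n$ entries equal to $1$ and its first $n-1$ entries equal to $0$, the differences $e_i-e_{i-1}$ vanish for all $i$ except $i=n$, where the value is $1$. Hence $C_2\overrightarrow e=E_n$, the $n$-th coordinate vector, and $(\tilde\Pi C_2\overrightarrow e)_i=\tilde\Pi_{in}$. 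This reproduces exactly the announced leading term $\beta(\lambda)\tilde\Pi_{in}$, and because only $\tilde\Pi_{nn}$ diverges while $\tilde\Pi_{in}$ stays finite for $i\neq n$, the entire divergence of the leading term is confined to $i=n$. It then remains to show that the components of $\eta:=\tilde\Pi C_2\tfrac{\tau}{2}$ are bounded.

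For this I would use (\ref{eq:tildePi}): from $\tilde\Pi=-C_1\Pi^{-1}C_2^{-1}$ one gets $\tilde\Pi C_2=-C_1\Pi^{-1}$, which is precisely the matrix appearing in the transformation of the normalized forms (\ref{eq:modform}). Integrating that relation gives $\tilde\omega(p)=-C_1\Pi^{-1}\omega(p)$, where $\tilde\omega$ is the Abel map of the tilded basis. Substituting $\tfrac{\tau}{2}=-\sum_{i=2}^{2n}\omega(z_i^{-1})-K$ into $\eta$ and using this identity for the Abel-map part yields
\[
\eta=-\sum_{i=2}^{2n}\tilde\omega(z_i^{-1})+C_1\Pi^{-1}K,
\]
so that $\eta$ is nothing but $\tfrac{\tau}{2}$ re-expressed in the degenerating tilded basis. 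As $\lambda_{2n}\to\lambda_{2n}^{-1}$ the cycles $\tilde a_j,\tilde b_j$ with $j\neq n$ converge to a canonical basis on the limiting curve $\Lie_0$ and the forms $\d\tilde\omega_j$ converge to the holomorphic differentials $\d\tilde\omega_j^0$, while $\d\tilde\omega_n$ converges to the third-kind differential with simple poles over $z=1$. Every term $\tilde\omega_j(z_i^{-1})$ with $j\neq n$, and the finite part of the transformed Riemann constant $C_1\Pi^{-1}K$, therefore converge; the only possible divergences are the $n$-th component $\tilde\omega_n(z_i^{-1})$ evaluated at the merging reciprocal root (a logarithmic singularity from the third-kind pole) and the diagonal piece $\tfrac12\tilde\Pi_{nn}$ that the defining formula for the Riemann constant feeds into the $n$-th component of $C_1\Pi^{-1}K$.

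The crux — and the step I expect to be the main obstacle — is to prove that these two logarithmic divergences cancel in $\eta_n$. Both are controlled by the separation of the colliding pair: the residue normalization $2\pi i\,\Res_{z=1}\d\tilde\omega_n^0=1$ forces $\tilde\omega_n$ of the merging root to behave like $\tfrac{1}{2\pi i}\log|\lambda_{2n}^{-1}-\lambda_{2n}|+O(1)$, while the given asymptotics $\tilde\Pi_{nn}=\tfrac{1}{\pi i}\log|\lambda_{2n}^{-1}-\lambda_{2n}|+O(1)$ contributes $\tfrac{1}{2\pi i}\log|\lambda_{2n}^{-1}-\lambda_{2n}|$ through $\tfrac12\tilde\Pi_{nn}$. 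The two logarithms thus match coefficient-for-coefficient and cancel once the correct signs are inserted — these come from the orientation of $\tilde b_n$ and the sheet on which the merging point approaches the pinch, together with the half-period shift hidden in the transformation $C_1\Pi^{-1}K$ of the Riemann constant. Tracking these signs and the lattice shift carefully is the delicate part; once it is settled, $\eta_n$, and hence every $\eta_i$, remains finite, which is exactly the assertion of the lemma.
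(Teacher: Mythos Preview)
Your derivation of the leading term $\beta(\lambda)\tilde\Pi_{in}$ is correct and matches the paper exactly: both compute $C_2\overrightarrow e=E_n$ and hence $(\tilde\Pi C_2\overrightarrow e)_i=\tilde\Pi_{in}$.

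The difference lies in how the remainder $\eta$ is controlled. You keep the Riemann constant as $C_1\Pi^{-1}K$ and try to extract a $\tfrac12\tilde\Pi_{nn}$ contribution from the defining formula $K_j=\tfrac{2\pi i+\Pi_{jj}}{2}-\cdots$, then argue it cancels against the logarithmic divergence of $\tilde\omega_n$ at the merging reciprocal root. This is plausible but, as you yourself flag, the sign and half-period bookkeeping is left unresolved. The paper avoids this altogether by first invoking the branch-point formula $K=-\sum_{j=2}^{2n}\omega(\lambda_{2j-1})$ (Appendix~D). Applying the same transformation $\omega\mapsto\tilde\omega$ to \emph{both} pieces of $\tfrac{\tau}{2}$ then gives
\[
\eta_i=\sum_{j=2}^{2n}\tilde\omega_i(z_j^{-1})-\sum_{j=2}^{2n}\tilde\omega_i(\lambda_{2j-1}),
\]
a difference of Abel-map values only. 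Since $\{z_j^{-1}\}$ contains exactly one of $\lambda_{2n},\lambda_{2n}^{-1}$ while $\{\lambda_{2j-1}\}$ contains $\lambda_{2n+1}=\lambda_{2n}^{-1}$, the two divergent evaluations of $\tilde\omega_n$ near the pinch appear once in each sum with opposite signs and cancel on inspection --- no sign chase needed. Your route would work once completed, but the paper's substitution of the branch-point formula for $K$ is the cleaner device.
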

\begin{proof}
To begin with, we will need to express $\frac{\tau}{2}$ in terms of
the Abel map.

Recall that the term ${\tau\over 2}$ in (\ref{eq:m_res1}) is given
by
\begin{eqnarray*}
{\tau\over 2}=-\sum_{j=2}^{2n}\omega(z_j^{-1})-K,
\end{eqnarray*}
where $K$ is the Riemann constant. As in \cite{FK} (see also
appendix D), the Riemann constant can be expressed as a sum of
images of branch points under the Abel map. In particular, we have
\begin{eqnarray*}
K=-\sum_{j=2}^{2n}\omega(\lambda_{2j-1}).
\end{eqnarray*}
Therefore we have
\begin{eqnarray*}
{\tau\over
2}=-\sum_{j=2}^{2n}\omega(z_j^{-1})+\sum_{j=2}^{2n}\omega(\lambda_{2j-1})
\end{eqnarray*}
Now by substituting (\ref{eq:ci}) into (\ref{eq:transxi}) and make
use of (\ref{eq:modform}) and (\ref{eq:tildePi}), we see that the argument $\tilde{\xi}$ in
$\theta(\tilde{\xi},\tilde{\Pi})$ can be expressed as follows
\begin{equation}
\label{eq:sum_for}
\tilde{\xi}_i=\beta(\lambda)\tilde{\Pi}_{in}\pm\left(\sum_{j=2}^{2n}\tilde{\omega}_i(z_j^{-1})-\sum_{j=1}^{2n}\tilde{\omega}_i(\lambda_{2j-1})\right),\quad
i=1,\ldots, 2n-1,
\end{equation}
where $\tilde{\omega}$ is the Abel map with $\d\omega$ replaced by
$\d\tilde{\omega}$ and $\tilde{\omega}_i$ is the $i^{th}$ component
of the map.

We would like to show that the term
\begin{eqnarray*}
\sum_{j=2}^{2n}\tilde{\omega}_i(z_j^{-1})-\sum_{j=1}^{2n}\tilde{\omega}_i(\lambda_{2j-1})
\end{eqnarray*}
in (\ref{eq:sum_for}) remains finite as
$\lambda_{2n}\rightarrow\lambda_{2n}^{-1}$.

To see this, note that the set of points $\{z_j^{-1}\}$ must contain
either one of the points $\lambda_{2n}$ or $\lambda_{2n}^{-1}$, but
not both, while $\{\lambda_{2i-1}\}$ contains $\lambda_{2n}^{-1}$
only. As $\lambda_{2n}\rightarrow\lambda_{2n}^{-1}$, the terms
$\tilde{\omega}_n(\lambda_{2n})$ and
$\tilde{\omega}_n(\lambda_{2n}^{-1})$ in the sum in
equation~(\ref{eq:sum_for}) will tend to $-\infty$. However, since
they appear in the sum with opposite signs, these contributions
cancel and the quantity
\begin{eqnarray*}
\sum_{j=2}^{2n}\tilde{\omega}_n(z_j^{-1})-\sum_{j=1}^{2n}\tilde{\omega}_n(\lambda_{2j-1})
\end{eqnarray*}
remains finite as $\lambda_{2n}\rightarrow\lambda_{2n}^{-1}$.

We can therefore write $\tilde{\xi}$ as
\begin{eqnarray*}
\tilde{\xi}_i&=&\beta(\lambda)\tilde{\Pi}_{in}\pm\eta_i,\quad
i=1,\ldots, 2n-1
\end{eqnarray*}
where $\eta_i$ remains finite as
$\lambda_{2n}\rightarrow\lambda_{2n}^{-1}$. \end{proof}

We are now ready to apply theorem \ref{thm:modular} to compute the
theta function as $\lambda_{2n}\rightarrow\lambda_{2n}^{-1}$.
\begin{lemma}\label{le:theta1}
In the limit $\lambda_{2n}\rightarrow\lambda_{2n}^{-1}$ the theta
function $\theta(\xi,\Pi)$ behaves like
\begin{eqnarray}\label{eq:theta1}
\theta(\xi,\Pi)=\exp\left(\log|\lambda_{2n}-
\lambda_{2n}^{-1}|\beta^2(\lambda)+O(1)\right),
\end{eqnarray}
where $\xi$ is given by (\ref{eq:xi1}).
\end{lemma}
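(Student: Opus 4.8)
The plan is to push the entire divergence into the exponential prefactor generated by the modular transformation of Theorem~\ref{thm:modular}, arranged so that the transformed theta function $\theta(\tilde\xi,\tilde\Pi)$ stays bounded as $\lambda_{2n}\to\lambda_{2n}^{-1}$. Concretely, I would apply \eqref{eq:modular} to $\theta(\xi,\Pi)=\theta\left[{0\atop0}\right](\xi,\Pi)$ with the symplectic matrix $Z$ from \eqref{eq:ci}, whose blocks are $Z_{11}=0$, $Z_{12}=-C_2$, $Z_{21}=C_1$, $Z_{22}=0$. The induced characteristics $\tilde\varepsilon,\tilde\varepsilon'$ are half-integer vectors built from $\diag(Z_{12}^TZ_{22})$ and the analogous term; they affect only signs and bounded shifts, and are therefore harmless for the leading asymptotics and absorbed into the final $O(1)$.

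Next I would evaluate the quadratic prefactor. Using $C_1=(C_2^{-1})^T$ one checks $C_1C_2^T=I$, so that $(-Z_{12}^T\tilde\Pi+Z_{22}^T)^{-1}Z_{12}^T=(C_2^T\tilde\Pi)^{-1}(-C_2^T)=-\tilde\Pi^{-1}$, and the exponent in \eqref{eq:modular} collapses to $\pi i\,\tilde\xi^T\tilde\Pi^{-1}\tilde\xi$. Now I invoke Lemma~\ref{le:tildexi1}, which gives $\tilde\xi=\beta(\lambda)\tilde\Pi E_n\pm\eta$, where $E_n$ is the $n$-th coordinate unit vector and $\eta$ stays finite. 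Hence $\tilde\Pi^{-1}\tilde\xi=\beta(\lambda)E_n\pm\tilde\Pi^{-1}\eta$, and since $\tilde\Pi^{-1}$ is bounded (its $nn$-entry tends to $0$ while the remaining entries stay finite), the symmetry of $\tilde\Pi$ yields $\tilde\xi^T\tilde\Pi^{-1}\tilde\xi=\beta^2(\lambda)\tilde\Pi_{nn}+O(1)$. Inserting the degeneration asymptotics of the period matrix established above, namely $\tilde\Pi_{nn}=\frac{1}{\pi i}\log|\lambda_{2n}^{-1}-\lambda_{2n}|+O(1)$, converts the prefactor into $\exp\left(\beta^2(\lambda)\log|\lambda_{2n}-\lambda_{2n}^{-1}|+O(1)\right)$.

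It then remains to prove the boundedness $\theta(\tilde\xi,\tilde\Pi)=O(1)$, which I expect to be the crux. Here I would return to the series \eqref{eq:thetadef} and split the summation index as $\vec n=(\vec n',n_n)$ according to whether the coordinate of the pinching cycle vanishes. For $j\neq n$ the components $\tilde\xi_j=\beta(\lambda)\tilde\Pi_{jn}\pm\eta_j$ and the entries $\tilde\Pi_{jk}$ with $j,k\neq n$ remain finite, while $\ipart\tilde\Pi_{nn}\to+\infty$; moreover $\rpart\tilde\Pi_{nn}$ and hence $\ipart\tilde\xi_n$ stay bounded, since only the imaginary part of $\tilde\Pi_{nn}$ diverges. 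Consequently the modulus of each term carries the Gaussian factor $\exp(-\pi\,\ipart\tilde\Pi_{nn}\,n_n^2)$, which suppresses every term with $n_n\neq0$, so in the limit only the $n_n=0$ slice survives. That slice is precisely the genus-$(2n-2)$ theta series of the degenerate curve $\Lie_0$ evaluated at the bounded argument $\tilde\xi'$, and therefore converges to a finite value; its non-vanishing, needed so that $\log\theta(\tilde\xi,\tilde\Pi)=O(1)$, follows from the solvability of the limiting Riemann--Hilbert problem on $\Lie_0$, i.e. from the inequality \eqref{eq:zero} for $\beta(\lambda)\in i\mathbb{R}$. Combining the bounded factors $\varsigma$ and $\theta(\tilde\xi,\tilde\Pi)$ with the exponential prefactor gives \eqref{eq:theta1}.

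The genuinely delicate point is this last step, the uniform control of $\theta(\tilde\xi,\tilde\Pi)$. Its success rests on two facts that must be used jointly: that Lemma~\ref{le:tildexi1} keeps the transverse components $\tilde\xi_j$ ($j\neq n$) bounded, and that the divergent period enters only through $\ipart\tilde\Pi_{nn}\to+\infty$, so that the dangerous shift $\beta(\lambda)\tilde\Pi_{nn}$ inside $\tilde\xi_n$ contributes only a bounded imaginary part and does not spoil the Gaussian suppression of the terms with $n_n\neq0$.
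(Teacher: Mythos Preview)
Your proposal is correct and takes essentially the same approach as the paper: apply the modular transformation \eqref{eq:modular} with the symplectic matrix $Z$ from \eqref{eq:ci} to obtain $\theta(\xi,\Pi)=\varsigma\exp(\pi i\,\tilde\xi^{T}\tilde\Pi^{-1}\tilde\xi)\,\theta(\tilde\xi,\tilde\Pi)$, evaluate the quadratic prefactor via Lemma~\ref{le:tildexi1} together with $\tilde\Pi_{nn}=\frac{1}{\pi i}\log|\lambda_{2n}-\lambda_{2n}^{-1}|+O(1)$, and then show that $\theta(\tilde\xi,\tilde\Pi)$ collapses to a bounded genus-$(2n-2)$ theta because the Gaussian factor $\exp(-\pi\,\ipart\tilde\Pi_{nn}\,m_n^{2})$ kills all terms with $m_n\neq0$. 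Your evaluation of the quadratic form through $\tilde\Pi^{-1}\tilde\xi=\beta E_n+\tilde\Pi^{-1}\eta$ is in fact a bit tidier than the paper's entrywise bookkeeping, and your explicit remark that the non-vanishing of the limiting theta follows from \eqref{eq:zero} fills in a point the paper leaves implicit.
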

\begin{proof}
Firstly, let us use (\ref{eq:modular}) and (\ref{eq:tildePi}) to
express $\theta(\xi,\Pi)$ in terms of
$\theta(\tilde{\xi},\tilde{\Pi})$, we have
\begin{eqnarray}\label{eq:hat}
\theta(\xi,\Pi)&=&\varsigma\exp\left[\pi
i\tilde{\xi}^{T}\tilde{\Pi}^{-1}\tilde{\xi}\right]\theta\left(\tilde{\xi},\tilde{\Pi}\right).
\end{eqnarray}
Let us now use (\ref{eq:tildearg}) to compute the asymptotic of the
exponential term in (\ref{eq:hat}). We obtain
\begin{eqnarray}\label{eq:expon}
\tilde{\xi}^T\tilde{\Pi}^{-1}\tilde{\xi}
=\sum_{i,j}\left(\tilde{\Pi}^{-1}\right)_{ij}\tilde{\xi}_i\tilde{\xi}_j.
\nonumber
\end{eqnarray}
The behavior of the entries in $\tilde{\Pi}^{-1}$ can be calculated
by computing the determinant and the minors. We have
\begin{eqnarray*}
\left(\tilde{\Pi}^{-1}\right)_{ij}&=&O(1),\quad
\lambda_{2n}\rightarrow\lambda_{2n}^{-1}, \quad i,j\neq n\\
\left(\tilde{\Pi}^{-1}\right)_{nj}&=&O\left(\log^{-1}|\lambda_{2n}-\lambda_{2n}^{-1}|\right),\quad
\lambda_{2n}\rightarrow\lambda_{2n}^{-1}, \quad j\neq n\\
\left(\tilde{\Pi}^{-1}\right)_{nn}&=&\pi
i\log^{-1}|\lambda_{2n}-\lambda_{2n}^{-1}|+O\left(\log^{-2}|\lambda_{2n}-\lambda_{2n}^{-1}|\right),
\quad \lambda_{2n}\rightarrow\lambda_{2n}^{-1}.
\end{eqnarray*}
Therefore, equation (\ref{eq:expon}) becomes
\begin{eqnarray}\label{eq:expon1}
\pi
i\sum_{i,j}\left(\tilde{\Pi}^{-1}\right)_{ij}\tilde{\xi}_i\tilde{\xi}_j
=\log|\lambda_{2n}-\lambda_{2n}^{-1}|\beta^2(\lambda)+O(1), \quad
\lambda_{2n}\rightarrow\lambda_{2n}^{-1}.
\end{eqnarray}
Next, we will use the definition (\ref{eq:thetadef}) of the theta
function to compute its limit as
$\lambda_{2n}\rightarrow\lambda_{2n}^{-1}$. We have,
\begin{eqnarray}
\label{eq:thetacrit}
\theta(\tilde{\xi},\tilde{\Pi})&=&\sum_{\overrightarrow{m}\in\mathbb{Z}^{2n-1}}\exp\Bigg[\pi
i\sum_{jk\neq nn}\tilde{\Pi}_{jk}m_jm_k+2\pi i\sum_{j\neq n}\left(\beta(\lambda)\tilde{\Pi}_{jn}\pm\eta_j\right)m_j\nonumber\\
& & + 2\pi i\tilde{\Pi}_{nn}\left(m_n^2+2\beta(\lambda)m_n\right)\pm
2\eta_nm_n\Bigg].
\end{eqnarray}
Since
\[
\lim_{\lambda_{2n} \to \lambda_{2n}^{-1}} \rpart(2\pi
i\tilde{\Pi}_{nn}) = -\infty
\]
and $\beta(\lambda)$ is purely imaginary, we see that in the limit
only the terms with $m_n=0$ contribute to the sum. Therefore,
equation~(\ref{eq:thetacrit}) reduces to
\begin{eqnarray}\label{eq:limthet1}
\lim_{\lambda_{2n} \to \lambda_{2n}^{-1}}
\theta(\tilde{\xi},\tilde{\Pi}) &=&
\theta\left(\tilde{\xi}^0,\tilde{\Pi}^0\right)\\
\tilde{\xi}^0&=&(\tilde{\xi}_1,\ldots,\hat{\tilde{\xi}}_{n},
\ldots,\tilde{\xi}_{2n-1})^T,\nonumber
\end{eqnarray}
where the $\hat{\tilde{\xi}}_n$ in the above equation means that the
$n^{th}$ entry of the vector is removed. The period matrix
$\tilde{\Pi}^0$ is an $(2n-2)\times(2n-2)$ matrix obtained by removing
the $n^{th}$ row and $n^{th}$ column of the period matrix
$\tilde{\Pi}$. Thus, the theta function
$\theta\left(\tilde{\xi}^0,\tilde{\Pi}^0\right)$ remains finite as
$\lambda_{2n}\rightarrow\lambda_{2n}^{-1}$. This fact, together with
(\ref{eq:expon1}), shows that $\theta(\xi,\Pi)$ behaves like
\begin{eqnarray*}
\theta(\xi,\Pi)=\varsigma\exp\left(\log|\lambda_{2n}-
\lambda_{2n}^{-1}|\beta^2(\lambda)+O(1)\right)\theta\left(\tilde{\xi}^0,
\tilde{\Pi}^0\right), \quad \lambda_{2n} \to \lambda_{2n}^{-1}.
\end{eqnarray*}
Since $\theta\left(\tilde{\xi}^0,\tilde{\Pi}^0\right)$ and
$\varsigma$ remain finite as $\lambda_{2n} \to
\lambda_{2n}^{-1}$, the above equation becomes~(\ref{eq:theta1}). This
proves the lemma.
\end{proof}

Finally, by substituting (\ref{eq:theta1}) into (\ref{eq:entro}), we
have
\begin{eqnarray*}
  S(\rho_A)&=&\frac{1}{2}\int_{1}^{\infty}\log{{\theta\left(\beta(\lambda)
        \overrightarrow{e}+{\tau\over 2}\right)
    \theta\left(\beta(\lambda)\overrightarrow{e}-{\tau\over
          2}\right)}\over{\theta^2\left({\tau\over 2}\right)}}\d\lambda\\
  &=&\int_1^{\infty}
  \left(\log|\lambda_{2n}-\lambda_{2n}^{-1}|\beta^2(\lambda)+O(1)\right)\d\lambda
\end{eqnarray*}
Since
\[
\int_1^\infty \beta^2(\lambda) \d \lambda = - \frac16,
\]
we arrive at the following expression for the entropy of entanglement
\[
S(\rho_A)=- \frac16\log|\lambda_{2n}-\lambda_{2n}^{-1}| + O(1), \quad
\lambda_{2n} \to \lambda_{2n}^{-1}.
\]

\begin{figure}
\centering
\begin{overpic}[scale=.3,unit=1mm]{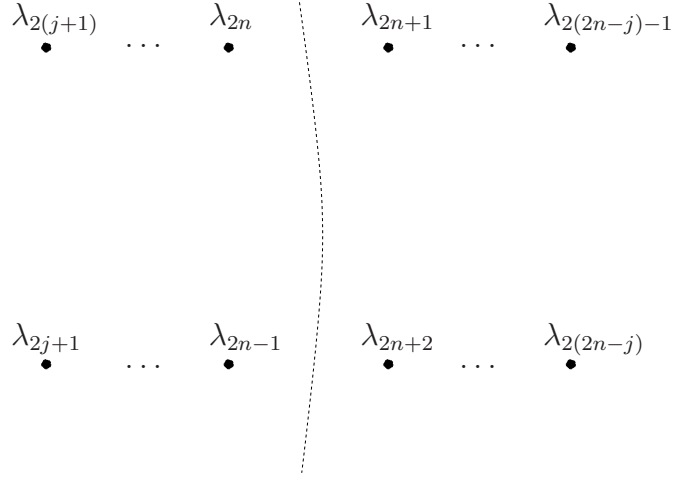}
  \put(-4,59.5){$\lambda_{2(j+1)}$}
  \put(22,59.5){$\lambda_{2n}$}
  \put(42,59.5){$\lambda_{2n + 1}$}
  \put(66,59.5){$\lambda_{2(2n - j)- 1}$}
  \put(-4,17){$\lambda_{2j+1}$}
  \put(22,17){$\lambda_{2n-1}$}
  \put(42,17){$\lambda_{2n + 2}$}
  \put(66,17){$\lambda_{2(2n - j)}$}
  \put(11,14){$\ldots$}
  \put(11,56.5){$\ldots$}
  \put(55,14){$\ldots$}
  \put(55,56.5){$\ldots$}
\end{overpic}
\caption{Two pairs of roots, labelled according to the
  ordering~(\ref{eq:order}), approaching the unit circle in the
  critical limit. We have $ \lambda_{2(j+1)} \to \lambda_{2(2n -j) -
  1}$, $\lambda_{2n} \to \lambda_{2n + 1}$ and  $\lambda_{2j + 1} \to
\lambda_{2(2n -j)}$ respectively.}\label{fig:index}
\end{figure}

\subsection{The limit of complex roots approaching the unit circle}
\label{se:im}
We will now study the case when $2r$ pairs of complex roots approach
each other towards the unit circle.  Let $\lambda_{2j + 1}$ be a
complex root with $n-r \le j \le n-1$.  As we discussed in
section~\ref{stat_res}, $\overline{\lambda}_{2j + 1}$, $1/\lambda_{2j
  + 1}$ and $1/\overline{\lambda}_{2j + 1}$ are roots too.  The
ordering~(\ref{eq:order}) implies (see figure~\ref{fig:index})
\begin{eqnarray}
  \label{eq:order_complex}
  \lambda_{2(j + 1)} &=& \overline{\lambda}_{2j + 1} \hspace{.2cm} \quad \quad
\lambda_{2(2n - j )- 1}  = \overline{\lambda}_{2(2n - j)} \nonumber \\
\lambda_{2(2n -j)} &=& 1/\lambda_{2(j+1)} \quad \lambda_{2(2n-j)-1} =
1/\lambda_{2j + 1}.
\end{eqnarray}
The critical limit occurs as $\lambda_{2(j + 1)} \to
\lambda_{2(2n-j)-1}$.  From the relations~(\ref{eq:order_complex})
this implies $\lambda_{2j+ 1} \to \lambda_{2(2n - j)}$.  Thus, in what
follows we shall mainly discuss the limit $\lambda_{2(j + 1)} \to
\lambda_{2(2n-j)-1}$.
\subsubsection{Case 1: $r< n$}
\label{se:case1}
We now choose the tilded canonical basis of the cycles
$(\tilde{A}\quad \tilde{B})$ as in figure \ref{fig:crit2}.  Namely, we
have
\begin{eqnarray}
\label{eq:newbasis}
\tilde{a}_j& = &a_j,\quad j<n-r,\quad j>n+r-1\nonumber\\
\tilde{b}_j& = &b_j,\quad j<n-r,\quad j>n+r-1\nonumber\\
\tilde{a}_{n-k}& = &b_{n-k}-b_{n+k-1}+\sum_{j=n-k+1}^{n+k-2}a_{j},
\quad, k=1,\ldots, r\\
\tilde{a}_{n+k}&=&b_{n+k}-b_{n-k-1}
+\sum_{j=n-k-1}^{n+k}a_{j},\quad,k=0,\ldots, r-1\nonumber\\
\tilde{b}_{n-k}&=&b_{n-k}-
\sum_{j=n-k}^{n+k-2}a_{j}-\sum_{j=n-r}^{n-k-1}
(-1)^{n-k-j}\left(a_{j}-2b_{j}\right),\quad, k=1,\ldots, r\nonumber\\
\tilde{b}_{n+k}&=&b_{n+k}+\sum_{j=n-r}^{n-k-2}(-1)^{n-k-j}
\left(a_{j}-2b_{j}\right),\quad k=0,\ldots, r-1.\nonumber
\end{eqnarray}

\begin{figure}[htbp]
\begin{center}
\resizebox{6cm}{!}{\input{crit2.pstex_t}}\caption{The choice of
cycles on the hyperelliptic curve $\Lie$. The arrows denote the
orientations of the cycles and branch cuts. }\label{fig:crit2}
\end{center}
\end{figure}

We will show in appendix E that this is indeed a canonical basis of
cycles. We can partition this basis as follows:
\begin{eqnarray*}
  \pmatrix{\tilde{A}}&=&\pmatrix{\tilde{a}^I\cr
    \tilde{a}^{II}\cr\tilde{a}^{III}}\\
  \tilde{a}^I_j&=&\tilde{a}_j,\quad 1\leq k\leq n-r-1\\
  \tilde{a}^{II}_j&=&\tilde{a}_{n-r+j-1},\quad 1\leq k\leq 2r\\
  \tilde{a}^{III}_j&=&\tilde{a}_j,\quad n+r\leq k\leq 2n-1.
\end{eqnarray*}
The relations among the $b$-cycles and the untilded basis are
analogous.

If we write this relation in matrix form as in theorem
\ref{thm:modular}, then the corresponding transformation matrix is
given by
\begin{eqnarray*}
\pmatrix{\tilde{A}\cr \tilde{B}}\ &=&Z\pmatrix{A\cr
B}=\pmatrix{Z_{11}&Z_{12}\cr Z_{21}&Z_{22}}\ \pmatrix{A\cr B},
\end{eqnarray*}
where the blocks $Z_{ij}$ can be written as
\begin{eqnarray*}
Z_{ij}&=&\pmatrix{\delta_{ij}I_{n-r-1}&0&0\cr
                  0&C_{ij}&0\cr
                  0&0&\delta_{ij}I_{n-r}},
\end{eqnarray*}
where $I_{n-r-1}$ is the identity matrix of dimension $n-r-1$ and
the $C_{ij}$'s are the following $2r\times2r$ matrices:
\begin{eqnarray*}
\left(C_{11}\right)_{kl}&=&\left\{
                           \begin{array}{ll}
                             1  & \hbox{$\quad k+1\leq l\leq 2r-k$,} \\
                             0  & \hbox{otherwise,}
                           \end{array}
                         \right. \quad 1\leq k\leq r\\
\left(C_{11}\right)_{kl}&=&\left\{
                           \begin{array}{ll}
                             1  & \hbox{$\quad k\leq l\leq 2r-k+1$,} \\
                             0  & \hbox{otherwise,}
                           \end{array}
                         \right.\quad r+1\leq k\leq 2r\\
\left(C_{12}\right)_{kl}&=&\delta_{kl}-\delta_{l,2r-k+1}\quad 1\leq k,l\leq 2r\\
\left(C_{21}\right)_{kl}&=&\left\{
                           \begin{array}{ll}
                             (-1)^{k-l+1}, & \hbox{$\quad 1\leq l\leq k-1$;} \\
                             -1 & \hbox{$k\leq l\leq 2r-k$,}\\
                             0 & \hbox{$\quad 2r-k+1\leq l$,}
                           \end{array}
                         \right.,\quad 1\leq k\leq r\\
\left(C_{21}\right)_{kl}&=&\left\{
                           \begin{array}{ll}
                             (-1)^{k+l} & \hbox{$\quad 1\leq l\leq 2r-k$,} \\
                             0 & \hbox{otherwise,}
                           \end{array}
                         \right.\quad r+1\leq k\leq 2r\\
\left(C_{22}\right)_{kl}&=&\delta_{kl}+\left\{
                           \begin{array}{ll}
                             2(-1)^{k-l} & \hbox{$\quad 1\leq l\leq k-1$,} \\
                             0 & \hbox{otherwise,}
                           \end{array}
                         \right.\quad 1\leq k\leq r\\
\left(C_{22}\right)_{kl}&=&\delta_{kl}-2\left(C_{21}\right)_{kl},\quad
r+1\leq k\leq 2r
\end{eqnarray*}
These are matrices of the form
\begin{eqnarray}\label{eq:cij}
C_{11}&=&\pmatrix{0&1&1&\ldots&\ldots&1&1&0\cr
                  0&0&1&\ldots&\ldots&1&0&0\cr
                  \vdots&\vdots&\vdots&\vdots&\vdots&\vdots&\vdots\cr
                  0&0&\ldots&1&1&\ldots&0&0\cr
                  0&0&\ldots&0&0&\ldots&0&0\cr
                  0&\ldots&0&1&1&0&\ldots&0\cr
                  \vdots&\vdots&\vdots&\vdots&\ddots&\vdots&\vdots&\cr
                  1&1&1&\ldots&\ldots&1&1&1}\nonumber\\
C_{12}&=&I_{2r}-J_{2r}\nonumber\\
J_r&=&\pmatrix{0&\ldots&\ldots&1\cr
               0&\ldots&1&0\cr
               \vdots&\vdots&\vdots&\vdots\cr
               1&0&\ldots&0}\\
C_{21}&=&\pmatrix{-1&-1&-1&\ldots&\ldots&-1&-1&-1&-1&0\cr
                  1&-1&-1&\ldots&\ldots&\ldots&-1&-1&0&0\cr
                  -1&1&-1&-1&\ldots&\ldots&-1&0&0&0\cr
\vdots&\ddots&\ddots&\ddots&\vdots&\vdots&\vdots&\vdots&\vdots\cr
                  \ldots&\ldots&-1&1&-1&0&\ldots&\ldots&\ldots&0\cr
                  \ldots&1&-1&1&0&\ldots&\ldots&\ldots&0&0\cr
                  \vdots&\vdots&\vdots&\vdots&\vdots&\vdots&\vdots&\vdots\vdots&\vdots\cr
                  -1&1&0&\ldots&\ldots&\ldots&\ldots&\ldots&0&0\cr
                  1&0&0&\ldots&\ldots&\ldots\ldots&\ldots&0&0&0\cr
                  0&0&0&\ldots&\ldots&\ldots&\ldots&0&0&0}\nonumber\\
C_{22}&=&\pmatrix{1&0&0&\ldots&0&0&0\cr
                  -2&1&0&\ldots&0&0&0\cr
                  \vdots&\ddots&\ddots&\ddots&\vdots&\vdots&\vdots\cr
                  \ldots&2&-2&1&\ldots&0&0\cr
                  \ldots&2&-2&0&1&\ldots&0\cr
                  \vdots&\vdots&\vdots&\vdots&\vdots&\vdots&\vdots\cr
                  -2&0&0&\ldots&0&1&0\cr
                  0&0&0&\ldots&0&0&1}\nonumber
\end{eqnarray}
As in section \ref{se:real}, some holomorphic 1-forms
$\d\tilde{\omega}_j$ will become meromorphic as the roots approach the
unit circle.

In this case, the holomorphic 1-form $\d\tilde{\omega}_{j}$,
$n-r\leq k\leq n+r-1$ becomes a meromorphic 1-form with a simple
pole at $\lambda_{2(j+1)}$. All the other holomorphic 1-forms become
normalized holomorphic 1-forms in the resulting surface.

In particular, we have the following:
\begin{lemma}
\label{le:periodentries2} The entries of the period matrix
$\tilde{\Pi}$ behave like
\begin{eqnarray}\label{eq:periodentries2}
\lim_{\lambda_{2(j+1)} \rightarrow \lambda_{2(2n - j) -1}}
\tilde{\Pi}_{ij}&=&\tilde{\Pi}_{ij}^{0}, \quad i\neq j\nonumber\\
\lim_{\lambda_{2(j+1)} \rightarrow \lambda_{2(2n - j) -1}} \tilde{\Pi}_{jj}
&= & \tilde{\Pi}_{jj}^0,\quad j>n+r-1, \quad j<n-r\nonumber\\
\lim_{\lambda_{2(j+1)} \rightarrow \lambda_{2(2n - j) -1}} \tilde{\Pi}_{jj}
&= &\gamma_{j}+\tilde{\Pi}_{jj}^0,
\quad n-r\leq j\leq n+r-1 \nonumber\\
\gamma_j&=&{1\over{\pi
i}}\log\left|\lambda_{2(j+1)}\rightarrow\lambda_{2(2n-j)-1}\right|,
\end{eqnarray}
where $\tilde{\Pi}_{ij}^0$ are finite.
\end{lemma}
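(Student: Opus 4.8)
The plan is to follow the degeneration argument already used in the real case of Section~\ref{se:real}, treating the simultaneous collision of the $2r$ pairs of branch points as the formation of $2r$ nodes on $\Lie$. The basis $(\tilde{A}\quad\tilde{B})$ in (\ref{eq:newbasis}) has been engineered so that the cycles $\tilde{a}_j$ with $n-r\le j\le n+r-1$ are precisely the vanishing cycles: as $\lambda_{2(j+1)}\to\lambda_{2(2n-j)-1}$ the contour $\tilde{a}_j$ contracts onto the merging pair of branch points, and integration over $\tilde{a}_j$ degenerates into a residue at the corresponding node. Consequently, in the limit $\Lie$ becomes a singular curve $\Lie_0$ of genus $2n-1-2r$, and I would first record, following~\cite{BBEIM}, the behaviour of the normalized differentials: for $j$ outside the critical range $[n-r,n+r-1]$ the form $\d\tilde{\omega}_j$ tends to the normalized holomorphic differential dual to the surviving cycle $\tilde{a}_j$ on $\Lie_0$, whereas for $j$ in the critical range $\d\tilde{\omega}_j$ tends to a normalized third-kind differential with simple poles at the two points lying above the $j$-th node, its residues fixed by the normalization $2\pi i\,\Res\,\d\tilde{\omega}_j=1$ that replaces the $\tilde{a}_j$-period condition (exactly as $\d\tilde{\omega}_n^0$ was produced in Section~\ref{se:real}).

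With the limiting differentials in hand, I would split the computation of $\tilde{\Pi}_{ij}=\int_{\tilde{b}_i}\d\tilde{\omega}_j$ into the three regimes of the statement. When $i\ne j$, the key point is that the pole of $\d\tilde{\omega}_j$ sits at the $j$-th node only, while the choice of basis (\ref{eq:newbasis}) keeps $\tilde{b}_i$ homologous to a cycle that does not pass through that node; hence the integral stays bounded and converges to the finite limit $\tilde{\Pi}_{ij}^0$. The same reasoning gives finiteness of $\tilde{\Pi}_{jj}^0$ when $j$ lies outside the critical range, since there both $\tilde{b}_j$ and $\d\tilde{\omega}_j$ live on the stable part of $\Lie_0$. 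The only divergent entries are the diagonal ones with $n-r\le j\le n+r-1$: here $\tilde{b}_j$ threads the $j$-th pinch, so integrating the simple-pole differential along a path that approaches the merging branch points produces, exactly as in the real case, a logarithmic term. Writing the $\tilde{b}_j$-period as a sum of integrals over the shrinking gaps and extracting the contribution of the pole yields
\[
\tilde{\Pi}_{jj}=\gamma_j+\tilde{\Pi}_{jj}^0,\qquad
\gamma_j=\frac{1}{\pi i}\log\bigl|\lambda_{2(j+1)}-\lambda_{2(2n-j)-1}\bigr|,
\]
with $\tilde{\Pi}_{jj}^0=O(1)$ finite.

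The main obstacle I anticipate is the off-diagonal finiteness, i.e.\ controlling the cross-terms between the $2r$ simultaneously pinching nodes. Unlike the single-node situation of Section~\ref{se:real}, several cycles collapse at once, and one must verify that no two of them interact to produce an additional logarithm in $\tilde{\Pi}_{ij}$ for $i\ne j$. This is precisely where the detailed structure of the matrices $C_{11},C_{12},C_{21},C_{22}$ in (\ref{eq:cij}) matters: the alternating-sign entries are arranged so that each meromorphic differential $\d\tilde{\omega}_j$ has its residues confined to its own node and so that the $\tilde{b}_i$-cycles skirt the other nodes, making the residue contributions cancel in pairs. Checking that (\ref{eq:newbasis}) is genuinely a canonical homology basis --- which underwrites these cancellations --- is deferred to Appendix~E, and I would lean on that verification together with the Riemann bilinear relations to confirm the symmetry and finiteness of the surviving block $\tilde{\Pi}^0$.
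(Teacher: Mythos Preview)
Your proposal is correct and follows essentially the same approach as the paper, which itself provides almost no detail: the paper merely states (immediately before the lemma) that for $n-r\le j\le n+r-1$ the form $\d\tilde{\omega}_j$ degenerates to a meromorphic differential with a simple pole at $\lambda_{2(j+1)}$ while the remaining $\d\tilde{\omega}_j$ stay holomorphic on the limiting surface, and then invokes the analysis of \cite{BBEIM} exactly as in the real-root computation of Section~\ref{se:real}. Your write-up is in fact more thorough than the paper's, and your flagged concern about off-diagonal cross-terms is legitimate but is simply not addressed in the paper either --- it is absorbed into the blanket reference to \cite{BBEIM}.
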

Let us now consider the behavior of the terms $\tilde{\xi}$ in
(\ref{eq:modular}).
\begin{lemma}\label{le:tildexi2}
Let $\xi$ be given by (\ref{eq:xi1}) and $\tilde{\xi}$ be
\begin{eqnarray*}
\tilde{\xi}=\left((-Z_{12}^T\tilde{\Pi}+Z_{22}^T)^T\right)\xi,
\end{eqnarray*}
where $Z_{ij}$ are given by (\ref{eq:cij}). Then in the limit
$\lambda_{2(j+1)}\rightarrow\lambda_{2(2n-j)-1}$ we have
\begin{eqnarray}\label{eq:tildexi2}
\tilde{\xi}_i&=&\eta_i^{\pm}, \quad i>n+r-1, \quad i<n-r\nonumber\\
\tilde{\xi}_i&=&\epsilon_i\beta(\lambda)\gamma_i+\eta_i^{\pm}, \quad
n-r\leq i\leq n+r-1,\\
\epsilon_i& = &1, \quad i<n\nonumber\\
\epsilon_i& = &-1. \quad i\geq n \nonumber
\end{eqnarray}
where $\eta_i^{\pm}$ remains finite as $\lambda_{2(j + 1)} \to
\lambda_{2(2n-j) - 1}$.
\end{lemma}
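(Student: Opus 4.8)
The plan is to run exactly the argument of the proof of Lemma~\ref{le:tildexi1}, but to track where the more elaborate basis change~(\ref{eq:cij}) forces the divergence to appear. Writing the common transformation matrix of~(\ref{eq:transxi}) and~(\ref{eq:modform}) as $M:=(-Z_{12}^T\tilde{\Pi}+Z_{22}^T)^T=Z_{22}-\tilde{\Pi}Z_{12}$ (using $\tilde{\Pi}=\tilde{\Pi}^T$), we have $\tilde{\xi}=M\xi$ with $\xi=\beta(\lambda)\overrightarrow{e}\pm\frac{\tau}{2}$. First I would split $\tilde{\xi}$ into the $\beta(\lambda)$-part $\beta(\lambda)M\overrightarrow{e}$ and the $\tau/2$-part $\pm M\frac{\tau}{2}$, and show that the whole divergence sits in the former while the latter stays finite.

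For the $\beta(\lambda)$-part I would compute $Z_{12}\overrightarrow{e}$ explicitly. Since $Z_{12}$ is block-diagonal with only the middle $2r\times 2r$ block $C_{12}=I_{2r}-J_{2r}$ nontrivial, and the block-II restriction of $\overrightarrow{e}$ is $(0,\dots,0,1,\dots,1)^T$ ($r$ zeros then $r$ ones), one gets $(I_{2r}-J_{2r})(0,\dots,0,1,\dots,1)^T=(-1,\dots,-1,1,\dots,1)^T$. Hence $(Z_{12}\overrightarrow{e})_i=0$ for $i<n-r$ or $i>n+r-1$, while $(Z_{12}\overrightarrow{e})_i=-1$ for $n-r\le i\le n-1$ and $(Z_{12}\overrightarrow{e})_i=+1$ for $n\le i\le n+r-1$. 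Applying $-\tilde{\Pi}$ and invoking Lemma~\ref{le:periodentries2}, the only divergent contribution to $(-\tilde{\Pi}Z_{12}\overrightarrow{e})_i$ is the diagonal term $-\tilde{\Pi}_{ii}(Z_{12}\overrightarrow{e})_i$ with $\tilde{\Pi}_{ii}=\gamma_i+O(1)$, every off-diagonal $\tilde{\Pi}_{ik}$ ($i\ne k$) being finite. This produces the claimed leading term $\epsilon_i\beta(\lambda)\gamma_i$, with $\epsilon_i=+1$ for $i<n$ (where $(Z_{12}\overrightarrow{e})_i=-1$) and $\epsilon_i=-1$ for $i\ge n$; the remaining pieces $\beta(\lambda)Z_{22}\overrightarrow{e}$ and the off-diagonal $\tilde{\Pi}$ contributions are finite because $\beta(\lambda)$ is bounded on $\Omega_{\epsilon}$. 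For $i<n-r$ or $i>n+r-1$ every term is finite, so $\tilde{\xi}_i=\eta_i^{\pm}$.

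For the $\tau/2$-part I would use that $M$ is precisely the matrix intertwining the two Abel maps via~(\ref{eq:modform}), so that $M\omega(p)=\tilde{\omega}(p)$ up to a finite constant; together with $\frac{\tau}{2}=-\sum_{j=2}^{2n}\omega(z_j^{-1})+\sum_{j=2}^{2n}\omega(\lambda_{2j-1})$ (from $K=-\sum_{j=2}^{2n}\omega(\lambda_{2j-1})$) this gives $M\frac{\tau}{2}=-\sum_{j=2}^{2n}\tilde{\omega}(z_j^{-1})+\sum_{j=2}^{2n}\tilde{\omega}(\lambda_{2j-1})$. As in Lemma~\ref{le:tildexi1}, the forms $\d\tilde{\omega}_i$ with $n-r\le i\le n+r-1$ acquire simple poles at the merging branch points, so $\tilde{\omega}_i$ evaluated near such a pole diverges logarithmically; but each merging point occurs once among $\{z_j^{-1}\}$ and once among $\{\lambda_{2j-1}\}$, i.e.\ with opposite signs, so these divergences cancel and the $\tau/2$-part contributes only finite quantities, absorbed into $\eta_i^{\pm}$. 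Combining the two parts yields~(\ref{eq:tildexi2}).

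The main obstacle is this last cancellation: one must verify precisely that for each critical index $i$ the pole of $\d\tilde{\omega}_i$ at $\lambda_{2(i+1)}$ (colliding with $\lambda_{2(2n-i)-1}$) is matched by exactly one point from each of the two sums with opposite sign, so that no residual $\log|\lambda_{2(j+1)}-\lambda_{2(2n-j)-1}|$ survives in $M\frac{\tau}{2}$. This is the analogue of the mechanism in Lemma~\ref{le:tildexi1}, but it must now be checked uniformly across the whole critical block and through the tangled basis~(\ref{eq:newbasis}); once this bookkeeping is settled, the decomposition above gives the stated form of $\tilde{\xi}_i$.
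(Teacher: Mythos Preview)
Your approach is the same as the paper's: split $\tilde\xi$ into the $\beta(\lambda)$--part and the $\tau/2$--part, extract the divergent diagonal contribution $\epsilon_i\beta(\lambda)\gamma_i$ from the former via the block structure of $Z_{12}$ and Lemma~\ref{le:periodentries2}, and argue that the latter stays bounded. Your computation of $Z_{12}\overrightarrow{e}$ and the resulting signs $\epsilon_i$ is in fact cleaner than what the paper writes (the paper packages the same information into the decomposition $Z_{12}^T\tilde\Pi-Z_{22}^T=(I_{2r}-J_{2r})D_r+W$ and reads off $\tilde\xi_i$ from there).

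The one point where your sketch is looser than the paper is the cancellation in $M\frac{\tau}{2}$. Your heuristic ``each merging point occurs once among $\{z_j^{-1}\}$ and once among $\{\lambda_{2j-1}\}$'' is not literally true: for a fixed critical $i$ the four branch points involved are $\lambda_{2i+1},\lambda_{2(i+1)},\lambda_{2(2n-i)-1},\lambda_{2(2n-i)}$, and exactly two of them are of the form $z_k^{-1}$. The paper's key observation is that since the $z_k$ are roots of a real polynomial, the two $z_k^{-1}$'s among these four must be a \emph{complex-conjugate} pair, hence either $\{\lambda_{2i+1},\lambda_{2(i+1)}\}$ or $\{\lambda_{2(2n-i)-1},\lambda_{2(2n-i)}\}$. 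In either case exactly one of them lies near the pole of $\d\tilde\omega_i$ (namely $\lambda_{2(i+1)}$ in the first case, $\lambda_{2(2n-i)-1}$ in the second), so $\sum_j\tilde\omega_i(z_j^{-1})=\tfrac12\gamma_i+O(1)$, which matches $\sum_j\tilde\omega_i(\lambda_{2j-1})=\tfrac12\gamma_i+O(1)$ coming from the single odd-indexed point $\lambda_{2(2n-i)-1}$. Once you insert this conjugate-pair remark, your argument is complete and coincides with the paper's.
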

\begin{proof}Let
\begin{eqnarray}\label{eq:z11}
Z_{12}^T\tilde{\Pi}-Z_{22}^T&=&\pmatrix{0&0&0\cr
                             0&(I_{2r}-J_{2r})D_r&0\cr
                             0&0&0}\ +W\nonumber\\
D_r&=&\diag (\gamma_{n-r},\gamma_{n-r+1},\ldots, \gamma_{n-r+1},
\gamma_{n-r}),
\end{eqnarray}
where $W$ is a matrix that remains finite as $\lambda_{2(j+1)}
\rightarrow \lambda_{2(2n - j) -1}$ . Then from (\ref{eq:transxi})
and (\ref{eq:modform}), we see that $\tilde{\xi}$ is given by
\begin{eqnarray}
\label{eq:tildez}
\tilde{\xi}_i&=&
\beta(\lambda)\sum_{j=1}^{n-1}W_{n+j,i}\pm{{\tilde{\tau}_i}\over
2}, \quad i>n+r-1, \quad i<n-r\nonumber\\
\tilde{\xi}_i&=&\epsilon_i\beta(\lambda)\gamma_i+\beta(\lambda)\sum_{j=1}^{n-1}W_{n+j,i}\pm{{\tilde{\tau}_i}\over
2},\quad
 n-r\leq i\leq n+r-1,
\end{eqnarray}
where
\begin{eqnarray}
\epsilon_i& = &1, \quad i<n\nonumber\\
\epsilon_i& = &-1, \quad i\geq n \nonumber \\
{{\tilde{\tau}_i}\over 2} & = &
\sum_{j=1}^{2n}\tilde{\omega}_i(z_j^{-1})
-\sum_{j=1}^{2n}\tilde{\omega}_i(\lambda_{2j-1}).\nonumber
\end{eqnarray}
Let $\lambda_{2(j+1)}$, $\lambda_{2(2n-j)-1}$ and $\lambda_{2j+1}$,
$\lambda_{2(2n-j)}$, $n-r\leq j\leq n-1$ be the pairs of points that
approach each other. From their ordering we have
$\lambda_{2(j+1)}=\lambda_{2(2n-j)}^{-1}$ and
$\lambda_{2j+1}=\lambda_{2(2n-j)-1}^{-1}$.

For each fixed $j$, the point $\lambda_{2j+1}$ is a pole of
$\d\tilde{\omega}_{2n-j-1}$, while $\lambda_{2(2n-j)-1}$ is a pole
of $\d\tilde{\omega}_{j}$. Therefore, the Riemann constant behaves
like
\begin{eqnarray*}
\sum_{j=1}^{2n}\tilde{\omega}_{i}(\lambda_{2j-1})={1\over
2}\gamma_{i}+ O(1), \quad \lambda_{2(j + 1)} \to \lambda_{2(2n-j) - 1}
 \quad n-r\leq i\leq n+r-1.
\end{eqnarray*}
Moreover, among these 4 points there are exactly two points of the
form $z_k^{-1}$ for some $k$. However, since $z_k$ are the roots of
a polynomial with real coefficients, if $\lambda_j=z_k^{-1}$ for
some $k$, then its complex conjugate $\overline{\lambda}_j$ is also
of the form $z_{k^{\prime}}^{-1}$ for some $k^{\prime}$. This means
that either of the following is true:
\begin{enumerate}
\item  Both $\lambda_{2(j+1)}$ and $\lambda_{2j+1}$ are of the form
$z_k^{-1}$,
\item Both $\lambda_{2(2n-j)}$ and $\lambda_{2(2n-j)-1}$ are of the
form $z_k^{-1}$,
\end{enumerate}
Either way, we have
\begin{eqnarray*}
\sum_{j=1}^{2n}\tilde{\omega}_i(z_j^{-1})={1\over
2}\gamma_{i}+ O(1),\quad n-r\leq i\leq n+r-1.
\end{eqnarray*}
Therefore, we can rewrite (\ref{eq:tildez}) as
\begin{eqnarray*}
\tilde{\xi}_i&=&\eta_i^{\pm}, \quad i>n+r-1, \quad i<n-r\nonumber\\
\tilde{\xi}_i&=&\epsilon_i\beta(\lambda)\gamma_i+\eta_i^{\pm}, \quad
n-r\leq i\leq n+r-1,
\end{eqnarray*}
where $\eta_i^{\pm}$ remains finite as $\lambda_{2(j + 1)} \to
\lambda_{2(2n-j) - 1}$.\end{proof}

We now compute the behavior of the theta function $\theta(\xi,\Pi)$
in this limit.
\begin{lemma}\label{le:theta2}
In the limit $\lambda_{2(j+1)}\rightarrow\lambda_{2(2n-j)-1}$,
$n-r\leq j\leq n+r-1$, the theta function $\theta(\xi,\Pi)$ behaves
like
\begin{eqnarray}\label{eq:theta2}
\theta(\xi,\Pi)= \exp\left(2\pi
i\beta^2(\lambda)\sum_{j=n-r}^{n-1}\gamma_{j}+O(1)\right),
\end{eqnarray}
where $\xi$ is given by (\ref{eq:xi1}) and $\gamma_j$ by
(\ref{eq:periodentries2}).
\end{lemma}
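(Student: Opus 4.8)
The plan is to follow \emph{verbatim} the strategy used for the single real pair in Lemma~\ref{le:theta1}, now carrying along all $2r$ simultaneously degenerating cycles. First I would apply the modular transformation (\ref{eq:modular}) of Theorem~\ref{thm:modular} with the symplectic matrix $Z$ whose blocks $Z_{ij}$ are recorded in (\ref{eq:cij}), writing
\[
\theta(\xi,\Pi)=\varsigma\,\exp\Bigl[-\pi i\,\tilde{\xi}^{T}M^{-1}Z_{12}^{T}\tilde{\xi}\Bigr]\,\theta(\tilde{\xi},\tilde{\Pi}),\qquad M:=-Z_{12}^{T}\tilde{\Pi}+Z_{22}^{T},
\]
with $\tilde{\xi}=(M^{T})\xi$ as in (\ref{eq:transxi}). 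This confines the entire divergence to the scalar exponential prefactor, while the residual theta function on the right is shown to remain bounded. Two things must then be established: (i) $\theta(\tilde{\xi},\tilde{\Pi})=O(1)$, and (ii) the exponent equals $2\pi i\,\beta^{2}(\lambda)\sum_{j=n-r}^{n-1}\gamma_{j}+O(1)$.

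For (i) I would argue exactly as in Lemma~\ref{le:theta1}. By Lemma~\ref{le:periodentries2} the only divergent entries of $\tilde{\Pi}$ are the diagonal ones $\tilde{\Pi}_{jj}=\gamma_{j}+\tilde{\Pi}_{jj}^{0}$ with $n-r\le j\le n+r-1$, and $\pi i\gamma_{j}=\log|\lambda_{2(j+1)}-\lambda_{2(2n-j)-1}|\to-\infty$. Inserting the defining series (\ref{eq:thetadef}), any summand with $m_{j}\neq0$ for some degenerating index has real part $\le\sum_{\mathrm{deg}\,j}\log|\lambda_{2(j+1)}-\lambda_{2(2n-j)-1}|\,m_{j}^{2}+O(1)\to-\infty$: the off-diagonal $\tilde{\Pi}_{jk}$ stay finite, and because $\beta(\lambda)\in i\mathbb{R}$ and $\gamma_{j}\in i\mathbb{R}$ the divergent part of $\tilde{\xi}_{j}=\epsilon_{j}\beta(\lambda)\gamma_{j}+\eta_{j}^{\pm}$ (Lemma~\ref{le:tildexi2}) is \emph{real}, so the linear terms $2\pi i\tilde{\xi}_{j}m_{j}$ contribute only a bounded real part. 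Hence all such summands are exponentially suppressed, only the terms with $m_{j}=0$ for every degenerating $j$ survive, and the sum converges to $\theta(\tilde{\xi}^{0},\tilde{\Pi}^{0})$ on the pinched curve $\Lie_{0}$, which is finite since $\tilde{\xi}^{0}$ is finite (Lemma~\ref{le:tildexi2}) and $\ipart\,\tilde{\Pi}^{0}$ is positive definite.

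For (ii), the main obstacle, I would evaluate the quadratic form to leading order using (\ref{eq:z11}): there the divergent part of $M$ is confined to the degenerating $2r\times 2r$ block, where it equals $-(I_{2r}-J_{2r})D_{r}$ up to the finite matrix $W$, with the palindromic $D_{r}=\diag(\gamma_{n-r},\ldots,\gamma_{n-r})$. The difficulty is that $I_{2r}-J_{2r}$ is singular, so $M^{-1}$ cannot be read off naively. I would get around this by splitting the degenerating block into the $J_{2r}$-symmetric and $J_{2r}$-antisymmetric subspaces: since $D_{r}$ commutes with $J_{2r}$, on the antisymmetric subspace one has $(I_{2r}-J_{2r})D_{r}=2D_{r}$, so $M^{-1}$ behaves there like $-\tfrac12 D_{r}^{-1}$, whereas on the symmetric subspace (the kernel of $I_{2r}-J_{2r}$) $M$ is governed by the finite part $W$. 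Since $Z_{12}^{T}=I_{2r}-J_{2r}$ annihilates the symmetric part and doubles the antisymmetric one, this gives $M^{-1}Z_{12}^{T}\sim-D_{r}^{-1}$ on the antisymmetric subspace and $O(1/\gamma)$ elsewhere.

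The decisive observation is that the divergent part of $\tilde{\xi}$ on the degenerating block, namely the vector with entries $\epsilon_{i}\beta(\lambda)\gamma_{i}$, is itself $J_{2r}$-antisymmetric: the palindromic pairing $\gamma_{n-k}=\gamma_{n+k-1}$ is matched by the opposite signs $\epsilon_{n-k}=+1$, $\epsilon_{n+k-1}=-1$ supplied by Lemma~\ref{le:tildexi2}. Thus the divergent direction of $\tilde{\xi}$ lies precisely in the range (antisymmetric subspace) rather than in the kernel (symmetric subspace) of $I_{2r}-J_{2r}$, so the singular block inverts cleanly against it, and
\[
-\pi i\,\tilde{\xi}^{T}M^{-1}Z_{12}^{T}\tilde{\xi}=\pi i\sum_{i=n-r}^{n+r-1}\frac{1}{\gamma_{i}}\bigl(\epsilon_{i}\beta(\lambda)\gamma_{i}\bigr)^{2}+O(1)=\pi i\,\beta^{2}(\lambda)\sum_{i=n-r}^{n+r-1}\gamma_{i}+O(1),
\]
which equals $2\pi i\,\beta^{2}(\lambda)\sum_{j=n-r}^{n-1}\gamma_{j}+O(1)$ by the palindrome. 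Substituting (i) and (ii) into the modular identity yields (\ref{eq:theta2}). The genuinely delicate point is this matching of the singular direction of $I_{2r}-J_{2r}$ with the divergent direction of $\tilde{\xi}$; once it is in place, every cross term between divergent and finite data is seen to be $O(1)$ and the computation closes.
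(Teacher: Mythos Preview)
Your computation of the exponential prefactor via the $J_{2r}$-symmetric/antisymmetric decomposition is correct and is in fact cleaner than the paper's argument. The paper expands $(Z_{12}^{T}\tilde{\Pi}-Z_{22}^{T})^{-1}=X^{0}+X^{-1}+O(\gamma^{-2})$, then extracts relations on $X^{0}$ and $X^{-1}$ from the identity $(Z_{12}^{T}\tilde{\Pi}-Z_{22}^{T})^{-1}(Z_{12}^{T}\tilde{\Pi}-Z_{22}^{T})=I$ (yielding \eqref{eq:x0rel} and the subsequent $\mathcal{P}$-identity), which is more laborious. Your observation that the divergent part of $\tilde{\xi}$ is $J_{2r}$-antisymmetric, and hence lies in the range of $I_{2r}-J_{2r}$ where $M^{-1}$ acts as $-\tfrac12 D_{r}^{-1}$, short-circuits all of that. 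One caveat: you are tacitly assuming that $M^{-1}$ stays $O(1)$ overall (in particular on the symmetric subspace, where the leading block of $M$ vanishes and only $W$ remains). The paper justifies this by the row/column reduction showing $\det M\sim\prod_{k=n-r}^{n-1}\gamma_{k}$ while no minor carries more than $r$ factors of $\gamma$; you should at least note why the finite matrix $W$ is invertible on the complement, so that your subspace heuristic is honest.

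There is, however, a genuine omission in part~(i). The modular transformation \eqref{eq:modular} does \emph{not} send the plain theta $\theta(\xi,\Pi)$ to a plain theta $\theta(\tilde{\xi},\tilde{\Pi})$; it produces a theta with characteristics $\theta\bigl[{\varepsilon\atop\varepsilon'}\bigr](\tilde{\xi},\tilde{\Pi})$, where $\varepsilon,\varepsilon'$ solve $Z_{22}^{T}\varepsilon+Z_{12}^{T}\varepsilon'=\diag(Z_{12}^{T}Z_{22})$, $Z_{21}^{T}\varepsilon+Z_{11}^{T}\varepsilon'=\diag(Z_{11}^{T}Z_{21})$. In Lemma~\ref{le:theta1} this issue was invisible only because there $Z_{11}=Z_{22}=0$, forcing $\varepsilon=\varepsilon'=0$; here $C_{11},C_{22}\neq0$ and the characteristics are non-trivial (the paper computes them in \eqref{eq:char}). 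Your surviving-term argument ``only $m_{j}=0$ contributes'' is valid for $\theta\bigl[{\varepsilon\atop\varepsilon'}\bigr]$ only if $\varepsilon_{j}\equiv0\bmod 2$ at every degenerating index: otherwise the series runs over $m_{j}+\varepsilon_{j}/2\in\mathbb{Z}+\tfrac12$, the minimum of $(m_{j}+\tfrac12)^{2}$ is $\tfrac14>0$, and the theta function would tend to $0$ rather than to a finite nonzero limit, destroying the $O(1)$ in the exponent. So you must actually solve for the characteristics and check $\varepsilon\equiv0\bmod2$ (it is, and the nonzero $\varepsilon'_{j}=1$ for $n-r\le j\le n-1$ merely shifts $\tilde{\xi}_{j}$ by $\tfrac12$, which is absorbed into your $\eta_{j}^{\pm}$). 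Once that is done, your argument closes.
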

\begin{proof}
From (\ref{eq:modular}) we see that
\begin{eqnarray}\label{eq:mod}
\theta(\xi,\Pi)=\varsigma\exp\left(\pi
i\tilde{\xi}^T\left(Z_{12}^T\tilde{\Pi}-Z_{22}^T\right)^{-1}Z_{12}^T\tilde{\xi}\right)\theta\left[{\varepsilon\atop
\varepsilon^{\prime}}\right](\tilde{\xi},\tilde{\Pi}),
\end{eqnarray}
where the characteristics on the right hand side are obtained by
solving the linear equations
\begin{eqnarray*}
\diag\left(Z_{12}^{T}Z_{22}\right)&=&Z_{22}^T\varepsilon+Z_{12}^T\varepsilon^{\prime}\\
\diag\left(Z_{11}^{T}Z_{21}\right)&=&Z_{21}^T\varepsilon+Z_{11}^T\varepsilon^{\prime}.
\end{eqnarray*}
The solution of this system is
\begin{eqnarray}
\label{eq:char}
\varepsilon_j&=&0\quad {\rm mod}\: 2, \quad j=1,\ldots,2n-1\nonumber\\
\varepsilon_{j}^{\prime}&=&\left\{
                             \begin{array}{ll}
                               1\quad {\rm mod} \: 2, & \hbox{$\quad n-r\leq j\leq n-1$;} \\
                               0\quad {\rm mod} \: 2, & \hbox{otherwise.}
                             \end{array}
                           \right.
\end{eqnarray}
Note that, from (\ref{eq:thetachar}) and the periodicity properties of the theta function proposition \ref{pro:per}, characteristics that differ by an even integer vector give the same theta function. That is
\[
\theta\left[{\varepsilon\atop
\varepsilon^{\prime}}\right](\xi,\Pi)=\theta\left[{\varepsilon+2\overrightarrow{N}\atop
\varepsilon^{\prime}+2\overrightarrow{M}}\right](\xi,\Pi), \quad
\overrightarrow{N},\overrightarrow{M}\in\mathbb{Z}^{2n-1}
\]
We will now compute the exponential term of (\ref{eq:mod}). By
performing rows and columns operations on
$Z_{12}^T\tilde{\Pi}-Z_{22}^T$, we can transform its determinant
into the form
\begin{eqnarray*}
\det\left(Z_{12}^T\tilde{\Pi}-Z_{22}^T\right)&
=&\det\left(\pmatrix{0_{n-r-1}&0&0\cr
                             0&\mathcal{S}D_r&0\cr
                             0&0&0_{n-r}}\ +W^{\prime}\right)\\
\mathcal{S}_{ij}&=&\left\{
                     \begin{array}{ll}
                       0, & \hbox{$1\leq i\leq r$,} \\
                       \delta_{ij}, & \hbox{$r+1\leq i\leq 2r$,}
                     \end{array}
                   \right.
\end{eqnarray*}
for some matrix $W^{\prime}$ that remains finite as $\lambda_{2(j +
1)} \to \lambda_{2(2n-1) - 1}$.

This means that the leading order term of the determinant is of the
order of $\prod_{k=n-r}^{n-1}\gamma_{k}$. That is
\begin{eqnarray*}
\det\left(Z_{12}^T\tilde{\Pi}-Z_{22}^T\right)
&=&\mathcal{D}_{r}+O(\gamma_i^{r-1}), \quad \lambda_{2(j + 1)} \to
\lambda_{2(2n-j) - 1},
\nonumber\\
\mathcal{D}_{r}&=&\mathcal{W}^{\prime}\prod_{k=n-r}^{n-1}\gamma_{k},
\end{eqnarray*}
where the notation $O(\gamma_i^{r-1})$ means
\begin{equation}
\label{eq:nbigo} O(\gamma_i^{r-1}) = O\left(\prod_i
\gamma_i^{\alpha_i}\right), \quad \sum_i\alpha_i \le r-1,
\end{equation}
Furthermore, $\mathcal{W}^{\prime}$ is the determinant of the
$(2n-r-1)\times (2n-r-1)$ matrix formed by removing the $(n-r)^{th}$
up to the $(n-1)^{th}$ rows and columns in
$W^{\prime}$.

Similarly, we see that the minors of $Z_{12}^T\tilde{\Pi}-Z_{22}^T$
cannot contain more than $r$ factors of $\gamma$. In particular, this
means that the inverse matrix
$\left(Z_{12}^T\tilde{\Pi}-Z_{22}^T\right)^{-1}$ is finite
as $\lambda_{2(j + 1)} \to \lambda_{2(2n-j) - 1}$.

Therefore the inverse matrix
$\left(Z_{12}^T\tilde{\Pi}-Z_{22}^T\right)^{-1}$ behaves like
\begin{eqnarray}\label{eq:z11inverse}
\left(Z_{12}^T\tilde{\Pi}-Z_{22}^T\right)^{-1}&=&X^0+X^{-1}+O(\gamma_i^{-2}),
\quad \lambda_{2(j + 1)} \to \lambda_{2(2n-j) - 1},
\end{eqnarray}
where $X^{-1}$ is a term of order $-1$ in $\gamma_i$ and $X^0$ is a
finite matrix.

From (\ref{eq:z11inverse}) and (\ref{eq:z11}), we see that the
leading order term of
\begin{eqnarray}\label{eq:Id}
\left(Z_{12}^T\tilde{\Pi}-Z_{22}^T\right)^{-1}\left(Z_{12}^T\tilde{\Pi}-Z_{22}^T\right)=I_{2n-1}
\end{eqnarray}
gives the following
\begin{eqnarray*}
X^0\pmatrix{0&0&0\cr
                             0&(I_{2r}-J_{2r})D_r&0\cr
                             0&0&0}=0,
\end{eqnarray*}
while the leading order term of
\begin{eqnarray*}
\left(Z_{12}^T\tilde{\Pi}-Z_{22}^T\right)\left(Z_{12}^T\tilde{\Pi}-Z_{22}^T\right)^{-1}=I_{2n-1}
\end{eqnarray*}
gives
\begin{eqnarray*}
\pmatrix{0&0&0\cr
                             0&(I_{2r}-J_{2r})D_r&0\cr
                             0&0&0}X^0=0.
\end{eqnarray*}
This implies that
\begin{eqnarray}\label{eq:x0rel}
X^0_{i,j}=X^0_{i,2n-j-1},\quad 1\leq i\leq 2n-1,\quad n-r\leq j\leq
n+r-1\nonumber\\
X^0_{i,j}=X^0_{2n-i-1,j}, \quad n-r\leq i\leq n+r-1,\quad 1\leq
j\leq 2n-1.
\end{eqnarray}
The leading order term of the bilinear product in (\ref{eq:mod})
then becomes
\begin{eqnarray}\label{eq:expfactor}
\tilde{\xi}^T\left(Z_{12}^T\tilde{\Pi}-Z_{22}^T\right)^{-1}Z_{12}^T\tilde{\xi}&=&\beta^2(\lambda)\epsilon^TD_nX^{-1}\pmatrix{0&0&0\cr
                             0&(I_{2r}-J_{2r})&0\cr
                             0&0&0}\ D_n\epsilon \nonumber \\
 && + O(1), \quad  \lambda_{2(j + 1)} \to
\lambda_{2(2n-j) - 1},\nonumber\\
\epsilon_i&=&0, \quad i<n-r, \quad i>n+r-1, \\
\epsilon_i&=&1, \quad n-r\leq i<n,\nonumber\\
 \epsilon_i&=&-1,\quad n\leq i<n+r-1\nonumber\\
 D_n&=&\pmatrix{0_{n-r-1}&0&0\cr
                0&D_r&0\cr
                0&0&0_{n-r}}.\nonumber
\end{eqnarray}
%
Let us denote $\mathcal{P}$ by
\begin{eqnarray*}
\mathcal{P}=X^{-1}\pmatrix{0&0&0\cr
                             0&(I_{2r}-J_{2r})&0\cr
                             0&0&0} D_n,
\end{eqnarray*}
Then constant term of (\ref{eq:Id}) gives the following
\begin{eqnarray*}
X^{-1}\pmatrix{0&0&0\cr
                             0&(I_{2r}-J_{2r})&0\cr
                             0&0&0}\ D_n +X^0W=I_{2n-1}.
\end{eqnarray*}
By applying (\ref{eq:x0rel}) to the above, we see that the entries
of $\mathcal{P}$ are related by
\begin{eqnarray*}
\mathcal{P}_{l,j}=\mathcal{P}_{2n-l-1,j}+\delta_{l,j}+\delta_{2n-l-1,j},
\quad n-r\leq l\leq n-1,\quad n-r\leq j\leq n+r-1.
\end{eqnarray*}
By substituting this back into (\ref{eq:expfactor}), we see that the
the exponential factor in (\ref{eq:mod}) behaves like
\begin{eqnarray}\label{eq:expon2}
\tilde{\xi}^T\left(Z_{12}^T\tilde{\Pi}-Z_{22}^T\right)^{-1}Z_{12}^T\tilde{\xi}=2\beta^2(\lambda)\sum_{j=n-r}^{n-1}\gamma_{j}+O(1).
\end{eqnarray}

We will now show that the limit of the theta function with
characteristics remains finite. By using the
definition~(\ref{eq:thetadef}), we have
\begin{eqnarray*}
\theta\left[{\varepsilon\atop
\varepsilon^{\prime}}\right](\tilde{\xi},\tilde{\Pi})&=&\sum_{m_j \in
\mathbb{Z}} \exp\Bigg[\pi
i\sum_{j=n-r}^{n-1}\gamma_{j}\Bigg(\left(m_{j}+{{\varepsilon_{j}}\over{2}}\right)\Bigg(2\beta(\lambda)+m_{j}
\\& & + {{\varepsilon_{j}}\over{2}}\Bigg)+\left(m_{2n-j-1}
+{{\varepsilon_{2n-j-1}}\over{2}}\right)\\
& & \times
\Bigg(-2\beta(\lambda)+m_{2n-j-1}+{{\varepsilon_{2n-j-1}}\over{2}}\Bigg)+
O(1)\Bigg], \quad \lambda_{2(j + 1)} \to \lambda_{2(2n-j) - 1}.
\end{eqnarray*}
As before, since $\beta(\lambda)$ is purely imaginary, only terms such
that
\begin{eqnarray*}
\left(m_{j}+{{\varepsilon_{j}}\over{2}}\right)^2+\left(m_{2n-j-1}+{{\varepsilon_{2n-j-1}}\over{2}}\right)^2=0,\quad
n-r\leq j\leq n-1,
\end{eqnarray*}
contribute. Recall that from (\ref{eq:char}) we have
$\varepsilon_j=\varepsilon_{2n-j-1}=0$, therefore
\begin{eqnarray*}
m_{j}=m_{2n-j-1}=0,\quad n-r\leq j\leq n-1.
\end{eqnarray*}
Thus, as before, the theta function with characteristics
reduces to a $2n-2r-1$ dimensional theta function
\begin{eqnarray}\label{eq:limthet2}
\lim_{\lambda_{2(j + 1)} \to \lambda_{2(2n-j) - 1}}
  \theta\left[{\varepsilon\atop
      \varepsilon^{\prime}}\right](\tilde{\xi},\tilde{\Pi})
    =\theta(\tilde{\xi}^0,
    \tilde{\Pi}^0),
\end{eqnarray}
where the arguments on the right hand side are obtained from
removing the $(n-r)^{th}$ up to the $(n+r-1)^{th}$ entries and that
$\theta(\tilde{\xi}^0,
    \tilde{\Pi}^0)$ is finite in the limit.

By combining (\ref{eq:expon2}) and (\ref{eq:limthet2}), we see that
the theta function $\theta(\xi,\Pi)$ behaves like
\begin{eqnarray*}
\theta(\xi,\Pi)=\varsigma\exp\left(2\pi
i\beta^2(\lambda)\sum_{j=n-r}^{n-1}\gamma_{j}+O(1)\right)\theta(\tilde{\xi}^0,
    \tilde{\Pi}^0)
\end{eqnarray*}
This concludes the proof of the lemma. \end{proof}

Finally, from lemma \ref{le:theta2} we see that the entropy
(\ref{eq:entro}) behaves like
\begin{eqnarray*}
S(\rho_A)=-\frac13
\sum_{j=n-r}^{n-1}\log\left|\lambda_{2(j+1)}-\lambda_{2(2n-j)-1}\right|
 + O(1), \quad \lambda_{2(j + 1)} \to
\lambda_{2(2n-j) - 1}.
\end{eqnarray*}

\subsubsection{Case 2: r=n}
We will now consider the case when $r=n$. That is, all roots are
complex and they all approach each other pairwise. The canonical
basis will be chosen as in (\ref{eq:newbasis}) but with $r=n-1$,
(not $n$) while the last elements in the basis are given by
\[
\tilde{a}_{2n-1}=b_{2n-1},\quad \tilde{b}_{2n-1}=-a_{2n-1}.
\]
In other words, we have
\begin{eqnarray}
\label{eq:basiscase2}
\tilde{a}_{n-k}& = &
b_{n-k}-b_{n+k-1}+\sum_{j=n-k+1}^{n+k-2}a_{j},\quad,
k=1,\ldots, n-1\nonumber\\
\tilde{a}_{n+k}&=&b_{n+k}-b_{n-k-1}+\sum_{j=n-k-1}^{n+k}a_{j},
\quad,k=0,\ldots, n-2\\
\tilde{b}_{n-k}&=&b_{n-k}-\sum_{j=n-k}^{n+k-2}a_{j}-
\sum_{j=1}^{n-k-1}(-1)^{n-k-j}\left(a_{j}-2b_{j}\right),\quad,
k=1,\ldots, n-1\nonumber\\
\tilde{b}_{n+k}&=&b_{n+k}+\sum_{j=1}^{n-k-2}(-1)^{n-k-j}
\left(a_{j}-2b_{j}\right),\quad
k=0,\ldots, n-2\nonumber\\
\tilde{a}_{2n-1}&=&b_{2n-1},\quad \tilde{b}_{2n-1}=-a_{2n-1}.
\end{eqnarray}
As before, we can partition the basis as follows:
\begin{eqnarray}\label{eq:part}
\pmatrix{\tilde{A}}&=&\pmatrix{\tilde{a}^I\cr
\tilde{a}^{II}}\nonumber\\
\tilde{a}^I_j&=&\tilde{a}_j,\quad 1\leq k\leq n-r-1\\
\tilde{a}^{II}_1&=&\tilde{a}_{2n-1}.\nonumber
\end{eqnarray}
Furthermore, the $b$-cycles and the untilded basis are connected by
analogous relations.

In the notation of theorem \ref{thm:modular} we have
\begin{eqnarray*}
\pmatrix{\tilde{A}\cr \tilde{B}}\ &=&Z\pmatrix{A\cr
B}=\pmatrix{Z_{11}&Z_{12}\cr Z_{21}&Z_{22}}\ \pmatrix{A\cr B},
\end{eqnarray*}
where the transformation matrix $Z$ can be written in block form
according to the partition (\ref{eq:part}):
\begin{eqnarray}\label{eq:Z3}
Z_{ij}&=&\pmatrix{
                  C_{ij}&0\cr
                  0&\mathcal{E}_{ij}},
\end{eqnarray}
where $C_{ij}$ are $2(n-2)\times2(n-2)$ matrices defined as in
(\ref{eq:cij}), and $\mathcal{E}$ is given by
\begin{eqnarray*}
\mathcal{E}_{ij}&=&0,\quad i=j\\
\mathcal{E}_{12}&=&1,\quad\mathcal{E}_{21}=-1.
\end{eqnarray*}
By deformation of the contours, we see that the cycles $\tilde{a}_j$
become close loops around $\lambda_{2(j+1)}$ in the
critical limit.

Let $\tilde{a}_0$ be the closed curve that becomes a loop around
$\lambda_2$ as $\lambda_{2} \to \lambda_{4n-1}$ (see figure
\ref{fig:tildea0}). We have
\begin{eqnarray*}
\tilde{a}_0&=&-b_{2n-1}+\sum_{j=1}^{2n-2}a_{j}\\
\tilde{a}_0&=&-\tilde{a}_{2n-1}+\sum_{j=1}^{n-1}(-1)^{j+1}\tilde{a}_j+\sum_{j=n}^{2n-2}(-1)^{j}\tilde{a}_j
\end{eqnarray*}
\begin{figure}[htbp]
\begin{center}
\resizebox{4cm}{!}{\input{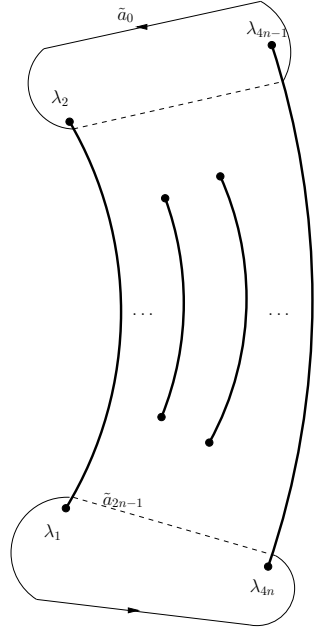}}\caption{The curve going around $\lambda_2$.}\label{fig:tildea0}
\end{center}
\end{figure}
In particular, this means that in the limit, the 1-form
$\tilde{\omega}_j$ will have a simple pole at $\lambda_{2(j+1)}$
with residue ${1\over {2\pi i}}$ and a simple pole at $\lambda_{2}$
with residue $(-1)^{j+1}{1\over {2\pi i}}$ for $1\leq j\leq n-1$,
$(-1)^j{1\over {2\pi i}}$ for $n\leq j\leq 2n-2$ and $-{1\over{2\pi
i}}$ for $j=2n-1$. Thus, we arrive at the following
\begin{lemma}
\label{le:periodentries3}
The entries of the period matrix behave like
\begin{eqnarray*}
\lim_{\lambda_{2(j + 1)} \to
\lambda_{2(2n-j) - 1}}\tilde{\Pi}_{ij}&= &\tilde{\Pi}_{ij}^0, \quad i\neq j,
\quad
i,j\neq 2n-1\\
\lim_{\lambda_{2(j + 1)} \to
\lambda_{2(2n-j) - 1}} \tilde{\Pi}_{jj}&=&\gamma_j+\tilde{\Pi}_{jj}^0, \quad
1\leq j\leq 2n-2\\
\lim_{\lambda_{2(j + 1)} \to
\lambda_{2(2n-j) - 1}} \tilde{\Pi}_{2n-1,2n-1}&=&2\gamma_{2n-1}+\tilde{\Pi}_{2n-1,2n-1}^0\\
\lim_{\lambda_{2(j + 1)} \to
\lambda_{2(2n-j) - 1}} \tilde{\Pi}_{j,2n-1}&= &(-1)^{j}\gamma_{2n-1}+\tilde{\Pi}_{j,2n-1}^0,
\quad
1\leq j\leq n-1\\
\lim_{\lambda_{2(j + 1)} \to
\lambda_{2(2n-j) - 1}} \tilde{\Pi}_{j,2n-1}&=&(-1)^{j+1}\gamma_{2n-1}+\tilde{\Pi}_{j,2n-1}^0,
\quad
n\leq j\leq 2n-2\\
\tilde{\Pi}_{2n-1,j}&=&\tilde{\Pi}_{j,2n-1}\\
\gamma_j&=&{1\over{\pi
i}}\log\left|\lambda_{2(j+1)}-\lambda_{2(2n-j)-1}\right|,
\end{eqnarray*}
where $\tilde{\Pi}_{ij}^0$ are finite in the limit
$\lambda_{2(j+1)}\rightarrow\lambda_{2(2n-j)-1}$.
\end{lemma}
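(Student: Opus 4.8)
The plan is to establish Lemma~\ref{le:periodentries3} by the same degeneration analysis that underlies Lemma~\ref{le:periodentries2}, modified to accommodate the distinguished last cycle $\tilde{a}_{2n-1}=b_{2n-1}$, $\tilde{b}_{2n-1}=-a_{2n-1}$ that is present only in the fully complex case $r=n$. Throughout, $\tilde{\Pi}_{ij}=\int_{\tilde{b}_i}\d\tilde{\omega}_j$ with the $\d\tilde{\omega}_j$ normalized by $\int_{\tilde{a}_k}\d\tilde{\omega}_j=\delta_{jk}$. The first step is to pin down the limiting pole data of the $\d\tilde{\omega}_j$, as recorded just before the statement: since each $\tilde{a}_j$ ($1\le j\le 2n-2$) shrinks to a small loop around the colliding point $\lambda_{2(j+1)}$, we get $\Res_{\lambda_{2(j+1)}}\d\tilde{\omega}_k=\tfrac{1}{2\pi i}\int_{\tilde{a}_j}\d\tilde{\omega}_k=\tfrac{1}{2\pi i}\delta_{jk}$, while writing the loop $\tilde{a}_0$ around $\lambda_2$ in the tilded basis, $\tilde{a}_0=-\tilde{a}_{2n-1}+\sum_{j=1}^{n-1}(-1)^{j+1}\tilde{a}_j+\sum_{j=n}^{2n-2}(-1)^{j}\tilde{a}_j$, and integrating $\d\tilde{\omega}_k$ over it reproduces the residue at $\lambda_2$ with exactly the signs stated before the lemma (in particular $-\tfrac{1}{2\pi i}$ for $k=2n-1$).

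With the residues fixed, the second step extracts the divergent part of each $\tilde{b}$-period by the standard pinching estimates of~\cite{BBEIM}, exactly as in the real case treated in the previous subsection. For $1\le j\le 2n-2$ the cycle $\tilde{b}_j$ threads the handle that degenerates as $\lambda_{2(j+1)}\to\lambda_{2(2n-j)-1}$; integrating the simple-pole form $\d\tilde{\omega}_j$ across it produces a logarithm whose coefficient, after one accounts for the fact that the pinching parameter of a hyperelliptic node is quadratic in the separation of the two colliding branch points, is precisely $\gamma_j=\tfrac{1}{\pi i}\log|\lambda_{2(j+1)}-\lambda_{2(2n-j)-1}|$. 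This gives $\tilde{\Pi}_{jj}=\gamma_j+\tilde{\Pi}_{jj}^{0}$ for $1\le j\le 2n-2$, and all entries $\tilde{\Pi}_{ij}$ with $i\ne j$, $i,j\ne 2n-1$ stay finite because the relevant cycles and forms do not share a common degenerating node; their limits define the finite $\tilde{\Pi}_{ij}^{0}$.

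The third and most delicate step is the row and column indexed by $2n-1$. Here $\d\tilde{\omega}_{2n-1}$ is meromorphic with simple poles of residue $\tfrac{1}{2\pi i}$ at $\lambda_{4n}$ and $-\tfrac{1}{2\pi i}$ at $\lambda_2$, the two endpoints of the outermost collision. Since $\tilde{b}_{2n-1}=-a_{2n-1}$ passes through both degenerating handles on which these poles sit, it accumulates a logarithmic term from each, and the two equal contributions add to give $\tilde{\Pi}_{2n-1,2n-1}=2\gamma_{2n-1}+\tilde{\Pi}_{2n-1,2n-1}^{0}$. For the off-diagonal entries $\tilde{\Pi}_{j,2n-1}=\int_{\tilde{b}_j}\d\tilde{\omega}_{2n-1}$ the divergence arises solely from the pole of $\d\tilde{\omega}_{2n-1}$ at $\lambda_2$: as $\tilde{b}_j$ is deformed past that node it picks up a contribution proportional to the residue there, and the alternating signs $(-1)^{j}$ (for $1\le j\le n-1$) and $(-1)^{j+1}$ (for $n\le j\le 2n-2$) are fixed by the signed multiplicities appearing in the homology relations~(\ref{eq:basiscase2}) and in the expansion of $\tilde{a}_0$ in the tilded basis. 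The identity $\tilde{\Pi}_{2n-1,j}=\tilde{\Pi}_{j,2n-1}$ is then automatic from the symmetry of the Riemann period matrix.

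The hard part will be this last step: tracking, through the relations~(\ref{eq:basiscase2}) and the expansion of $\tilde{a}_0$, exactly which nodes each $\tilde{b}$-cycle winds around and with what signed multiplicity, so that the coefficient $2$ in $2\gamma_{2n-1}$ and the alternating signs in the off-diagonal terms come out correctly. By contrast, the convergence of the non-degenerating periods to the finite matrix $\tilde{\Pi}^0$, which is needed for the later reduction of the theta function to lower genus, is a routine application of the degeneration theory in~\cite{BBEIM}.
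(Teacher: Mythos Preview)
Your proposal is essentially correct and follows the same approach as the paper. The paper itself does not give a formal proof of this lemma: it records the pole data of the limiting $\d\tilde{\omega}_j$ (residue $\tfrac{1}{2\pi i}$ at $\lambda_{2(j+1)}$ and the sign-alternating residue at $\lambda_2$ coming from the expansion of $\tilde{a}_0$ in the tilded basis), and then simply states the lemma as the immediate consequence via the degeneration theory of~\cite{BBEIM}. Your outline does exactly this, with somewhat more explanation of why the diagonal entry $\tilde{\Pi}_{2n-1,2n-1}$ picks up the factor $2$ and why the off-diagonal entries in the last row and column carry the alternating signs; the key observation you will need to make precise in that ``hard part'' is that the two nodes contributing to $\tilde{\Pi}_{2n-1,2n-1}$ (at the collision points of $\lambda_1,\lambda_{4n}$ and of $\lambda_2,\lambda_{4n-1}$) are complex conjugates, so both give the same logarithm $\gamma_{2n-1}$.
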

In this case, the argument $\tilde{\xi}$ in (\ref{eq:modular})
behaves as follows.
\begin{lemma}\label{le:tildexi3}
Let $\xi$ be given by (\ref{eq:xi1}) and $\tilde{\xi}$ be
\begin{eqnarray*}
\tilde{\xi}=\left((-Z_{12}^T\tilde{\Pi}+Z_{22}^T)^T\right)\xi,
\end{eqnarray*}
where $Z_{ij}$ are given by (\ref{eq:Z3}). Then in the limit
$\lambda_{2(j+1)}\rightarrow\lambda_{2(2n-j)-1}$ we have
\begin{eqnarray}\label{eq:tildexi3}
\tilde{\xi}_i&=&\sigma_i\beta(\lambda)\gamma_i+\eta_i^{\pm}, \quad
1\leq i\leq 2n-1,\\
\sigma_i&=&(1+(-1)^{i+1}), \quad 1\leq i\leq n-1\nonumber\\
\sigma_i&=&-(1+(-1)^{i+1}). \quad n\leq i\leq 2n-1 \nonumber
\end{eqnarray}
where $\eta_i^{\pm}$ remains finite as $\lambda_{2(j + 1)} \to
\lambda_{2(2n-j) - 1}$.
\end{lemma}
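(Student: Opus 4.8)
The plan is to mirror the proof of Lemma~\ref{le:tildexi2} step by step, now using the modified canonical basis (\ref{eq:basiscase2}) and its block transformation matrix (\ref{eq:Z3}). First I would eliminate the shift $\tau/2$ in favour of the Abel map exactly as in the proofs of Lemmas~\ref{le:tildexi1} and~\ref{le:tildexi2}: writing the Riemann constant as $K=-\sum_{j=2}^{2n}\omega(\lambda_{2j-1})$ gives $\tau/2=-\sum_{j=2}^{2n}\omega(z_j^{-1})+\sum_{j=2}^{2n}\omega(\lambda_{2j-1})$, and substituting into $\xi=\beta(\lambda)\overrightarrow{e}\pm\tau/2$ and transforming via (\ref{eq:transxi}) and (\ref{eq:modform}) expresses each $\tilde{\xi}_i$ as a $\beta(\lambda)$-linear term produced by $-Z_{12}^T\tilde{\Pi}+Z_{22}^T$ acting on $\overrightarrow{e}$, plus a remainder $\pm\tilde{\tau}_i/2$ assembled from the tilded Abel images of the $z_j^{-1}$ and the $\lambda_{2j-1}$.

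Next I would isolate the divergent contributions using the period asymptotics of Lemma~\ref{le:periodentries3}. Since $Z_{22}$ is a constant integer matrix, all divergence comes from the singular entries of $\tilde{\Pi}$: the diagonal terms $\tilde{\Pi}_{jj}\sim\gamma_j$ for $1\le j\le2n-2$, the doubled entry $\tilde{\Pi}_{2n-1,2n-1}\sim2\gamma_{2n-1}$, and the off-diagonal couplings $\tilde{\Pi}_{j,2n-1}\sim(-1)^j\gamma_{2n-1}$ for $1\le j\le n-1$ and $\sim(-1)^{j+1}\gamma_{2n-1}$ for $n\le j\le2n-2$. Feeding these into $-Z_{12}^T\tilde{\Pi}$, contracting against the components of $\overrightarrow{e}$ (which are $1$ in positions $n,\ldots,2n-1$ and $0$ otherwise), and using the conjugation symmetry $\gamma_j=\gamma_{2n-1-j}$ that follows from (\ref{eq:order_complex}) --- in particular $\gamma_{2n-1}=\gamma_0$, the contribution of the collapsing pair $(\lambda_2,\lambda_{4n-1})$ encircled by the loop $\tilde{a}_0$ of figure~\ref{fig:tildea0} --- one collects the leading term $\sigma_i\beta(\lambda)\gamma_i$. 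The parity factor $1+(-1)^{i+1}$ is precisely what survives when the diagonal and off-diagonal divergences add (odd $i$) or cancel (even $i$), and the overall sign reverses at $i=n$ because of the orientation flip in the tilded basis.

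The last step is to verify that the remainder $\pm\tilde{\tau}_i/2$ stays bounded, which is the same cancellation already carried out in Lemma~\ref{le:tildexi2}. Each meromorphic one-form $\tilde{\omega}_i$ now acquires simple poles at both $\lambda_{2(i+1)}$ and $\lambda_2$, so the Abel images $\tilde{\omega}_i(z_j^{-1})$ and $\tilde{\omega}_i(\lambda_{2j-1})$ of the colliding points diverge logarithmically; however, because $p(z)$ has real coefficients, each colliding quadruple supplies either two points of the form $z_k^{-1}$ or none, so these logarithmic singularities enter $\sum_j\tilde{\omega}_i(z_j^{-1})$ and $\sum_j\tilde{\omega}_i(\lambda_{2j-1})$ with equal weights and cancel, leaving $\eta_i^\pm$ finite. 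Assembling the divergent coefficients then gives (\ref{eq:tildexi3}).

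I expect the factor of two and the parity-dependent vanishing of $\sigma_i$ to be the genuine obstacle. In contrast to Case~1, where each collapsing one-form carried a single pole and hence a coefficient $\pm1$, the simultaneous collision of all $n$ root-pairs forces the top cycle $\tilde{a}_{2n-1}=b_{2n-1}$, together with the auxiliary loop $\tilde{a}_0$ around $\lambda_2$, to contribute \emph{two} poles; this is the origin of the doubled period entry $\tilde{\Pi}_{2n-1,2n-1}\sim2\gamma_{2n-1}$ and of the coupled entries $\tilde{\Pi}_{j,2n-1}$. The delicate bookkeeping is to track the residues at $\lambda_{2(j+1)}$ and at $\lambda_2$ --- with signs $(-1)^{j+1}$ for $1\le j\le n-1$, $(-1)^j$ for $n\le j\le2n-2$, and $-1$ for $j=2n-1$, as recorded in Lemma~\ref{le:periodentries3} --- together with the symmetry $\gamma_j=\gamma_{2n-1-j}$, so that the diagonal and off-diagonal divergences combine into exactly $\sigma_i\gamma_i$ rather than some other linear combination of the $\gamma$'s.
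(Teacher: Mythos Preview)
Your proposal is correct and follows essentially the same route as the paper. The paper's proof is terser: it simply records that the leading divergent part of $Z_{12}^T\tilde{\Pi}-Z_{22}^T$ has the block form
\[
\begin{pmatrix}(I_{2n-2}-J_{2n-2})D_{n-1}&0\\ \overrightarrow{D}_{n-1}&2\gamma_{2n-1}\end{pmatrix}+W,
\]
with $D_{n-1}=\diag(\gamma_1,\gamma_2,\ldots,\gamma_2,\gamma_1)$ and $\overrightarrow{D}_{n-1}=(-\gamma_1,\gamma_2,\ldots,\gamma_2,-\gamma_1)$, reads off the coefficients $\sigma_i$ directly from this matrix applied to $\overrightarrow{e}$, and then disposes of $\tilde{\tau}_i/2$ by pointing back to the cancellation argument of section~\ref{se:case1}. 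Your narrative --- extracting the divergent entries from Lemma~\ref{le:periodentries3}, invoking the conjugation symmetry $\gamma_j=\gamma_{2n-1-j}$, and tracing the parity factor to the interference of diagonal and last-column contributions --- amounts to deriving this same block form rather than asserting it, so the two arguments coincide in substance.
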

\begin{proof}
  In this case the matrix $Z_{12}^T\tilde{\Pi}-Z_{22}^T$ takes the
  form
\begin{eqnarray}\label{eq:transmat}
Z_{12}^T\tilde{\Pi}-Z_{22}^T&=&\pmatrix{
                             (I_{2r}-J_{2r})D_{n-1}&0\cr
                             \overrightarrow{D}_{n-1}&2\gamma_{2n-1}}\ +W\nonumber\\
D_{n-1}&=&\diag (\gamma_{1},\gamma_{2},\ldots, \gamma_{2}, \gamma_{1})\\
\overrightarrow{D}_{n-1}&=&(-\gamma_{1},\gamma_{2},\ldots,
\gamma_{2}, -\gamma_{1}),\nonumber
\end{eqnarray}
where $W$ is a finite matrix as $\lambda_{2(j + 1)} \to
\lambda_{2(2n-j) - 1}$.

Therefore, $\tilde{\xi}$ behaves like
\begin{eqnarray*}
\tilde{\xi}_i&=&\sigma_i\beta(\lambda)\gamma_i+\beta(\lambda)\sum_{j=1}^{n-1}W_{n+j,i}\pm{{\tilde{\tau}_i}\over
2},\quad
 1\leq i\leq 2n-1\nonumber\\
\sigma_i&=&(1+(-1)^{i+1}), \quad 1\leq i\leq n-1\nonumber\\
\sigma_i&=&-(1+(-1)^{i+1}), \quad n\leq i\leq 2n-1 \\
{{\tilde{\tau}_i}\over
2}&=&\sum_{j=1}^{2n}\tilde{\omega}_i(z_j^{-1})-\sum_{j=1}^{2n}\tilde{\omega}_i(\lambda_{2j-1}).
\nonumber
\end{eqnarray*}
As in section \ref{se:case1}, the leading order terms of
$\frac{\tilde{\tau_i}}{2}$ are zero. We can therefore rewrite
$\tilde{\xi}$ as
\begin{eqnarray*}
\tilde{\xi}_i&=&\sigma_i\beta(\lambda)\gamma_i+\eta_i^{\pm}, \quad
1\leq i\leq 2n-1,
\end{eqnarray*}
where $\eta_i^{\pm}$ are finite in the limit.\end{proof}

The behavior of the theta function for this case is given by
\begin{lemma}\label{le:theta3}
In the limit $\lambda_{2(j+1)}\rightarrow\lambda_{2(2n-j)-1}$,
$1\leq j\leq 2n-1$, the theta function $\theta(\xi,\Pi)$ behaves
like
\begin{eqnarray}\label{eq:theta3}
\theta(\xi,\Pi)= \exp\left(2\pi i
\beta^2(\lambda)\sum_{j=1}^{n-1}\gamma_{j}+O(1)\right),
\end{eqnarray}
where $\xi$ is given by (\ref{eq:xi1}) and $\gamma_j$ by lemma
\ref{le:periodentries3}.
\end{lemma}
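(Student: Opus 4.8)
The plan is to carry over, essentially verbatim, the strategy of the proof of Lemma~\ref{le:theta2}, modifying it only to accommodate the block structure (\ref{eq:Z3}) of the symplectic matrix $Z$ and the degeneration of the period matrix recorded in Lemma~\ref{le:periodentries3}. The starting point is again the modular transformation law (\ref{eq:modular}), which I would write as
\begin{eqnarray*}
\theta(\xi,\Pi)=\varsigma\exp\left(\pi i\,\tilde{\xi}^T\left(Z_{12}^T\tilde{\Pi}-Z_{22}^T\right)^{-1}Z_{12}^T\tilde{\xi}\right)\theta\left[{\varepsilon\atop \varepsilon^{\prime}}\right](\tilde{\xi},\tilde{\Pi}).
\end{eqnarray*}
The whole divergence must be extracted from the exponential prefactor, while $\theta\left[{\varepsilon\atop\varepsilon'}\right](\tilde{\xi},\tilde{\Pi})$ is shown to remain bounded.

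First I would fix the characteristics $\varepsilon,\varepsilon'$ by solving the same linear system $\diag(Z_{12}^TZ_{22})=Z_{22}^T\varepsilon+Z_{12}^T\varepsilon'$, $\diag(Z_{11}^TZ_{21})=Z_{21}^T\varepsilon+Z_{11}^T\varepsilon'$ as in Lemma~\ref{le:theta2}. Since the $C_{ij}$ block in (\ref{eq:Z3}) coincides with the one in (\ref{eq:cij}), the characteristics attached to the first $2n-2$ indices are unchanged (integer modulo $2$ in $\varepsilon$, odd in $\varepsilon'$ on the relevant range); the additional $\mathcal{E}$ block governing the index $2n-1$ contributes a characteristic that is again integral modulo $2$, hence, by (\ref{eq:thetachar}) and Proposition~\ref{pro:per}, irrelevant to the value of the theta function.

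The core of the argument is the asymptotic evaluation of the bilinear form $\tilde{\xi}^T(Z_{12}^T\tilde{\Pi}-Z_{22}^T)^{-1}Z_{12}^T\tilde{\xi}$. Here I would substitute the explicit matrix (\ref{eq:transmat}) together with the behaviour $\tilde{\xi}_i=\sigma_i\beta(\lambda)\gamma_i+\eta_i^{\pm}$ from Lemma~\ref{le:tildexi3}, where $\sigma_i$ vanishes for even $i$ and equals $\pm 2$ for odd $i$. Exactly as in Case~1, I would compute the determinant and the relevant minors to verify that the inverse is finite, expand it as $(Z_{12}^T\tilde{\Pi}-Z_{22}^T)^{-1}=X^0+X^{-1}+O(\gamma_i^{-2})$, and fix the leading behaviour by matching the order-one and constant terms in the identity $(Z_{12}^T\tilde{\Pi}-Z_{22}^T)^{-1}(Z_{12}^T\tilde{\Pi}-Z_{22}^T)=I$. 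The expected outcome is
\begin{eqnarray*}
\tilde{\xi}^T\left(Z_{12}^T\tilde{\Pi}-Z_{22}^T\right)^{-1}Z_{12}^T\tilde{\xi}=2\beta^2(\lambda)\sum_{j=1}^{n-1}\gamma_{j}+O(1),
\end{eqnarray*}
so that the prefactor equals $\exp\left(2\pi i\beta^2(\lambda)\sum_{j=1}^{n-1}\gamma_j+O(1)\right)$.

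Finally, to control $\theta\left[{\varepsilon\atop\varepsilon'}\right](\tilde{\xi},\tilde{\Pi})$ I would use the series (\ref{eq:thetadef}): by Lemma~\ref{le:periodentries3} every diagonal entry of $\tilde{\Pi}$ diverges, so $\rpart(2\pi i\tilde{\Pi}_{jj})\to-\infty$ for all $j$, and since $\beta(\lambda)$ is purely imaginary only the lattice point with $m_j=0$ for every $j$ survives; the sum therefore collapses to a single bounded term. Combining this with the exponential prefactor gives (\ref{eq:theta3}). I expect the main obstacle to be precisely the inversion of (\ref{eq:transmat}): the coupling row $\overrightarrow{D}_{n-1}$ and the corner entry $2\gamma_{2n-1}$ destroy the clean block cancellation available in Case~1, and because $\sigma_i=0$ for even $i$ the divergent part of $\tilde{\xi}$ lives only on the odd indices. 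One must check carefully that $X^{-1}$ and $Z_{12}^T$ conspire so that the coefficient of $\beta^2(\lambda)$ is exactly $2\sum_{j=1}^{n-1}\gamma_j$, with the contribution of the special index $2n-1$ absorbed into the $O(1)$ term rather than surviving in the leading order.
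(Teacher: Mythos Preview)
Your approach is essentially the paper's: modular transformation, compute characteristics, extract the divergence from the exponential prefactor via the inverse of $Z_{12}^T\tilde\Pi-Z_{22}^T$, and show the remaining theta collapses to a single bounded term. The one place you diverge is in the inversion step. You propose the Case~1 decomposition $X^0+X^{-1}+O(\gamma_i^{-2})$ and plan to exploit relations of the type~(\ref{eq:x0rel}). The paper instead observes that here the leading-order part of $Z_{12}^T\tilde\Pi-Z_{22}^T$ (namely the block $\bigl(\begin{smallmatrix}(I_{2n-2}-J_{2n-2})D_{n-1}&0\\ \overrightarrow{D}_{n-1}&2\gamma_{2n-1}\end{smallmatrix}\bigr)$) has \emph{trivial kernel}, in contrast to Case~1 where $(I_{2r}-J_{2r})D_r$ is singular. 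Consequently $X^0=0$, the inverse is $X^{-1}+O(\gamma_i^{-2})$, and the single relation $X^{-1}\bigl(\begin{smallmatrix}I-J&0\\0&1\end{smallmatrix}\bigr)\tilde\Pi^1=I$ (from the constant term of the identity) suffices. The bilinear form then reduces directly to $\beta^2(\lambda)\,\epsilon^T\tilde\Pi^1\epsilon+O(1)$ with $\epsilon_i=\pm 1$, which is evaluated by inspection. Your route would still get there---you would simply discover $X^0=0$ along the way---but the machinery of (\ref{eq:x0rel}) is inert in this case and the paper's shortcut is cleaner.

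One correction to your last paragraph: the index $2n-1$ is \emph{not} absorbed into $O(1)$. In the paper's computation $\epsilon_{2n-1}=-1$ and the corner $2\gamma_{2n-1}$ of $\tilde\Pi^1$ contributes to $\epsilon^T\tilde\Pi^1\epsilon$ on the same footing as the other $\gamma_j$; the final exponent involves the full sum $\sum_{j=1}^{2n-1}\log|\lambda_{2(j+1)}-\lambda_{2(2n-j)-1}|$, which is then rewritten using the conjugate symmetry $\gamma_j=\gamma_{2n-1-j}$.
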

\begin{proof}
As in section \ref{se:case1}, from (\ref{eq:modular}) we have,
\begin{eqnarray}\label{eq:hat3}
\theta(\xi,\Pi)=\varsigma\exp\left(\pi
i\tilde{\xi}^T\left(Z_{12}^T\tilde{\Pi}-Z_{22}^T\right)^{-1}Z_{12}^T\tilde{\xi}\right)\theta\left[{\varepsilon\atop
\varepsilon^{\prime}}\right](\tilde{\xi},\tilde{\Pi}),
\end{eqnarray}
where the characteristics on the right hand side are given by the
same formula as before, with $r$ replaced by $n-1$:
\begin{eqnarray*}
\varepsilon_j&=&0\quad {\rm mod} \:2, \quad j=1,\ldots,2n-1\nonumber\\
\varepsilon_{j}^{\prime}&=&\left\{
                             \begin{array}{ll}
                               1\quad {\rm mod}\: 2, & \hbox{$\quad 1\leq j\leq n-1$;} \\
                               0\quad {\rm mod} \; 2, & \hbox{otherwise.}
                             \end{array}
                           \right.
\end{eqnarray*}
Since there is no non-zero matrix $X_0$ that is independent of
$\gamma_j$ such that the leading order term of
\begin{eqnarray*}
\left(Z_{12}^T\tilde{\Pi}-Z_{22}^T\right)X_0
\end{eqnarray*}
is zero, we can write the inverse matrix
$\left(Z_{12}^T\tilde{\Pi}-Z_{22}^T\right)^{-1}$ as
\begin{eqnarray*}
\left(Z_{12}^T\tilde{\Pi}-Z_{22}^T\right)^{-1}=X^{-1}+ O(\gamma_i^{-2}),
  \quad \lambda_{2(j + 1)} \to
\lambda_{2(2n-j) - 1}.
\end{eqnarray*}
where $X^{-1}$ is a term that is of order $-1$ in the $\gamma_j$.

Then, the leading order term of the bilinear product in
(\ref{eq:hat3}) is
\begin{eqnarray}\label{eq:exfactor2}
\tilde{\xi}^T\left(Z_{12}^T\tilde{\Pi}-Z_{22}^T\right)^{-1}Z_{12}^T\tilde{\xi}&=&\beta^2(\lambda)\sigma^TD_nX^{-1}\pmatrix{
                             (I_{2n-2}-J_{2n-2})&0\cr
                             0&1}\ D_n\sigma \nonumber \\
& & + O(1), \quad
 \lambda_{2(j + 1)} \to
\lambda_{2(2n-j) - 1},\nonumber\\
\sigma_i&=&(1+(-1)^{i+1}), \quad 1\leq i\leq n-1\nonumber\\
\sigma_i&=&-(1+(-1)^{i+1}), \quad n\leq i\leq 2n-1 \\
D_n&=&\diag(\gamma_1,\gamma_2,\ldots,\gamma_2,\gamma_1,2\gamma_{2n-1}).\nonumber
\end{eqnarray}
Let $\tilde{{\Pi}}^1$ be the leading order term of $\tilde{\Pi}$:
\begin{eqnarray*}
\tilde{{\Pi}}^1=\pmatrix{D_{n-1}&\overrightarrow{D}_{n-1}^T\cr
                              \overrightarrow{D}_{n-1}&2\gamma_{2n-1}}.
\end{eqnarray*}
Equation (\ref{eq:exfactor2}) can now be rewritten as
\begin{eqnarray}\label{eq:product}
\tilde{\xi}^T\left(Z_{12}^T\tilde{\Pi}-Z_{22}^T\right)^{-1}Z_{12}^T\tilde{\xi}&=&
\beta^2(\lambda)\epsilon^T\tilde{{\Pi}}^1X^{-1}\pmatrix{
                             (I_{2n-2}-J_{2n-2})&0\cr
                             0&1}\tilde{{\Pi}}^1\epsilon \nonumber \\
 &&  +O(1), \quad
\lambda_{2(j + 1)} \to
\lambda_{2(2n-j) - 1}\nonumber\\
\epsilon_i&=&1, \quad 1\leq i\leq n-1\\
\epsilon_i&=&-1, \quad n\leq i\leq 2n-1 \nonumber
\end{eqnarray}
The constant term of
\begin{eqnarray*}
\left(Z_{12}^T\tilde{\Pi}-Z_{22}^T\right)^{-1}\left(Z_{12}^T\tilde{\Pi}-Z_{22}^T\right)=I_{2n-1}
\end{eqnarray*}
now gives
\begin{eqnarray*}
X^{-1}\pmatrix{
                             I_{2n-2}-J_{2n-2}&0\cr
                             0&1}\ \tilde{{\Pi}}^1=I_{2n-1}.
\end{eqnarray*}
By substituting this back into (\ref{eq:product}), we obtain
\begin{eqnarray*}
\pi
i\tilde{\xi}^T\left(Z_{12}^T\tilde{\Pi}-Z_{22}^T\right)^{-1}Z_{12}^T\tilde{\xi}=\sum_{j=1}^{2n-1}\log\left|\lambda_{2(j+1)}-\lambda_{2(2n-j)-1}\right|+O(1).
\end{eqnarray*}

To complete the proof, note that in this case, the theta function in
the right hand side of (\ref{eq:hat3})  becomes 1:
\begin{eqnarray*}
\lim_{\lambda_{2(j + 1)} \to \lambda_{2(2n-j) - 1}}
 \theta\left[{\varepsilon\atop
\varepsilon^{\prime}}\right](\tilde{\xi},\tilde{\Pi})= 1.
\end{eqnarray*}
Therefore, we have
\begin{eqnarray*}
\theta\left(\xi,\Pi\right)=\varsigma\exp\left(\pi
i\sum_{j=1}^{2n-1}\gamma_j+O(1)\right),\quad
\lambda_{2(j+1)}\rightarrow\lambda_{2(2n-j)-1}.
\end{eqnarray*}
This completes the proof of the lemma.
\end{proof}

Finally, by substituting (\ref{eq:theta3}) into (\ref{eq:entro}), we
find that the entropy behaves like
\begin{eqnarray*}
S(\rho_A)=-\frac13\sum_{j=1}^{2n-1}\log\left|\lambda_{2(j+1)}-\lambda_{2(2n-j)-1}\right|+O(1),
\quad \lambda_{2(j + 1)} \to
\lambda_{2(2n-j) - 1}.
\end{eqnarray*}

\subsection{Pairs of complex roots
approaching the unit circle together with one pair of real roots
approaching 1} The canonical basis used in this section is shown in
figure \ref{fig:crit3}:
 \begin{eqnarray*}
 \tilde{a}_k&=&-b_k+b_{k-1},\quad k<n-r,\quad k>n+r \quad b_0=0\\
 \tilde{b}_k&=&\sum_{j=k}^{2n-1}a_j-\sum_{j=n-r}^{n+r-1}a_j,\quad k<n-r\\
 \tilde{a}_{n-k}&=&b_{n-k}-b_{n+k-1}+\sum_{j=n-k+1}^{n+k-2}a_j,\quad k=1,\ldots, r\\
 \tilde{a}_{n+k}&=&b_{n+k}-b_{n-k-1}+\sum_{j=n-k-1}^{n+k}a_{j},\quad k=0,\ldots, r-1\\
\tilde{b}_{n-k}&=&b_{n-k}+(-1)^{r-k}\sum_{j=n+r}^{2n-1}a_j-\sum_{j=n-k}^{n+k-2}a_{j}-\sum_{j=n-r}^{n-k-1}(-1)^{n-k-j}\left(a_{j}-2b_{j}\right),\quad k=1,\ldots, r\\
\tilde{b}_{n+k}&=&b_{n+k}+(-1)^{r-k}\sum_{j=n+r}^{2n-1}a_j+\sum_{j=n-r}^{n-k-2}(-1)^{n-k-j}\left(a_{j}-2b_{j}\right),\quad
k=0,\ldots, r-1\\
\tilde{a}_{n+r}&=&b_{n-r-1}-b_{n+r}+\sum_{j=0}^{r-1}(-1)^{r-j-1}\left(2b_{n+j}+a_{n+j}-2b_{n-j-1}+a_{n-j-1}\right)\\
\tilde{b}_k&=&\sum_{j=k}^{2n-1}a_j, \quad k\geq n+r.
\end{eqnarray*}
\begin{figure}[htbp]
\begin{center}
\resizebox{10cm}{!}{\input{crit3.pstex_t}}\caption{The choice of
cycles on the hyperelliptic curve $\Lie$. The arrows denote the
orientations of the cycles and branch cuts. }\label{fig:crit3}
\end{center}
\end{figure}
In the notation of theorem \ref{thm:modular}, the two bases are related by
\begin{eqnarray}\label{eq:Z4}
\pmatrix{\tilde{A}\cr \tilde{B}}\ &=&Z\pmatrix{A\cr
B}=\pmatrix{Z_{11}&Z_{12}\cr
Z_{21}&Z_{22}}\ \pmatrix{A\cr B}\ \nonumber\\
Z_{11}&=&\pmatrix{0&0&0\cr
                  0&C_{11}&0\cr
                  0&\mathcal{T}^{32}&0}\ \nonumber\\
Z_{12}&=&\pmatrix{-C_2^{n-r-1}&0&0\cr
                  0&C_{12}&0\cr
                  \mathcal{V}^{31}&\mathcal{V}^{32}&-C_2^{n-r-1}}\ \\
Z_{21}&=&\pmatrix{C_1^{n-r-1}&0&\mathcal{U}_{13}\cr
                  0&C_{21}&\mathcal{U}_{23}\cr
                  0&0&C_1^{n-r-1}}\ \nonumber\\
Z_{22}&=&\pmatrix{0&0&0\cr
                  0&C_{22}&0\cr
                  0&0&0},\nonumber
\end{eqnarray}
where $C_{ij}$ are defined in (\ref{eq:cij}) and $C_i^{k}$ are
$k\times k$ matrix with entries defined as in (\ref{eq:ci}). All the
entries of the matrices $\mathcal{U}^{13}$ are $1$, while the
entries of $\mathcal{V}^{31}$, $\mathcal{V}^{32}$ and
$\mathcal{U}^{23}$ are defined in
\begin{eqnarray*}
\mathcal{T}_{ij}^{32}&=&\delta_{i1}(-1)^{j+1}, \quad \mathcal{T}_{i,2r-j+1}^{32}=\mathcal{T}_{ij}^{32}, \quad 1\leq j\leq r, \\
\mathcal{V}_{ij}^{31}&=&\delta_{i1}\delta_{j,n-r-1}\\
\mathcal{V}_{ij}^{32}&=&2(-1)^{j}\delta_{i1}\\
\mathcal{U}_{ij}^{23}&=&(-1)^{i+1}.
\end{eqnarray*}
Performing the same analysis as in section \ref{se:case1} we arrive at
\begin{lemma}
\label{le:periodentries4} The entries of the period matrix
$\tilde{\Pi}$ behave like
\begin{eqnarray}\label{eq:periodentries4}
\lim_{\lambda_{2(j+1)} \rightarrow \lambda_{2(2n - j) -1}}
\tilde{\Pi}_{ij}&=&\tilde{\Pi}_{ij}^{0}, \quad i\neq j\nonumber\\
\lim_{\lambda_{2(j+1)} \rightarrow \lambda_{2(2n - j) -1}}
\tilde{\Pi}_{jj}
&= & \tilde{\Pi}_{jj}^0,\quad j>n+r, \quad j<n-r\nonumber\\
\lim_{\lambda_{2(j+1)} \rightarrow \lambda_{2(2n - j) -1}}
\tilde{\Pi}_{jj} &= &\gamma_{j}+\tilde{\Pi}_{jj}^0,
\quad n-r\leq j\leq n+r \nonumber\\
\gamma_j&=&{1\over{\pi
i}}\log\left|\lambda_{2(j+1)}\rightarrow\lambda_{2(2n-j)-1}\right|,
\end{eqnarray}
where $\tilde{\Pi}_{ij}^0$ are finite.
\end{lemma}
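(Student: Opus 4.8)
The plan is to follow verbatim the degeneration analysis carried out for Lemma~\ref{le:periodentries2} in section~\ref{se:case1}, now applied to the canonical basis of figure~\ref{fig:crit3} encoded by the symplectic matrix $Z$ in~(\ref{eq:Z4}). The governing principle is the standard behaviour of period matrices on a pinching hyperelliptic curve~\cite{BBEIM}: when a pair of branch points collides, one of the normalized holomorphic differentials degenerates into a third-kind differential with a simple pole at the colliding point, and the diagonal period of the cycle threading the resulting neck diverges like the logarithm of the neck modulus. The only new feature here, compared with section~\ref{se:case1}, is that two distinct pinching mechanisms coexist: the $2r$ complex pairs pinch as in section~\ref{se:im}, while the single real pair approaching $1$ pinches as in section~\ref{se:real}.

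First I would identify the degenerating differentials. For the $2r$ complex pairs $\left(\lambda_{2(j+1)},\lambda_{2(2n-j)-1}\right)$ with $n-r\le j\le n+r-1$, exactly as in section~\ref{se:case1}, each $\d\tilde{\omega}_j$ becomes a meromorphic differential with a simple pole at $\lambda_{2(j+1)}$ of residue $\tfrac{1}{2\pi i}$. For the real pair approaching $1$, the mechanism of section~\ref{se:real} applies: by the deformation pictured in figure~\ref{fig:crit1}, integration around the cycle $\tilde{a}_{n+r}$ collapses to a residue at $z=1$, so that $\d\tilde{\omega}_{n+r}$ likewise degenerates into a third-kind differential with a simple pole above $z=1$. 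All remaining forms, for $j<n-r$ and $j>n+r$, converge to the normalized holomorphic differentials on the limiting singular curve $\Lie_0$.

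Next I would read off the periods. The $b$-cycles $\tilde{b}_i$ with $i<n-r$ or $i>n+r$ stay bounded away from every neck, so their periods $\tilde{\Pi}_{ij}=\int_{\tilde{b}_i}\d\tilde{\omega}_j$ converge to finite limits $\tilde{\Pi}^0_{ij}$; this yields both the off-diagonal limits and the finite diagonal limits. For each index $n-r\le j\le n+r$ the dual cycle $\tilde{b}_j$ threads the $j$-th neck precisely once, so $\tilde{\Pi}_{jj}=\int_{\tilde{b}_j}\d\tilde{\omega}_j$ picks up the logarithm of that neck modulus and produces $\gamma_j=\tfrac{1}{\pi i}\log\left|\lambda_{2(j+1)}-\lambda_{2(2n-j)-1}\right|$, the real-root contribution at $j=n+r$ being the one supplied by section~\ref{se:real}. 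The off-diagonal entries remain finite because each degenerate differential has its residue localized at a single neck, so integrating it over a foreign $\tilde{b}$-cycle yields no logarithm.

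\textbf{Main obstacle.} The hard part is the combinatorial bookkeeping for the cycle $\tilde{a}_{n+r}$ and its dual, whose expressions in figure~\ref{fig:crit3} mix the complex-pinching and real-pinching cycles through the correction blocks $\mathcal{T}^{32},\mathcal{V}^{31},\mathcal{V}^{32},\mathcal{U}^{13},\mathcal{U}^{23}$ of $Z$ in~(\ref{eq:Z4}). I must check that these off-diagonal blocks do not leak a logarithmic term into any off-diagonal period, i.e. that the degeneration stays diagonal even though two pinching mechanisms share one basis. As in section~\ref{se:case1}, this reduces to verifying that every meromorphic limit differential has a single-neck residue and that the correction cycles pair only with the holomorphic (hence bounded) parts of the remaining differentials; the symplectic identities satisfied by $Z$ guarantee that the intersection pattern of the $\{\tilde{a}_i,\tilde{b}_i\}$ keeps each divergent period on the diagonal.
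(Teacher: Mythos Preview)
Your proposal is correct and follows essentially the same approach as the paper, which simply states that the lemma follows by ``performing the same analysis as in section~\ref{se:case1}'' without further detail. Your account is in fact more explicit than the paper's: you correctly isolate the two coexisting pinching mechanisms (the $2r$ complex necks handled as in section~\ref{se:im} and the single real neck at $z=1$ handled as in section~\ref{se:real}), and your identification of the combinatorial bookkeeping around $\tilde{a}_{n+r}$ as the only new subtlety is accurate, though the paper leaves this entirely implicit.
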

In this case, the argument $\tilde{\xi}$ is given by the following
\begin{lemma}\label{le:tildexi4}
Let $\xi$ be given by (\ref{eq:xi1}) and $\tilde{\xi}$ be
\begin{eqnarray*}
\tilde{\xi}=\left((-Z_{12}^T\tilde{\Pi}+Z_{22}^T)^T\right)\xi,
\end{eqnarray*}
where $Z_{ij}$ are given by (\ref{eq:Z4}). Then in the limit
$\lambda_{2(j+1)}\rightarrow\lambda_{2(2n-j)-1}$ we have
\begin{eqnarray}\label{eq:tildexi4}
\tilde{\xi}_i&=&\eta_i^{\pm}, \quad i>n+r, \quad i<n-r\nonumber\\
\tilde{\xi}_i&=&\epsilon_i\beta(\lambda)\gamma_i+\eta_i^{\pm}, \quad n-r\leq i\leq n+r-1\\
\tilde{\xi}_{n+r}&=&\beta(\lambda)\gamma_{n+r}+\eta_{n+r}^{\pm}\nonumber\\
\epsilon_i&=&1, \quad i<n, \quad \epsilon_i=-1, \quad i>n-1,\nonumber
\end{eqnarray}
where $\eta_i^{\pm}$ remains finite as $\lambda_{2(j + 1)} \to
\lambda_{2(2n-j) - 1}$, $n-r\leq j\leq n+r$.
\end{lemma}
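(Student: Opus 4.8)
The plan is to reproduce, with the necessary modifications, the argument used to prove Lemma~\ref{le:tildexi2} in Section~\ref{se:case1}, now feeding in the block decomposition (\ref{eq:Z4}) and the period-matrix asymptotics of Lemma~\ref{le:periodentries4}. First I would insert those asymptotics into $Z_{12}^T\tilde{\Pi}-Z_{22}^T$ and isolate its leading singular part. Exactly as in (\ref{eq:z11}), the only diverging entries originate from the diagonal terms $\tilde{\Pi}_{ii}=\gamma_i+O(1)$ with $n-r\le i\le n+r$, so that
\begin{eqnarray*}
Z_{12}^T\tilde{\Pi}-Z_{22}^T=D+W,
\end{eqnarray*}
where $D$ collects the $\gamma_i$-proportional contributions supported on the central block of indices $n-r\le i\le n+r$, and $W$ stays finite in the limit. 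The explicit entries of $Z_{12}$ and $Z_{22}$ in (\ref{eq:Z4}) then fix the precise coefficients with which these $\gamma_i$ propagate into $\tilde{\xi}$.

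Next I would apply (\ref{eq:transxi}) and (\ref{eq:modform}) to $\xi=\beta(\lambda)\overrightarrow{e}\pm\tau/2$ and split the result into its $\beta(\lambda)\overrightarrow{e}$ part and its $\tau/2$ part. The first part, transformed by $D$, produces precisely the advertised terms $\epsilon_i\beta(\lambda)\gamma_i$ for $n-r\le i\le n+r-1$ and the isolated term $\beta(\lambda)\gamma_{n+r}$ at $i=n+r$. The sign $\epsilon_i=1$ for $i<n$ and $\epsilon_i=-1$ for $i\ge n$ arises from how the entries of $\overrightarrow{e}$ (whose last $n$ components equal $1$ and first $n-1$ components vanish) pair with the columns of $D$; the coefficient $+1$ at $i=n+r$ reflects that the single real pair — which, unlike a complex-conjugate pair, has no distinct conjugate partner — feeds only this one component, the relevant contribution being routed through the off-diagonal blocks $\mathcal{T}^{32}$, $\mathcal{V}^{31}$, $\mathcal{V}^{32}$, $\mathcal{U}^{13}$ and $\mathcal{U}^{23}$ of (\ref{eq:Z4}). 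For the remaining indices $i<n-r$ and $i>n+r$ the $D$-part is absent, and only the finite $W$-contribution survives.

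The heart of the argument is to check that the $\tau/2$ part contributes nothing beyond the finite remainder $\eta_i^{\pm}$. Here I would repeat the manipulation of Lemma~\ref{le:tildexi2}: express the Riemann constant as a sum of Abel images of branch points so that, after the modular transformation, the relevant component of the $\tau/2$ part is $\frac{\tilde{\tau}_i}{2}=\sum_{j=1}^{2n}\tilde{\omega}_i(z_j^{-1})-\sum_{j=1}^{2n}\tilde{\omega}_i(\lambda_{2j-1})$ as in (\ref{eq:tildez}), with $\tilde{\omega}$ built from the differentials $\d\tilde{\omega}_i$. As in Section~\ref{se:case1}, for $n-r\le i\le n+r$ the form $\d\tilde{\omega}_i$ develops simple poles at the merging points, so each of the two sums acquires a $\tfrac12\gamma_i$ singularity; since among the four coalescing roots exactly two are of the form $z_k^{-1}$, these singularities enter the two sums with equal weight and cancel in the difference, leaving $\eta_i^{\pm}$ finite.

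The main obstacle I anticipate is exactly this combinatorial bookkeeping. One must verify that the extra blocks $\mathcal{T}^{32}$, $\mathcal{V}^{3j}$, $\mathcal{U}^{j3}$ in (\ref{eq:Z4}) inject no spurious $\gamma$-divergence into the components $\tilde{\xi}_i$ with $i<n-r$ or $i>n+r$, and that they do not disturb the cancellation of the Abel-map singularities; one must also confirm that the real pair delivers the single coefficient $+1$ at $i=n+r$ with the correct sign, consistently with the coalescence of its two poles rather than with the split $\epsilon_i=\pm1$ pattern of the complex pairs. Once these checks are discharged, all finite pieces may be absorbed into $\eta_i^{\pm}$ and the stated formula (\ref{eq:tildexi4}) follows.
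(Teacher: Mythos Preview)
Your proposal is correct and is exactly the approach the paper takes: the paper's own proof consists of the single sentence ``The proof of this lemma follows from exactly the same type of argument as in section \ref{se:case1},'' i.e.\ precisely the adaptation of Lemma~\ref{le:tildexi2} that you outline. In fact you have already written out more of the argument than the paper does, including the key cancellation of the $\tfrac12\gamma_i$ singularities in $\tilde{\tau}_i/2$ and the bookkeeping for the extra blocks of (\ref{eq:Z4}).
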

The proof of this lemma follows from exactly the same type of
argument as in section \ref{se:case1}.

We will now compute the limit of the theta function.
\begin{lemma}\label{le:theta4}
In the limit $\lambda_{2(j+1)}\rightarrow\lambda_{2(2n-j)-1}$,
$n-r\leq j\leq n+r$, the theta function $\theta(\xi,\Pi)$ behaves
like
\begin{eqnarray}\label{eq:theta4}
\theta(\xi,\Pi)= \exp\left(2\pi
i\beta^2(\lambda)\sum_{j=1}^{n-1}\gamma_{j}+\beta^2(\lambda)\gamma_{n+r}+O(1)\right),
\end{eqnarray}
where $\xi$ is given by (\ref{eq:xi1}) and $\gamma_j$ by
(\ref{eq:periodentries4}).
\end{lemma}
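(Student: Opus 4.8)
The plan is to mirror, step for step, the argument used for Lemmas \ref{le:theta2} and \ref{le:theta3}: apply the modular transformation of theorem \ref{thm:modular}, extract a divergent exponential prefactor, and then show the residual theta function with characteristics stays bounded in the limit. First we would apply (\ref{eq:modular}) with the symplectic matrix $Z$ of (\ref{eq:Z4}) to write
\begin{eqnarray*}
\theta(\xi,\Pi)=\varsigma\exp\left(\pi i\tilde{\xi}^T\left(Z_{12}^T\tilde{\Pi}-Z_{22}^T\right)^{-1}Z_{12}^T\tilde{\xi}\right)\theta\left[{\varepsilon\atop\varepsilon^{\prime}}\right](\tilde{\xi},\tilde{\Pi}),
\end{eqnarray*}
and solve $\diag(Z_{12}^TZ_{22})=Z_{22}^T\varepsilon+Z_{12}^T\varepsilon^{\prime}$ and $\diag(Z_{11}^TZ_{21})=Z_{21}^T\varepsilon+Z_{11}^T\varepsilon^{\prime}$ for the characteristics. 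As in Case~1 we expect $\varepsilon\equiv 0$ and $\varepsilon^{\prime}_j\equiv 1$ (mod $2$) precisely on the complex indices, so that only the $2r$ complex pairs carry half-integer characteristics.

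The bulk of the work is the analysis of the exponential factor. Using Lemma \ref{le:periodentries4} we would decompose $Z_{12}^T\tilde{\Pi}-Z_{22}^T$ into a matrix whose diagonal entries $\gamma_i$, $n-r\leq i\leq n+r$, diverge, plus a bounded remainder $W$, exactly as in (\ref{eq:z11}) and (\ref{eq:transmat}). The structural claim to verify is that $\det\left(Z_{12}^T\tilde{\Pi}-Z_{22}^T\right)$ grows like the product of the $\gamma_i$ over the divergent indices, so that $\left(Z_{12}^T\tilde{\Pi}-Z_{22}^T\right)^{-1}=X^{-1}+O(\gamma_i^{-2})$ with $X^{-1}$ of order $\gamma_i^{-1}$. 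Feeding in the asymptotics of $\tilde{\xi}$ from Lemma \ref{le:tildexi4}, the complex indices contribute through the $\left(I_{2r}-J_{2r}\right)$ block exactly as in the proof of Lemma \ref{le:theta2}, producing the bilinear value $2\beta^2(\lambda)\sum_{j=1}^{n-1}\gamma_j$, while the single index $n+r$ attached to the real pair near $1$ (whose coordinate $\tilde{\xi}_{n+r}=\beta(\lambda)\gamma_{n+r}+\eta^{\pm}_{n+r}$ carries coefficient $1$ rather than $\pm$) contributes the extra diagonal term $\beta^2(\lambda)\gamma_{n+r}$ displayed in (\ref{eq:theta4}), by the same mechanism as in the proof of Lemma \ref{le:theta1}.

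To finish, we would argue that the residual theta with characteristics stays bounded: expanding $\theta\left[{\varepsilon\atop\varepsilon^{\prime}}\right](\tilde{\xi},\tilde{\Pi})$ by (\ref{eq:thetadef}), the divergent diagonal entries force $\rpart(2\pi i\tilde{\Pi}_{jj})\to-\infty$, and since $\beta(\lambda)$ is purely imaginary only the lattice vectors with vanishing entries on the divergent indices survive, reducing the sum to a lower-dimensional theta of finite argument and period matrix. Combining this with the exponential estimate then yields (\ref{eq:theta4}). We expect the main obstacle to be the bilinear-form computation in this mixed setting: unlike Cases~1 and~2, here the two degeneration mechanisms interact through the finite off-diagonal blocks $\mathcal{T}^{32}$, $\mathcal{V}^{31}$, $\mathcal{V}^{32}$, $\mathcal{U}^{13}$ and $\mathcal{U}^{23}$ of $Z$ in (\ref{eq:Z4}), and one must check that these do not disturb the leading-order annihilation identities (the analogues of $X^0(\cdots)=0$ and $\mathcal{P}_{l,j}=\mathcal{P}_{2n-l-1,j}+\delta_{l,j}+\delta_{2n-l-1,j}$ from Case~1) that force the complex pairs to enter with coefficient $2\pi i$ and the real pair with the single factor shown in (\ref{eq:theta4}).
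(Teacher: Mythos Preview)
Your plan is essentially the paper's own argument, and the outline is sound. One point needs correcting. You write that the determinant growth forces $\left(Z_{12}^T\tilde{\Pi}-Z_{22}^T\right)^{-1}=X^{-1}+O(\gamma_i^{-2})$ with $X^{-1}$ of order $\gamma_i^{-1}$, i.e.\ no $O(1)$ part. That holds in Case~2 because there the divergent block has full rank, but in the mixed case the divergent block is
\[
\begin{pmatrix} 0_{n-r-1}&0&0&0\\ 0&(I_{2r}-J_{2r})D_r&0&0\\ 0&0&\gamma_{n+r}&0\\ 0&0&0&0_{n-r-1}\end{pmatrix}+W,
\]
and $(I_{2r}-J_{2r})D_r$ has rank only $r$. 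Consequently the determinant is of order $\gamma_{n+r}\prod_{k=n-r}^{n-1}\gamma_k$ (not the product over all $2r+1$ divergent indices), and the inverse carries a genuine $X^0$ term, $\left(Z_{12}^T\tilde{\Pi}-Z_{22}^T\right)^{-1}=X^0+X^{-1}+O(\gamma_i^{-2})$, exactly as in Case~1. The paper records that this $X^0$ satisfies the same symmetry relations (\ref{eq:x0rel}) together with $(X^0)_{n+r,n+r}=0$ and $(X^{-1})_{n+r,n+r}=\gamma_{n+r}^{-1}$; these are precisely the ``annihilation identities'' you correctly flag at the end as the key check. So your final paragraph has the right mechanism, but your intermediate formula for the inverse should be amended to include $X^0$, after which the bilinear computation runs exactly as you describe: the $(I_{2r}-J_{2r})$ block reproduces the Case~1 contribution $2\beta^2(\lambda)\sum_{j=n-r}^{n-1}\gamma_j$, and the isolated $(n+r)$-entry gives the additional $\beta^2(\lambda)\gamma_{n+r}$.
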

\begin{proof}
  The characteristics in the theta function in (\ref{eq:modular}) are
  once more given by (\ref{eq:char}). The matrix
  $Z_{12}^T\tilde{\Pi}-Z_{22}^T$ can now be written as
\begin{eqnarray*}
Z_{12}^T\tilde{\Pi}-Z_{22}^T&=&\pmatrix{0_{n-r-1}&0&0&0\cr
                             0&(I_{2r}-J_{2r})D_r&0&0\cr
                             0&0&\gamma_{n+r}&0\cr
                             0&0&0&0_{n-r-1}}\ +W\nonumber\\
D_r&=&\diag (\gamma_{n-r},\gamma_{n-r+1},\ldots, \gamma_{n-r+1},
\gamma_{n-r}),
\end{eqnarray*}
where $W$ is finite in the limit and $0_{n-r-1}$ is the zero matrix
of dimension $n-r-1$.

As in section \ref{se:case1}, by performing rows and columns
operations on the matrix $Z_{12}^T\tilde{\Pi}-Z_{22}^T$, we see that
the determinants has the following asymptotic behavior:
\begin{eqnarray*}
\det\left(Z_{12}^T\tilde{\Pi}-Z_{22}^T\right)&=&\gamma_{n+r}\mathcal{D}_{r}+O(\gamma_i^r),
\quad \lambda_{2(j + 1)} \to
\lambda_{2(2n-j) - 1},\nonumber\\
\mathcal{D}_{r}&=&\mathcal{W}^{\prime}\prod_{k=n-r}^{n-1}\gamma_{k},
\end{eqnarray*}
where the notation  $O(\gamma_i^r)$ was defined in equation~(\ref{eq:nbigo})
 and  $\mathcal{W}^{\prime}$ is some constant.

The inverse matrix $\left(Z_{12}^T\tilde{\Pi}-Z_{22}^T\right)^{-1}$
can now be written as in (\ref{eq:z11inverse}):
\begin{eqnarray*}
\left(Z_{12}^T\tilde{\Pi}-Z_{22}^T\right)^{-1}&=&X^0+X^{-1}+O(\gamma_i^{-2}),
\quad \lambda_{2(j + 1)} \to
\lambda_{2(2n-j) - 1},
\end{eqnarray*}
where the entries of the $2r$ dimensional matrix $X^0$ satisfy
(\ref{eq:x0rel}) with $(X^0)_{n+r,n+r}=0$, and $X^{-1}$ is a matrix
of order $-1$ in the $\gamma_j$ with
$(X^{-1})_{n+r,n+r}=\gamma_{n+r}^{-1}$.

Following exactly the same analysis in section \ref{se:im}, we see
that the leading order term in the exponential factor in
(\ref{eq:mod}) is
\begin{eqnarray*}
\tilde{\xi}^T\left(Z_{12}^T\tilde{\Pi}-Z_{22}^T\right)^{-1}Z_{12}^T\tilde{\xi}=
\beta^2(\lambda)\left(2\sum_{j=n-r}^{n-1}\gamma_{j}+\gamma_{n+r}\right)+O(1).
\end{eqnarray*}
We now look at the term $\theta\left(\tilde{\xi},\tilde{\Pi}\right)$
in (\ref{eq:modular}). As in section \ref{se:case1}, we see that the
theta function becomes $2n-2r-2$ dimensional:
\begin{eqnarray*}
\lim_{\lambda_{2(j + 1)} \to \lambda_{2(2n-j) - 1}}
\theta\left[{\varepsilon\atop
\varepsilon^{\prime}}\right](\tilde{\xi},\tilde{\Pi})
=\theta(\tilde{\xi}^0, \tilde{\Pi}^0),
\end{eqnarray*}
where the arguments on the right hand side are obtained from
removing the $(n-r)^{th}$ up to the $(n+r-1)^{th}$ entries.

Therefore the theta function $\theta\left(\xi,\Pi\right)$ behaves
like
\begin{eqnarray*}
\theta\left(\xi,\Pi\right)=\varsigma\exp\left(2\pi
i\beta^2(\lambda)\left(2\sum_{j=n-r}^{n-1}\gamma_{j}+\gamma_{n+r}\right)+O(1)\right)\theta(\tilde{\xi}^0,
\tilde{\Pi}^0).
\end{eqnarray*}
This completes the proof of the lemma.
\end{proof}

By substituting (\ref{eq:theta4}) into (\ref{eq:entro}), we see that
the entropy is asymptotic to
\begin{eqnarray*}
S(\rho_A)&=&  -\frac13 \sum_{j=n-r}^{n-1}\log\left|\lambda_{2(j+1)}-\lambda_{2(2n-j)-1}\right|-\frac{1}{6}\log\left|\lambda_{2(n-r)}-\lambda_{2(n+r)+1}\right|
\\
&& +O(1), \quad \lambda_{2(j + 1)} \to
\lambda_{2(2n-j) - 1}.
\end{eqnarray*}
This concludes the proof of theorem \ref{thm:crit}.

\appendix

\addcontentsline{toc}{section}{Appendix A. The density matrix of a subchain}
\section*{Appendix A. The density matrix of a subchain}

\renewcommand{\theequation}{A.\arabic{equation}}
\setcounter{equation}{0}

Let $\{\ket{\psi_j}\}$ be a basis of the Hilbert space
$\mathcal{H}$ of a system composed of two parts, A and B, so that
$\mathcal{H}=\mathcal{H}_{\mathrm{A}} \otimes
\mathcal{H}_{\mathrm{B}}$.  The density matrix of a statistical
ensemble expressed in the basis $\{\ket{\psi_j}\}$ is a positive
Hermitian matrix given by
\[
\rho_{\mathrm{AB}} = \sum_{jk}
c_{jk}\ket{\psi_j}\!\bra{\psi_k},
\]
with the condition $\trace_{\mathrm{AB}} \rho_{\mathrm{AB}} =1$.
Let us introduce the operators $S(j,k)$ and $\overline{S}(j,k)$
defined by the relations
\begin{eqnarray*}
 S(j,k) & = & \ket{\psi_j}\!\bra{\psi_k} \nonumber \\
 \overline{S}(j,k)S(k,l) = \delta_{jl}\ket{\psi_j}\!\bra{\psi_j}
 \quad &\mathrm{and}& \quad S(j,k)\overline{S}(k,l) =
 \delta_{jl}\ket{\psi_j}\!\bra{\psi_j}.
 \end{eqnarray*}
 (In this formula repeated indices are not summed over.)  Clearly,
 we have
\[
c_{jk} = \trace_{\mathrm{AB}}\left[\rho_{\mathrm{AB}}
\,\overline{S}(k,j)\right].
\]

Let us now suppose that the Hamiltonian of our physical system
is~(\ref{impH}) and that the subsystem P is composed of the first
$L$ oscillators. Then a set of operators $S(j,k)$ for the subchain
P can be generated by products of the type $\prod_{j=1}^LG_j$,
where $G_j$ can be any of the operators
$\{c_j,c^\dagger_j,c^\dagger_jc_j,c_jc_j^\dagger \}$ and the
$c_j$s are Fermi operators that span $\mathcal{H}_{\mathrm{A}}$;
it is straightforward to check that $\overline{S}(k,j) =
\left(\prod_{j=1}^L G_j\right)^\dagger$. We then have
\begin{eqnarray*}
  \rho_{\mathrm{A}} & = & \sum_{\mathrm{All \; the}
  \;S(l,k)}\trace_{{\rm P}}\left[\rho_{\mathrm{A}}\left(\prod_{j=1}^L
    G_j\right)^\dagger\right]\prod_{j=1}^L G_j \\
    &  = &  \sum_{\mathrm{All \; the} \;S(l,k)}\trace_{{\rm
    P}}\left[\trace_{\mathrm{B}}\left(\rho_{\mathrm{AB}}\right)\left(\prod_{j=1}^L
     G_j\right)^\dagger\right]\prod_{j=1}^L G_j \\
     & = & \sum_{\mathrm{All\; the}\; S(l,k)}\trace_{{\rm
         PQ}}\left[\rho_{\mathrm{AB}}\left(\prod_{j=1}^L
         G_j\right)^\dagger\right]\prod_{j=1}^L G_j.
\end{eqnarray*}
Since $\rho_{\mathrm{AB}}=\gsk\! \gsb$, this expression
simply reduces to
\[
\rho_{\mathrm{A}}=\sum_{\mathrm{All \; the} \; S(l,k)}\left \langle
\boldsymbol{\Psi}_{\mathrm{g}} \right | \left(\prod_{j=1}^L
G_j\right)^\dagger \left |\boldsymbol{\Psi}_{\mathrm{g}} \right
\rangle \prod_{j=1}^L G_j.
\]
The correlation functions in the above sum can be computed using
Wick's theorem~(\ref{Wick-Th}). Finally, if the correlations of the
$c_j$s are given by~(\ref{od_exv}) and~(\ref{ev_ex_val}), we
immediately obtain formula~(\ref{rop}).

\addcontentsline{toc}{section}{Appendix B. The correlation matrix $C_M$}
\section*{Appendix B. The correlation matrix $C_M$}

\renewcommand{\theequation}{B.\arabic{equation}}
\setcounter{equation}{0}
\label{corrmatap} The purpose of this appendix is to provide an
explicit derivation of the expectation values
\begin{equation}
\label{objective}
\gsb m_jm_k \gsk
\end{equation}
when the dynamics is determined by the Hamiltonian~(\ref{impH}).

First, we need to diagonalize $H_{\alpha}$, which is achieved by
finding  a linear transformation of the operators $b_j$ of the
form
\begin{equation}
\label{lintras}
\eta_k = \sum_{j=0}^{M-1}\left(g_{kj} b_j + h_{kj}
b_j^\dagger\right),
\end{equation}
 such that the Hamiltonian~(\ref{impH}) becomes
\begin{equation}
\label{diagH}
H_{\alpha} = \sum_{k=0}^{M-1}\abs{\Lambda_k}\,
\eta_k^\dagger\eta_k + C,
\end{equation}
where the coefficients $g_{kj}$ and $h_{kj}$ are real, the
$\eta_k$s are Fermi operators and $C$ is a constant.
The quadratic form~(\ref{impH}) can be transformed into~(\ref{diagH})
by~(\ref{lintras}) if the system of equations
\begin{equation}
\label{eigeq1}
 [\eta_k, H_{\alpha}] - \abs{\Lambda_k} \eta_k =0, \quad
k=0,\ldots, M-1
\end{equation}
has a solution.  Substituting~(\ref{impH}) and~(\ref{lintras})
into~(\ref{eigeq1}) we obtain the eigenvalue equations
\begin{eqnarray}
\label{eigeq2}
\abs{\Lambda_k} g_{kj} & = &\sum_{l=0}^{M-1}\left(g_{kl}\bA_{lj} -
h_{kl}\bB_{lj}\right), \nonumber \\
\abs{\Lambda_k} h_{kj} & = &\sum_{l=0}^{M-1}\left(g_{kl}\bB_{lj} -
h_{kl}\bA_{lj}\right),
\end{eqnarray}
where $\bA = \alpha A - 2 I$ and $\bB =\alpha \gamma B$. These
equations can be simplified by setting
\begin{eqnarray}
\label{subst}
\phi_{kj}& = &g_{kj} + h_{kj}  \nonumber \\
\psi_{kj} & = & g_{kj}-h_{kj},
\end{eqnarray}
in terms of which the equations~(\ref{eigeq2}) become
\begin{eqnarray}
\label{phi1}
(\bA+\bB)\vphi_k &= &\abs{\Lambda_k} \vpsi_k \\
\label{psi1}
(\bA - \bB)\vpsi_k &= &\abs{\Lambda_k}\vphi_k.
\end{eqnarray}
Combining these two expressions, we obtain
\begin{eqnarray}
\label{phi2n}
(\bA-\bB)(\bA+\bB)\vphi_k & = \abs{\Lambda_k}^2\vphi_k \\
\label{psi2n} (\bA+\bB)(\bA-\bB)\vpsi_k  & =
\abs{\Lambda_k}^2\vpsi_k.
\end{eqnarray}
When $\Lambda_k \neq 0$, $\vphi_k$ and $\abs{\Lambda_k}$ can be
determined by solving the eigenvalue equation~(\ref{phi2n}), then
$\vpsi_k$ can be computed using~(\ref{phi1}). Alternatively, one
can solve equation~(\ref{psi2n}) and then obtain $\vphi_k$
from~(\ref{psi1}). When $\Lambda_k=0$, $\vphi_k$ and $\vpsi_k$
differ at most by a sign and can be deduced directly either
from~(\ref{phi1}) and~(\ref{psi1})or from~(\ref{phi2n}) and~(\ref{psi2n}).

Since $\bA$ and $\bB$ are real, the matrices $(\bA-\bB)(\bA +\bB)$
and $(\bA+\bB)(\bA-\bB)$ are symmetric and positive, which
guarantees that all of their eigenvalues are positive.
Furthermore, the $\vphi_k$s and $\vpsi_k$s can be chosen to be
real and orthonormal. As a consequence the coefficients $g_{kj}$
and $h_{kj}$ obey the constraints
\begin{eqnarray}
\sum_{k=0}^{M-1}\left(g_{kj}g_{kl} + h_{kj}h_{kl}\right) &=&
\delta_{jl}, \\
\sum_{k=0}^{M-1}\left(g_{kj}h_{kl} + h_{kj}g_{kl}\right) &=&  0,
\end{eqnarray}
which are necessary and sufficient conditions for the $\eta_k$s to
be Fermi operators.

The constant in equation~(\ref{diagH}) can be computed by taking
the trace of $H_{\alpha}$ using the two expressions~(\ref{impH})
and~(\ref{diagH}):
\[
\trace H_{\alpha}= 2^{M-1}\sum_{k=0}^{M-1} \left(\alpha A_{kk} -
2\right) = 2^{M-1}\sum_{k=0}^{M-1} \abs{\Lambda_k} + 2^M C.
\]
Therefore, we have
\[
C = \frac{1}{2}\sum_{k=0}^{M-1}\left(\alpha A_{kk} - 2 -
\abs{\Lambda_k}\right).
\]

We are now in a position to compute the contraction
pair~(\ref{objective}).  Substituting~(\ref{subst})
into~(\ref{lintras}) we have
\begin{equation}
\label{neta}
\eta_k = \frac{1}{2}\sum_{j=0}^{M-1}
\left(\phi_{kj}m_{2j + 1} - i\psi_{kj}m_{2j}\right).
\end{equation}
Since the $\vphi_k$'s and $\vpsi_k$'s are two sets of real and
orthogonal vectors,~(\ref{neta}) can be inverted to give
\begin{eqnarray}
\label{mexp1}
m_{2j} & = &i \sum_{k=0}^{M-1} \psi_{kj}\left(\eta_k -
\eta_k^\dagger\right) \\
\label{mexp2}
m_{2j + 1} &=&\sum_{k=0}^{M-1} \phi_{kj}\left(\eta_k +
\eta_k^\dagger\right).
\end{eqnarray}
Since the vacuum state of the operators $\eta_k$ coincides with
$\gsk$, the expectation values~(\ref{objective}) are easily
computed from the expressions~(\ref{mexp1}) and~(\ref{mexp2}). We have
\begin{eqnarray}
\label{coelc1}
\gsb m_{2j}m_{2k} \gsk & =&
\sum_{l=0}^{M-1}\psi_{lj}\psi_{lk}=\delta_{jk}, \\
\label{coelc2}
\gsb m_{2j+1}m_{2k+1} \gsk & =&
\sum_{l=0}^{M-1}\phi_{lj}\phi_{lk}=\delta_{jk}
\end{eqnarray}
and
\begin{eqnarray}
\label{coelc3}
\gsb m_{2j}m_{2k + 1} \gsk &=& i
\sum_{l=0}^{M-1}\psi_{lj}\phi_{lk},  \\
\label{coelc4}
\gsb
m_{2j+1}m_{2k} \gsk & = &-i \sum_{l=0}^{M-1}\psi_{lk}\phi_{lj}.
\end{eqnarray}
Finally, by introducing the real $M \times M$ matrix
\begin{equation}
\label{Tmat}
\left(T_M\right)_{jk} = \sum_{l=0}^{M-1}\psi_{l
j}\phi_{lk}, \quad j,k=0,\ldots, M-1
\end{equation}
and combining the expressions~(\ref{coelc1}), (\ref{coelc2}),
(\ref{coelc3}) and~(\ref{coelc4}) we obtain
\begin{equation}
\gsb m_jm_k \gsk = \delta_{jk} + i (C_M)_{jk},
\end{equation}
where the matrix $C_M$ has the block structure
\begin{equation}
  \label{corrmat2}
C_M = \pmatrix{ C_{11} & C_{12} & \cdots & C_{1M} \cr
                      C_{21} & C_{22} & \cdots & C_{2M} \cr
                      \cdots & \cdots &\cdots & \cdots \cr
                      C_{M1} & C_{M2} & \cdots & C_{MM}}
\end{equation}
with
\begin{equation}
\label{tmatbl}
C_{jk} = \pmatrix{ 0 & (T_M)_{jk} \cr
                         -(T_M)_{kj} & 0.}
\end{equation}
We call $C_M$ the correlation matrix.  It is worth noting that
because of the definition~(\ref{Tmat}), the matrix $T_M$ contains
all of the physical information relating to the ground state of
$H_{\alpha}$.

\addcontentsline{toc}{section}{Appendix C. Thermodynamic limit of the
  correlation matrix $C_M$ }
\section*{Appendix C.  Thermodynamic limit of the correlation matrix
  $C_M$ }

\renewcommand{\theequation}{C.\arabic{equation}}
\setcounter{equation}{0}

In this appendix we prove the following
\begin{lemma}
Let $H_\alpha$ be the Hamiltonian~(\ref{impH}) and consider the
correlation matrix ~(\ref{corrmat2}) associated to $H_\alpha$.
We have
\begin{equation}
  \label{eq:proof_lemmf}
  \lim_{M \to \infty} C_M = T_{\infty}[\Phi],
\end{equation}
where $T_{\infty}[\Phi]$ is the semi-infinite block-Toeplitz matrix
with symbol
\[
\Phi = \pmatrix{0 & g\left(e^{i\theta}\right) \cr
                  -g^{-1}\left(e^{i\theta}\right) & 0 },
\]
where the function $g(z)$ is defined in~(\ref{eq:new_g}).
\end{lemma}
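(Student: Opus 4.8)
The plan is to reduce the lemma to a single scalar statement about the overlap matrix $T_M$ of~(\ref{Tmat}) and then to diagonalise everything by the discrete Fourier transform. By~(\ref{corrmat2}) and~(\ref{tmatbl}) the correlation matrix is assembled block by block from the entries $(T_M)_{jk}$ and $(T_M)_{kj}$, whereas the $(j,k)$ block of $T_\infty[\Phi]$ is the $(j-k)$-th Fourier coefficient of the symbol $\Phi$, whose upper-right entry is $g_{j-k}$ and whose lower-left entry is $-(g^{-1})_{j-k}$. Since $g(1/z)=g^{-1}(z)$ gives $(g^{-1})_{l}=g_{-l}$, the upper-right and lower-left comparisons collapse to the single claim that, for every fixed pair $j,k$, the entry $(T_M)_{jk}$ converges to the corresponding Fourier coefficient of $g$ as $M\to\infty$. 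Throughout I work in the non-critical regime, in which the roots of $p$ avoid the unit circle, so that $q(z)$ has no zeros on $|z|=1$ and $g$ is smooth there.

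The decisive structural observation is that, because the interaction is translation invariant and the boundary conditions are cyclic, the matrices $\bA=\alpha A-2I$ and $\bB=\alpha\gamma B$ are circulant; hence so is $R:=\bA+\bB$, with $R_{jk}=a(j-k)+\gamma b(j-k)$ by~(\ref{eq:identific}). The first step is to identify $T_M$ with the orthogonal factor of the polar decomposition of $R$. Collecting the real orthonormal eigenvectors into orthogonal matrices $U=[\vphi_0\,|\,\vphi_1\,|\cdots]$ and $W=[\vpsi_0\,|\,\vpsi_1\,|\cdots]$, equations~(\ref{phi1}) and~(\ref{psi1}) become $RU=WD$ and $R^{T}W=UD$ with $D=\diag(|\Lambda_{k}|)$; these are exactly the defining relations of a singular value decomposition $R=WDU^{T}$. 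A direct calculation from~(\ref{Tmat}) gives $(T_M)_{jk}=(WU^{T})_{jk}$, so that
\[
T_M = WU^{T} = R\,(R^{T}R)^{-1/2},
\]
where $R^{T}R=(\bA-\bB)(\bA+\bB)$ is the matrix of~(\ref{phi2n}). The inverse square root is legitimate because in the non-critical regime $|\Lambda_k|=|q(e^{i\theta_k})|\ge\min_{\theta}|q(e^{i\theta})|=:c>0$ uniformly in $M$, so $R$ is invertible with no zero modes.

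The second step diagonalises $T_M$. Being a function of circulants, $T_M$ is itself circulant and is diagonalised by the Fourier modes $\theta_k=2\pi k/M$. Using~(\ref{eq:new_g2}) together with the symmetries $a(-j)=a(j)$ and $b(-j)=-b(j)$, one checks that the eigenvalue of $R$ on the $k$-th mode equals $q(e^{i\theta_k})$; since $q(1/z)=\overline{q(z)}$ on $|z|=1$, the eigenvalue of $R^{T}R$ is $|q(e^{i\theta_k})|^2$, and therefore the eigenvalue of $T_M$ is
\[
\frac{q(e^{i\theta_k})}{|q(e^{i\theta_k})|}=\sqrt{\frac{q(e^{i\theta_k})}{q(e^{-i\theta_k})}}=g(e^{i\theta_k}),
\]
with the branch pinned by $g(\infty)>0$ as in~(\ref{eq:new_g}). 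Consequently every entry of $T_M$ is a discrete Fourier coefficient of the single function $g$, of the form $\frac1M\sum_{l=0}^{M-1}g(e^{i\theta_l})e^{\pm i(j-k)\theta_l}$. For $j,k$ fixed this is the equispaced (trapezoidal) quadrature of a smooth periodic integrand, hence it converges as $M\to\infty$ to the corresponding Fourier coefficient of $g$. Matching this against the block of $T_\infty[\Phi]$ computed above yields $\lim_{M\to\infty}(C_M)_{jk}=\Phi_{j-k}$ for each fixed $j,k$, which is~(\ref{eq:proof_lemmf}).

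I expect the genuine work to lie not in the convergence but in the orientation bookkeeping of the second step: verifying that the polar factor of the \emph{non-symmetric} circulant $R$ carries the symbol $g$ rather than $g^{-1}$, and that the sign of the index $j-k$ in the limiting Fourier coefficient agrees with the conventions~(\ref{eq:order}) and $g(\infty)>0$ that fix $\Phi$. A secondary, more technical point is the uniform gap $|q(e^{i\theta})|\ge c>0$: it is what makes the polar decomposition exist for every $M$ and what renders the quadrature error uniformly small, and precisely this hypothesis degenerates at a phase transition, where the limit is no longer a bounded Toeplitz operator.
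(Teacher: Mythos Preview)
Your proof is correct and follows essentially the same route as the paper's Appendix~C: both exploit the circulant structure to diagonalise by the discrete Fourier transform, identify the eigenvalues of $T_M$ as $q(e^{i\theta_k})/|q(e^{i\theta_k})|=g(e^{i\theta_k})$, and pass to the limit via a Riemann sum. Your framing of $T_M$ as the orthogonal polar factor $R(R^TR)^{-1/2}$ of $R=\bA+\bB$ is a clean repackaging of the singular value decomposition that the paper carries out componentwise with the complex Fourier eigenvectors, but the underlying computation is identical.
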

\begin{proof}
From the definitions~(\ref{eq:new_g}) and~(\ref{eq:new_g2}) we have that
\[
g\left(e^{-i\theta}\right) = \overline{g\left(e^{i\theta}\right)} =
g^{-1}\left(e^{i\theta}\right).
\]
Thus, from equation~(\ref{tmatbl}) it suffices to show that
\begin{equation}
  \label{Tmatlimit}
 \lim_{M\to \infty}  (T_M)_{jk} =  \frac{1}{2\pi}\int_0^{2\pi}
   g\left(e^{i\theta}\right)e^{-i(j-k)\theta}\d\theta,
\end{equation}
where $g(z)$ is defined in~(\ref{eq:new_g}).

The first step consists in determining the vectors $\vphi_k$ and
$\vpsi_k$, and the numbers $\Lambda_k$ via the  eigenvalue
equations~(\ref{phi1}), (\ref{psi1}), (\ref{phi2n}) and~(\ref{psi2n}).
If we use the definitions~(\ref{eq:identific}), we can write
\[
(\bA + \bB)_{jk} = a(j-k) + \gamma b(j-k) \quad \mathrm{and} \quad
(\bA - \bB)_{jk} = a(j-k) - \gamma b(j-k).
\]
Two arbitrary circulant matrices commute and a common set of normalised
eigenvectors is given by
\begin{equation}
  \label{eq:eig_phi}
  \psi_{kj} = \frac{\exp\left(\frac{2\pi i jk}{M}\right)}{\sqrt{M}}, \quad
  j,k=0,\ldots, M-1,
\end{equation}
where the index $j$ labels the component of the $k$-th eigenvector.
As a consequence, the $\vpsi_k$ are a set of common eigenvectors of
both $(\bA + \bB)(\bA - \bB)$ and $(\bA - \bB)$.  Now, combining
equations~(\ref{psi1}) and~(\ref{psi2n}) we can write
\begin{equation}
  \label{eq:eig_eqf}
  \sum_{l=0}^{M-1}\left[a(j-l) -
    \gamma b(j-l)\right]\psi_{kl}= \Lambda_k \psi_{kj} =
  \abs{\Lambda_k'}\phi_{kj},
\end{equation}
with $\vphi_{k}= \vpsi_{k} \Lambda'_{k}/\abs{\Lambda_k}$. Because both
$\vphi_k$ and $\vpsi_k$ are normalized, $\Lambda_k'/\abs{\Lambda_k}$
must be a complex number with modulo one and we can set
$\Lambda_k'=\Lambda_k$.  The eigenvalues $\Lambda_k$ can be computed
by directly substituting the eigenvectors~(\ref{eq:eig_phi})
into the left-hand side of ~(\ref{eq:eig_eqf}) and using the parity
properties of the functions $a(j)$ and $b(j)$. We obtain
\begin{equation}
  \label{eq:eig_lmbk}
  \Lambda_k = \left \{ \begin{array}{ll}
                        \sum_{j=-(M-1)/2}^{(M-1)/2}\left(a(j) - \gamma
                        b(j)\right)e^{i kj}  & \mathrm{if} \; M \;
                        \mathrm{is \; odd}  \\
                        \sum_{j=-M/2-1}^{M/2 - 1}\left(a(j) - \gamma
                        b(j)\right)e^{i kj}  + (-1)^la(M/2)
                         & \mathrm{if} \; M \; \mathrm{is \; even,}
                       \end{array}
               \right.
\end{equation}
where $k$ does not denote an integer but the wave number
\[
k = \frac{2\pi l}{M}, \quad l=0,\ldots, M-1.
\]

We now define the matrix
\begin{equation}
  \label{eq:newTm}
  (T_M)_{jk}= \sum_{l=0}^{M-1}\overline{\psi}_{lj}\phi_{lk}.
\end{equation}
Note that for convenience we have used the
complex eigenvectors~(\ref{eq:eig_phi}), while  the
matrix~(\ref{Tmat}) is defined in terms of the \textit{real}
eigenvectors of $(\bA - \bB)(\bA + \bB)$ and $(\bA + \bB)(\bA - \bB)$.
However, these are related by the transformations
\[
\vphi_k \mapsto U\vphi_k \quad \mathrm{and} \quad \vpsi_k \mapsto
U\vpsi_k
\]
with the same unitary matrix $U$.  This mapping leaves the right-hand
side of equation~(\ref{eq:newTm}) unchanged.  Therefore, the two
matrices~(\ref{Tmat}) and (\ref{eq:newTm}) coincide.

The matrix~(\ref{eq:newTm}) now becomes
\begin{equation}
  \label{vgcu}
  (T_M)_{jl}=  \frac{1}{2\pi}\sum_{k=0}^{2\pi\left(1 -
      1/M\right)} \frac{\Lambda_k}{\abs{\Lambda_k}}
   e^{-i k (j -l)} \Delta k.
\end{equation}
For $M$ large enough there exists an integer $n < M$ such that
\[
a(j) = b(j) =0 \quad \mathrm{for} \quad j > n.
\]
Therefore,
\[
\lim_{M \to \infty} \Lambda_{k(M)} = q\left(e^{i\theta}\right) = \sum_{j=-n}^n
\left(a(j) - \gamma b(j)\right)e^{ij\theta}.
\]
By taking the limit as $M \to \infty$ of the left-hand side of
equation~(\ref{vgcu}) we obtain~(\ref{Tmatlimit}).
\end{proof}

\addcontentsline{toc}{section}{Appendix D. The Riemann constant $K$}
\section*{Appendix D.  The Riemann constant $K$}
\renewcommand{\theequation}{D.\arabic{equation}}
\setcounter{equation}{0}

In this appendix we will show that the Riemann constant $K$ is given
by
\begin{eqnarray*}
K=-\sum_{j=2}^{2n}\omega(\lambda_{2i-1}).
\end{eqnarray*}
As in \cite{FK}, let $Q_1,\ldots, Q_{2n-1}$ be the zeros of the
theta function $\theta(\omega(z))$. Then the function
\begin{eqnarray*}
\theta(\omega(z)-\sum_{j=1}\omega(Q_j)-K)
\end{eqnarray*}
has the same zeros as $\theta(\omega(z))$. Therefore, the  quotient of
these two functions can be written as an Abelian integral of a
holomorphic 1-form $\nu$:
\begin{eqnarray*}
{{\theta(\omega(z)-\sum_{j=1}\omega(Q_j)-K)\over\theta(\omega(z))}}=\int^z\nu.
\end{eqnarray*}
Moreover, all the $a$-periods of $\nu$ must vanish. Thus, the
right hand side of the above equation is in fact a constant $C$:
\begin{eqnarray*}
{{\theta(\omega(z)-\sum_{j=1}\omega(Q_j)-K)\over\theta(\omega(z))}}=C.
\end{eqnarray*}
Therefore, we have
\begin{eqnarray*}
\sum_{j=1}\omega(Q_j)=-K.
\end{eqnarray*}
We will now compute the values of $\omega(\lambda_i)$ in the basis
$a_1,\ldots, a_{2n-1}, b_{1},\ldots,b_{2n-1}$ and show that the
$2n-1$ points $\lambda_3,\ldots,\lambda_{4n-1}$ are the zeros of
$\theta(\omega(z))$. We have
\begin{eqnarray*}
\omega_j(\lambda_{2k+1})&=&{1\over 2}\Pi_{j,k}, \quad 0<j<k\leq 2n-1\\
\omega_j(\lambda_{2k+1})&=&-{1\over 2}+{1\over 2}\Pi_{j,k}, \quad 0<k\leq j\leq 2n-1\\
\omega_j(\lambda_{2k})&=&{1\over 2}\Pi_{j,k-1}, \quad 0<j<k\leq 2n\\
\omega_j(\lambda_{2k})&=&-{1\over 2}+{1\over 2}\Pi_{j,k-1}, \quad 1<k\leq j\leq 2n.
\end{eqnarray*}
If we write $\omega(\lambda_i)$ as
\begin{eqnarray*}
\omega(\lambda_i)={1\over 2}N_i+{1\over 2}\Pi M_i,
\end{eqnarray*}
then, from the periodicity (\ref{eq:period}) of the theta function,
we have
\begin{eqnarray*}
\theta(\omega(\lambda_i))=\exp\left(-\pi
i\left<N_i,M_i\right>\right)\theta(-\omega(\lambda_i)).
\end{eqnarray*}
Since $\left<N_{2i+1},M_{2i+1}\right>$ are odd for $1\leq i\leq
2n-1$, we see that $\theta(\omega(\lambda_{2i+1}))=0$ and hence the
$g$ zeros of $\theta(\omega(z))$ are the points
$\lambda_3,\ldots,\lambda_{4n-1}$. Therefore, we have
\begin{eqnarray*}
K=-\sum_{j=2}^{2n}\omega(\lambda_{2j-1}).
\end{eqnarray*}

\addcontentsline{toc}{section}{Appendix E. The cycle basis~(\ref{eq:newbasis})}
\section*{Appendix E. The cycle basis~(\ref{eq:newbasis})}
\renewcommand{\theequation}{E.\arabic{equation}}
\setcounter{equation}{0}
In this appendix we will show that the basis defined in
(\ref{eq:newbasis}) are canonical. First note that, by direct
computation, it is easy to check that the intersections between the
$a$-cycles are zero
\begin{eqnarray*}
\tilde{a}_{n-j-1}\cdot \tilde{a}_{n+l}=0, \quad 0\leq j, l\leq r-1.
\end{eqnarray*}
We will now compute the other intersection numbers by induction.

First let us compute the intersection numbers between the tilded
basis and the untilded basis. We have
\begin{eqnarray}\label{eq:oldnew}
a_{n-k-1}\cdot \tilde{a}_{n-j-1}&=&\delta_{k,j}\\
a_{n-k-1}\cdot \tilde{a}_{n+j}&=&-\delta_{k,j}\\
a_{n+k}\cdot \tilde{a}_{n-j-1}&=&-\delta_{k,j}\\
a_{n+k}\cdot \tilde{a}_{n+j}&=&\delta_{k,j}\\
a_{n-k-1}\cdot \tilde{b}_{n-j-1}&=&\left\{
                                          \begin{array}{ll}
                                            1, & \hbox{$k= j$;} \\
                                            2(-1)^{k-j}, & \hbox{$\quad j+1\leq
                                   k$;} \\
                                            0, & \hbox{$\quad 0\leq k\leq j-1$.}
                                          \end{array}
                                        \right. \\
a_{n-k-1}\cdot \tilde{b}_{n+j}&=&\left\{
                                 \begin{array}{ll}
                                   0, & \hbox{$\quad 0\leq k\leq j$;} \\
                                   2(-1)^{k-j}, & \hbox{$\quad j+1\leq
                                   k$.}
                                 \end{array}
                               \right.\\
a_{n+k}\cdot \tilde{b}_{n-j-1}&=&0\\
a_{n+k}\cdot \tilde{b}_{n+j}&=&\delta_{k,j}\\
b_{n+k}\cdot \tilde{a}_{n-j-1}&=&\left\{
                                 \begin{array}{ll}
                                   -1, & \hbox{$\quad 0\leq k\leq j-1$;} \\
                                   0, & \hbox{$\quad j\leq
                                   k$.}
                                 \end{array}
                               \right.\\
b_{n-k-1}\cdot \tilde{a}_{n-j-1}&=&\left\{
                                 \begin{array}{ll}
                                   -1, & \hbox{$\quad 0\leq k\leq j-1$;} \\
                                   0, & \hbox{$\quad j\leq
                                   k$.}
                                 \end{array}
                               \right.\\
b_{n+k}\cdot \tilde{a}_{n+j}&=&\left\{
                                 \begin{array}{ll}
                                   -1, & \hbox{$\quad 0\leq k\leq j$;} \\
                                   0, & \hbox{$\quad j+1\leq
                                   k$.}
                                 \end{array}
                               \right.\\
b_{n-k-1}\cdot \tilde{a}_{n+j}&=&\left\{
                                 \begin{array}{ll}
                                   -1, & \hbox{$\quad 0\leq k\leq j$;} \\
                                   0, & \hbox{$\quad j+1\leq
                                   k$.}
                                 \end{array}
                               \right.\\
b_{n+k}\cdot \tilde{b}_{n-j-1}&=&\left\{
                                 \begin{array}{ll}
                                   1, & \hbox{$\quad 0\leq k\leq j-1$;} \\
                                   0, & \hbox{$\quad j\leq
                                   k$.}
                                 \end{array}
                               \right.\\
b_{n-k-1}\cdot \tilde{b}_{n-j-1}&=&\left\{
                                 \begin{array}{ll}
                                   1, & \hbox{$\quad 0\leq k\leq j$;} \\
                                   (-1)^{k-j}, & \hbox{$\quad j+1\leq
                                   k$.}
                                 \end{array}
                               \right.\\
b_{n-k-1}\cdot \tilde{b}_{n+j}&=&\left\{
                                 \begin{array}{ll}
                                   0, & \hbox{$\quad 0\leq k\leq j$;} \\
                                   (-1)^{k-j}, & \hbox{$\quad j+1\leq
                                   k$.}
                                 \end{array}
                               \right.\\
b_{n+k}\cdot \tilde{b}_{n+j}&=&0
\end{eqnarray}
where $j$, $k$ range from 0 to $r-1$.

Now, we have
\begin{eqnarray*}
\tilde{b}_{n+r-1}=b_{n+r-1}.
\end{eqnarray*}
Then from (E.9)-(E.16), we obtain the following intersection
numbers:
\begin{eqnarray*}
\tilde{b}_{n+r-1}\cdot\tilde{a}_{j}=\delta_{n+r-1,j},\quad
\tilde{b}_{n+r-1}\cdot\tilde{b}_j=0.
\end{eqnarray*}
Next, from (\ref{eq:newbasis}) we have
\begin{eqnarray*}
\tilde{b}_{n+k}+\tilde{b}_{n+k-1}=b_{n+k}+b_{n+k-1}+a_{n-k-1}-2b_{n-k-1}.
\quad k=1,\ldots, r-1
\end{eqnarray*}
From this relation and equation (\ref{eq:oldnew})-(E.16), we  obtain
\begin{eqnarray*}
\left(\tilde{b}_{n+k}+\tilde{b}_{n+k-1}\right)\cdot\tilde{a}_{j}&=&-\delta_{j,n+k}-\delta_{j,n+k-1}\\
\left(\tilde{b}_{n+k}+\tilde{b}_{n+k-1}\right)\cdot\tilde{b}_{j}&=&0.\quad
j=1,\ldots, 2n-1
\end{eqnarray*}
Therefore, if we assume that $\tilde{b}_{n+k}$ has the intersection
numbers
\begin{eqnarray*}
\tilde{b}_{n+k}\cdot\tilde{a}_{j}&=&-\delta_{j,n+k}\\
\tilde{b}_{n+k}\cdot\tilde{b}_{j}&=&0,\quad j=1,\ldots, 2n-1,
\end{eqnarray*}
then $\tilde{b}_{n+k-1}$ will have the  intersection numbers
\begin{eqnarray*}
\tilde{b}_{n+k-1}\cdot\tilde{a}_{j}&=&-\delta_{j,n+k-1}\\
\tilde{b}_{n+k-1}\cdot\tilde{b}_{j}&=&0,\quad j=1,\ldots, 2n-1.\quad
1\leq k
\end{eqnarray*}
Therefore, by induction we see that
\begin{eqnarray}\label{eq:bint}
\tilde{b}_{n+k}\cdot\tilde{a}_{j}&=&-\delta_{j,n+k}\nonumber\\
\tilde{b}_{n+k}\cdot\tilde{b}_{j}&=&0,\quad j=1,\ldots, 2n-1,\quad
k=0,\ldots, r-1.
\end{eqnarray}
We can now compute the intersection numbers of the
$\tilde{b}_{n-k-1}$. We have
\begin{eqnarray*}
\tilde{b}_{n-k-1}-\tilde{b}_{n+k}=-\tilde{a}_{n+k}+a_{n+k}.\quad
k=0,\ldots, r-1
\end{eqnarray*}
Therefore, by using (\ref{eq:oldnew})-(\ref{eq:bint}) we obtain
\begin{eqnarray*}
\left(\tilde{b}_{n+k}-\tilde{b}_{n-k-1}\right)\cdot\tilde{a}_{j}&=&-\delta_{j,n+k}+\delta_{j,n-k-1}\\
\left(\tilde{b}_{n+k}-\tilde{b}_{n-k-1}\right)\cdot\tilde{b}_{j}&=&0,\quad
j=1,\ldots, 2n-1
\end{eqnarray*}
From (\ref{eq:bint}), we see that the intersection numbers for the
$\tilde{b}_{n-k-1}$ are indeed given by
\begin{eqnarray*}
\tilde{b}_{n-k-1}\cdot\tilde{a}_{j}&=&-\delta_{j,n-k-1}\nonumber\\
\tilde{b}_{n-k-1}\cdot\tilde{b}_{j}&=&0,\quad j=1,\ldots, 2n-1,\quad
k=0,\ldots, r-1.
\end{eqnarray*}

\addcontentsline{toc}{section}{Appendix F. Solvability of the
  Wiener-Hopf factorization problem}
\section*{Appendix F. Solvability of the Wiener-Hopf factorization
  problem}
\renewcommand{\theequation}{F.\arabic{equation}}
\setcounter{equation}{0}
 We now show that the Wiener-Hopf factorization problem
(\ref{eq:WH}) is solvable when $\beta(\lambda)$ is purely
imaginary.

In other words, we have
\begin{theorem}\label{thm:solv}The following Riemann-Hilbert problem
\begin{eqnarray}\label{eq:RHT}
T_+(z)&=&\Phi(z)T_-(z), \quad |z|=1\nonumber\\
\Phi(z)&=&\pmatrix{i\lambda&g(z)\cr
                 -g^{-1}(z)&i\lambda\cr}
\end{eqnarray}
where $T_+(z)$ is holomorphic for $|z|<1$ and $T_-(z)$ is holomorphic for $|z|>1$ with $T_-(\infty)=1$ is solvable when $\beta(\lambda)\in i\mathbb{R}$.
\end{theorem}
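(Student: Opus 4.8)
The plan is to establish Theorem~\ref{thm:solv} --- equivalently, the existence of the canonical Wiener--Hopf factorization (\ref{eq:WH}) of $\Phi$, since a solution $T_\pm$ of (\ref{eq:RHT}) with $T_-(\infty)=I$ yields $\Phi=T_+T_-^{-1}$ with $T_+$ analytic inside and $T_-^{-1}$ analytic outside $\Xi$ --- by combining the Fredholm theory of the associated Cauchy singular integral operator with a \emph{vanishing lemma}. The decisive observation, valid exactly when $\beta(\lambda)\in i\mathbb{R}$, is that $-i\Phi(z)$ is Hermitian and sign-definite on the unit circle. Indeed, by (\ref{eq:beta}) the condition $\beta(\lambda)\in i\mathbb{R}$ means that $(\lambda+1)/(\lambda-1)$ is real and positive, which holds precisely for $\lambda\in\mathbb{R}$ with $|\lambda|>1$. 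For such $\lambda$ one has $g^{-1}(z)=\overline{g(z)}$ on $\Xi$, so that
$$
-i\Phi(z)=\pmatrix{\lambda & -ig(z)\cr i\overline{g(z)} & \lambda\cr}
$$
is manifestly Hermitian; since the eigenvalues of $\Phi$ are $i(\lambda\pm1)$ (in agreement with the diagonalization $\Phi=Q\Lambda Q^{-1}$, $\Lambda=i\diag(\lambda+1,\lambda-1)$ of Section~\ref{WH_fact}), those of $-i\Phi$ are $\lambda\pm1$, both of the same sign. Thus $-i\Phi$ is positive definite for $\lambda>1$ and negative definite for $\lambda<-1$.

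First I would record that (\ref{eq:RHT}) has index zero. The determinant $\det\Phi(z)=(i\lambda)^2+1=1-\lambda^2$ is a nonzero constant on $\Xi$, so the winding number of $\det\Phi$ about the origin vanishes and the Cauchy singular integral operator attached to (\ref{eq:RHT}) on $L^2(\Xi,\mathbb{C}^2)$ is Fredholm of index zero. By the Fredholm alternative, solvability of (\ref{eq:RHT}) with $T_-(\infty)=I$ is then equivalent to the triviality of the kernel of this operator, i.e. to the assertion that the homogeneous problem $\psi_+=\Phi\psi_-$, with $\psi_+$ in the Hardy space of the disc, $\psi_-$ analytic outside $\Xi$ and $\psi_-(\infty)=0$, admits only the solution $\psi_\pm\equiv0$.

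The heart of the argument is this vanishing lemma, and here I would use the definiteness above. Given a homogeneous solution, form the Schwarz reflection $\psi_-^{*}(z):=\overline{\psi_-(1/\bar z)}^{\,T}$, a row vector analytic in $|z|<1$ with $\psi_-^{*}(0)=\overline{\psi_-(\infty)}^{\,T}=0$ and boundary value $\psi_-(z)^\dagger$ on $\Xi$. The scalar $F(z):=\psi_-^{*}(z)\psi_+(z)$ is analytic in $|z|<1$ with $F(0)=0$, whence $\oint_{\Xi}F(z)\,z^{-1}\,dz=2\pi i\,F(0)=0$. On $\Xi$, however, $F=\psi_-^{\dagger}\Phi\,\psi_-=i\,\psi_-^{\dagger}(-i\Phi)\psi_-$, and since $z^{-1}\,dz=i\,d\theta$ with $z=e^{i\theta}$, the identity above becomes
$$
\int_0^{2\pi}\psi_-(e^{i\theta})^{\dagger}\bigl(-i\Phi(e^{i\theta})\bigr)\psi_-(e^{i\theta})\,d\theta=0 .
$$
As $-i\Phi$ is sign-definite on $\Xi$, the integrand has a fixed sign and must vanish identically, forcing $\psi_-\equiv0$ on $\Xi$, hence $\psi_-\equiv0$ and $\psi_+=\Phi\psi_-\equiv0$. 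This proves the kernel is trivial.

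Combining the vanishing lemma with the index-zero property yields the solvability of (\ref{eq:RHT}). That the resulting matrix solution is automatically invertible, so that $\Phi=T_+T_-^{-1}$ is a genuine factorization (\ref{eq:WH}), follows from the standard scalar argument applied to $\det T_\pm$: the relation $\det T_+=\det\Phi\cdot\det T_-$ together with $\det\Phi=\mathrm{const}$ shows $\det T_\pm$ solve a scalar Riemann--Hilbert problem of index zero and therefore cannot vanish. I expect the vanishing lemma to be the main obstacle: everything rests on isolating the symmetry that renders the boundary quadratic form definite --- precisely the content of $\beta(\lambda)\in i\mathbb{R}$ --- and on the routine but necessary verification that the reflection argument and the Fredholm alternative are carried out in matching Hardy/$L^2$ classes. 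By contrast the index computation and the passage from (\ref{eq:RHT}) to the Wiener--Hopf factors are routine.
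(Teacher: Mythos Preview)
Your proposal is correct and follows essentially the same approach as the paper: Fredholm index zero plus a vanishing lemma based on the observation that $\pm i\Phi(z)$ is Hermitian and sign-definite on $\Xi$ when $\lambda\in\mathbb{R}$, $|\lambda|>1$. The only cosmetic difference is that you integrate $\psi_-^{*}\psi_+/z$ over the disc using $F(0)=0$, whereas the paper integrates the reflected product over the exterior using decay at infinity; the substance of the argument is identical.
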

\begin{proof}
 We will use the vanishing lemma to proof this theorem. As in
\cite{Z}, we need to show that a certain singular integral
operator is a bijection.

The solvability of the Riemann-Hilbert problem is related to the
bijectivity of a singular integral operator. Let $C$ be the Cauchy operator
\begin{eqnarray*}
C(f)(z)={1\over {2\pi i}}\int_{\Xi}{{f(s)}\over{s-z}}\d
s,\quad f\in L^2(\Xi)
\end{eqnarray*}
and let $C_+$, $C_-$ be its limit on the positive and negative side
of the real axis
\begin{eqnarray*}
C_{\pm}(f)(z)=\lim_{\epsilon\rightarrow 0}C(f)(z\pm i\epsilon),\quad
z\in\Xi.
\end{eqnarray*}
Now,  define the singular integral operator $C_{\Phi}$ as in
\cite{Z}.
\begin{eqnarray}\label{eq:cv}
C_{\Phi}(f)=C_+\left(f(I-\Phi^{-1})\right)
\end{eqnarray}
Suppose that $I-C_{\Phi}$ is invertible in $L^2(\Xi)$, and let
$\mu=(I-C_{\Phi})^{-1}C_+(I-\Phi^{-1})$: then the function
\begin{eqnarray*}
\hat{T}(z)=I+C\left((I+\mu)(I-\Phi^{-1})\right)
\end{eqnarray*}
is a solution to the Riemann-Hilbert problem (\ref{eq:RHT}). In
fact, we have
\begin{eqnarray*}
\hat{T}_+(z)&=&I+C_+(I-\Phi^{-1})+C_{\Phi}\mu=I+\mu(z)\\
\hat{T}_-(z)&=&\hat{T}_+(z)-I-\mu(z)+\Phi^{-1}(z)(I+\mu(z))
=\Phi^{-1}(z)\hat{T}_+(z),\quad
|z|=1,
\end{eqnarray*}
where the second equation follows from the identity $C_+-C_-=I$.

Therefore, in order to show that (\ref{eq:RHT}) is solvable when
$\beta(\lambda)\in i\Xi$, we need to show that $I-C_{\Phi}$ is
invertible in $L^2(\Xi)$.

Using standard analysis (see, \textit{e.g.},\cite{Z}), we can show
that the operator $C_{\Phi}$ is Fredholm and has index zero. Therefore
we only need to show that its kernel is $\{ 0\}$.

Suppose that the kernel is non-trivial and let
$(I-C_{\Phi})\mu_0=0$. Then the function
\begin{eqnarray*}
\hat{T}_0(z)=C\left[\mu_0(I-\Phi^{-1})\right]
\end{eqnarray*}
will solve the Riemann-Hilbert problem (\ref{eq:RHT}), but its
asymptotic behavior will be
\begin{eqnarray*}
\hat{T}_0(z)=O(z^{-1}),\quad z\rightarrow \infty.
\end{eqnarray*}
This means that the function
$R(z)=\hat{T}^{\dagger}_0(\overline{z}^{-1})\hat{T}_0(z)$, where
$A^{\dagger}$ is the Hermitian conjugate of $A$, is analytic outside
the unit circle and behaves like $O(z^{-2})$ at infinity. Thus,
by Cauchy's theorem, we have
\begin{eqnarray*}
\int_{\Xi}R_-(z)\d z=0.
\end{eqnarray*}
By making use of the jump conditions, we obtain
\begin{eqnarray}\label{eq:ker}
  \int_{\Xi}R_-(z)\d
  z&=&\int_{\Xi}\left(\hat{T}^{\dagger}_0(z)\right)_+
\left(\hat{T}_0(z)\right)_-\d z\nonumber\\
  &=&\int_{\Xi}\left(\hat{T}^{\dagger}_0(z)\right)_-
\Phi^{\dagger}(z)\left(\hat{T}_0(z)\right)_-\d z=0
\end{eqnarray}
From (\ref{eq:dia}), we see that the eigenvalues of $\Phi(z)$ are
$i(\lambda+1)$ and $i(\lambda-1)$. Therefore the matrix
$i\Phi^{\dagger}(z)$ Hermitian and is either positive definite or
negative definite for $\beta(\lambda)\in i\mathbb{R}$. This means that
the boundary value of $\left(\hat{T}_0(z)\right)_-$ on the unit circle
is zero. In particular, it implies that $\hat{T}_0(z)=0$ and hence
the kernel of the singular integral operator $I-C_{\Phi}$ is
trivial. This concludes the proof of the theorem.
\end{proof}

\addcontentsline{toc}{section}{References}

\vspace{.25cm}

\noindent\rule{16.2cm}{.5pt}

\vspace{.25cm}

{\small  \noindent {\sl The Department of Mathematical Sciences \\
                        Indiana University-Purdue University Indianapolis \\
                       402 N. Blackford Street\\
                       Indianapolis, IN 46202-3216, USA\\
                       Email: {\tt itsa@math.iupui.edu}

                       \vspace{.25cm}

\noindent {\sl Department of Mathematics \\
                       University of Bristol\\
                       Bristol BS8 1TW, UK  \\
                       Email: {\tt f.mezzadri@bristol.ac.uk}\\
                       Email: {\tt m.mo@bristol.ac.uk}

                       \vspace{.25cm}

                       \noindent  10 April 2008}}

\end{document}